\documentclass[12pt]{article}
\usepackage[paper=letterpaper, left=1.25in, right=1.25in, top=1.25in, bottom=1.25in]{geometry}

\usepackage{cmap}
\usepackage{amsmath, amsfonts, amssymb, amsthm, array, graphicx, float, morefloats, setspace, wrapfig}
\usepackage{icomma}
\usepackage[utf8]{inputenc}
\usepackage[english]{babel}

\usepackage{titlesec}
\titleformat{\section}{\normalfont\large\scshape\centering}{\thesection.}{0.5em}{}
\titleformat{\subsection}{\normalfont\normalsize\scshape\centering}{\thesubsection.}{0.5em}{}
\titleformat{\subsubsection}{\normalfont\normalsize\scshape\centering}{\thesubsubsection.}{0.5em}{}
\titlespacing*{\section}{0pt}{2.5ex plus 1ex minus .2ex}{1.5ex plus .2ex}
\titlespacing*{\subsection}{0pt}{2ex plus 1ex minus .2ex}{1ex plus .2ex}
\titlespacing*{\subsubsection}{0pt}{1.5ex plus 1ex minus .2ex}{0.5ex plus .2ex}

\usepackage[title]{appendix}
\usepackage{chngcntr}
\usepackage{apptools}
\AtAppendix{\counterwithin{lemma}{section}}

\usepackage[protrusion=true,expansion=true]{microtype}

\usepackage{rotating}
\usepackage{threeparttable}
\usepackage{booktabs}
\usepackage{tablefootnote}

\usepackage{color}

\usepackage[section]{placeins}
\usepackage{subcaption}

\usepackage{xr-hyper}
\usepackage[colorlinks=true]{hyperref}
\hypersetup{colorlinks, citecolor=red, linkcolor=red, urlcolor=red, filecolor=red}

\usepackage{tikz}

\usepackage{verbatim}

\usepackage[font={small},labelfont=bf,skip=5pt]{caption}
\newcommand{\E}{{\mathbb E}}

\newcommand{\cN}{{\mathcal N}}

\newcommand{\cT}{{\mathcal T}}
\newcommand{\cU}{{\mathcal U}}

\newcommand{\R}{{\mathbb R}}
\renewcommand{\P}{{\mathbb P}}

\newcommand{\argmin}{\operatornamewithlimits{argmin}}

\usepackage{bm}
\usepackage{dsfont}  

\usepackage{algorithm}
\usepackage{algpseudocode}

\newtheoremstyle{sctheorem}
  {\topsep}
  {\topsep}
  {\itshape}
  {}
  {\scshape}
  {.}
  { }
  {}%

\theoremstyle{sctheorem}
\newtheorem{theorem}{Theorem}
\newtheorem{lemma}{Lemma}
\newtheorem{proposition}{Proposition}
\newtheorem{corollary}{Corollary}
\newtheorem{assumption}{Assumption}

\newtheoremstyle{scexample}
  {\topsep}
  {\topsep}
  {\normalfont}
  {}
  {\scshape}
  {.}
  { }
  {}%
\theoremstyle{scexample}
\newtheorem{exmp}{Example}

\theoremstyle{remark}
\newtheorem{remark*}{Remark}

\makeatletter
\def\@fnsymbol#1{\ensuremath{\ifcase#1\or \dagger\or \ddagger\or
   \mathsection\or \mathparagraph\or \|\or **\or \dagger\dagger
   \or \ddagger\ddagger \else\@ctrerr\fi}}
\makeatother

\usepackage[round]{natbib}
\renewenvironment{abstract}{%
    \small
    \begin{center}
        \textsc{Abstract}\vspace{-0.5em}
    \end{center}
    \begin{quotation}\noindent\ignorespaces
}{%
    \end{quotation}\vspace{1em}
}

\title{Penalized GMM Framework for Inference on Functionals of Nonparametric Instrumental Variable Estimators\thanks{Previously circulated as ``Automatic Debiased Machine Learning in Presence of                       Endogeneity''.}\footnote{I am very grateful to Amit Gandhi, Xu Cheng, and Karun Adusumilli for their guidance and support, as well as Frank Diebold, Wayne Gao, and participants of the UPenn Econometrics Workshop 2021 and Young Scholars Conference on Machine Learning in Economics and Finance 2021 for insightful comments and suggestions. I would like to thank IRI for making the data available. All estimates and analysis in this paper, based on data provided by IRI, are by the authors and not by IRI. Replication code is available at \url{https://github.com/edbakhitov/ADMLIV}.}}
\author{Edvard Bakhitov\thanks{University of Pennsylvania. e-mail: edbakhitov@gmail.com}}
\date{\today}

\begin{document}

\maketitle

\begin{abstract}
	\singlespacing
	This paper develops a penalized GMM (PGMM) framework for automatic debiased inference on functionals of nonparametric instrumental variable estimators. We derive convergence rates for the PGMM estimator and provide conditions for root-$n$ consistency and asymptotic normality of debiased functional estimates, covering both linear and nonlinear functionals. Monte Carlo experiments on average derivative show that the PGMM-based debiased estimator performs on par with the analytical debiased estimator that uses the known closed-form Riesz representer, achieving 90--96\% coverage while the plug-in estimator falls below 5\%. 
	
	We apply our procedure to estimate mean own-price elasticities in a semiparametric demand model for differentiated products. Simulations confirm near-nominal coverage while the plug-in severely undercovers. Applied to IRI scanner data on carbonated beverages, debiased semiparametric estimates are approximately 20\% more elastic compared to the logit benchmark, and debiasing corrections are heterogeneous across products, ranging from negligible to several times the standard error.
\end{abstract}

\newpage

\addtocontents{toc}{\protect\setcounter{tocdepth}{-1}}

\section{Introduction}

Instrumental variables (IV) methods are widely used in applied research for estimation and inference in models containing endogenous regressors. In many cases, economic theory does not impose any functional form restrictions motivating nonparametric instrumental variables (NPIV) methods, where the function of interest is not assumed to be known up to a finite-dimensional parameter. In many cases, structural parameters of economic interest appear as functionals of that underlying unknown function. Examples are policy effects, average (weighted) partial effects, consumer surplus, measures of substitution patterns, and various counterfactuals from structural models. It is quite common for the estimation problem to be high-dimensional. There might be many control variables which we want to include in a flexible way along with the endogenous regressor, or a structural model may depend on many variables, e.g. in the demand for differentiated goods framework, the demand function depends on the vector of prices and product characteristics of all products in the market. In this paper, we are interested in estimation and inference on structural economic objects in presence of endogeneity when the dimensionality of the problem is (moderately) high. 

Machine learning (ML) literature provides a collection of modern statistical tools for flexible estimation of various statistical objects that are especially powerful in high-dimensional settings.  However, standard ML estimators, such as Lasso, boosting, or Neural Networks are unable to pick up causal relationships when endogenous regressors are present \citep[see e.g.,][]{hartford2017deepIV}. On the other hand, there is a new line of research in machine learning and computer science communities that offers a series of new algorithms that both addresses endogeneity and can be applied  in high-dimensional environments, we refer to them as MLIV estimators. These algorithms are data-driven and exploit various forms of regularization to ameliorate the ill-posedness of the problem while maintaining the functional form flexibility. Examples include the DeepIV estimator \citep{hartford2017deepIV} and its regularized version \citep{li2024regdeepiv}, the Kernel IV regression \citep{singh2019kiv}, the Dual IV regression \citep{muandet2020dual}, the  DeepGMM estimator \citep{bennett2019deepGMM}, the Double Lasso estimator of \cite{gold2020}, a series of estimators constructed using the minimax framework of \cite{dikkala2020minimax}, the deep feature estimator \citep{xu2020learning}, the boostIV estimator \citep{bakhitov2021boostIV}, the minimax IV regression \citep{bennett2023minimax}, and the two-stage ML estimator \citep{bruns-smith2025tsml}. The goal of this paper is to use these novel methods to estimate and perform inference on various economic objects of interest that appear as functionals of the underlying structural function under endogeneity.

As standard ML algorithms, MLIV estimators produce inherently biased estimates. The main source of bias is regularization and/or model selection needed to balance out squared bias and variance to obtain overall small mean squared errors. In the NPIV context, regularization is particularly important as it plays a dual role. First, it allows to deal with the curse of dimensionality, as in the case of standard ML estimators. Second, it is necessary to solve the ill-posed problem. As a result, regularization and/or model selection bias leads to poor coverage unless it is corrected for. Furthermore, the bias term will propagate into the functional estimate if we simply plug-in an MLIV estimator into the functional formula. As \cite{chernozhukov2022adml} point out, squared bias of plug-in estimators can shrink slower than the variance, leading to extremely poor confidence interval coverage. 

In this paper, we provide an approach for performing valid asymptotic inference on functionals of MLIV estimators. Building on the automatic debiased machine learning framework of \cite{chernozhukov2022adml}, hereafter CNS, we construct Neyman-orthogonal moment functions by adding the influence function adjustment term for the NPIV estimator derived by \cite{ichimura2022influence} to the identifying moment conditions. The resulting debiased moment function has a zero derivative with respect to the MLIV estimator, ensuring insensitivity to local perturbations around the true value of the estimated function and allowing plug-in of noisy MLIV estimates without strongly violating the moment condition. We focus on regular functionals with a finite semiparametric asymptotic variance bound necessary for root-$n$ estimability, allowing for both linear and nonlinear functionals, though the conditions for root-$n$ rate are much tighter for the nonlinear case. 
  
The influence function adjustment depends on the Riesz representer (RR) for the identifying moment function in the linear case, or the derivative of the identifying moment condition in the nonlinear case. In the NPIV framework, the closed-form RR is typically either very complicated to derive or even unknown. Our main methodological contribution is a penalized GMM (PGMM) framework that estimates the RR directly from the identifying moment conditions by exploiting their orthogonality, without requiring knowledge of the closed-form expression. The debiasing is therefore automatic in the sense that it depends only on the form of the identifying moment function but not on the form of the bias correction term. The PGMM estimator is novel and, to the best of our knowledge, is the only non-minimax automatic estimator of the RR in the NPIV framework, generalizing the Lasso minimum distance estimator of CNS to allow for a more general form of the influence function.  

We derive the convergence rate for the PGMM estimator and provide conditions for root-$n$ consistency and asymptotic normality of the debiased MLIV estimator of the functional of interest. To accommodate for a large variety of MLIV estimators, we only require certain mean square consistency and convergence rates for MLIV estimators. The required conditions differ quite drastically for linear and nonlinear functionals. For linear functionals it is sufficient to require the MLIV estimator to converge at some positive rate in the projected mean square norm. It is well-known that NPIV estimators exhibit much faster convergence rates in the projected norm rather than in the standard mean square norm due to ill-posedness \citep[see e.g.,][]{blundell2007, ChenPouzo2012, ChenPouzo2015QLR}. However, for nonlinear functionals it is necessary to account for the linearization bias which requires the convergence rate to be faster than $n^{-1/4}$, which is a standard condition in the semiparametric literature \citep{newey1994}. Moreover, the presence of nonlinearities in the identifying moment function results in the convergence rate condition in the standard mean square norm rather than the projected norm, which makes it harder to satisfy in practice. 

In Monte Carlo experiments based on the average derivative functional with multiple endogenous regressors, we demonstrate that the PGMM-estimated Riesz representer performs on par with the known analytical expression. The plug-in estimator exhibits severely deteriorating coverage as the sample size grows, falling from 64\% at $n=100$ to below 5\% at $n=10000$. In contrast, the automatic debiased estimator achieves near-nominal coverage of 90--96\% throughout, matching the performance of the analytical debiased estimator that uses the closed-form Riesz representer \citep{chen2023ann_npiv}. Moreover, the automatic debiasing procedure exhibits greater numerical stability in small samples and high-dimensional settings, where the analytical approach can suffer from instability in the intermediate estimation steps required to evaluate the closed-form expression.

We apply our approach to estimate the mean own-price elasticity functional in the nonparametric demand for differentiated products framework \citep{berryhaile2016,compiani2022market} that has been gaining popularity in the last years as an alternative to the standard parametric procedure of \cite{BerryLevinsohnPakes1995}, hereafter, BLP. Building on the dimensionality reduction idea of \cite{gandhi_houde2019}, we derive a semiparametric inverse demand estimation equation that remains tractable as the number of products grows. We then construct own-price elasticities as nonlinear functionals of the estimated inverse demand system using the implicit function theorem. To our knowledge, this is the first application of automatic debiased machine learning to nonlinear functionals in the endogenous setting.

In our Monte Carlo experiments, we demonstrate that the plug-in estimator exhibits small bias but severely undercovers, with empirical coverage ranging from 31\% to 55\%, reflecting that its standard errors dramatically understate the true sampling variability. In contrast, the debiased estimator achieves near-nominal coverage of 91--95\% across all configurations, demonstrating that the automatic debiasing procedure delivers valid inference even for nonlinear functionals in moderately sized samples.

We use the Monte Carlo results as a basis for our empirical application where we estimate own-price elasticities using scanner data. We show that semiparametric elasticity estimates are $\approx 20\%$ larger in absolute value across products compared to the logit benchmark, reflecting richer substitution patterns. Moreover, the debiasing corrections vary across products in both magnitude and direction, with several products exhibiting corrections that substantially exceed the associated standard errors, underscoring the practical importance of the debiasing procedure for valid inference in empirical applications.

This paper connects several strands of literature. First, since the focus of the paper is functionals of nonparametric quantities, our methodology relates to the literature on semiparametric statistical theory \citep{van_der_vaart1991,bickel1993efficient,newey1994,robins_rotnitzky1995,van_der_vaart2000}. These papers focus on functionals of densities or regressions in low dimensional settings, while in our paper we focus on functionals of MLIV estimators over domains that may include low, moderate, and high dimensional objects. A more recent work by \cite{chernozhukov2022locally} generalizes and extends the insights from the classical theory by constructing Neyman-orthogonal moment conditions allowing for a wide range of ML estimators\footnote{\cite{chernozhukov2022locally} provide high-level conditions for inference on functionals for conditional moment restriction models that nest the NPIV problem (see Theorem 19). Our results are complementary as we provide an estimator of the RR and give low-level conditions to derive its convergence rate.}. We follow \cite{chernozhukov2022locally} and use Neyman-orthogonal moment functions with the influence function adjustment term for the NPIV estimator from \cite{ichimura2022influence}.

 Riesz representers are important objects in semiparametric theory as they appear in calculations of the asymptotic variance of functionals of nonparametric quantities \citep{ichimura2022influence,chernozhukov2022locally}. For the same reason they appear in the influence function calculations, which makes estimation of RRs a cornerstone of the debiased machine learning literature.  \cite{chernozhukov2022dml_gl} and CNS propose Lasso and Dantzig minimum distance estimators of the RR based on the sparse approximation assumption. While the latter provides asymptotic results for regular functionals, the former provides finite sample analysis and also allows for irregular functionals. A recent paper by \cite{chernozhukov2022riesznet} proposes to use neural networks and random forests to estimate the RR. On the other hand, \cite{chernozhukov2025adversarial} take a different approach and allow for a more general estimator of the RR based on the minimax framework of \cite{dikkala2020minimax}. \cite{ghassami2022minimax} extend the doubly robust framework to settings where nuisance functions solve integral equations and propose a minimax kernel machine learning approach for their estimation, with applications to proximal causal inference. \cite{bennett2025inference} use penalized minimax estimators for both the primary and debiasing nuisance functions and derive conditions for asymptotic normality of linear functionals that do not require closedness or strong identification of $\gamma$. Our work is complementary: while we focus on the NPIV framework, we provide an automatic PGMM-based approach to estimate the influence function adjustment and give conditions for valid asymptotic inference on debiased estimated for both linear and nonlinear regular functionals.

This work also contributes to the literature on estimation and inference on conditional restrictions models which nest the NPIV regression problem as a special case. Several NPIV estimators are now available including kernel-based estimators \citep{hall_horowitz2005,darolles2011npiv} and series or sieve estimators \citep{newey_powell2003,blundell2007,ChenPouzo2012,chen2023ann_npiv}. There are several papers focusing on linear regular functionals of NPIV estimators, see e.g., \cite{AiandChen2003}, \cite{santos2011_IV}, and \cite{severini_tripathi2012} among others. \cite{ChenPouzo2015QLR} and \cite{chen_christensen2018optimal} give conditions for pointwise and uniform asymptotic normality, respectively, of possibly nonlinear functionals of the sieve NPIV estimator. The results presented in this paper are complementary to the results on inference on functionals of NPIV estimators.

The paper also touches on the growing literature on flexible demand estimation in differentiated product markets. \cite{compiani2022market} follows the nonparametric identification arguments of \cite{berry2014identification} and demonstrates the performance of the NPIV estimator in a very simple case of two products with two characteristics. He uses Bernstein polynomials along with shape restrictions to alleviate the curse of dimensionality and nonparametrically estimate the inverse demand function. In that regard, our modeling approach is complementary: we leverage the dimensionality reduction idea of \cite{gandhi_houde2019} to construct a semiparametric inverse demand estimation equation that combined with our debiasing procedure allows the practitioner to apply it to more realistic settings. \cite{lu2023semi} consider a similar framework, but they focus on applications with large amounts of products instead of large amount of markets. A closely related paper by \cite{singh2023choice} models choice probabilities directly instead of demand inversion and applies the control function approach to account for price endogeneity. \cite{fosgerau2020inverse} and \cite{monardo2021measuring} consider a different class of inverse product differentiation models which generalize the inverse demand function of the nested logit model.

The remainder of the paper is organized as follows. Section \ref{sec:npiv_issues} briefly introduces the NPIV framework,  discusses practical issues and MLIV estimators. In Section \ref{sec:mliv_functionals}, we describe the objects of interest and provide several economic examples. We also illustrate how to construct the debiased estimator and the estimator of its asymptotic variance. Finally, we introduce the PGMM estimator of the RR. Section \ref{sec:pgmm_prop} gives conditions necessary to derive a convergence rate for the PGMM estimator. Section \ref{sec:asy_prop_lin} gives conditions for root-$n$ consistency and asymptotic normality of the debiased estimator for linear functionals. In Section \ref{sec:asy_prop_nonlin} we introduce additional conditions necessary to extend our results to nonlinear functionals. Section \ref{sec:toy_mc} examines the performance of the debiased estimator in a simple Monte Carlo exercise. In Section \ref{sec:np_demand} we introduce the semiparametric demand estimation framework and estimate the mean own-price derivative functional using both simulated and real data. Section \ref{sec:conclusion} gives conclusions and provides possible extensions. All additional details and proofs are left for the Appendix.

\vspace{1em}
\noindent \textsc{Notation:} For a vector $x \in \R^{n}$, let $|x|_{1}$, $||x||$, and $||x||_{\infty}$ denote its $\ell_{1}$-, $\ell_{2}$-, and $\ell_{\infty}$-norms respectively. For an $m \times n$ matrix $A$, we define $||A||_{\infty} = \max_{j,k}|A_{jk}|$. Let $||A||_{\ell_{\infty}} = \max_{i}\sum_{j=1}^{n}|A_{ij}|$ denote the induced $\ell_{\infty}$-norm of A.  For $S \subseteq \{1,\dots,\,n\}$ let $x_S$ be the modification of $x$ that places zeros in all entries of $x$ whose index does not belong to $S$. For a random variable $X$, let $L_{2}(X)$ denote a space of all measurable and square integrable functions.


\section{Flexible estimation under endogeneity} \label{sec:npiv_issues}

We start by briefly discussing the NPIV framework, consequences of ill-posedness for practitioners, and motivating the use of MLIV estimators.

\subsection{Nonparametric IV framework}\label{subsec:npiv}

Consider the nonparametric instrumental variables framework of \cite{newey_powell2003},
\begin{equation*} 
	Y = \gamma_{0}(X) + \varepsilon,\quad \E[\varepsilon|Z] = 0,
\end{equation*}
where $Y$ is an explanatory variable, $X$ is a vector of potentially endogenous regressors, $Z$ is a vector of instruments, and $\varepsilon$ is an error term. Suppose that $\gamma_{0}$ is identified and the completeness condition holds, i.e.\ for all measurable real functions $\delta$ with finite expectation,
\begin{equation*}
	\mathbb{E}[\delta(X)|Z] = 0 \Rightarrow \delta(X) = 0.
\end{equation*}
Intuitively, this condition implies that there is enough variation in the instruments to explain the variation in the endogenous covariates. In the linear model, completeness reduces to the usual rank condition.

The unknown function $\gamma_{0}$ solves the integral equation
\begin{equation} \label{eq:cond_exp}
	\mathbb{E}[Y|Z] = \int \gamma_{0}(x)f(x|z)dx,
\end{equation}
where $f$ denotes the conditional pdf of $X$ given $Z$. Solving for $\gamma$ directly is an ill-posed problem as it involves inverting linear compact operators \citep[see e.g.,][]{kress1989}. Ill-posedness implies that the solution to \eqref{eq:cond_exp} is not continuous in $\E[Y|Z]$ and $f(x|z)$: one cannot construct an estimator of $\gamma$ by simply plugging in consistent estimators of $\E[Y|Z]$ and $f(x|z)$ and solving for $\gamma$.

A standard solution is regularization, which means constructing an estimator of $\gamma_{0}$ so that ill-posedness does not affect consistency. In essence, regularization avoids estimation of higher-order terms that drive up variance. Traditional approaches include replacing $\gamma$ with a finite-dimensional approximation \citep{kress1989}, Tikhonov regularization \citep[see e.g.,][]{hall_horowitz2005,CFR2007regNPIV}, and functional form restrictions \citep{chetverikov2017npiv}.

Ill-posedness slows convergence rates of NPIV estimators relative to standard nonparametric regression. To quantify this, let $T:L_{2}(X)\mapsto L_{2}(Z)$ denote the conditional expectation operator,
\begin{equation*}
	T\gamma = \E[\gamma(X)|Z],
\end{equation*}
and define the measure of ill-posedness as
\begin{equation*}
	\tau = \sup_{\gamma\in\Gamma}\frac{||\gamma - \gamma_{0}||}{||T(\gamma - \gamma_{0})||},
\end{equation*}
where $\Gamma \subseteq L_{2}(X)$ and $||T(\gamma - \gamma_{0})||=\sqrt{\E\{\E[\gamma - \gamma_{0}|Z]\}^{2}}$ is the projected mean square norm. Assuming $\tau$ is bounded,
\begin{equation*}
	||\gamma - \gamma_{0}|| \leq \tau||T(\gamma - \gamma_{0})||,
\end{equation*}
so convergence in the mean square norm is always slower than in the projected norm. Importantly, fast rates in the projected norm are achievable even when mean square rates are slow, since the projected norm sidesteps ill-posedness \citep[see e.g.,][]{blundell2007,ChenPouzo2012,dikkala2020minimax}.

\subsection{Practical concerns and MLIV estimators}\label{subsec:practical_mliv}

Standard NPIV methods provide flexible approaches to nonparametric estimation under endogeneity. Nevertheless, ill-posedness poses several challenges. From the practitioner's standpoint, it limits what can be learned about $\gamma_{0}$: \cite{horowitz2011applied} notes that only low-order approximation terms can be estimated with desirable precision, reflecting a fundamental characteristic of the estimation problem rather than a limitation of any particular method. Using a Gaussian example, \cite{newey2013nonparametric} illustrates the connection between ill-posedness and instrument strength, showing that stronger instruments yield more precise estimates of higher-order terms.

The curse of dimensionality, which affects all nonparametric estimators, becomes more acute in the NPIV context. In severely ill-posed cases, convergence rates may be only polynomial in $\log n$ rather than $n$ \citep[see][]{blundell2007,darolles2011npiv,ChenPouzo2012}. As a result, variance of NPIV estimators can be substantially higher than that of standard nonparametric regression estimators, even in moderate dimensions. \cite{bakhitov2025mliv} provides Monte Carlo evidence demonstrating that variance of series NPIV estimators grows rapidly with the number of endogenous regressors, while the problem is much less severe under exogeneity.

One promising solution is to leverage machine learning algorithms designed specifically for the IV setting. Standard ML estimators, such as Lasso, boosting, and neural networks, fit the conditional expectation $\E[Y|X]$ rather than the structural function $\gamma$, and thus fail to capture causal relationships when endogeneity is present. However, recent literature has developed MLIV estimators that combine sophisticated regularization with the IV moment conditions to solve the ill-posed problem while maintaining functional form flexibility. Examples include DeepIV \citep{hartford2017deepIV,li2024regdeepiv}, Kernel IV \citep{singh2019kiv}, Dual IV \citep{muandet2020dual}, DeepGMM \citep{bennett2019deepGMM}, Double Lasso \citep{gold2020}, minimax estimators \citep{dikkala2020minimax,bennett2023minimax}, iterative estimators \citep{xu2020learning,bakhitov2021boostIV}, and the two-stage ML estimator \citep{bruns-smith2025tsml}. We refer the reader to \cite{bakhitov2025mliv} for a detailed comparison of some of these methods and empirical evidence on their finite-sample performance.

In practice, the structural function itself is rarely the object of interest; rather, practitioners seek economically meaningful quantities such as average partial effects. Consider, for example, a demand estimation problem. The demand level itself does not bear a lot of economic meaning while objects like partial effects of demand shifters, consumer surplus, price and income elasticities or diversion ratios are potential objects of interest. These are functionals of the structural function. The goal of this paper is to provide valid inference on such functionals when $\gamma$ is estimated using MLIV methods.

\section{Learning functionals of MLIV estimators} \label{sec:mliv_functionals}

\subsection{Functionals of interest and economic examples}

This paper focuses on estimation and inference on functionals of a flexible (i.e. nonparametric) structural function $\gamma_{0}$ in presence of endogenous regressors, i.e. within the framework of the nonparametric instrumental variables model. Let $W_{i} \equiv (Y_{i}, X_{i}, Z_{i})$ be a data observation. Let $m(W, \,\gamma)$ denote a functional of $\gamma$ that depends on an observation $W$. We consider parameters of interest of the form
\begin{equation*}
	\theta_{0} = \E[m(W,\,\gamma_{0})].
\end{equation*}
For expositional convenience, in this Section we will focus on functionals that depend linearly on $\gamma$. In Section \ref{sec:asy_prop_nonlin} we extend our results to nonlinear functionals. The object of interest $\theta_{0}$ is an expectation of some functional $m(W,\,\gamma_{0})$ over the data distribution. Hence, we are interested in mean effects, which restricts a set of possible functionals of interest, such as, for example, a simple evaluation functional $\theta_{0} = \gamma_{0}(\bar{X})$, where $\bar{X} \in \text{supp}(X)$. However, our framework is still general enough and covers a wide range of economically important objects.

Below, we give several examples of the types of objects under consideration, including both linear and nonlinear functionals.

\begin{exmp}\label{exmp:weighted_avg_deriv}
\textsc{Weighted average derivative.} \\
In this example, $X$ is a vector of continuous endogenous regressors and
\begin{equation*}
	\theta_{0} = \E\left[\omega(X)\frac{\partial \gamma_{0}(X)}{\partial X_1}\right],
\end{equation*}
which is a weighted average derivative of $\gamma_{0}$ with respect to $X_1$ with known weight $\omega(X)$ as in \cite{ai_chen2007estimation}. Here $m(W,\,\gamma) = \omega(X)\partial\gamma(X)/\partial X_1$, which is linear in $\gamma$. When $\omega(X)=1$, $\theta_{0}$ becomes an average partial effect of $X_1$ on $\gamma_{0}(X)$.
\end{exmp}

\begin{exmp}\label{exmp:avg_policy_effect}
\textsc{Average policy effect.} \\
The object of interest here is the average effect of changing the covariates according to some transformation $x \mapsto g(x)$,
\begin{equation*}
	\theta_{0} = \E[\gamma_{0}(g(X)) - \gamma_{0}(X)],
\end{equation*}
where $m(W,\,\gamma) = \gamma_{0}(g(X)) - \gamma_{0}(X)$ is a linear functional. Thus, $\theta_{0}$ measures the average policy effect of a counterfactual change of covariate values. 
\end{exmp}

\begin{exmp}\label{exmp:cs_dwl}
\textsc{Average consumer surplus (CS) and deadweight loss (DWL).} \\
This example is based on \cite{hausman_newey1995} and its adaptation to the NPIV setting by \cite{chen_christensen2018optimal}. Here, $X = (P,\,I,\,X_{2})$, where $P$ is  product price, which is potentially endogenous, $I$ is consumer income, and $X_{2}$ includes additional covariates.  Let $S(p^{0},\,\iota,\,x_{2})$ denote the exact CS from a price change from $p^{0}$ to $p^{1}$ at income level $\iota$ and covariate values $x_{2}$. Then $S(p^{0},\,\iota,\,x_{2})$ is a solution to
\begin{equation*}
	\frac{\partial S(p(u),\,\iota,\,x_{2})}{\partial u} = -\gamma_{0}(p(u),\,\iota - S(p(u),\,\iota,\,x_{2}),\,x_{2})\frac{\partial p(u)}{\partial u},\quad S(p(1),\,\iota,\,x_{2}) = 0,
\end{equation*}
where $p:[0,\,1] \mapsto \R$ is a twice continuously differentiable price path with $p(0) = p^{0}$ and $p(1) = p^{1}$. Let $D(p^{0},\,\iota,\,x_{2})$ denote the corresponding DWL functional given by
\begin{equation*}
	D(p^{0},\,\iota,\,x_{2}) = S(p^{0},\,\iota,\,x_{2}) - \left(p^{1} - p^{0}\right)\gamma_{0}(p^{1},\,\iota,\,x_{2}).
\end{equation*}
The objects of interest are 
\begin{align*}
	\theta_{0}^{CS} & = \E[\omega(I,\,X_2)S(p(u),\,I,\,X_{2})], \\
	\theta_{0}^{DWL} & = \E[\omega(I,\,X_2)D(p(u),\,I,\,X_{2})] = \theta_{0}^{CS} - \E[\omega(I,\,X_2)(p^{1} - p^{0})\gamma(p^{1},\,I,\,X_{2})],
\end{align*}
where $\omega$ is a weighting function that does not depend on the price level. Unless demand is independent of income, the exact CS and DWL are typically nonlinear functionals of $\gamma_{0}$.
\end{exmp}


\subsection{Orthogonal moment condition}

Suppose that we are given $\hat\gamma$, an MLIV estimator of $\gamma_{0}$. A natural approach to estimate $\theta_{0}$ is to simply plug-in $\hat\gamma$ into $m$ and replace the expectation with the sample average, 
\begin{equation*}
	\hat\theta^{\text{plug-in}} = \frac{1}{n}\sum_{i=1}^{n}m(W,\,\hat\gamma).
\end{equation*}
That said, the plug-in estimator will not be root-$n$ consistent if the first-order bias does not vanish at root-$n$ rate, which is the case when $\hat\gamma$ involves regularization and/or model selection \citep{chernozhukov2022locally}. In the NPIV model, regularization is essential to dealing with ill-posedness rendering all NPIV/MLIV estimators regularized estimators. 

The plug-in estimator suffers from the first-order bias because the moment condition the moment condition defining $\theta_{0}$ is not orthogonal to local perturbations of $\gamma$ around $\gamma_{0}$. Namely, let $\delta$ be a local perturbation around $\gamma_{0}$, then the Gateaux derivative in the direction $\delta$ is
\begin{equation*}
	\left.\frac{\partial}{\partial \tau}\E[m(W,\,\gamma_{0} + \tau\delta)]\right|_{\tau=0} = \E[m(W,\,\delta)] \neq 0.
\end{equation*}
Thus, obtaining an orthogonal moment condition is a crucial step for establishing our results.

We consider functionals $m(W,\,\gamma)$ such that there exists a function $\alpha_{0}(Z)$ with $\E[\alpha^{2}_{0}(Z)] < \infty$ and
\begin{equation}
	\E[m(W,\,\gamma)] = \E[\alpha_{0}(Z)\gamma(X)] \text{ for all } \gamma \text{ with } \E[\gamma^{2}(X)] < \infty. 
\end{equation}
As discussed in \cite{ichimura2022influence}, if there exists $v(X)$ with $\E[v^{2}(X)] < \infty$ and $\E[m(w,\,\gamma)] = \E[v(X)\gamma(X)]$, then the existence of $\alpha_{0}(Z)$ requires $v(X) = \E[\alpha_{0}(Z)|X]$. As pointed out in \cite{severini_tripathi2012}, this is a necessary condition for root-$n$ estimability of $\theta_{0}$. Moreover, by the Riesz representation theorem, the existence of such $\alpha_{0}(Z)$ is equivalent to $\E[m(W,\,\gamma)]$ being a mean square continuous functional of $\gamma$. Henceforth, we refer to $\alpha_{0}(Z)$ as a Riesz representer. \cite{newey1994} shows that mean square continuity of $\E[m(W,\,\gamma)]$ is equivalent to the semiparametric efficiency bound of $\theta_{0}$ being finite. Thus, our approach focuses on regular functionals. Similar uses of the Riesz representation theorem can be found in \cite{ai_chen2007estimation}, \cite{ackerberg2014asymptotic}, \cite{hirshberg2020debiased}, and CNS among others.

\cite{ichimura2022influence} establish the form of the orthogonal moment function for NPIV estimators
\begin{equation} \label{eq:if_npiv}
	\psi(W, \theta, \gamma, \alpha) = m(W,\gamma) - \theta + \alpha(Z)[Y - \gamma(X)],
\end{equation}
where $\alpha(Z)[Y - \gamma(X)]$ is the influence function. Note that the moment function in \eqref{eq:if_npiv} is Neyman-orthogonal to local perturbations $(\delta,\,\beta)$ of $(\gamma_{0},\,\alpha_{0})$ such that
\begin{align*}
\left.\frac{\partial}{\partial\tau}\E[\psi(W, \theta, \gamma_{0} + \tau\delta, \alpha_{0} + \tau\beta)]\right|_{\tau=0} & = \E[m(W,\,\delta)] - \E[\alpha_{0}(Z)\delta(X)]  \\ &+ \E[(Y - \gamma_{0}(X))\beta(Z)]  \\ & = 0,
\end{align*}
where the first two terms cancel out by the Riesz representation theorem and the last term is zero by the exogeneity condition. This property makes the orthogonal moment condition an excellent basis for constructing a debiased estimator of $\theta_{0}$ in the NPIV setting where estimators are typically regularized. Similar uses of the Neyman-orthogonal moment condition can be found in \cite{chen2023ann_npiv} for NPIV sieve estimators and in \cite{gautier2011high} for the high-dimensional linear IV regression.

Moreover, the exogeneity condition and iterated expectations imply
\begin{equation*}
\E[\alpha(Z)(Y - \gamma_{0}(X))] = \E[\alpha(Z)\E[Y - \gamma_{0}(X)|Z]] = 0
\end{equation*}
for any $\alpha(Z)$, meaning that the expectation of the influence function is zero regardless of $\alpha$. This implies
\begin{align*}
\mathbb{E}[\psi(W,\,\theta_{0},\,\gamma_{0},\,\alpha)] & = \mathbb{E}[m(W,\,\gamma_{0})] - \theta_{0} + \mathbb{E}\left[\alpha(Z)[Y - \gamma_{0}(X)]\right]  = 0,
\end{align*}
which allows us to use \eqref{eq:if_npiv} to estimate $\theta_{0}$. The debiased estimator $\hat\theta$ can be constructed by plugging $\hat\gamma$ and $\hat\alpha$ into the moment function $\psi(W, \theta, \gamma, \alpha)$ in place of $\gamma$ and $\alpha$ and solving for $\hat{\theta}$ from setting the sample moment $\psi(W, \theta, \hat{\gamma}, \hat{\alpha})$ to zero.

Note that the debiased estimator $\hat\theta$ requires an estimator of $\alpha_{0}$. Typically in the NPIV setting, the form of $\alpha_{0}$ is very complicated to derive or even unknown. Consider the weighted average derivative example from above. The RR is a solution to the following integral equation
\begin{equation*}
	\E[\alpha_{0}(Z)|X] = - \frac{\partial\{f_{0}(X)\omega(X)\}/\partial X_{1}}{f_{0}(X)},
\end{equation*}
where $f_{0}(X)$ is the marginal pdf of $X$. As a result, it is desirable to have a flexible approach for automatic estimation of the RR. The next subsection describes how to construct such an estimator.


\subsection{Estimation of the Riesz representer} \label{subsec:construction}

\cite{chernozhukov2022locally} show that we can exploit the orthogonality of the debiased moment function $\psi(W, \theta, \gamma, \alpha)$ to estimate $\alpha_{0}$. The Gateaux derivative of $\psi(W, \theta, \gamma, \alpha)$ in the direction $\delta$ is
\begin{align}\label{eq:pop_mom_rr_est}
\mathbb{E}[\psi_{\gamma}(W, \theta_{0}, \delta, \alpha_{0})] & =  \left.\frac{\partial}{\partial\tau}\mathbb{E}[\psi(W, \theta_{0}, \gamma_{0} + \tau \delta, \alpha_{0})]\right\vert_{\tau=0} \nonumber \\
& = \left.\frac{\partial}{\partial \tau}\E\left[m(W, \gamma_{0}  + \tau \delta) -\theta_{0} + \alpha_{0}(Z)[Y - \gamma_{0}(X) - \tau \delta(X)]\right]\right\vert_{\tau=0} \nonumber \\
& = \E[m(W, \delta) - \alpha_{0}(Z)\delta(X)] = 0,
\end{align}
where the last equality comes from $m(W,\gamma)$ being linear in $\gamma$. This can be thought of as a population moment condition for $\alpha_{0}$.

We assume that the Riesz representer estimator takes the form $\hat\alpha = b(Z)'\hat\rho$, where $b(Z)$ is a $p$-dimensional dictionary of basis functions with $p$ being possibly much larger than $n$. Let $d(X)$ be a $q$-dimensional dictionary of basis functions that represent deviations from $\gamma_{0}$. Using $d(X)$, we can construct a vector of moment conditions to estimate $\rho$. Let $d_{j}(X)$ be an element of $d(X)$, then we can form a sample moment condition corresponding to the population moment condition \eqref{eq:pop_mom_rr_est} by replacing the expectation with a sample average and $\alpha_{0}(Z)$ with $b(Z)'\rho$ to obtain
\begin{equation} \label{eq:gen_mom_rr_est}
	\hat{\psi}_{j}(W, \rho) = \frac{1}{n}\sum_{i =1}^{n}\left\{m(W_{i},\,d_{j}) - d_{j}(X_{i})b(Z_{i})'\rho\right\} = 0, \quad j=1,\dots,\,q.
\end{equation}
Note that we require $q\geq p$ to ensure identification and estimability of $\rho$. 

To allow for a high-dimensional $\alpha$ specification, we follow \cite{caner_kock2018} and use the penalized GMM (PGMM) framework. Let $\hat{\psi}(\rho) = (\hat{\psi}_{1}(W,\,\rho), \,\dots,\,\hat{\psi}_{q}(W,\,\rho))'$ where $\hat{\psi}_{j}(W,\,\rho)$ is defined in \eqref{eq:gen_mom_rr_est}. Then a solution to the PGMM problem takes the form
\begin{equation} \label{eq:pgmm_rr}
	\hat{\rho}_{L} = \argmin_{\rho \in \R^{p}} \hat{\psi}(\rho)'\hat{\Omega}_{q}\hat{\psi}(\rho) + 2\lambda_{n}|\rho|_{1},
\end{equation} 
where $\hat{\Omega}_q = \hat{\Omega}/q$, $\hat\Omega$ is a $q \times q$ positive semi-definite matrix, and $2\lambda_{n}|\rho|_{1}$ is a penalty term. This framework allows for $q \geq p > n$ and can be viewed as a Lasso extension of the standard GMM.

Let $\hat{G} = \frac{1}{n}\sum_{i =1}^{n} d(X_{i})b'(Z_{i})$ and $\hat{M} = \frac{1}{n}\sum_{i =1}^{n} m(W_{i},\,d)$ be unbiased estimators of $G = \E[d(X)b'(Z)]$ and $M=\E[m(W,\,d)]$, respectively. Then we can rewrite \eqref{eq:pgmm_rr} in matrix form as
\begin{equation} \label{eq:pgmm_rr_mat}
	\hat{\rho}_{L} = \argmin_{\rho \in \R^{p}} (\hat M - \hat G\rho)'\hat\Omega_{q}(\hat M - \hat G\rho) + 2\lambda_{n}|\rho|_{1}.
\end{equation} 
The estimator $\hat\rho_{L}$ can be interpreted as a minimum distance version of the high-dimensional GMM estimator of \cite{caner_kock2018}. Implementation details can be found in Appendix \ref{app:algo_details}.

\subsubsection{Choice of the weight matrix}
The natural choice is to use the optimal weight matrix $\hat\Omega^{\text{opt}} = \left(\frac{1}{n}\sum_{i=1}^{n}\hat\psi_i(\tilde{\rho})\hat\psi_i(\tilde{\rho})'\right)^{-1}$, where $\tilde\rho$ is the preliminary PGMM estimate based on the identity weight matrix. However, in high-dimensional scenarios $(q > n)$ this is not an appropriate choice as $\hat\Omega^{\text{opt}}$ will be rank-deficient. Instead, we propose to use a diagonal weight matrix of the form
\begin{equation} \label{eq:pgmm_diag_wm}
	\hat\Omega^{d} = \text{diag}\left( \hat\sigma^{-2}_{1}(\tilde\rho),\ldots,\hat\sigma^{-2}_{q}(\tilde\rho) \right),
\end{equation}
where $\hat\sigma^{2}_{j}(\tilde\rho) = \frac{1}{n}\Sigma_{i=1}^{n}\psi_{j}^{2}(W,\,\tilde\rho)$ is the sample variance of the $j$-th moment evaluated at the preliminary PGMM estimate. The only downside of $\hat\Omega^{d}$ is a slight efficiency loss when moments are correlated, but it still allows to handle conditional heteroskedasticity.


\subsection{Informal preview of estimation and inference results}

The estimation procedure can be summarized in the following pseudo-algorithm:
\begin{enumerate}
	\item We use cross-fitting to avoid (i) potentially severe finite sample bias due to the double use of data and (ii)  regularity conditions based on $\hat{\gamma}$ and $\hat{\alpha}$ being in Donsker class, which ML estimators are usually not. Assuming the data $\{W\}_{i=1}^{n}$ is $i.i.d.$, let $I_{\ell}$, $\ell = 1,\dots,\,L$, be a partition of the observation index set $\{1,\dots,\,n\}$ into $L$ distinct subsets of about equal size. Let $n_{\ell}$ denote the number of observations in fold $\ell$.
	\item For each data fold $\ell=1,\dots,\,L$, we obtain estimates $\hat\gamma_{\ell}$ and $\hat\alpha_{\ell}$ that are constructed from the observations not in $I_{\ell}$. In particular, the RR estimate is of the form $\hat\alpha_{\ell} = b(Z)'\hat\rho_{\ell}$, where
	\begin{equation*}
		\hat{\rho}_{\ell} = \argmin_{\rho \in \R^{p}} (\hat M_{\ell} - \hat G_{\ell}\rho)'\hat\Omega_{q}(\hat M_{\ell} - \hat G_{\ell}\rho) + 2\lambda_{n}|\rho|_{1},
	\end{equation*}
	with $\hat{G}_{\ell} = \frac{1}{n - n_{\ell}}\sum_{i\not\in I_{\ell}} d(X_{i})b'(Z_{i})$ and $\hat{M}_{\ell} = \frac{1}{n-n_{\ell}}\sum_{i\not\in I_{\ell}} m(W_{i},\,d)$.
	\item We construct the estimator $\hat{\theta}$ by setting the sample average of $\psi(W, \theta, \hat\gamma_{\ell}, \hat{\alpha}_{\ell})$ to zero and solving for $\theta$. This estimator $\hat\theta$ and the associated asymptotic variance estimator $\hat V$ have the following explicit forms
	\begin{align} \label{eq:dml}
		\hat{\theta} & = \frac{1}{n}\sum_{\ell=1}^{L}\sum_{i \in I_{\ell}}\{m(W_{i}, \hat{\gamma}_{\ell}) + \hat{\alpha}_{\ell}(Z_{i})[Y_{i} - \hat{\gamma}_{\ell}(X_{i})]\} \\
		\hat V & = \frac{1}{n}\sum_{\ell=1}^{L}\sum_{i \in I_{\ell}}\hat{\psi}_{i\ell}^{2},\; \hat{\psi}_{i\ell} = m(W_i, \hat{\gamma}_{\ell}) - \hat{\theta} + \hat{\alpha}_{\ell}(Z_i)[Y_i - \hat{\gamma}_{\ell}(X_i)] \notag .
	\end{align}
\end{enumerate}

Next, we informally discuss the key conditions behind the asymptotic normality result. Since $\hat\theta$ is constructed by plugging-in $\hat\gamma$ and $\hat\alpha$ in the orthogonal moment condition, asymptotic properties of $\hat\theta$ depend on the asymptotic behavior of $\hat\gamma$ and $\hat\alpha$. First, to allow for a wide range of MLIV estimators, we assume that $\hat\gamma$ satisfies some projected mean square convergence rate condition as an estimator of $\gamma_{0}$. Specifically, we require
\begin{equation*}
	||T(\hat\gamma - \gamma_{0})|| = O_p(\kappa^{\gamma}_{n}),
\end{equation*}
where $\kappa_{n}^{\gamma}$ can be slower than root-$n$ rate\footnote{The result also holds for the standard mean square rate condition, i.e. $||\hat\gamma - \gamma_{0}|| = O_p(\kappa_{n}^{\gamma})$, however, for NPIV/MLIV estimators this rate is slower due to ill-posedness.}. As pointed out in Section \ref{subsec:npiv}, it is possible to obtain a fast rate under the projected mean square norm. Hence, it is a weak high-level assumption that can be satisfied by a variety of MLIV estimators such as Double Lasso \citep{gold2020}, Kernel IV \citep{singh2019kiv}, and a series of estimators constructed using the minimax framework of \cite{dikkala2020minimax} among others. 

The second condition is the mean square convergence rate of $\hat\alpha$. For the ease of exposition, assume that $\hat\alpha$ satisfies the following mean square convergence rate condition,
\begin{equation*}
	||\hat\alpha - \alpha_{0}|| = O_p(\kappa^{\alpha}_{n}).
\end{equation*}
We derive an exact expression for $\kappa_{n}^{\alpha}$ in Section \ref{sec:pgmm_prop}. 

Finally, under quite standard regularity conditions  asymptotic normality can be established provided that
\begin{equation*}
	\sqrt{n}||\hat\alpha - \alpha_{0}||\;||T(\hat\gamma - \gamma_{0})|| \xrightarrow{p} 0, 
\end{equation*}
which is satisfied when $\sqrt{n}\kappa_{n}^{\gamma}\kappa_{n}^{\alpha} \rightarrow 0$. Hence, there is a trade-off between the convergence rates of $\hat\gamma$ and $\hat\alpha$. It is possible to allow for a slower convergence rate of $\hat\gamma$ at the expense of a faster convergence rate of $\hat\alpha$ and vice versa.


\section{Properties of the PGMM estimator}\label{sec:pgmm_prop}

In this section we provide the mean square convergence rate for the PGMM estimator $\hat{\alpha}$ which is necessary for the asymptotic analysis of $\hat\theta$. We start by introducing some conditions.

\begin{assumption} \label{ass:weight_matrix}
There exists a sequence of non-random matrices $\Omega$ such that 
\begin{enumerate}
	\item[(i)] $||\hat{\Omega} - \Omega||_{\ell_\infty} = O_p(\varepsilon_{n}^{\Omega})$, where $\varepsilon_{n}^{\Omega} = \sqrt{\frac{\log q}{n}}$,
	\item[(ii)] $||\Omega||_{\ell_\infty} \leq C < \infty$ for some constant $C$.
\end{enumerate}
\end{assumption}
The first part of Assumption \ref{ass:weight_matrix} is pretty standard and requires a consistent estimate of the weight matrix. The rate $O_p(\sqrt{\log q / n})$ is the natural high-dimensional extension of the classic $O_p(n^{-1/2})$ rate \citep[see e.g.,][]{bickel2008covariance}. The second part of the assumption, as discussed in \cite{caner_kock2018}, might be restrictive as it requires a high-dimensional matrix to be uniformly bounded in $\ell_{\infty}$- norm, but for the notational convenience we keep it. Assumption \ref{ass:weight_matrix} is trivially satisfied by the identity weight matrix. We also demonstrate that the main result still follows though under the diagonal weight matrix and relaxed assumptions.

Note that the convergence rate of the PGMM estimator defined in \eqref{eq:pgmm_rr_mat} depends on the convergence rates of $\hat\Omega$, $\hat G$, and $\hat M$. Assumption \ref{ass:weight_matrix} ensures that $\hat\Omega$ is consistent. To obtain a convergence rate for $\hat G$, we impose the following condition.
\begin{assumption} \label{ass:boundedness}
There are constants $C_{b}$ and $C_{d}$ such that with probability approaching one,
\begin{equation*}
	\max_{1 \leq j \leq p}|b_{j}(Z)| \leq C_{b} \text{ and } \max_{1 \leq j \leq q}|d_{j}(X)| \leq C_{d}.
\end{equation*}
\end{assumption}
This condition implies  
\begin{equation*}
	||\hat G - G||_{\infty} = O_p(\varepsilon^{G}_{n}),\text{ where } \varepsilon^{G}_{n}=\sqrt{\frac{\log q}{n}}.
\end{equation*}
Unlike the standard Lasso, the second moment matrix convergence rate depends on the number of moments, i.e. the number of elements in $d(X)$, rather than the number of elements in $b(Z)$. 
\begin{assumption} \label{ass:M_conv}
There is $\varepsilon_{n}^{M}$ such that 
\begin{equation*}
	||\hat M - M||_{\infty} = O_{p}(\varepsilon_{n}^{M}) \text{ and } \varepsilon_{n}^{M} = o(1).
\end{equation*}
\end{assumption}

Next, we impose a sparse approximation condition for $\alpha_{0}$.
\begin{assumption} \label{ass:sparse_approx}
There exist $C > 1$ and $\bar{\rho}$ with $\bar{s}$ non-zero elements such that
\begin{equation*}
	||\alpha_{0} - b'\bar{\rho}||^{2} \leq C\bar{s}\varepsilon^{2}_{n},
\end{equation*}
where $\varepsilon_{n} = \max\{\varepsilon^{G}_{n},\, \varepsilon^{M}_{n}\}$.
\end{assumption} 
Intuitively, this assumption controls the squared approximation error from using the linear combination $b'\bar\rho$ to approximate $\alpha_{0}$. Note that Assumption \ref{ass:sparse_approx} does not necessarily require $\alpha_{0}$ to be equal to the linear combination of $\bar{s}$ terms, it states that there exists a sparse $\bar\rho$ with $\bar{s}$ non-zero elements such that the approximation error is bounded by $C\bar{s}\varepsilon^{2}_{n}$. In other words, Assumption \ref{ass:sparse_approx} is general enough to accommodate both exact and  approximate sparsity of $\alpha_{0}$. Approximate sparsity allows for a large number of potential regressors (possibly much larger than the sample size) when relatively few important regressors give a good approximation but the identity of those few is not known, which is different from a standard series approximation where typically the first $\bar{s}$ regressors are assumed to achieve a good approximation \citep{bradic2021minimax}. Thus, very sparse approximations allow to keep $\bar{s}$ relatively small which results in faster convergence rates. For a more detailed discussion of approximation bias conditions we refer the reader to CNS. 

Let $S = \{1,\dots,\,p\}$, $S_{\rho}$ be a subset of $S$ with $\rho_{j} \neq 0$, and $S_{\rho}^{c}$ be the complement of $S_{\rho}$ in $S$. Let $\rho_{L}$ be the population coefficients, i.e.
\begin{equation*}
	\rho_{L} = \argmin_{\rho\in\R^{p}} (M - G\rho)'\Omega_{q}(M - G\rho) + 2\varepsilon_{n}|\rho|_{1}.
\end{equation*}
The PGMM estimator $\hat\rho_{L}$ estimates the population coefficients $\rho_{L}$, which in turn might be different from the approximation coefficients $\bar\rho$. The following condition is essential to derive the oracle inequality for $\hat\rho_{L}$, and hence, the convergence rate for $\hat\alpha_{L}=b'\hat\rho_{L}$.

\begin{assumption} \label{ass:re_pgmm}
Let $G'\Omega_{q} G$ have its largest eigenvalue uniformly bounded in $n$ and
\begin{equation*}
\phi^{2}(s) = \inf\left\{\frac{\delta'G'\Omega_{q} G\delta}{||\delta_{S_{\rho}}||^{2}}:\delta\in\R^{p} \backslash \{0\},\; |\delta_{S_{\rho}^{c}}|_{1} \leq 3|\delta_{S_{\rho}}|_{1},\, |S_{\rho}| \leq s\right\} \geq c > 0.
\end{equation*}
\end{assumption}

Assumption \ref{ass:re_pgmm} is the modified population restricted eigenvalue condition as in \cite{caner_kock2018}. To accommodate for the PGMM estimator the condition is imposed on $G'\Omega_{q} G$ rather than $\E[b(Z)b'(Z)]$ as in the classic restricted eigenvalue condition of \cite{bickel2009}. Showing that its empirical counterpart is bounded uniformly away from zero will be used to put a bound on the estimation error of $\hat\alpha_{L}$.

\begin{assumption} \label{ass:m_bound}
There is $C > 0$ such that with probability approaching one,
\begin{equation*}
	\max_{1\leq j\leq q}|m(W,\,d_{j})| \leq C.
\end{equation*}
\end{assumption}
This condition is needed to put a bound on $||M||_{\infty}$ which is necessary to establish the oracle inequality for $\hat\rho_{L}$, and hence, the convergence rate for $\hat\alpha_{L}$. Moreover, note that by Assumption \ref{ass:m_bound}, $\varepsilon_{n} = \varepsilon^{M}_{n} = \varepsilon^{G}_{n} = \sqrt{\log q/n}$. This simplifies the analysis, but is not necessary for establishing the results below. Also, let $|\bar\rho|_{1} \leq \bar{A} < \infty$. We can allow for the norm to grow with $n$ at a certain rate, however, it does not change the main results, hence, for simplicity we put a bound on $|\bar\rho|_{1}$.

\begin{theorem}\label{thm:rr_bound}
Suppose Assumptions \ref{ass:weight_matrix}--\ref{ass:m_bound} are satisfied. Let $\varepsilon_{n} = o(\lambda_{n})$ and $\bar{s}\lambda_n = o(1)$, then
\begin{equation*}
	||\hat\alpha_{L} - \alpha_{0}|| = O_p(\kappa_{n}^{\alpha}), \text{ where } \kappa_{n}^{\alpha} = \bar{s}\lambda_{n}.
\end{equation*}
\end{theorem}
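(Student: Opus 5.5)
The plan follows the Lasso oracle-inequality argument of \cite{bickel2009} as adapted to GMM by \cite{caner_kock2018}, with $\bar\rho$ from Assumption \ref{ass:sparse_approx} as the comparison vector. The first step is to split the error as
\begin{equation*}
||\hat\alpha_L-\alpha_0|| \le ||\alpha_0-b'\bar\rho|| + ||b'(\hat\rho_L-\bar\rho)||.
\end{equation*}
The first term is $O(\sqrt{\bar s}\varepsilon_n)=o(\bar s\lambda_n)$. For the second term I would use $\E[(b'\delta)^2]\le C_b^2|\delta|_1^2$, which follows from Assumption \ref{ass:boundedness}. It therefore suffices to show $|\hat\rho_L-\bar\rho|_1=O_p(\bar s\lambda_n)$.

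To get that $\ell_1$ bound, start from the basic inequality: $\hat\rho_L$ does at least as well as $\bar\rho$ in the objective \eqref{eq:pgmm_rr_mat}. Write $\delta=\hat\rho_L-\bar\rho$ and $\hat R=\hat M-\hat G\bar\rho$. Expanding the quadratic gives
\begin{equation*}
\delta'\hat G'\hat\Omega_q\hat G\delta \le 2\hat R'\hat\Omega_q\hat G\delta + 2\lambda_n(|\bar\rho|_1-|\hat\rho_L|_1).
\end{equation*}
The cross term is bounded by $2\,||\hat G'\hat\Omega_q\hat R||_\infty|\delta|_1$. The key intermediate claim is that $||\hat G'\hat\Omega_q\hat R||_\infty=O_p(\varepsilon_n+\sqrt{\bar s}\varepsilon_n\cdot(\text{const}))$, and that after possibly folding the approximation error in it is $o_p(\lambda_n)$ up to a term of order $\bar s\varepsilon_n$. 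To prove the claim, decompose
\begin{equation*}
\hat R=(\hat M-M)-(\hat G-G)\bar\rho+(M-G\bar\rho).
\end{equation*}
\begin{itemize}
\item The first piece is $O_p(\varepsilon_n)$ in $\ell_\infty$ by Assumption \ref{ass:M_conv}.
\item The second piece is at most $||\hat G-G||_\infty\bar A=O_p(\varepsilon_n)$.
\item The third piece uses the Riesz identity $M_j=\E[m(W,d_j)]=\E[\alpha_0 d_j]$. Then $(M-G\bar\rho)_j=\E[d_j(\alpha_0-b'\bar\rho)]$, which is at most $C_d||\alpha_0-b'\bar\rho||=O(\sqrt{\bar s}\varepsilon_n)$.
\end{itemize}
Premultiplying by $\hat G'\hat\Omega_q$ is handled with $||\hat\Omega_q||_{\ell_\infty}\le (C+o_p(1))/q$ from Assumption \ref{ass:weight_matrix}. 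Together with the entrywise bound on $\hat G$, this gives $||\hat G'\hat\Omega_q v||_\infty\le C_bC_d\,||\hat\Omega||_{\ell_\infty}||v||_\infty$; the $1/q$ is offset by the sum over $q$ rows.

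With this in hand, the standard argument applies. On the event that the cross term is at most $(\lambda_n/2)|\delta|_1$ plus an approximation remainder, either $\delta$ lies in the cone $|\delta_{S^c}|_1\le 3|\delta_S|_1$ with $S=\mathrm{supp}(\bar\rho)$, or the remainder dominates. In the latter case $|\delta|_1\lesssim \bar s\varepsilon_n/\lambda_n\cdot\sqrt{\bar s}\varepsilon_n$, which is small. On the cone, a restricted eigenvalue bound for the empirical matrix $\hat G'\hat\Omega_q\hat G$ yields
\begin{equation*}
|\delta|_1\le 4\sqrt{\bar s}\,||\delta_S||\lesssim \bar s\lambda_n/\phi^2,
\end{equation*}
which gives $\kappa_n^\alpha=\bar s\lambda_n$.

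The main obstacle is transferring Assumption \ref{ass:re_pgmm} from $G'\Omega_qG$ to $\hat G'\hat\Omega_q\hat G$. On the cone, $|\delta'(\hat A-A)\delta|\le||\hat A-A||_\infty|\delta|_1^2\le 16\bar s||\hat A-A||_\infty||\delta_S||^2$. So it is enough to show $||\hat G'\hat\Omega_q\hat G-G'\Omega_qG||_\infty=O_p(\varepsilon_n)$, by telescoping over the three factors and using Assumptions \ref{ass:weight_matrix} and \ref{ass:boundedness}. Then $\bar s\varepsilon_n=o(\bar s\lambda_n)=o(1)$ keeps the empirical restricted eigenvalue at least $c/2$ with probability approaching one. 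The care needed here is the $1/q$ normalization in $\hat\Omega_q$ and the $\ell_\infty$ operator norm bookkeeping, which is where I would expect the most work.
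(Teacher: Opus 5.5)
There is a genuine gap at the cone step. You bound the whole cross term by $2\|\hat G'\hat\Omega_q\hat R\|_\infty|\delta|_1$. By your own decomposition, the population piece $r=M-G\bar\rho$ can only be bounded by $\|\hat G'\hat\Omega_q r\|_\infty\lesssim\|r\|_\infty\le C_d\|\alpha_0-b'\bar\rho\|=O(\sqrt{\bar s}\,\varepsilon_n)$. The hypotheses give $\varepsilon_n=o(\lambda_n)$, not $\sqrt{\bar s}\,\varepsilon_n=o(\lambda_n)$. For instance, $\lambda_n/\varepsilon_n=\log\log n$ and $\bar s=\log n$ still has $\bar s\lambda_n\to0$.

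Because this term multiplies $|\delta|_1$, it is not an additive remainder, so the dichotomy ``cone or remainder dominates'' is not available. Once $\sqrt{\bar s}\,\varepsilon_n\gtrsim\lambda_n$, the penalty no longer dominates the score and no cone condition follows. The fallback bound $\bar s^{3/2}\varepsilon_n^2/\lambda_n$ you state is also not $O(\bar s\lambda_n)$ in that example. As written, your argument proves the theorem only under the extra condition $\sqrt{\bar s}\,\varepsilon_n=o(\lambda_n)$.

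The missing idea is to keep $r$ out of the H\"older step and absorb it into the quadratic form. With $\hat A=\hat G'\hat\Omega_q\hat G$ and $\hat\Omega_q$ positive semi-definite, Cauchy--Schwarz gives
\begin{equation*}
2r'\hat\Omega_q\hat G\delta\le 2E_n+\tfrac12\,\delta'\hat A\delta,\qquad E_n=r'\hat\Omega_q r\le\|\hat\Omega\|_{\ell_\infty}\|r\|_\infty^2=O_p(\bar s\varepsilon_n^2).
\end{equation*}
The stochastic part $u=(\hat M-M)-(\hat G-G)\bar\rho$ still satisfies $\|\hat G'\hat\Omega_q u\|_\infty=O_p(\varepsilon_n)$, which is below $\lambda_n/4$ with probability approaching one. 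This yields $\tfrac12\delta'\hat A\delta+\tfrac32\lambda_n|\delta_{S^c}|_1\le\tfrac52\lambda_n|\delta_S|_1+2E_n$. If $E_n\le\lambda_n|\delta_S|_1$, you are in the cone with constant $3$ and get $|\delta|_1\le36\bar s\lambda_n/\hat\phi^2$. Otherwise $|\delta|_1\le4E_n/\lambda_n=O_p(\bar s\varepsilon_n^2/\lambda_n)=o_p(\bar s\lambda_n)$. With this repair your route is complete under the stated hypotheses.

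The paper avoids the problem differently. It compares $\hat\rho_L$ with $\rho_\star$, the solution of an auxiliary problem that penalizes only coordinates outside $\mathrm{supp}(\bar\rho)$. The population score $G'\Omega_q(M-G\rho_\star)$ is then $O(\varepsilon_n)$ in sup-norm by the KKT conditions for $\rho_L$ and $\rho_\star$ (Lemmas \ref{lem:foc_rho_star} and \ref{lem:aux1}), so H\"older suffices there. The cost is extra structure your repaired argument does not need: $|\rho_\star|_1=O(1)$, $s_\star\lesssim\bar s$, a bounded largest eigenvalue of $\E[b(Z)b(Z)']$, and the link $\|\rho_L-\bar\rho\|^2\lesssim\bar s\varepsilon_n^2$. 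Your direct use of the Riesz identity $M=\E[\alpha_0(Z)d(X)]$ and of $\E[(b'\delta)^2]\le C_b^2|\delta|_1^2$ avoids all of these. Your restricted-eigenvalue transfer is essentially the paper's Lemma \ref{lem:re_bound}.
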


The presence of endogeneity results in a slower rate of convergence for the RR estimator compared to the exogenous counterpart in CNS. The MD Lasso estimator of CNS converges at the $\sqrt{\bar{s}}\lambda_{n}$ rate, while the PGMM estimator is slower by a factor of $\sqrt{\bar{s}}$. Note that the convergence rate only depends on the number of approximation elements $\bar{s}$, but is independent of the number of relevant moments. 

\begin{theorem}\label{thm:rr_bound_diag_wm}
Suppose Assumptions \ref{ass:boundedness}--\ref{ass:m_bound} are satisfied. Assume that populations moments are non-degenerate and have non-zero variance. Let $\varepsilon_n = o(\lambda_n)$ and $\bar{s}^{2}\lambda_n = o(1)$. Then, under the diagonal weight matrix \eqref{eq:pgmm_diag_wm},
\begin{equation*}
	||\hat\alpha_{L} - \alpha_{0}|| = O_p(\kappa^{\alpha}_n), \text{ where } \kappa^{\alpha}_n = \bar{s}\lambda_n.
\end{equation*}
\end{theorem}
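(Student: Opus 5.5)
The plan is to reduce Theorem \ref{thm:rr_bound_diag_wm} to Theorem \ref{thm:rr_bound}. The only place Assumption \ref{ass:weight_matrix} enters the proof of Theorem \ref{thm:rr_bound} is through bounds of the form $\|\hat\Omega_q - \Omega_q\|_{\ell_\infty} = O_p(\varepsilon_n)$ and $\|\Omega_q\|_{\ell_\infty}\le C$. These are used to show three things: the empirical restricted eigenvalue of $\hat G'\hat\Omega_q\hat G$ stays bounded below on the cone $|\delta_{S^c_\rho}|_1\le 3|\delta_{S_\rho}|_1$; the score $\|\hat G'\hat\Omega_q(\hat M-\hat G\bar\rho)\|_\infty$ is dominated by $\lambda_n$; and the approximation-error term is controlled.

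Take as population target $\Omega^d=\mathrm{diag}(\sigma_1^{-2}(\rho^*),\dots,\sigma_q^{-2}(\rho^*))$. Here $\rho^*$ is the probability limit of the preliminary identity-weighted estimate $\tilde\rho$, and $\sigma_j^2(\rho)=\E[\psi_j^2(W,\rho)]$. Because $\Omega^d$ is diagonal, $\|\Omega^d\|_{\ell_\infty}=\max_j\sigma_j^{-2}$. This is bounded by the non-degeneracy assumption (variances bounded away from zero), so part (ii) of Assumption \ref{ass:weight_matrix} holds automatically.

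The key step is a rate for $\max_j|\hat\sigma_j^{-2}(\tilde\rho)-\sigma_j^{-2}(\rho^*)|$. Write $\psi_j(W,\rho)=m(W,d_j)-d_j(X)b(Z)'\rho$, which by Assumptions \ref{ass:boundedness} and \ref{ass:m_bound} satisfies $|\psi_j|\le C+C_dC_b|\rho|_1$. Split the error into two pieces.

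The first piece is $\max_j|\hat\sigma_j^2(\rho)-\sigma_j^2(\rho)|$ at a fixed $\rho$ with bounded $\ell_1$ norm. Bounded summands and a union bound over the $q$ moments give $O_p(\sqrt{\log q/n})$. To handle $\rho$ uniformly, expand the square: the statistic is a quadratic form in $\rho$ whose coefficients are sample means of $m_j^2$, $m_jd_jb_k$ and $d_j^2b_kb_l$. Each of these concentrates in sup-norm at rate $\sqrt{\log(qp^2)/n}\asymp\varepsilon_n$, assuming $\log p\lesssim\log q$ since $q\ge p$. Multiplying by $|\rho|_1^2$ gives the uniform bound.

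The second piece is $\max_j|\sigma_j^2(\tilde\rho)-\sigma_j^2(\rho^*)|\lesssim(1+|\tilde\rho|_1+|\rho^*|_1)\,|\tilde\rho-\rho^*|_1$. For $\tilde\rho$, the identity weight trivially satisfies Assumption \ref{ass:weight_matrix}, so the machinery of Theorem \ref{thm:rr_bound} yields the $\ell_1$ oracle rate $|\tilde\rho-\rho^*|_1=O_p(\bar s\lambda_n)$.

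Combining the two pieces, and inverting using variances bounded away from zero, gives $\|\hat\Omega^d-\Omega^d\|_{\ell_\infty}=O_p(\varepsilon_n+\bar s\lambda_n)$.

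The main obstacle is that this rate is slower than the $\varepsilon_n$ demanded by Assumption \ref{ass:weight_matrix}(i), so I will rerun the proof of Theorem \ref{thm:rr_bound} with a generic weight-matrix error $\varepsilon_n^\Omega$ and track where it enters.

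In the restricted eigenvalue step, the perturbation is $|\delta'(\hat G'\hat\Omega_q\hat G-G'\Omega_qG)\delta|\le\|\cdot\|_\infty|\delta|_1^2$, with $|\delta|_1^2\le16\bar s\|\delta_{S_\rho}\|^2$ on the cone. Since the matrix $\hat G'(\hat\Omega_q-\Omega_q)\hat G$ has sup-norm $O(\varepsilon_n^\Omega)$ by the boundedness of $G$ entries, the condition needed is $\bar s\,\varepsilon^\Omega_n=\bar s(\varepsilon_n+\bar s\lambda_n)=o(1)$. This is exactly where the strengthened hypothesis $\bar s^2\lambda_n=o(1)$ is used.

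In the score step, the extra term is bounded by $\|\hat G\|_\infty\,\varepsilon_n^\Omega\,\|\hat M-\hat G\bar\rho\|_1/q$. It needs a careful bound so that it remains $o(\lambda_n)$ or is absorbed into the approximation term. Because $\Omega_q=\Omega/q$ normalizes by $q$, the sum over moments is averaged, giving $\lesssim\varepsilon_n^\Omega\cdot(\|M-G\bar\rho\|_\infty+\varepsilon_n)$. Here $\|M-G\bar\rho\|_\infty\le C_d\|\alpha_0-b'\bar\rho\|=O(\sqrt{\bar s}\varepsilon_n)$ by Assumption \ref{ass:sparse_approx}. The product $\varepsilon_n^\Omega\sqrt{\bar s}\,\varepsilon_n$ is $o(\varepsilon_n)=o(\lambda_n)$ under $\bar s^2\lambda_n=o(1)$.

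With both steps verified, the oracle inequality of Theorem \ref{thm:rr_bound} goes through unchanged and delivers $\|\hat\alpha_L-\alpha_0\|=O_p(\bar s\lambda_n)$.
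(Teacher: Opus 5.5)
Your argument is correct in outline and follows the paper's skeleton. You rerun the oracle inequality of Theorem~\ref{thm:rr_bound}. You get $\|\hat\Omega^d-\Omega^d\|_{\ell_\infty}=O_p(\varepsilon_n+\bar s\lambda_n)$ from a sampling term at $\tilde\rho$ plus a Lipschitz term driven by the $\ell_1$ rate of the identity-weighted preliminary estimator. You then invert using the variance floor and spend $\bar s^2\lambda_n=o(1)$ on the empirical restricted eigenvalue.

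You depart from the paper in two technical steps, and in both your version is the sounder one.
\begin{itemize}
\item The paper's Lemma~\ref{lem:var_conv} is proved for a fixed $\rho$ but then applied at the data-dependent $\tilde\rho$. Your quadratic-form expansion, with sup-norm concentration of the $O(qp^2)$ coefficient averages, gives a bound uniform over $|\rho|_1\le A$ and closes that hole. To use it, invoke $|\tilde\rho|_1=O_p(1)$, e.g.\ Lemma~\ref{lem:rho_hat_L_bound} with the identity weight.
\item In the score step the paper bounds $\|\hat G'\hat\Omega^d_q\hat M\|_\infty$ and $\|\hat G'\hat\Omega^d_q\hat G\|_\infty|\rho_\star|_1$ separately, asserting $\|\hat G\|_\infty=O_p(\varepsilon_n^G)$ and $\|\hat M\|_\infty=O_p(\varepsilon_n^M)$. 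Since $\|G\|_\infty$ and $\|M\|_\infty$ are $O(1)$, each piece is generally of order one. A naive rerun of Lemma~\ref{lem:aux1} with weight error $\bar s\lambda_n$ likewise bounds terms such as $\|G'(\hat\Omega_q-\Omega_q)M\|_\infty$ only by $O_p(\bar s\lambda_n)$, which is not $o_p(\lambda_n)$. Your split $\hat\Omega_q=\Omega_q+(\hat\Omega_q-\Omega_q)$ keeps the residual $\hat M-\hat G\rho$ intact, so the weight error multiplies a quantity of sup-norm $O_p(\sqrt{\bar s}\varepsilon_n)$. That is what actually delivers $o_p(\lambda_n)$.
\end{itemize}
Naming the identity-weighted target $\rho^*$ separately also fixes a notational problem in the paper. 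There, the single symbol $\rho_\star$ denotes both the target that defines $\Omega^d$ and the target of the $\Omega^d$-weighted estimator.

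One fix is needed: center the score at $\rho_\star$, defined through $\rho_L$ and $G'\Omega_qG$ with $\Omega=\Omega^d$, not at $\bar\rho$. At $\bar\rho$ the unperturbed population score $\|G'\Omega_q(M-G\bar\rho)\|_\infty$ is only $O(\sqrt{\bar s}\varepsilon_n)$. That need not be $o(\lambda_n)$ under $\varepsilon_n=o(\lambda_n)$ and $\bar s^2\lambda_n=o(1)$. At $\rho_\star$ it is at most $2\varepsilon_n$, by Lemmas~\ref{lem:optimization} and~\ref{lem:foc_rho_star}. Your perturbation bound carries over unchanged, with $\|M-G\rho_\star\|_\infty\le C_d\|\alpha_0-b'\rho_\star\|=O(\sqrt{\bar s}\varepsilon_n)$ supplied by Lemma~\ref{lem:rho_star_aux3} instead of Assumption~\ref{ass:sparse_approx}.
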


We relax the convergence rate condition in Assumption \ref{ass:weight_matrix} by allowing for a diagonal weight matrix. As a result, we have to increase the penalty to ensure $\bar{s}^{2}\lambda_n \rightarrow 0$. Otherwise, the main result remains the same.


\section{Inference for linear functionals} \label{sec:asy_prop_lin}

In this Section, we provide conditions ensuring root-$n$ consistency and asymptotic normality of the debiased estimator $\hat\theta$. Under the specified conditions, we can do inference in a standard way. First, we focus on linear functionals and then provide additional conditions to extend the results to nonlinear functionals in Section \ref{sec:asy_prop_nonlin}.

We impose the following conditions.
\begin{assumption}\label{ass:asy_primitives}
$\alpha_{0}(z)$ and $\E\left[[y - \gamma_{0}(x)]^{2}|z\right]$ are bounded and $\E\left[m(w,\,\gamma_{0})^{2}\right] < \infty$.
\end{assumption}

\begin{assumption}\label{ass:mse_consist_gamma}
$\int[m(w,\,\hat\gamma) - m(w,\,\gamma_{0})]^{2}F_{0}(dw) \xrightarrow{p} 0$ and $||\hat\gamma - \gamma_{0}|| \xrightarrow{p} 0$.
\end{assumption}

\begin{assumption}\label{ass:conv_rate_gamma_lin}
$||T(\hat\gamma - \gamma_{0})|| = O_p(\kappa_{n}^{\gamma})$ and $\kappa_{n}^{\gamma} = o(1)$.
\end{assumption}

Assumption \ref{ass:asy_primitives} is purely technical, and we maintain it for simplicity. Assumption \ref{ass:mse_consist_gamma} allows for estimators $\hat\gamma$ that are mean square consistent. Assumption \ref{ass:conv_rate_gamma_lin} requires $\hat\gamma$ to converge to $\gamma_{0}$ in the projected norm at a rate equal to $\kappa_{n}^{\gamma}$ which is typically slower than root-$n$. Note that this condition is weaker than convergence in standard mean square norm (see Section \ref{subsec:npiv}). This specification is general enough and allows for various MLIV estimators.

\begin{assumption}\label{ass:conv_rates_lin}
$\sqrt{n}\kappa_{n}^{\alpha}\kappa_{n}^{\gamma} = o(1)$.	
\end{assumption}
This condition is sufficient to guarantee $\sqrt{n}||\hat\alpha_{L}-\alpha_{0}||\,||T(\hat\gamma - \gamma_{0})|| \xrightarrow{p} 0$, leading to asymptotic normality of $\hat\theta$. 

\begin{theorem}\label{thm:asy_norm_lin}
Suppose Assumptions \ref{ass:weight_matrix}--\ref{ass:conv_rates_lin} are satisfied. Let $\varepsilon_n = o(\lambda_n)$ and $\bar{s}\lambda_n = o(1)$. Then for $\psi_{0}(w) = m(w,\,\gamma_{0}) - \theta_{0} + \alpha_{0}(z)[y - \gamma_{0}(x)]$,
\begin{equation*}
	\sqrt{n}(\hat\theta - \theta_0) \xrightarrow{d} \cN(0,\,V) \text{ and } \hat V \xrightarrow{p} V = \E\left[\psi_{0}^{2}(w)\right].
\end{equation*}
\end{theorem}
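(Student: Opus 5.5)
The proof follows the cross-fitted double/debiased machine learning argument of CNS (their Lemma 8 / Theorem 9 structure), adapted to the NPIV influence function of \eqref{eq:if_npiv}. For each fold $\ell$, write $\hat\theta-\theta_0=\frac1n\sum_\ell\sum_{i\in I_\ell}\psi_0(W_i)+\frac1n\sum_\ell\sum_{i\in I_\ell}\hat R_{i\ell}$. Here $\hat R_{i\ell}=\hat R^{1}_{i\ell}+\hat R^{2}_{i\ell}+\hat R^{3}_{i\ell}$, with
\begin{align*}
\hat R^{1}_{i\ell}&=m(W_i,\hat\gamma_\ell)-m(W_i,\gamma_0)-\alpha_0(Z_i)[\hat\gamma_\ell(X_i)-\gamma_0(X_i)],\\
\hat R^{2}_{i\ell}&=[\hat\alpha_\ell(Z_i)-\alpha_0(Z_i)][Y_i-\gamma_0(X_i)],\\
\hat R^{3}_{i\ell}&=-[\hat\alpha_\ell(Z_i)-\alpha_0(Z_i)][\hat\gamma_\ell(X_i)-\gamma_0(X_i)].
\end{align*}
The CLT gives $\frac{1}{\sqrt n}\sum_i\psi_0(W_i)\xrightarrow{d}\cN(0,V)$, since $V<\infty$ by Assumption \ref{ass:asy_primitives}. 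So it suffices to show $\sqrt n\cdot\frac1n\sum_{i\in I_\ell}\hat R^{k}_{i\ell}\xrightarrow{p}0$ for each $k$ and $\ell$. Throughout, we condition on the complement sample $I_\ell^c$, so that $\hat\gamma_\ell,\hat\alpha_\ell$ are fixed functions and the summands in $I_\ell$ are i.i.d.

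For $\hat R^1$, linearity and the Riesz representation give $\E[\hat R^1_{i\ell}\mid I_\ell^c]=0$. Its conditional second moment is bounded by $2\int[m(w,\hat\gamma_\ell)-m(w,\gamma_0)]^2F_0(dw)+2\sup|\alpha_0|^2\|\hat\gamma_\ell-\gamma_0\|^2\xrightarrow{p}0$ by Assumptions \ref{ass:asy_primitives}--\ref{ass:mse_consist_gamma}. A conditional Markov/Chebyshev argument then shows the normalized sum is $o_p(1)$.

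For $\hat R^2$, exogeneity $\E[Y-\gamma_0(X)\mid Z]=0$ gives conditional mean zero. Boundedness of $\E[(Y-\gamma_0)^2\mid Z]$ bounds its conditional variance by $C\|\hat\alpha_\ell-\alpha_0\|^2=O_p((\bar s\lambda_n)^2)=o_p(1)$. Here we use Theorem \ref{thm:rr_bound} applied to the estimator built on $I_\ell^c$ (sample size proportional to $n$, so the rate is unchanged) together with $\bar s\lambda_n=o(1)$.

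The key term is $\hat R^3$, which has nonzero conditional mean. Iterated expectations over $Z$ give $\E[\hat R^3_{i\ell}\mid I_\ell^c]=-\E[(\hat\alpha_\ell-\alpha_0)(Z)\,T(\hat\gamma_\ell-\gamma_0)(Z)]$, and Cauchy--Schwarz bounds this by $\|\hat\alpha_\ell-\alpha_0\|\,\|T(\hat\gamma_\ell-\gamma_0)\|=O_p(\kappa_n^\alpha\kappa_n^\gamma)$. Multiplying by $\sqrt n$ gives $o_p(1)$ by Assumption \ref{ass:conv_rates_lin}; this is exactly where the projected norm, rather than the mean square norm, suffices. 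The centered part has conditional variance $\le\E[(\hat\alpha_\ell-\alpha_0)^2(\hat\gamma_\ell-\gamma_0)^2]$, and this is the main obstacle, since it requires an $L_\infty$-type control on one factor. I would bound $|\hat\alpha_\ell-\alpha_0|\le C_b|\hat\rho_\ell|_1+\sup|\alpha_0|$. Here $|\hat\rho_\ell|_1=O_p(1)$ follows from the oracle inequality underlying Theorem \ref{thm:rr_bound}, using $|\bar\rho|_1\le\bar A$ and Assumption \ref{ass:boundedness}. The variance is then $O_p(1)\cdot\|\hat\gamma_\ell-\gamma_0\|^2\xrightarrow{p}0$ by Assumption \ref{ass:mse_consist_gamma}. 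Summing over the finitely many folds yields $\sqrt n(\hat\theta-\theta_0)=n^{-1/2}\sum_i\psi_0(W_i)+o_p(1)$, which gives asymptotic normality.

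For $\hat V\xrightarrow{p}V$, write $\hat\psi_{i\ell}=\psi_0(W_i)+(\theta_0-\hat\theta)+\hat R_{i\ell}$ with $\hat R_{i\ell}=\hat R^1_{i\ell}+\hat R^2_{i\ell}+\hat R^3_{i\ell}$. It suffices to show $\frac1n\sum_\ell\sum_{i\in I_\ell}\hat R_{i\ell}^2\xrightarrow{p}0$: then the LLN for $\frac1n\sum\psi_0^2$, $\hat\theta\xrightarrow{p}\theta_0$, and Cauchy--Schwarz for the cross terms finish the argument. The conditional expectations of $(\hat R^1)^2$ and $(\hat R^2)^2$ were already shown to be $o_p(1)$. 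For $(\hat R^3)^2$ we reuse the same bound, $O_p(1)\|\hat\gamma_\ell-\gamma_0\|^2=o_p(1)$. Conditional Markov then completes the proof.
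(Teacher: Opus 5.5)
Your proposal is correct and is essentially the paper's proof. The paper verifies the hypotheses of Lemma 15 of \cite{chernozhukov2022locally}: mean-square consistency of each nuisance piece, $\int\hat\Delta_\ell^2\,dF_0\xrightarrow{p}0$ via $\sup_z|\hat\alpha_\ell(z)|=O_p(1)$, and $\sqrt{n}\,|\int\hat\Delta_\ell\,dF_0|\le\sqrt{n}\,\|\hat\alpha_\ell-\alpha_0\|\,\|T(\hat\gamma_\ell-\gamma_0)\|\xrightarrow{p}0$, where $\hat\Delta_\ell=(\hat\alpha_\ell-\alpha_0)(\hat\gamma_\ell-\gamma_0)$. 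These conditions are exactly your three-remainder decomposition written out, and the paper then proves $\hat V\xrightarrow{p}V$ by the same CNS-style remainder argument you give.

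Two small bookkeeping notes. First, the paper gets $|\hat\rho_\ell|_1=O_p(1)$ from Lemma \ref{lem:rho_hat_L_bound}, i.e.\ from the basic inequality together with the maintained assumption $|\rho_\star|_1=O(1)$, rather than from $|\bar\rho|_1\le\bar A$ alone. Second, your bound on the $\hat R^1$ term through $\|\hat\gamma_\ell-\gamma_0\|$ is the right one. The paper's display bounding $\int\alpha_0^2(\hat\gamma_\ell-\gamma_0)^2\,dF_0$ by $C\|T(\hat\gamma_\ell-\gamma_0)\|^2$ applies Jensen's inequality in the wrong direction, though Assumption \ref{ass:mse_consist_gamma} makes the conclusion hold anyway.
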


\begin{corollary}\label{cor:asy_norm_lin_wm}
Suppose Assumptions \ref{ass:boundedness}--\ref{ass:conv_rates_lin} are satisfied. Assume that populations moments are non-degenerate and have non-zero variance, and the PGMM estimator uses the diagonal weight matrix \eqref{eq:pgmm_diag_wm}. Let $\varepsilon_n = o(\lambda_n)$ and $\bar{s}^{2}\lambda_n = o(1)$. Then for $\psi_{0}(w) = m(w,\,\gamma_{0}) - \theta_{0} + \alpha_{0}(z)[y - \gamma_{0}(x)]$,
\begin{equation*}
	\sqrt{n}(\hat\theta - \theta_0) \xrightarrow{d} \cN(0,\,V) \text{ and } \hat V \xrightarrow{p} V = \E\left[\psi_{0}^{2}(w)\right].
\end{equation*}
\end{corollary}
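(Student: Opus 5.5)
The corollary is Theorem \ref{thm:asy_norm_lin} with Theorem \ref{thm:rr_bound} replaced by Theorem \ref{thm:rr_bound_diag_wm}. I would therefore follow the proof of Theorem \ref{thm:asy_norm_lin} and use Assumption \ref{ass:weight_matrix} only where it enters through the rate of $\hat\alpha_{L}$. Under the diagonal weight matrix \eqref{eq:pgmm_diag_wm}, the non-degenerate moments, and $\bar s^{2}\lambda_n = o(1)$, Theorem \ref{thm:rr_bound_diag_wm} gives $\|\hat\alpha_{\ell}-\alpha_0\| = O_p(\bar s\lambda_n) = O_p(\kappa_n^\alpha)$ on each fold. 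Since $\kappa_n^\alpha=\bar s\lambda_n$ is the same rate as before, Assumption \ref{ass:conv_rates_lin} keeps its meaning. The rest of the argument does not use the form of $\hat\Omega$: once $\hat\alpha_\ell$ is constructed, it is just some estimator of $\alpha_0$ fitted on the complement of $I_\ell$ with this mean square rate.

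Next I would write, for each fold,
\[
\psi(W_i,\theta_0,\hat\gamma_\ell,\hat\alpha_\ell) - \psi_0(W_i) = R_{1i} + R_{2i} + R_{3i},
\]
with
\[
R_{1i}=m(W_i,\hat\gamma_\ell)-m(W_i,\gamma_0)-\alpha_0(Z_i)[\hat\gamma_\ell(X_i)-\gamma_0(X_i)],
\]
\[
R_{2i}=[\hat\alpha_\ell(Z_i)-\alpha_0(Z_i)][Y_i-\gamma_0(X_i)],
\]
\[
R_{3i}=-[\hat\alpha_\ell(Z_i)-\alpha_0(Z_i)][\hat\gamma_\ell(X_i)-\gamma_0(X_i)].
\]
I would then show $\sqrt n$ times the fold average of each term is $o_p(1)$, conditioning on the out-of-fold data as in CNS.

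For $R_1$, the Riesz representation gives conditional mean zero. Its conditional second moment is at most $2\int[m(w,\hat\gamma)-m(w,\gamma_0)]^2F_0(dw) + 2\sup|\alpha_0|^2\|\hat\gamma-\gamma_0\|^2 \to_p 0$ by Assumptions \ref{ass:asy_primitives}–\ref{ass:mse_consist_gamma}, so Chebyshev conditional on the complement gives the claim. For $R_2$, exogeneity gives conditional mean zero, and boundedness of $\E[\varepsilon^2|Z]$ gives second moment $O(\|\hat\alpha_\ell-\alpha_0\|^2)\to_p 0$. For $R_3$, iterated expectations give mean $-\E[(\hat\alpha_\ell-\alpha_0)(Z)\,T(\hat\gamma_\ell-\gamma_0)(Z)]$, which by Cauchy–Schwarz is at most $\kappa_n^\alpha\kappa_n^\gamma$ in absolute value; this is $o(n^{-1/2})$ by Assumptions \ref{ass:conv_rate_gamma_lin}–\ref{ass:conv_rates_lin}.

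The step I expect to be the main obstacle is the centered part of $R_3$. Its variance is bounded by $\E[(\hat\alpha_\ell-\alpha_0)^2(\hat\gamma_\ell-\gamma_0)^2]$, which requires sup-norm control on one factor. I would bound $\|\hat\alpha_\ell-\alpha_0\|_\infty$ using Assumption \ref{ass:boundedness} and $|\hat\rho_\ell|_1=O_p(1)$, the latter from the $\ell_1$ bound implied in the proof of Theorem \ref{thm:rr_bound_diag_wm}. Combined with $\|\hat\gamma_\ell-\gamma_0\|\to_p0$, this gives a variance that is $o_p(1)$, so the centered part of $\sqrt n$ times the fold average of $R_3$ is $o_p(1)$. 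The Lindeberg–Lévy CLT applied to $\psi_0$ then yields $\sqrt n(\hat\theta-\theta_0)\to_d\cN(0,V)$.

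For $\hat V\to_p V$, I would expand $\hat\psi_{i\ell}^2-\psi_0(W_i)^2$ and show $n^{-1}\sum_{\ell}\sum_{i\in I_\ell}(\hat\psi_{i\ell}-\psi_0(W_i))^2\to_p0$ from the same second-moment bounds together with $\hat\theta\to_p\theta_0$. The cross term is then handled by Cauchy–Schwarz, and the LLN gives $n^{-1}\sum_i\psi_0(W_i)^2\to_p V$.
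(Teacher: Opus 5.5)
Your proposal is correct and follows the paper's route. The paper proves the corollary by rerunning the proof of Theorem \ref{thm:asy_norm_lin} with Theorem \ref{thm:rr_bound_diag_wm} supplying the rate $\kappa_n^\alpha=\bar s\lambda_n$; your three-term decomposition with conditional Chebyshev is an explicit version of the check of Lemma 15 in \cite{chernozhukov2022locally} that the paper cites, and your sourcing of $|\hat\rho_\ell|_1=O_p(1)$ from the $\ell_1$ bound in the proof of Theorem \ref{thm:rr_bound_diag_wm} (with $|\rho_\star|_1=O(1)$) is a useful detail, since Lemma \ref{lem:rho_hat_L_bound} is stated under Assumption \ref{ass:weight_matrix}, which the corollary drops.
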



\section{Nonlinear functionals} \label{sec:asy_prop_nonlin}

It is possible to extend the results from Section \ref{sec:asy_prop_lin} to allow for estimation of $\theta_{0} = \E[m(W,\,\gamma_{0})]$ for nonlinear $m(W,\,\gamma)$. The estimator is similar to the linear case except we estimate the RR of the linearization of $m(W,\,\gamma)$ leading to a different $\hat M$ needed. In this section, we show how to construct such an estimator and provide additional conditions that are sufficient for valid asymptotic inference for nonlinear functionals. As we mentioned in the introduction, due to nonlinearity of $m(W,\,\gamma)$, we have to impose restrictions on the convergence rate of $\hat\gamma$ in terms of the standard mean square norm, not the projected norm as in the linear case. We provide more details below.

To account for nonlinearity of $m(W,\,\gamma)$ in $\gamma$, we assume linearity of the Gateaux derivative of a nonlinear functional. To be more precise, let $\zeta$ be a deviation from $\gamma$. We assume that $m(W,\,\gamma)$ is Gateaux differentiable with the derivative $D(W,\,\gamma,\,\zeta)$, meaning that
\begin{equation*}
	D(W,\,\gamma,\,\zeta) = \left.\frac{d}{d\tau}m(W,\, \gamma + \tau \zeta)\right\vert_{\tau=0}
\end{equation*}
for a scalar $\tau$, and that $D(W,\,\gamma,\,\zeta)$ is linear in $\zeta$. Moreover, assume that $\alpha_{0}(Z)$ satisfies
\begin{equation} \label{eq:rr_exog}
	\mathbb{E}[D(W,\,\gamma_{0},\,\zeta)] = \mathbb{E}[\alpha_{0}(Z)\zeta(X)], \; \text{for all} \; \zeta(X) \; \text{with} \; \mathbb{E}[\zeta^{2}(X)] < \infty.
\end{equation}
In other words, Equation \eqref{eq:rr_exog} implies that $D(W,\,\gamma,\,\zeta)$ is a mean-square continuous functional of $\zeta$, which corresponds to Assumption 3 of \cite{ichimura2022influence}, meaning that $\alpha_{0}(Z)$ is a Riesz representer of the Gateaux derivative of $m(W,\gamma)$ with respect to $\gamma$ evaluated at $\gamma=\gamma_{0}$. Thus, by the Riesz representation theorem, for $D(W, \gamma_{0}, d) = (D(W, \gamma_{0}, d_{1}),\dots,\,D(W, \gamma_{0}, d_{q}))'$,
\begin{equation*}
	M = \mathbb{E}[D(W, \gamma_{0}, d)] = \mathbb{E}[\alpha_{0}(Z)d(X)].
\end{equation*} 

We can construct an estimator $\hat\theta$ exactly like in Equation \eqref{eq:dml} except we need a different estimator of $\alpha_{0}(Z)$ based on \eqref{eq:rr_exog}. Despite $\gamma$ enters $m(W,\,\gamma)$ nonlinearly, the estimator will still have zero first-order bias and be root-$n$ consistent and asymptotically normal under suficient regularity conditions. See \cite{newey1994}, \cite{ichimura2022influence}, and \cite{chernozhukov2022locally} for more details.

An estimator $\hat{\alpha}_{\ell}$ can be constructed exactly as described in Section \ref{subsec:construction} except being based on a different $\hat M_{\ell}$, where it is convenient to bring back the $\ell$ subscript. Let $\hat{\gamma}_{\ell,\ell'}$ be based on observations not in either $I_{\ell}$ or $I_{\ell'}$, then the unbiased estimator $\hat{M}_{\ell}$ is given by
\begin{align*}
	\hat{M}_{\ell} & = (\hat{M}_{\ell1},\,\dots,\,\hat{M}_{\ell q.})' \\
	\hat{M}_{\ell j} & = \frac{1}{n - n_{\ell}} \sum_{\ell' \neq \ell} \sum_{i \in I_{\ell'}}D(W_{i}, \hat{\gamma}_{\ell,\ell'}, d_{j}),
\end{align*}
where $\hat{M}_{\ell j}$ is the Gateaux derivative of the moment function with respect to $\gamma$ in the direction of the $j^{\text{th}}$ dictionary function. This estimator uses further sample splitting where $\hat{M}$ is constructed by averaging over observations that are not used in $\hat{\gamma}_{\ell, \ell'}$. This additional sample splitting allows $\hat{M}_{\ell}$ to depend on an estimator of $\gamma$ as required when $m(W,\,\gamma)$ is nonlinear in $\gamma$. 

To establish the convergence rate for $\hat M_{\ell}$, we impose the following condition.
\begin{assumption}\label{ass:M_bound_nonlin}
There exist $C$, $\varepsilon > 0$ such that for any $\gamma$ with $||\gamma - \gamma_{0}|| \leq \varepsilon$: 
\begin{enumerate}
	\item [(i)] $\max_{1\leq j \leq q}|D(W,\,\gamma,\,d_{j})| \leq C$;
	\item [(ii)] $\sup_{1 \leq j \leq q}\left|\E[D(W,\,\gamma,\,d_{j}) - D(W,\,\gamma_{0},\,d_{j})]\right| \leq C||\gamma - \gamma_{0}||$.
\end{enumerate}	
\end{assumption}

\begin{lemma}\label{lem:M_bound_nonlin}
Suppose that $||\hat{\gamma}_{\ell,\,\ell'} - \gamma_{0}||=O_p(\kappa_{n}^{\gamma})$ for $\ell,\,\ell' = 1,\dots,\,L$, and Assumption \ref{ass:M_bound_nonlin} is satisfied, then
\begin{equation*}
||\hat{M}_{\ell} - M_{\ell}||_{\infty} = O_p(\kappa_{n}^{\gamma}).
\end{equation*}	
\end{lemma}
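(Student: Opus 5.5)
Write $M_{\ell}$ for the population target $M = \E[D(W,\gamma_0,d)]$; it does not depend on $\ell$. The plan is to decompose $\hat M_{\ell} - M$ fold by fold, exploiting the extra sample splitting. For each $\ell'\neq\ell$ let $n_{\ell'}$ be the size of $I_{\ell'}$. Define
\begin{equation*}
\bar D_{\ell'}(\gamma) = \E[D(W,\gamma,d)]\big|_{\gamma = \hat\gamma_{\ell,\ell'}},
\end{equation*}
which is the expectation over a fresh $W$ with $\hat\gamma_{\ell,\ell'}$ held fixed. Then
\begin{equation*}
\hat M_{\ell} - M = \sum_{\ell'\neq\ell}\frac{n_{\ell'}}{n-n_\ell}\Big\{\Big[\frac{1}{n_{\ell'}}\sum_{i\in I_{\ell'}} D(W_i,\hat\gamma_{\ell,\ell'},d) - \bar D_{\ell'}\Big] + \big[\bar D_{\ell'} - M\big]\Big\}.
\end{equation*}
Since $L$ is fixed and the weights are bounded, it suffices to bound the $\|\cdot\|_\infty$ norm of each bracket for each $\ell'$.

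\emph{Bias term.} On the event $\|\hat\gamma_{\ell,\ell'}-\gamma_0\|\le\varepsilon$, Assumption \ref{ass:M_bound_nonlin}(ii) gives $\|\bar D_{\ell'} - M\|_\infty \le C\|\hat\gamma_{\ell,\ell'}-\gamma_0\|$. The hypothesis $\|\hat\gamma_{\ell,\ell'}-\gamma_0\| = O_p(\kappa_n^\gamma)$ then yields $O_p(\kappa_n^\gamma)$. The event has probability approaching one, because $O_p(\kappa_n^\gamma)$ with $\kappa_n^\gamma=o(1)$ implies consistency. I take this as implicit in the statement; otherwise one needs $\kappa_n^\gamma\to0$.

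\emph{Stochastic term.} Condition on the observations outside $I_{\ell'}$, which determine $\hat\gamma_{\ell,\ell'}$. Conditionally, the summands $D(W_i,\hat\gamma_{\ell,\ell'},d_j)$, $i\in I_{\ell'}$, are i.i.d.\ with mean $\bar D_{\ell',j}$. On the same event, Assumption \ref{ass:M_bound_nonlin}(i) bounds them by $C$. Hoeffding's inequality with a union bound over $j=1,\dots,q$ then gives a conditional bound of $O_p(\sqrt{\log q/n})$ on the sup norm, and a conditional $O_p$ bound transfers to an unconditional one. This is $O_p(\varepsilon_n)$.

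Combining the two terms gives $\|\hat M_\ell - M\|_\infty = O_p(\kappa_n^\gamma + \sqrt{\log q/n})$. The step I expect to be the main obstacle is absorbing the $\sqrt{\log q/n}$ term into $O_p(\kappa_n^\gamma)$ as the statement requires. I would argue that NPIV mean-square rates are necessarily no faster than $\sqrt{\log q/n}$, i.e.\ $\varepsilon_n = O(\kappa_n^\gamma)$, and state this explicitly as a maintained rate condition, since the lemma as written implicitly treats $\kappa_n^\gamma$ as the dominant rate. A secondary subtlety is that the boundedness in Assumption \ref{ass:M_bound_nonlin}(i) must hold for the random $\gamma=\hat\gamma_{\ell,\ell'}$; this is handled by intersecting with the high-probability event above.
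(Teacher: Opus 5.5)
Your proposal is correct and matches the paper's proof essentially step for step. Like the paper, you split $\hat M_\ell - M$ into two pieces:
\begin{itemize}
\item a fold-wise centered average, handled by conditioning on the independent $\hat\gamma_{\ell,\ell'}$, restricting to the event $\|\hat\gamma_{\ell,\ell'}-\gamma_0\|\le\varepsilon$, and applying Hoeffding with a union bound over $j$ to get $O_p(\sqrt{\log q/n})$;
\item a bias term controlled by Assumption~\ref{ass:M_bound_nonlin}(ii).
\end{itemize}
The two implicit conditions you flag are exactly the ones the paper relies on without stating them. These are consistency of $\hat\gamma_{\ell,\ell'}$, so the event has probability approaching one, and $\sqrt{\log q/n}=O(\kappa_n^\gamma)$, which the paper invokes through its closing remark that $\kappa_n^\gamma$ is a slower rate than $\varepsilon_n$.
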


As CNS point out, the presence of the initial estimator $\hat\gamma_{\ell,\ell'}$ in $\hat M_{\ell}$ makes the convergence rate of $||\hat{M}_{\ell} - M_{\ell}||_{\infty}$ slower, $\kappa_{n}^{\gamma}$ instead of $\sqrt{\log q/n}$. Thus, $\varepsilon_{n} = \varepsilon^{M}_{n} = \kappa_{n}^{\gamma}$, which requires $\lambda_{n}$ to converge to zero slightly slower than $\kappa_{n}^{\gamma}$. Though it affects the convergence rate of $\hat\alpha$, the main results of Section~\ref{sec:pgmm_prop} still hold (see Appendix \ref{app:pgmm_nonlin}). 

\begin{assumption}\label{ass:M_frechet}
There exist $C$, $\varepsilon > 0$ such that for any $\gamma$ with $||\gamma - \gamma_{0}|| \leq \varepsilon$,
\begin{equation*}
	\left| \E[m(W,\,\gamma) - m(W,\,\gamma_{0}) - D(W,\,\gamma_{0},\,\gamma-\gamma_{0})] \right| \leq C||\gamma-\gamma_{0}||^{2}.
\end{equation*}
\end{assumption}
This condition controls the size of the linearization remainder in a linearization using the Gateaux derivative. It implies that $\E[m(W,\,\gamma)]$ is Frechet-differentiable in $||\gamma - \gamma_{0}||$ at $\gamma_{0}$ with derivative $\E[D(W,\,\gamma_{0},\,\gamma-\gamma_{0})]$.

\begin{assumption}\label{ass:conv_rate_gamma_nonlin}
$||\hat\gamma - \gamma_{0}|| = O_p(\kappa_{n}^{\gamma})$ and $n^{1/4}||\hat\gamma - \gamma_{0}|| \xrightarrow{p} 0$.
\end{assumption}

It is a standard assumption to accommodate for nonlinearity of $m(W,\,\gamma)$. This might be a very tight restriction to satisfy given overall slow convergence rates of NPIV estimators, especially in the severely ill-posed case. However, as discussed in CNS, it is not known whether it is possible to weaken the $n^{-1/4}$ condition for nonlinear functionals, which goes back to \cite{newey1994}.

\begin{theorem}\label{thm:asy_norm_nonlin}
Suppose Assumptions \ref{ass:weight_matrix}--\ref{ass:boundedness}, \ref{ass:sparse_approx}--\ref{ass:re_pgmm}, \ref{ass:asy_primitives}, \ref{ass:conv_rates_lin}, and \ref{ass:M_bound_nonlin}--\ref{ass:conv_rate_gamma_nonlin} are satisfied. Let $\kappa_{n}^{\gamma} = o(\lambda_n)$ and $\bar{s}\lambda_{n} = o(1)$. Then for $\psi_{0}(w) = m(w,\,\gamma_{0}) - \theta_{0} + \alpha_{0}(z)[y - \gamma_{0}(x)]$,
\begin{equation*}
	\sqrt{n}(\hat\theta - \theta_0) \xrightarrow{d} \cN(0,\,V) \text{ and }  \hat V \xrightarrow{p} V = \E\left[\psi_{0}^{2}(w)\right].
\end{equation*}
\end{theorem}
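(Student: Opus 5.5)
The plan is to follow the cross-fitting decomposition of CNS (their nonlinear-functional theorem), adapted to the NPIV influence function. For each fold $\ell$ and $i\in I_\ell$, write
\begin{align*}
\psi(W_i,\theta_0,\hat\gamma_\ell,\hat\alpha_\ell)-\psi_0(W_i) &= \underbrace{m(W_i,\hat\gamma_\ell)-m(W_i,\gamma_0)-\alpha_0(Z_i)[\hat\gamma_\ell(X_i)-\gamma_0(X_i)]}_{\hat R_{1i}} \\
&\quad+\underbrace{[\hat\alpha_\ell(Z_i)-\alpha_0(Z_i)][Y_i-\gamma_0(X_i)]}_{\hat R_{2i}} \\
&\quad-\underbrace{[\hat\alpha_\ell(Z_i)-\alpha_0(Z_i)][\hat\gamma_\ell(X_i)-\gamma_0(X_i)]}_{\hat R_{3i}} .
\end{align*}
The goal is to show $\frac{1}{\sqrt n}\sum_{i\in I_\ell}\hat R_{ki}\xrightarrow{p}0$ for $k=1,2,3$. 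Then $\sqrt n(\hat\theta-\theta_0)=n^{-1/2}\sum_i\psi_0(W_i)+o_p(1)$, and the CLT (finite $V$ by Assumption~\ref{ass:asy_primitives}) gives the limit.

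First, I would establish the rate of $\hat\alpha_\ell$. Lemma~\ref{lem:M_bound_nonlin} gives $\|\hat M_\ell-M\|_\infty=O_p(\kappa_n^\gamma)$, so $\varepsilon_n^M=\kappa_n^\gamma$ and $\varepsilon_n=\max\{\sqrt{\log q/n},\kappa_n^\gamma\}$. The proof of Theorem~\ref{thm:rr_bound} uses Assumption~\ref{ass:M_conv} only through $\|\hat M-M\|_\infty=O_p(\varepsilon_n)$ and the bound on $\|M\|_\infty$, which now comes from Assumption~\ref{ass:M_bound_nonlin}(i) in place of Assumption~\ref{ass:m_bound}. So, provided $\kappa_n^\gamma=o(\lambda_n)$ (I assume $\sqrt{\log q/n}=o(\lambda_n)$ as well), rerunning that argument yields $\|\hat\alpha_\ell-\alpha_0\|=O_p(\bar s\lambda_n)=O_p(\kappa_n^\alpha)$. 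Here one must check that the extra fold split keeps $\hat M_\ell$ computed only from data outside $I_\ell$, so that $\hat\alpha_\ell$ is independent of $I_\ell$.

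Next come the three remainders, each handled by conditioning on the complement of $I_\ell$ and using that $\hat\gamma_\ell$ and $\hat\alpha_\ell$ are fixed there.

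For $\hat R_2$, the terms have conditional mean zero by $\E[Y-\gamma_0|Z]=0$. Their conditional variance is at most $\sup_z\E[\varepsilon^2|z]\,\|\hat\alpha_\ell-\alpha_0\|^2\to0$, so $\hat R_2$ is negligible by Chebyshev.

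For $\hat R_1$, split it into a centered part and its mean. The centered part has conditional variance bounded by $2\int[m(w,\hat\gamma)-m(w,\gamma_0)]^2dF_0+2\sup\alpha_0^2\|\hat\gamma-\gamma_0\|^2$. This tends to zero by mean-square continuity of $m$ and Assumption~\ref{ass:conv_rate_gamma_nonlin}. I would assume the first half of Assumption~\ref{ass:mse_consist_gamma} here; it is not in the stated list, but something like it is needed. For the mean, use \eqref{eq:rr_exog}: $\E[\alpha_0(Z)(\hat\gamma-\gamma_0)(X)]=\E[D(W,\gamma_0,\hat\gamma-\gamma_0)]$. Assumption~\ref{ass:M_frechet} then gives $|\E[\hat R_1|\hat\gamma]|\le C\|\hat\gamma-\gamma_0\|^2$, and $\sqrt n\|\hat\gamma-\gamma_0\|^2\to0$ by Assumption~\ref{ass:conv_rate_gamma_nonlin}. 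This is the nonlinear-specific step.

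For $\hat R_3$, the conditional mean is $\E[(\hat\alpha-\alpha_0)(Z)\,T(\hat\gamma-\gamma_0)(Z)]\le\|\hat\alpha-\alpha_0\|\,\|T(\hat\gamma-\gamma_0)\|$. Times $\sqrt n$, this is $O_p(\sqrt n\kappa_n^\alpha\kappa_n^\gamma)=o_p(1)$ by Assumption~\ref{ass:conv_rates_lin}, using $\|T\delta\|\le\|\delta\|$. The centered part has conditional variance at most $\E[(\hat\alpha-\alpha_0)^2(\hat\gamma-\gamma_0)^2]$.

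I expect this centered part of $\hat R_3$ to be the main obstacle. It needs a sup-norm bound on one factor. I would bound $\|\hat\alpha_\ell-\alpha_0\|_\infty\le\sup|\alpha_0|+C_b(|\hat\rho_\ell|_1)$, using $|\hat\rho_\ell|_1=O_p(1)$ from the $\ell_1$ part of the oracle inequality together with $|\bar\rho|_1\le\bar A$. The variance is then $O_p(\|\hat\gamma-\gamma_0\|^2)\to0$.

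Finally, I would prove $\hat V\xrightarrow{p}V$. Expand $\hat\psi_{i\ell}^2-\psi_0^2$, use the same remainder bounds in $L_2$ together with $\hat\theta\to\theta_0$, and apply Cauchy–Schwarz and the law of large numbers for $\psi_0^2$, exactly as in the linear case of Theorem~\ref{thm:asy_norm_lin}.
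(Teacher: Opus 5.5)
Your proposal is correct and is essentially the paper's argument: the paper verifies Assumptions 1--3 of Lemma 15 in \cite{chernozhukov2022locally}, whose proof is precisely your $\hat R_1/\hat R_2/\hat R_3$ cross-fitting decomposition, and its one nonlinear-specific step is the same Riesz-identity-plus-Assumption~\ref{ass:M_frechet} bound $\sqrt{n}\,|\E[\hat R_1\mid\hat\gamma_\ell]|\le C\sqrt{n}\|\hat\gamma_\ell-\gamma_0\|^2=o_p(1)$. You are also right that the mean-square continuity condition in Assumption~\ref{ass:mse_consist_gamma} is needed but absent from the stated list; the paper relies on it implicitly when it says the conditions of Assumptions 1 and 2 there follow ``from the proof of Theorem~\ref{thm:asy_norm_lin}''.
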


\begin{corollary}\label{cor:asy_norm_nonlin_wm}
Suppose Assumptions \ref{ass:boundedness}, \ref{ass:sparse_approx}--\ref{ass:re_pgmm}, \ref{ass:asy_primitives}, \ref{ass:conv_rates_lin}, and \ref{ass:M_bound_nonlin}--\ref{ass:conv_rate_gamma_nonlin} are satisfied. Assume that populations moments are non-degenerate and have non-zero variance, and the PGMM estimator uses the diagonal weight matrix \eqref{eq:pgmm_diag_wm}. Let $\kappa_n^{\gamma} = o(\lambda_n)$ and $\bar{s}^{2}\lambda_n = o(1)$. Then for $\psi_{0}(w) = m(w,\,\gamma_{0}) - \theta_{0} + \alpha_{0}(z)[y - \gamma_{0}(x)]$,
\begin{equation*}
	\sqrt{n}(\hat\theta - \theta_0) \xrightarrow{d} \cN(0,\,V) \text{ and } \hat V \xrightarrow{p} V = \E\left[\psi_{0}^{2}(w)\right].
\end{equation*}
\end{corollary}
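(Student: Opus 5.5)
The plan is to reduce the diagonal-weight case to the general asymptotic-normality argument behind Theorem \ref{thm:asy_norm_nonlin}. The only ingredient that changes is the PGMM rate, which I would take from the diagonal-weight analysis of Theorem \ref{thm:rr_bound_diag_wm}, adapted to the nonlinear $\hat M_\ell$.

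\textbf{Step 1: rate for $\hat M_\ell$.} By Lemma \ref{lem:M_bound_nonlin}, under Assumptions \ref{ass:M_bound_nonlin} and \ref{ass:conv_rate_gamma_nonlin} we have $||\hat M_\ell - M||_\infty = O_p(\kappa_n^\gamma)$. Hence $\varepsilon_n = \max\{\sqrt{\log q/n},\kappa_n^\gamma\}$, which equals $\kappa_n^\gamma$ up to the maximum. The condition $\kappa_n^\gamma = o(\lambda_n)$ then plays the role of $\varepsilon_n = o(\lambda_n)$.

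\textbf{Step 2: diagonal weight matrix.} With $\tilde\rho$ a preliminary identity-weight PGMM estimate, I would show $||\hat\Omega^d - \Omega^d||_{\ell_\infty} = \max_j|\hat\sigma_j^{-2}(\tilde\rho) - \sigma_j^{-2}|$ is $O_p(\bar s\lambda_n)$, or at least $o_p(1)$ at a rate absorbed by $\bar s^2\lambda_n = o(1)$. Here $\Omega^d = \text{diag}(\sigma_j^{-2})$, and the population variances are bounded away from zero by the nondegeneracy assumption. The difference $\hat\sigma_j^2(\tilde\rho) - \sigma_j^2$ splits into two pieces. The first is a sampling term controlled uniformly in $j$ by boundedness (Assumptions \ref{ass:boundedness}, \ref{ass:M_bound_nonlin}(i)) and a maximal inequality. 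The second is a term linear in $\tilde\rho - \bar\rho$, bounded by $C|\tilde\rho - \bar\rho|_1 = O_p(\bar s\lambda_n)$ via the identity-weight oracle inequality. In the nonlinear case there is an additional piece coming from $D(W,\hat\gamma,d_j)$ versus $D(W,\gamma_0,d_j)$, which is $O_p(\kappa_n^\gamma)$. Inverting is safe because the $\sigma_j^2$ are bounded below. Feeding this perturbation through the restricted-eigenvalue step, the empirical compatibility constant of $\hat G'\hat\Omega^d_q\hat G$ differs from the population one by at most $\bar s$ times the $\ell_\infty$ error. That is $O_p(\bar s^2\lambda_n) = o_p(1)$, so Assumption \ref{ass:re_pgmm} transfers with high probability. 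The oracle inequality then gives $||\hat\alpha_\ell - \alpha_0|| = O_p(\bar s\lambda_n) = O_p(\kappa_n^\alpha)$, as in Theorem \ref{thm:rr_bound_diag_wm} and Appendix \ref{app:pgmm_nonlin}.

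\textbf{Step 3: expansion of $\hat\theta$.} This follows the proof of Theorem \ref{thm:asy_norm_nonlin} verbatim. For each fold, I would write $\sqrt n(\hat\theta - \theta_0) = n^{-1/2}\sum_i\psi_0(W_i) + R_1 + R_2 + R_3$. The term $R_1$ is the centered empirical-process remainder. Cross-fitting and Assumptions \ref{ass:asy_primitives}--\ref{ass:mse_consist_gamma} make it $o_p(1)$ through a conditional Chebyshev bound, using $||\hat\alpha_\ell - \alpha_0|| \to 0$ and $||\hat\gamma_\ell - \gamma_0|| \to 0$. The term $R_2 = \sqrt n\,\E[(\hat\alpha_\ell - \alpha_0)(\gamma_0 - \hat\gamma_\ell)]$ is bounded by $\sqrt n\,||\hat\alpha_\ell - \alpha_0||\,||T(\hat\gamma_\ell - \gamma_0)||$ using iterated expectations over $Z$. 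Since $\kappa_n^\alpha$ comes from Step 2, Assumption \ref{ass:conv_rates_lin} gives $R_2 = o_p(1)$. The term $R_3$ is the linearization remainder, which is at most $C\sqrt n\,||\hat\gamma_\ell - \gamma_0||^2 = o_p(1)$ by Assumptions \ref{ass:M_frechet} and \ref{ass:conv_rate_gamma_nonlin}. The CLT then gives the limit $\cN(0,V)$. Consistency of $\hat V$ follows from the same mean-square consistency of $\hat\psi_{i\ell}$ together with the law of large numbers.

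\textbf{Main obstacle.} I expect Step 2 to be the hardest part: controlling the diagonal weight uniformly over $q$ moments when $\tilde\rho$ is itself only $\ell_1$-consistent at rate $\bar s\lambda_n$, and $\hat M$ carries the slower $\kappa_n^\gamma$ error. This is exactly why the stronger requirement $\bar s^2\lambda_n = o(1)$ is needed. Once Theorem \ref{thm:rr_bound_diag_wm} is granted in the nonlinear-$\hat M$ form, the rest of the argument is routine.
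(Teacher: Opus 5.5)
Your overall route is the paper's. Its proof of this corollary simply combines the diagonal-weight, nonlinear-$\hat M_\ell$ PGMM rate (Corollary~\ref{cor:rr_bound_diag_wm_nonlin}) with the argument of Theorem~\ref{thm:asy_norm_nonlin}, which is what your Steps 1--3 reconstruct.

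\textbf{Gap in Step 2.} The diagonal weight enters the oracle inequality in two places, and you treat only the restricted-eigenvalue one. The other is the score condition $\|\hat G'\hat\Omega^d_q(\hat M-\hat G\rho_\star)\|_\infty=o_p(\lambda_n)$. Your Step 1 effectively delegates this to Lemma~\ref{lem:aux1} with $\varepsilon_n$ replaced by $\kappa_n^\gamma$. That lemma, however, relies on $\|\hat\Omega-\Omega\|_{\ell_\infty}=O_p(\sqrt{\log q/n})$. With the diagonal weight you only have $O_p(\bar s\lambda_n)$, so the terms $G'(\hat\Omega_q-\Omega_q)M$ and $G'(\hat\Omega_q-\Omega_q)G\rho_\star$ in its decomposition can only be bounded by $O_p(\bar s\lambda_n)$, which is not $o_p(\lambda_n)$. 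The paper's treatment of this step in the proof of Theorem~\ref{thm:rr_bound_diag_wm} is not a usable template either: it asserts $\|\hat G\|_\infty=O_p(\varepsilon^G_n)$ and $\|\hat M\|_\infty=O_p(\varepsilon^M_n)$, although $\|G\|_\infty$ and $\|M\|_\infty$ are $O(1)$.

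\textbf{How to close it.} Keep the two terms together as $G'(\hat\Omega_q-\Omega_q)(M-G\rho_\star)$ and use that the population residual is small. By the Riesz property, $(M-G\rho_\star)_j=\E[d_j(X)\{\alpha_0(Z)-b(Z)'\rho_\star\}]$, so $\|M-G\rho_\star\|_\infty\le C_d\|\alpha_0-b'\rho_\star\|=O(\sqrt{\bar s}\,\varepsilon_n)$ by Lemma~\ref{lem:rho_star_aux3}. The term is then $O_p(\bar s^{3/2}\lambda_n\varepsilon_n)=o_p(\lambda_n)$, because $\bar s^{2}\lambda_n=o(1)$ and $\varepsilon_n=\kappa_n^\gamma=o(\lambda_n)$. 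The remaining terms go as in Lemma~\ref{lem:aux1}, using $\|G'\Omega_q(M-G\rho_\star)\|_\infty\le 2\varepsilon_n$ and $\|\hat\Omega^d\|_{\ell_\infty}=O_p(1)$.

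This same bound also delivers $|\hat\rho_\ell|_1=O_p(1)$ (Lemma~\ref{lem:rho_hat_L_bound}), which your Step 3 relies on. The product term $(\hat\alpha_\ell-\alpha_0)(\hat\gamma_\ell-\gamma_0)$ inside $R_1$ is controlled through $\sup_z|\hat\alpha_\ell(z)|=O_p(1)$, not by the two $L_2$ rates you cite.

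\textbf{Two smaller points.}

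First, you are right that in the nonlinear case $\hat\sigma_j^2$ is built from $D(W,\hat\gamma,d_j)$ rather than $D(W,\gamma_0,d_j)$; the paper ignores this. But your claimed $O_p(\kappa_n^\gamma)$ rate for that piece does not follow from Assumption~\ref{ass:M_bound_nonlin}(ii), which bounds only the mean $\E[D(W,\gamma,d_j)-D(W,\gamma_0,d_j)]$. The variance involves products such as $D(W,\gamma,d_j)\,d_j(X)b(Z)'\rho$, so you need an $L_1$ or mean-square Lipschitz condition on $\gamma\mapsto D(W,\gamma,d_j)$, uniform in $j$. An actual rate is required, not just $o_p(1)$, since the restricted-eigenvalue transfer multiplies this error by $\bar s$.

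Second, your treatment of $R_1$ and of $\hat V$ invokes Assumption~\ref{ass:mse_consist_gamma}, which is not among the corollary's hypotheses. The paper also uses it implicitly, through the proof of Theorem~\ref{thm:asy_norm_lin}. Its first part (mean-square consistency of $m(W,\hat\gamma)$) is not implied by Assumptions~\ref{ass:M_bound_nonlin}--\ref{ass:conv_rate_gamma_nonlin}, so it has to be added explicitly.
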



\section{Finite Sample Performance}\label{sec:toy_mc}

This section presents Monte Carlo evidence illustrating finite sample performance of our automatic debiasing procedure. We focus on the average derivative functional and compare three approaches: (i) the plug-in estimator, (ii) debiased ML using the analytical Riesz representer, and (iii) our adaptive debiased ML using the PGMM-estimated Riesz representer.

Our design builds on \cite{newey_powell2003}, \cite{santos2012inference}, and \cite{ChenPouzo2015QLR}, modified to allow for multiple regressors and instruments. We generate i.i.d.\ draws 
\begin{equation*}
	\begin{pmatrix}
		X_{ij} \\ Z_{ij} \\ u_{ij}
	\end{pmatrix}
	\sim \cN
	\begin{pmatrix}
		\begin{bmatrix}
			0 \\ 0 \\ 0
		\end{bmatrix}
		, \;
		\begin{bmatrix}
			1 & 0.8 & 0.5 \\
			0.8 & 1 & 0 \\
			0.5 & 0 & 1
		\end{bmatrix}
	\end{pmatrix},
	\quad j=1,\dots,\,k,
\end{equation*}
where $k$ denotes the number of regressors and instruments. The response variable is
\begin{equation*}
	Y_{i} = \gamma(X_{i}) + v_{i}, \quad v_{i} = \sum_{j=1}^{k}u_{ij}.
\end{equation*}
The composite error structure implies that endogeneity weakens as $k$ increases, since each regressor $X_j$ is correlated with only one component of $v_i$. We therefore restrict attention to $k \leq 10$.

Following \cite{bakhitov2025mliv}, we specify the structural function as
\begin{equation*}
	\gamma(X) = X_{1} + \exp\{-0.5X_{-1}'X_{-1}\},
\end{equation*}
where $X_{-1} = (X_2, \ldots, X_k)'$. This specification ensures that the parameter of interest,
\begin{equation*}
	\theta_0 = \E\left[\frac{\partial \gamma(X)}{\partial X_1}\right] = 1,
\end{equation*}
is constant across simulations.

We compare three estimators. The plug-in (PI) estimator averages the estimated derivative without bias correction. The debiased ML (DML) estimator employs the analytical Riesz representer, specifically, the identity score estimator of \cite{chen2023ann_npiv}. The automatic debiased ML (ADML) estimator uses our PGMM framework to learn the Riesz representer directly from the orthogonality conditions.

For all methods, we construct dictionaries $b(Z)$ and $d(X)$ using cubic polynomials with interactions, yielding $p = q$ basis functions. The structural function $\gamma$ is estimated using the Double Lasso estimator of \cite{gold2020}. We report results from $1000$ replications for $k \in \{2, 5, 10\}$ and $n \in \{100, 500, 1000, 10000\}$, using five-fold cross-fitting throughout. Table \ref{tab:mc_wad} reports absolute bias, median standard error, and coverage of a nominal 95\% confidence interval.

\begin{table}[htbp]
\centering
\begin{threeparttable}
\caption{Monte Carlo Results: Average Derivative}
\label{tab:mc_wad}
\small
\begin{tabular}{ll| ccc| ccc| ccc}
\toprule
 & & \multicolumn{3}{c}{$|\text{Bias}|$} & \multicolumn{3}{c}{Median SE} & \multicolumn{3}{c}{Coverage} \\
\cmidrule(lr){3-5} \cmidrule(lr){6-8} \cmidrule(lr){9-11}
$k$ & $n$ & PI & DML & ADML & PI & DML & ADML & PI & DML & ADML \\
\midrule
2 & 100 & 0.043 & 1.020 & 0.178 & 0.093 & 0.506 & 0.198 & 64.3\% & 95.7\% & 93.5\% \\
  & 500 & 0.053 & 0.016 & 0.021 & 0.014 & 0.069 & 0.063 & 29.4\% & 95.6\% & 94.6\% \\
  & 1000 & 0.038 & 0.000 & 0.001 & 0.006 & 0.044 & 0.041 & 16.8\% & 95.9\% & 95.0\% \\
  & 10000 & 0.011 & 0.002 & 0.001 & 0.001 & 0.013 & 0.013 & 4.6\% & 94.2\% & 94.4\% \\
\hline
5 & 100 & 0.074 & 0.045 & 0.039 & 0.057 & 2.957 & 0.162 & 53.6\% & 95.0\% & 92.3\% \\
  & 500 & 0.085 & 0.036 & 0.015 & 0.022 & 0.084 & 0.064 & 24.9\% & 96.7\% & 94.6\% \\
  & 1000 & 0.071 & 0.010 & 0.006 & 0.011 & 0.049 & 0.043 & 13.1\% & 96.3\% & 93.8\% \\
  & 10000 & 0.022 & 0.000 & 0.000 & 0.001 & 0.013 & 0.013 & 1.5\% & 96.5\% & 95.8\% \\
\hline
10 & 100 & 0.070 & 0.242 & 0.001 & 0.055 & 39.717 & 0.165 & 54.4\% & 94.5\% & 90.0\% \\
   & 500 & 0.013 & 0.052 & 0.002 & 0.013 & 0.129 & 0.056 & 41.4\% & 93.3\% & 93.5\% \\
   & 1000 & 0.046 & 0.009 & 0.002 & 0.010 & 0.056 & 0.040 & 21.6\% & 95.9\% & 93.6\% \\
   & 10000 & 0.033 & 0.000 & 0.001 & 0.001 & 0.013 & 0.013 & 0.4\% & 94.6\% & 93.3\% \\
\bottomrule
\end{tabular}
\begin{tablenotes}
\footnotesize
\item PI = plug-in estimator; DML = debiased estimator with analytical Riesz representer; ADML = automatic debiased estimator with PGMM-estimated Riesz representer. Results based on 1000 replications, and $\gamma$ is estimated using Double Lasso.
\end{tablenotes}
\end{threeparttable}
\end{table}

The plug-in estimator exhibits substantial bias across all specifications. More strikingly, the coverage probability deteriorates as the sample size increases: for $k=2$, coverage falls from $64.3\%$ at $n=100$ to $4.6\%$ at $n=10000$; for $k=10$, it drops from $54.4\%$ to $0.4\%$. This reflects the well-known phenomenon that the squared bias of plug-in estimators shrinks slower than the variance \citep{chernozhukov2022locally}, causing the standard errors to increasingly understate the true uncertainty.

The analytical debiasing procedure performs well in large samples but exhibits instability when $n$ is small. At $n=100$, the absolute bias reaches $1.020$ for $k=2$ and $0.242$ for $k=10$, accompanied by severely inflated standard errors (the median SE reaches $2.957$ for $k=5$ and $39.717$ for $k=10$). This occurs because the identity score estimator requires basis function expansions for intermediate estimation steps\footnote{See Section 4.2 in \cite{chen2023ann_npiv}.}, which become numerically unstable when the effective sample size is small relative to the dimension. Despite the elevated bias and variance, DML maintains close-to-nominal coverage ($93$--$97\%$) as the inflated standard errors appropriately capture the increased uncertainty, consistent with the slight conservatism documented in \cite{chen2023ann_npiv}.

The automatic debiasing procedure demonstrates substantially greater stability in small samples. At $n=100$, ADML achieves bias of $0.178$, $0.039$, and $0.001$ for $k=2$, $5$, and $10$, with moderate standard errors ($0.198$, $0.162$, $0.165$) that avoid the inflation seen in the analytical approach. ADML correctly quantifies its uncertainty, yielding coverage probabilities close to the nominal level. The procedure exhibits slight undercoverage ($90$--$93.5\%$ at $n=100$), contrasting with DML's conservatism and reflecting different bias-variance tradeoffs. As sample size increases, both procedures converge to similar performance, with ADML delivering comparable or lower bias throughout.


\section{Application to nonparametric demand estimation}\label{sec:np_demand}

\subsection{Model and estimation framework}\label{subsec:semiparam_demand}

In this Section, we introduce a new framework for demand estimation that combines the nonparametric identification arguments of \cite{berry2014identification} with the dimensionality reduction techniques of \cite{gandhi_houde2019}, making it applicable to real data sets with more than two products unlike \cite{compiani2022market} whose approach suffers from the curse of dimensionality.

We follow \cite{berry2014identification} and present a general model of demand first, later on we will impose additional restrictions on the form of the indirect utility function. In market $t$, $t=1,\dots,\,T$, there is a continuum of consumers choosing from a set of products $\mathcal{J} = \{0,\,1,\dots,\,J\}$ which includes the outside option. The choice set in market $t$ is characterized by a set of product characteristics $\chi_{t}$ partitioned as follows:
\begin{equation*}
	\chi_{t} \equiv (x_{t}, p_{t}, \xi_{t}),
\end{equation*}
where $x_{t} \equiv (x_{1t}, \dots, x_{Jt})$ is a vector of exogenous observable characteristics (e.g. exogenous product characteristics or market-level income), $p_{t} \equiv (p_{1t}, \dots, p_{Jt})$ are observable endogenous characteristics (typically, market prices) and $\xi_{t} \equiv (\xi_{1t}, \dots, \xi_{Jt})$ represent unobservables potentially correlated with $p_{t}$ (e.g. unobserved product quality). Let $\mathcal{X}$ denote the support of $\chi_{t}$. Then the structural demand system is given by
\begin{equation*}
	\sigma: \mathcal{X} \mapsto \Delta^{J},
\end{equation*}
where $\Delta^{J}$ is a unit $J$-simplex. The function $\sigma$ gives, for every market $t$, the vector $s_{t}$ of shares for the $J$ goods.

Following \cite{berry2014identification}, we partition the exogenous characteristics as $x_{t} = \left(x_{t}^{(1)}, x_{t}^{(2)} \right)$, where $x_{t}^{(1)} \equiv \left(x_{1t}^{(1)}, \dots, x_{Jt}^{(1)}\right)$, $x_{jt} \in \mathbb{R}$ for $j \in \mathcal{J} \backslash \{0\}$, and define the linear indices
\begin{equation*}
	\delta_{jt} = x_{jt}^{(1)}\beta_{j} + \xi_{jt}, \quad j \in \mathcal{J} \backslash \{0\},
\end{equation*}
and let $\delta_{t} \equiv (\delta_{1t}, \dots, \delta_{Jt})$. Without loss of generality, we can normalize $\beta_{j} = 1$ for all $j$ as unobserved characteristics $\xi_{jt}$ have no natural scale (see \cite{berry2014identification} for more details). Given the definition of the demand system, for every market $t$,
\begin{equation*}
	\sigma(\chi_{t}) = \sigma\left(\delta_{t}, p_{t}, x_{t}^{(2)}\right).
\end{equation*}

Following \cite{berry2013connected} and \cite{berry2014identification}, under the connected substitutes assumption, there exists at most one vector $\delta_{t}$ such that $s_{t} = \sigma\left(\delta_{t}, p_{t}, x_{t}^{(2)}\right)$, meaning that for every inside good we can write
\begin{equation} \label{eq:inv_demand}
	\delta_{jt} = \sigma_{j}^{-1}\left(s_{t}, p_{t}, x_{t}^{(2)} \right), \quad j \in \mathcal{J} \backslash \{0\}.
\end{equation}
We can rewrite \eqref{eq:inv_demand} in a more convenient form to get the following estimation equation 
\begin{equation} \label{eq:est_eq1}
	x_{jt}^{(1)} = \sigma_{j}^{-1}\left(s_{t}, p_{t}, x_{t}^{(2)} \right) - \xi_{jt}.
\end{equation}
Note that in \eqref{eq:est_eq1} the inverse demand is indexed by $j$, meaning that we have to estimate $J$ inverse demand functions, which is exactly why the approach of \cite{compiani2022market} suffers from the curse of dimensionality.

To circumvent this problem, we exploit the symmetry properties of the demand system. Under the linear in characteristics utility specification, \cite{gandhi_houde2019} show that the inverse demand function admits a symmetric representation. Specifically, define the state vector for product $j$ in market $t$ as
\begin{equation*}
	\omega_{jt} \equiv \left(\{s_{kt}, \Delta_{jkt}\}_{k \neq j}\right),
\end{equation*}
where $\Delta_{jkt} = \tilde{x}_{jt} - \tilde{x}_{kt}$ denotes characteristic differences with $\tilde{x}_{jt} \equiv \left(p_{jt}, x_{jt}^{(2)}\right)$, and index $k$ runs over all products except $j$, including the outside good $k = 0$. Note that $\omega_{jt}$ contains (i) the shares of all rival products (from which the own share $s_{jt}$ can be recovered), and (ii) the differences in characteristics with respect to all alternatives, including the outside good, so that $\Delta_{j0t} = (p_{jt}, x_{jt}^{(2)})$ captures own-product levels.

Building on Proposition 1 of \cite{gandhi_houde2019}, we establish the following result.
\begin{theorem}\label{thm:symmetric_inverse}
Let Assumptions 1 and 2 in \cite{berry2014identification} hold. Then under the linear in characteristics utility model, there exists a function $\gamma: \mathcal{W} \to \mathbb{R}$, symmetric in rival products, such that
\begin{equation} \label{eq:est_eq2}
	\log\left(\frac{s_{jt}}{s_{0t}}\right) = x_{jt}^{(1)} + \gamma(\omega_{jt}) + \xi_{jt},
\end{equation}
for all inside goods $j \in \mathcal{J} \backslash \{0\}$.
\end{theorem}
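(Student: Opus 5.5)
The plan is to combine the invertibility result of \cite{berry2013connected,berry2014identification} with the symmetry argument behind Proposition 1 of \cite{gandhi_houde2019}. Under Assumptions 1 and 2 of \cite{berry2014identification} (the index restriction and connected substitutes), \eqref{eq:inv_demand} holds: $\delta_{jt}=\sigma_j^{-1}(s_t,p_t,x_t^{(2)})$ for each inside good. It therefore suffices to show that $\sigma_j^{-1}(s_t,p_t,x_t^{(2)}) - \log(s_{jt}/s_{0t})$ can be written as $-\gamma(\omega_{jt})$ for a single function $\gamma$ that does not depend on $j$ and is symmetric in rival products. Once this is done, substituting $\delta_{jt}=x^{(1)}_{jt}+\xi_{jt}$ (with $\beta_j=1$) and rearranging gives \eqref{eq:est_eq2}.

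The first step uses the linear-in-characteristics utility. Write $u_{ijt}=\delta_{jt}+\mu_{ij}(\tilde x_{jt})+\epsilon_{ijt}$, where the random-coefficient term $\mu$ depends on characteristics through a common map, the taste distribution is the same across products, and $\epsilon$ is i.i.d.\ across products. Normalize the outside good to have $\delta_{0t}=0$ and $\tilde x_{0t}=0$. Choice probabilities depend only on utility differences, so $s_{jt}$ is a function of $\{\delta_{jt}-\delta_{kt},\,\tilde x_{jt}-\tilde x_{kt}\}_{k\neq j}$, including $k=0$. Because the map is the same for every product, this function is the same for all $j$ and is invariant to permutations of the rivals. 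Call it $s_{jt}=\Phi(\{\delta_{jt}-\delta_{kt},\Delta_{jkt}\}_{k\neq j})$.

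The second step inverts this system. By connected substitutes, the whole vector $\delta_t$ is pinned down by $(s_t,\Delta_t)$. Gandhi--Houde's argument then shows that the own-minus-rival index gaps $\delta_{jt}-\delta_{kt}$, and in particular $\delta_{jt}-\delta_{0t}=\delta_{jt}$, can be recovered from the rival shares and from the differences $\Delta_{jkt}$ alone. The own share follows from $s_{jt}=1-\sum_{k\neq j}s_{kt}$, so it is a function of $\omega_{jt}$. This gives $\delta_{jt}=h(\omega_{jt})$, with $h$ common across $j$ and symmetric in rivals. Permutation invariance passes to the inverse because the unique solution of a permutation-equivariant system is itself equivariant. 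Setting $\gamma(\omega_{jt})=\log(s_{jt}/s_{0t})-h(\omega_{jt})$ then defines a function of $\omega_{jt}$ alone, since $\log(s_{jt}/s_{0t})$ is a function of the shares in $\omega_{jt}$, and this $\gamma$ inherits the symmetry.

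The main obstacle is the second step. We need to check that the inverse for good $j$ depends on the other products' characteristics only through the differences $\Delta_{jkt}$, and not separately on their levels. Common pairwise differences with $j$ determine all pairwise differences, since $\Delta_{kl}=\Delta_{jl}-\Delta_{jk}$. So the entire system of share equations can be rewritten in terms of $\omega_{jt}$ and the unknowns $\{\delta_{jt}-\delta_{kt}\}_k$. Uniqueness from \cite{berry2013connected} then gives the dependence on $\omega_{jt}$ only. I expect that making this reparametrization precise, while keeping the outside good's role explicit through $\Delta_{j0t}$, is where care is needed. The $\log(s_{jt}/s_{0t})$ normalization is purely cosmetic, because it is itself a function of $\omega_{jt}$.
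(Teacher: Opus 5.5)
Your argument is correct, but it reaches the result by a different route than the paper.

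The paper's proof runs as follows:
\begin{itemize}
\item It takes Gandhi and Houde's Proposition 1 as a black box, $\sigma_j^{-1}(s_t,p_t,x_t^{(2)}) = G(\omega_{jt}) + C_t$, where $C_t$ is a market-specific constant.
\item It pins down $C_t=-G(\omega_{0t})$ by evaluating this representation at the outside good, where $\delta_{0t}=0$.
\item It shows $\omega_{0t}$ is a deterministic function of $\omega_{jt}$ via $\Delta_{0kt}=\Delta_{jkt}-\Delta_{j0t}$.
\item It sets $\gamma(\omega_{jt}) = \log(s_{jt}/s_{0t}) - G(\omega_{jt}) + G(\omega_{0t})$.
\end{itemize}

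You instead re-derive the symmetry directly from the linear-in-characteristics utility model. You take the unknowns to be the gaps $\delta_{jt}-\delta_{kt}$ over all $k\neq j$, treating the outside good as an ordinary rival. Then $\delta_{jt}$ is itself the outside-good coordinate of the unique solution, so $\delta_{jt}=h(\omega_{jt})$ immediately. There is no market constant to eliminate and no need to map $\omega_{0t}$ into $\omega_{jt}$.

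What each route buys:
\begin{itemize}
\item The paper's version is shorter given the cited result. However, it applies a representation stated for inside goods to $j=0$, which tacitly requires the outside good to be exchangeable with the inside goods.
\item Your version makes that requirement explicit: the same taste-shock distribution across all $J+1$ alternatives, with $\tilde x_{0t}=0$ and $\delta_{0t}=0$. It is also self-contained, needing only the Berry--Gandhi--Haile uniqueness result and the fact that the unique solution of a permutation-equivariant system is equivariant.
\end{itemize}

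One clarification on symmetry. Because $h$ picks out the outside-good coordinate, $h$ (and hence your $\gamma$) is invariant only to permutations of the inside rivals, with the outside good held fixed. This is also the only sense in which the paper's $\gamma$ is symmetric, since both $\log(s_{jt}/s_{0t})$ and $G(\omega_{0t})$ single out $k=0$. So "symmetric in rival products" should be read that way in both proofs.
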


The key insight underlying Theorem~\ref{thm:symmetric_inverse} is that the standard outside good normalization ($\delta_{0t} = 0$) pins down any market-specific constant that arises in the symmetric representation. Since the state vector $\omega_{jt}$ contains the difference $\Delta_{j0t} = (p_{jt}, x_{jt}^{(2)})$ with respect to the outside good, it encodes sufficient information to recover the outside good's state $\omega_{0t}$, thereby absorbing the market effect into $\gamma$.

Equation~\eqref{eq:est_eq2} has several important features. First, the function $\gamma$ is \emph{not indexed by $j$}, the same function applies to all products, with only the inputs $\omega_{jt}$ varying across products. This allows us to pool observations across products and markets for estimation. Second, $\gamma$ is symmetric in rival products: permuting the labels of competing goods $k \neq j$ leaves $\gamma(\omega_{jt})$ unchanged. This symmetry dramatically reduces the effective dimensionality of the estimation problem. Third, under the multinomial logit model, we have $\gamma(\omega_{jt}) = \beta_p p_{jt} + x_{jt}^{(2)\prime}\beta_x$, so that $\gamma$ can be seen as a generalization of the linear functional form in the logit model.\footnote{This estimation equation is inherently connected to the Independence of Irrelevant Alternatives (IIA) property. If we specify $\gamma(\omega_{jt}) = \beta_p p_{jt} + x_{jt}^{(2)\prime}\beta_x + g(\{\Delta_{jkt}\}_{j \neq k})$, we will get the IIA-test form Section 3.3 of \cite{gandhi_houde2019}, who propose testing $g = 0$ as a diagnostic for IIA violations and instrument relevance.} Thus, \eqref{eq:est_eq2} is a semiparametric demand model in the spirit of \cite{lu2023semi}, though derived from different primitives. While \cite{lu2023semi} exploit the inclusive value structure to summarize cross-product information in a market-level aggregate, our approach retains product-level rival information in $\omega_{jt}$ and leverages symmetry for dimensionality reduction. Additionally, they treat the nonparametric component as having a random true value (induced by the dependence of the inclusive value on unobserved quality), whereas in our framework $\gamma$ is a fixed function evaluated at endogenous arguments, leading to a standard NPIV estimation problem.

Let $y_{jt} \equiv \log(s_{jt} / s_{0t}) - x_{jt}^{(1)}$, then we can rewrite equation \eqref{eq:est_eq2} in a more convenient form
\begin{equation} \label{eq:demand_main_eq}
	y_{jt} = \gamma(\omega_{jt}) + \xi_{jt}.
\end{equation}
Equation \eqref{eq:demand_main_eq} is the main structural equation where $\gamma$ is a complex nonparametric function characterizing the relationship between the inverse demand and product attributes and shares. Dimensionality of the input vector $\omega_{jt}$ depends on both the dimensionality of the characteristics space and the number of products in the market, thus, $\omega_{jt}$ is potentially high-dimensional. This will always be the case if we want to augment standard datasets with unstructured data such as product reviews, package images, etc. Since both the market shares $s_t$ and prices $p_t$ depend on the unobservable characteristics $\xi_t$, $\mathbb{E}[\xi_{jt}|\omega_{jt}] \neq 0$, and hence, $\omega_{jt}$ is endogenous.

In order to estimate $\gamma$, we need to construct a vector of instruments $z_{jt}$. \cite{BerryLevinsohnPakes1995} argue that the vector of product characteristics $x_{jt}$ is exogenous with respect to the structural error term $\xi_{jt}$, i.e. $\E[\xi_{jt}|x_{jt}] = 0$. This exogeneity condition can be used to construct demand side instruments $z_{jt}$. Instrument construction is a well-known problem in demand estimation, since it can lead to weak identification and distorted inference. We refer the reader to \cite{reynaert_verboven2014} and \cite{gandhi_houde2019} for a more detailed discussion.

To construct demand side instruments, we follow \cite{gandhi_houde2019} and use the transformed characteristics space $z_{jt} = (\{\Delta^{x}_{jkt}\}_{j \neq k})$, where $\Delta^{x}_{jkt} = x_{jt} - x_{kt}$, such that $\E[\xi_{jt}|z_{jt}] = 0$. Note that since $\omega_{jt}$ includes $z_{jt}$, it enforces strong correlation between endogenous inputs and instruments. If data permit, one can augment the instrument space with supply side instruments, such as cost shifters. Let $c_{jt}$ be a cost shifter for product $j$ in market $t$, then the instrument space becomes $z_{jt} = \left(\left\{\Delta^{x}_{jkt},\,\Delta^{c}_{jkt}\right\}_{j\neq k}\right)$, where $\Delta^{c}_{jkt} = c_{jt} - c_{kt}$.

\subsection{Price elasticity functional}\label{subsec:elasticity}

One of the main primitives in demand estimation is substitution patterns, which allow the researcher to          
  investigate the responsiveness of consumer choices to changes in market structure and, thus, understand the      
  nature of competition between firms. A conventional measure of substitution is price elasticities, reliable estimation and inference on which is therefore a
  first-order concern for applied researchers. Traditional approaches rely on parametric demand models such as logit,
  nested logit, BLP \citep{BerryLevinsohnPakes1995} that impose strong functional form restrictions on the
  mapping from structural parameters to elasticities. The semiparametric framework of Section \ref{subsec:semiparam_demand}
  removes these restrictions, but introduces a new challenge: the price elasticity is a \emph{nonlinear}
  functional of $\gamma$, and therefore requires the machinery developed in Section~\ref{sec:asy_prop_nonlin}. For
  expositional clarity, we focus on own-price elasticities below; the analysis extends to cross-price elasticities
  naturally.

The own-price elasticity for product $j$ in market $t$ is defined as the percentage change in market share in response to a percentage change in own price,
\begin{equation*}
	\varepsilon_{jj}(\gamma) = \frac{p_{jt}}{s_{jt}}\frac{\partial s_{jt}}{\partial p_{jt}}.
\end{equation*}
By the implicit function theorem applied to equation~\eqref{eq:demand_main_eq}, the share derivative matrix satisfies 
\begin{equation*}
	\nabla_{p_t}s_t = (L - \Gamma^s(\gamma))^{-1}\Gamma^p(\gamma),
\end{equation*}
where $L \in \R^{J \times J}$ is the log-share Jacobian with $L_{jk} = \partial\log(s_{jt}/s_{0t})/\partial s_{kt}$, $\Gamma^p \in \R^{J \times J}$ collects the price derivatives $\Gamma^p_{jk} = \partial\gamma(\omega_{jt})/\partial p_{kt}$, and $\Gamma^s \in \R^{J \times J}$ collects the share derivatives $\Gamma^s_{jk} = \partial\gamma(\omega_{jt})/\partial s_{kt}$. Setting $A(\gamma) \equiv L - \Gamma^s(\gamma)$, the own-price elasticity becomes
\begin{equation*}
	\varepsilon_{jj}(\gamma) = \frac{p_{jt}}{s_{jt}}\left[A^{-1}(\gamma)\Gamma^p(\gamma)\right]_{jj}.
\end{equation*}
The matrix $A$ captures the net effect of shares on the inverse demand system: $L$ accounts for the mechanical relationship between log-share ratios and share levels, while $\Gamma^s$ reflects how $\gamma$ responds to share movements. The interplay between $A^{-1}$ and $\Gamma^p$ encodes the full substitution pattern implied by the model, including indirect effects through equilibrium share responses. See Appendix~\ref{app:price_el} for a detailed derivation.

Under the logit model, $\gamma(\omega_{jt}) = \beta_p p_{jt} + x_{jt}^{(2)\prime}\beta_x$, so $\Gamma^s = 0$ and $\Gamma^p = \beta_p I$. In this case $A = L$ and the elasticity simplifies to $\varepsilon_{jj} = \beta_p\,p_{jt}(1 - s_{jt})$, recovering the familiar closed-form logit formula. More generally, departures from the logit model introduce nonzero share derivatives $\Gamma^s \neq 0$, which generate richer and more flexible substitution patterns through the matrix inverse.

The functional of interest is the average own-price elasticity,
\begin{equation}\label{eq:avg_el}
	\theta_{jj} = \E[\varepsilon_{jj}(\gamma)] = \E\left[\frac{p_{jt}}{s_{jt}}\left[A^{-1}(\gamma)\Gamma^p(\gamma)\right]_{jj}\right],
\end{equation}
where the expectation is taken over the distribution of markets. Since $\varepsilon_{jj}(\gamma)$ depends nonlinearly on $\gamma$ through the matrix inverse $A^{-1}(\gamma)$, the average own-price elasticity \eqref{eq:avg_el} is a nonlinear functional of $\gamma$. To construct a debiased estimator with valid inference, we appeal to the framework of Section~\ref{sec:asy_prop_nonlin}.

The key step is to verify that the Gateaux derivative of $\varepsilon_{jj}(\gamma)$ with respect to $\gamma$ is linear in the perturbation direction, so that the Riesz representer is well-defined and the PGMM estimator of Section~\ref{subsec:construction} can be applied. Proposition~\ref{prop:gateaux} in Appendix~\ref{app:price_el} establishes that for a perturbation $\zeta$\footnote{For notational convenience, we write $D_\gamma \varepsilon_{jj}[\zeta]$ for the Gateaux derivative $D(W, \gamma, \zeta)$ specialized to the elasticity functional.},
\begin{equation}\label{eq:gateaux_main}
	D_\gamma \varepsilon_{jj}[\zeta] = \frac{p_{jt}}{s_{jt}}\left[\left(A^{-1}Z^p\right)_{jj} + \left(A^{-1}Z^s A^{-1}\Gamma^p\right)_{jj}\right],
\end{equation}
where $Z^p_{jk} = \partial\zeta(\omega_{jt})/\partial p_{kt}$ and $Z^s_{jk} = \partial\zeta(\omega_{jt})/\partial s_{kt}$ are the price and share derivative matrices of the perturbation $\zeta$. Expression \eqref{eq:gateaux_main} consists of two terms: the first captures how a perturbation of $\gamma$ affects the price derivative matrix $\Gamma^p$, while the second captures how it affects the share derivative matrix $\Gamma^s$ and, consequently, the matrix inverse $A^{-1}$. Proposition~\ref{prop:linearity} shows that $D_\gamma\varepsilon_{jj}[\zeta]$ is linear in $\zeta$ since the matrices $A$ and $\Gamma^p$ depend on the current function $\gamma$ at which the derivative is evaluated, not on the perturbation direction $\zeta$, while $Z^p$ and $Z^s$ are linear in $\zeta$ by the linearity of differentiation. Therefore, Theorem~\ref{thm:asy_norm_nonlin} and Corollary~\ref{cor:asy_norm_nonlin_wm} are applicable, ensuring that the debiased estimator is asymptotically normal under the conditions stated therein.

A critical distinction from linear functionals is the role of \emph{double cross-fitting}. For a linear functional, the target vector $\hat{M}_\ell$ in the PGMM problem does not depend on an estimate of $\gamma$ and can be computed directly from the basis functions. For the elasticity functional, however, the Gateaux derivative \eqref{eq:gateaux_main} involves $\Gamma^p(\gamma)$, $\Gamma^s(\gamma)$, and $A^{-1}(\gamma)$, all of which depend on $\gamma$. Consequently, $\hat{M}_\ell$ must be constructed using an estimate $\hat\gamma$, and additional sample splitting is required to prevent the estimation error in $\hat\gamma$ from contaminating the Riesz representer estimate (see Appendix \ref{app:el_estimation} for details).

The asymptotic theory of Sections~\ref{sec:pgmm_prop}--\ref{sec:asy_prop_nonlin} applies with the market as the unit of observation which is a natural asymptotic regime for scanner data applications \citep[see e.g.,][]{nevo2001measuring}. Markets are assumed to be independent, so cross-fitting partitions markets rather than individual product-level observations. Let $W_{jt} \equiv (y_{jt},\,\omega_{jt},\,z_{jt})$ be a data tuple, and let $\cT_\ell$, $\ell = 1, \ldots, L$, be a partition of the observation index set $\{1, \ldots, T\}$ into $L$ distinct subsets of approximately equal size. Then for test fold $\ell$, the target vector is constructed as
\begin{equation}\label{eq:M_double_cf}
	\hat{M}_{\ell j} = \frac{1}{T - T_\ell}\sum_{\ell' \neq \ell}\sum_{t \in \cT_{\ell'}} D\!\left(W_{t},\,\hat\gamma_{\ell,\ell'},\,d_j\right),
\end{equation}
where $\hat\gamma_{\ell,\ell'}$ is trained on all markets outside both folds $\ell$ and $\ell'$, and $d_j$ denotes the $j$-th basis function. This double cross-fitting ensures that each summand in \eqref{eq:M_double_cf} evaluates the Gateaux derivative at markets from fold $\ell'$ using a $\hat\gamma_{\ell,\ell'}$ that was not fitted on those markets. At the same time, the entire vector $\hat{M}_\ell$ uses no data from the test fold $\ell$, preserving the independence required for valid inference on the Riesz representer $\hat\alpha_{jj,\ell}$. As discussed in Section~\ref{sec:asy_prop_nonlin}, the presence of $\hat\gamma_{\ell,\ell'}$ in $\hat{M}_\ell$ slows its convergence rate from $\sqrt{\log q / T}$ to $\kappa_T^\gamma$, the convergence rate of the MLIV estimator in the standard mean square norm, which in turn requires a slightly stronger rate condition on $\hat\gamma$ (Assumption~\ref{ass:conv_rate_gamma_nonlin}).

Given the estimates $\hat\gamma_{\ell}$ and $\hat\alpha_{jj,\ell}$ for each fold $\ell = 1,\dots,\,L$, the debiased estimator of the average own-price elasticity takes the form
\begin{equation}\label{eq:el_dml}
	\hat\theta_{jj} = \frac{1}{T}\sum_{\ell=1}^{L}\sum_{t \in \cT_\ell}\left\{\varepsilon_{jj}\!\left(W_t,\,\hat\gamma_\ell\right) + \hat\alpha_{jj,\ell}(z_{jt})\left[y_{jt} - \hat\gamma_\ell(\omega_{jt})\right]\right\},
\end{equation}
where $\varepsilon_{jj}\!\left(W_t,\,\hat\gamma_\ell\right)$ is the plug-in own-price elasticity for product $j$ in market $t$ and $\hat\alpha_{jj,\ell}(z_{jt})$ is the PGMM estimate of the Riesz representer of the Gateaux derivative of $\varepsilon_{jj}$. The variance can be consistently estimated by
\begin{equation*}
	\hat{V}_{jj} = \frac{1}{T}\sum_{\ell=1}^{L}\sum_{t \in \cT_\ell}\hat\psi_\ell^2(W_t), \quad \hat\psi_\ell(W_t) = \varepsilon_{jj}\!\left(W_t,\,\hat\gamma_\ell\right) - \hat\theta_{jj} + \hat\alpha_{jj,\ell}(z_{jt})[y_{jt} - \hat\gamma_\ell(\omega_{jt})].
\end{equation*}
The full estimation procedure is described in Appendix~\ref{app:el_estimation}.

\subsection{Simulated data experiments} \label{subsec:demand_mc}

We now evaluate the finite sample performance of the debiased estimator for the average own-price elasticity functional $\theta_{jj} = \E[\varepsilon_{jj}(\gamma)]$. This is a nonlinear functional of $\gamma$, requiring the double cross-fitting procedure and the stronger convergence conditions described in Section~\ref{sec:asy_prop_nonlin}. Unlike the average derivative studied in Section~\ref{sec:toy_mc}, the elasticity depends on $\gamma$ through the matrix inverse $(L - \Gamma^s)^{-1}$, making both the functional evaluation and the Riesz representer estimation more challenging.

We simulate data from a simple logit model. The mean valuation in the logit model is given by $\delta_{jt} = \beta_{p} p_{jt} + x_{jt}'\beta_{x} + \xi_{jt}$. Product shares can be calculated using the following formulae, for $j=1,\dots,\,J$ and $t=1,\dots,\,T$,
\begin{equation*}
	s_{jt} = \frac{\exp(\delta_{jt})}{1 + \sum_{j=1}^{J}\exp(\delta_{jt})} \quad \text{and} \quad
	s_{0t} = \frac{1}{1 + \sum_{j=1}^{J}\exp(\delta_{jt})}.
\end{equation*}

We set the total number of product characteristics besides the price to be equal to 4, i.e. $x^{(1)}_{jt}$ is a scalar and $x^{(2)}_{jt}$ is a three-dimensional vector. We draw all the observed product characteristics from $\cU(0,1)$, while the unobserved characteristics, $\xi_{jt}$, are distributed as $\cN(1,\,0.15^2)$ for all $j$ and $t$. The price is
\begin{equation*}
	p_{jt} = \frac{1}{2}\left|1 + x^{(1)}_{jt} + \sum_{k=1}^{3}x_{k,jt}^{(2)} + \xi_{jt} + c_{jt} + e_{jt}\right|,
\end{equation*}
where $c_{jt} \sim \cU(0,1)$ is a cost shifter, and $e_{jt} \sim \cU(0, 0.1)$ is idiosyncratic noise. The price coefficient is $\beta_{p} = -2$ and the coefficients on product characteristics are $\beta_{x} = (1,\,-0.5,\,0.5,\,1)'$.

To estimate $\gamma$, we choose KIV over the Double Lasso estimator used in Section~\ref{sec:toy_mc} due to its substantially lower computational cost in the nonlinear functional setting which requires double cross-fitting. We construct dictionaries $d(\omega_{jt})$ and $b(z_{jt})$ using quadratic polynomials with and without interactions, respectively. Under the specified DGP, $\omega_{jt} = \left(\{s_{kt}, \Delta_{jkt}\}_{j \neq k}\right)$ and $z_{jt} = \left(\left\{\Delta^{x}_{jkt},\,\Delta^{c}_{jkt}\right\}_{j\neq k}\right)$, and hence, $dim(\omega_{jt}) = dim(z_{jt})$ and $p>q$. We run 500 replications for $J \in \{2, 5\}$\footnote{Note that the estimation problem quickly becomes high-dimensional. Under $J=2$, $dim(\omega_{jt}) = 15$, which grows up to $30$ when $J=5$.} products and $T \in \{100, 200, 400, 800\}$ markets. We use five-fold cross-fitting, $L = 5$. Without loss of generality, we focus on the elasticity of the first product ($j = 1$).

Under logit, the true own-price elasticity for product $j$ in market $t$ has the closed-form expression $\varepsilon_{jj,t} = \beta_p\, p_{jt}(1 - s_{jt})$. The population parameter of interest is $\theta_0 = \E[\varepsilon_{jj,t}]$, which we approximate using a large pre-simulation draw of $T = 100{,}000$ markets. The resulting values are approximately $-4.22$ for $J = 2$ and $-4.28$ for $J = 5$.

Table~\ref{tab:mc_elasticity} reports absolute bias, median estimated standard error, and empirical coverage of nominal 95\% confidence intervals for the plug-in and debiased estimators. Unlike the average derivative application in Section \ref{sec:toy_mc}, the analytical form of the Riesz representer for the elasticity functional is intractable, hence, we omit the analytical debiasing approach from the comparison.

\begin{table}[htbp]
\centering
\begin{threeparttable}
\caption{Monte Carlo Results: Average Own-Price Elasticity}
\label{tab:mc_elasticity}
\begin{tabular}{cc| cc| cc| cc| cc}
\toprule
 & & \multicolumn{2}{c}{$|\text{Bias}|$} & \multicolumn{2}{c}{Median SE} & \multicolumn{2}{c}{Coverage} \\
\cmidrule(lr){3-4} \cmidrule(lr){5-6} \cmidrule(lr){7-8} 
$J$ & $T$ & PI & ADML & PI & ADML & PI & ADML \\
\midrule
2 & 100 & 0.023 & 0.004 & 0.068 & 0.396 & 33.1\% & 94.8\% \\
  & 200 & 0.009 & 0.020 & 0.045 & 0.203 & 36.6\% & 92.2\% \\
  & 400 & 0.030 & 0.017 & 0.030 & 0.129 & 31.5\% & 92.2\% \\
  & 800 & 0.023 & 0.023 & 0.020 & 0.091 & 37.6\% & 91.2\% \\
\midrule
5 & 100 & 0.027 & 0.039 & 0.073 & 0.555 & 36.3\% & 91.4\% \\
  & 200 & 0.009 & 0.001 & 0.043 & 0.199 & 43.2\% & 94.0\% \\
  & 400 & 0.005 & 0.003 & 0.032 & 0.116 & 47.5\% & 95.0\% \\
  & 800 & 0.005 & 0.005 & 0.023 & 0.077 & 54.8\% & 94.2\% \\
\bottomrule
\end{tabular}\begin{tablenotes}
\footnotesize
\item PI = plug-in estimator; ADML = automatic debiased estimator with PGMM-estimated Riesz representer. The inverse demand function $\gamma$ is estimated using KIV. Results averaged over 500 Monte Carlo replications. 
\end{tablenotes}
\end{threeparttable}
\end{table}

The plug-in estimator severely undercovers across all specifications. While its absolute bias is small, its estimated standard errors dramatically understate the true sampling variability. In contrast, the debiased estimator achieves near-nominal coverage throughout, ranging from 91.2\% to 95.0\%. These results align with and extend the evidence from Section~\ref{sec:toy_mc} on linear functionals. The key qualitative finding that plug-in inference fails due to standard error underestimation rather than excessive bias carries over from average derivatives to own-price elasticities. 

Nevertheless, the nonlinear setting introduces additional challenges causing the coverage for the debiased estimator to be  slightly lower than in the linear case, particularly at small $T$ with $J = 2$. This behavior is consistent with the additional estimation noise introduced by double cross-fitting and the PGMM estimation of the $\gamma$-dependent Riesz representer. Despite these additional difficulties, the automatic debiasing procedure reliably delivers valid inference for the nonlinear elasticity functional across all sample sizes considered.


\subsection{Real data example}\label{subsec:iri}

We apply our estimator to retail scanner data from the IRI Academic Database \citep{bronnenberg2008IRI}. It records unit sales by UPC code, store, and week for a sample of supermarkets over 2001--2012, together with product characteristics. We focus on one year span of 2003 and top ten most sold products which account for approximately $69\%$ of all sales. To exploit variation in product attributes, we do not aggregate to the brand level; instead, products are defined by the combination of a brand name and a set of observable characteristics, e.g. Coke Classic or Caffeine-free Diet Pepsi.

Carbonated beverages are sold in different packages and package sizes. We restrict our attention to cans and define a product unit as a 12 oz can. Hence, we construct market shares based on the total amount of cans sold. Prices are defined as the ratio of total revenue to total number of units sold. We aggregate the data to geographic region-month level resulting in an unbalanced panel with 5,993 observations at the product-region-month level.

Data on product characteristics includes beverage flavor, sugar, caffeine and calorie levels. All characteristics are represented by categorical variables, thus, for computational reasons we aggregate product attributes in larger groups (see Appendix \ref{app:data} for more details). After aggregation and dropping collinear characteristics, we are left with five product attributes we use for estimation. We have \emph{Caffeine} and \emph{Sugar} dummy variables indicating whether a product contains caffeine and sugar, respectively. The remaining three variables represent different flavor categories: \emph{Cola}, \emph{Lemonade}, and \emph{Pepper}, with the base category representing other flavors.

We begin with a logit model as a parametric benchmark. In the absence of external cost-shifter data, we use BLP-style instruments (sums of rivals' attributes) to identify the price coefficient. Given the binary nature of product characteristics, most rival sums are nearly collinear with own attributes, leaving only two instruments with independent variation: sums of rivals' sugar and caffeine levels. One popular approach to achieve stronger price coefficient identification without external data is to use Hausman instruments \citep{hausman1996}, however, in our case they capture endogenous demand shifts rather than track down exogenous cost shocks\footnote{See Appendix \ref{app:hausman} for more details.}, which is a common critique of this approach \citep[see e.g.,][]{nevo2001measuring}. We therefore use BLP instruments throughout.

Before estimating the semiparametric model, we present evidence that the data reject the logit specification. Following \citet{gandhi_houde2019}, we conduct an Independence of Irrelevant Alternatives (IIA) test using local differentiation instruments.\footnote{See Appendix~\ref{app:iia_test} for details on the test construction.} The test strongly rejects IIA ($F = 68.5$, $p < 0.0001$), indicating that the data contain richer substitution patterns than logit can rationalize. This motivates the use of a semiparametric approach that does not impose parametric restrictions on demand.

\begin{table}[htbp]
\centering
\begin{threeparttable}
\caption{Mean Own-Price Elasticity Estimates}
\label{tab:elasticity_comparison}
\begin{tabular}{lccc}
\toprule
Product & Logit & PI & ADML \\
\midrule
Coke Classic & $-2.739$ $(0.016)$ & $-3.456$ $(0.021)$ & $-3.436$ $(0.022)$ \\
Pepsi Classic & $-2.762$ $(0.016)$ & $-3.468$ $(0.024)$ & $-3.472$ $(0.024)$ \\
Diet Coke & $-2.973$ $(0.014)$ & $-3.765$ $(0.018)$ & $-3.448$ $(0.089)$ \\
Diet Pepsi & $-3.019$ $(0.015)$ & $-3.838$ $(0.022)$ & $-3.965$ $(0.025)$ \\
Caffeine-free Diet Coke & $-3.237$ $(0.017)$ & $-3.889$ $(0.018)$ & $-3.767$ $(0.021)$ \\
Caffeine-free Diet Pepsi & $-3.216$ $(0.016)$ & $-3.937$ $(0.020)$ & $-4.108$ $(0.021)$ \\
Dr. Pepper & $-3.300$ $(0.021)$ & $-3.543$ $(0.019)$ & $-3.543$ $(0.020)$ \\
Mountain Dew Classic & $-3.276$ $(0.019)$ & $-3.927$ $(0.032)$ & $-3.927$ $(0.032)$ \\
Sprite & $-3.171$ $(0.016)$ & $-3.702$ $(0.017)$ & $-3.600$ $(0.022)$ \\
Mountain Dew Other & $-3.250$ $(0.020)$ & $-3.992$ $(0.023)$ & $-3.992$ $(0.023)$\\ 
\midrule
Mean & $-3.094$ & $-3.752$ & $-3.726$ \\
\bottomrule
\end{tabular}
\begin{tablenotes}[flushleft]
\footnotesize
\item Logit = Logit estimated with BLP instruments; PI = plug-in estimator; ADML = automatic debiased estimator with PGMM-estimated Riesz representer. Standard errors in parentheses. The inverse demand function $\gamma$ is estimated using KIV.
\end{tablenotes}
\end{threeparttable}
\end{table}

Similarly to Section \ref{subsec:demand_mc}, we estimate $\gamma$ using KIV and construct debiased estimates using five-fold cross-fitting. Since the dimensionality of $\omega_{jt}$ and $z_{jt}$ grows with the number of products, we exploit the symmetry of the inverse demand function to construct symmetric basis functions\footnote{Using 
    symmetric dictionaries implicitly restricts the Riesz representer $\alpha_0(z_{jt})$ to be a symmetric function of rival characteristics. This restriction is justified by the structure of the moment conditions that define $\alpha_0$. First, the own-price elasticity $\varepsilon_{jj}$ is invariant to permutations of rivals, so the Gateaux derivative $D_\gamma\varepsilon_{jj}[d_k]$ inherits this symmetry whenever $\gamma$ and the basis function $d_k$ are symmetric (Proposition~\ref{prop:gateaux}). Then, since the instruments $z_{jt}$ are also constructed symmetrically, the defining moment condition $\E[D_\gamma\varepsilon_{jj}[d_k]] = \E[\alpha_0(z_{jt})\,d_k(\omega_{jt})]$ is symmetric, and hence $\alpha_0$ must be as well.} for dictionaries $d(\omega_{jt})$ and $b(z_{jt})$ (see Appendix \ref{app:em_bfs} for details). It allows us to reduce the dimensionality of the PGMM problem leading to substantial computational gains. 

Table~\ref{tab:elasticity_comparison} reports own-price elasticity estimates for each of the ten products under three approaches: the logit benchmark, the plug-in KIV estimator (PI), and the debiased estimator (ADML). 

First, the semiparametric estimators yield substantially larger elasticities in absolute value than the logit. The mean own-price elasticity increases from $-3.09$ under the logit to $-3.75$ (plug-in) and $-3.73$ (ADML). This is consistent with the IIA rejection: by constraining all cross-price elasticities to be proportional to market shares, the logit restricts substitution patterns in a way that attenuates own-price elasticities. While the magnitudes differ, the qualitative ranking of products is stable across methods. More popular products like Coke Classic and Pepsi Classic have smaller elasticities in absolute value, while niche products such as Caffeine-Free Diet Pepsi and Mountain Dew have larger elasticities. This implies that consumers of mainstream products are less sensitive to their price changes compared to more niche products, which is quite intuitive.

Second, the debiasing corrections vary substantially across products, both in magnitude and direction. For several products (Dr.\ Pepper, Mountain Dew Classic, and Mountain Dew Other) the two estimates are virtually identical, suggesting that the plug-in bias is negligible for these products. For all Coke products and Sprite, the debiased estimates are less elastic (positive debiasing term) compared to the plug-in ones, while all Pepsi products exhibit more elastic own-price elasticities compared to the plug-in ones (negative debiasing term). 

Third, both plug-in and debiased estimates have similar standard errors, with the exception of Diet Coke that also has the largest correction. We observe that debiasing plays a big role in this application resulting in half the products (Diet Coke, Caffeine-free Diet Coke, Sprite, Diet Pepsi, and Caffeine-free Diet Pepsi) having debiased estimates differ from plug-in estimates by much more than the associated standard errors. These results are also consistent with the CNS demand application.


\section{Conclusion} \label{sec:conclusion}

In this paper, we have given an automatic method of debiasing functionals of machine learners under endogeneity. We have shown how to use a PGMM minimum distance estimator to perform debiasing using only the form of the object of interest, without knowing the form of the bias correction term. We allow for a wide range of MLIV estimators that satisfy certain convergence rate conditions. We have shown root-$n$ consistency and asymptotic normality and given a consistent asymptotic variance estimator for both linear and nonlinear functionals. For linear functionals we require MLIV estimators to converge fast enough in the projected mean square norm, while for nonlinear functionals we require fast enough convergence in the standard mean square norm, which is a more stringent requirement due to ill-posedness. Relaxing the convergence rate condition for nonlinear functionals as well as extending the approach to irregular functionals are promising directions for future research.

Finally, we have applied our debiasing procedure to estimate the mean own-price elasticity functional in the nonparametric demand for differentiated products framework. We have obtained evidence from both simulated and real scanner data that plug-in estimates are biased and have smaller variance compared to the debiased estimates, which reflects the bias-variance trade-off occurring due to regularization. Using scanner data, we have demonstrated that debiasing corrections vary substantially across products, both in magnitude and direction. In particular,  half the products exhibit profound debiasing effects highlighting the importance of debiasing in our application.


\bibliography{bib_admliv}




\newpage
\appendix

\setcounter{page}{1}
\begin{center}
{\large \textsc{Supplementary Material for}}\\[1em]
{\Large ``Penalized GMM Framework for Inference on Functionals of
Nonparametric Instrumental Variable Estimators''}\\[1em]
{\normalsize Edvard Bakhitov}\\[0.5em]
{\normalsize \today}
\end{center}
\vspace{1em}
\addtocontents{toc}{\protect\setcounter{tocdepth}{2}}
\tableofcontents
\newpage

\numberwithin{equation}{section}
\numberwithin{theorem}{section}
\numberwithin{lemma}{section}
\numberwithin{assumption}{section}
\numberwithin{proposition}{section}
\numberwithin{corollary}{section}
\numberwithin{algorithm}{section}
\numberwithin{table}{section}


\section{Analytical solution to the unpenalized GMM problem}

In this Section, we provide additional intuition behind the PGMM estimator of the RR. To do so, we focus on the standard low-dimensional GMM problem without adding the penalty term, i.e. $\hat\rho$ is a solution to
\begin{equation}\label{eq:gmm_rr}
\min_{\rho \in \R^{p}}\; (\hat{M} - \hat{G}\rho)'\hat{\Omega}_{q}(\hat{M} - \hat{G}\rho).	
\end{equation}
Given the form of the debiased moment function \eqref{eq:if_npiv} and the linear approximation for the RR, the orthogonal moment condition \eqref{eq:gen_mom_rr_est} will always be linear in $\rho$, meaning that the GMM criterion in \eqref{eq:gmm_rr} is globally convex and has a unique global minimizer.

For the ease of exposition, we drop the cross-fitting notation and assume that we are interested in a linear functional $\theta = \mathbb{E}[m(W,\,\gamma)]$. Then the moment condition takes the form
\begin{equation*}
\hat{\psi}_{j}(W, \rho) = \frac{1}{n}\sum_{i=1}^{n}\left\{m(W_{i},\,d_{j}) - d_{j}(X_{i})b(Z_{i})'\rho\right\}, \quad j=1,\dots,\,q.
\end{equation*}
Let $m(W,\,d) = (m(W,\,d_{1}),\,\dots,\,m(W,\,d_{q}))$. Taking the first-order condition of the GMM criterion gives
\begin{equation} \label{eq:gmm_foc}
	\frac{\partial \hat{\psi}(\hat\rho)}{\partial \rho'}\hat{\Omega}_{q}\left\{\frac{1}{n}\sum_{i=1}^{n} m(W_{i},\,d) - \frac{1}{n}\sum_{i=1}^{n} d(X_{i})b(Z_{i})'\hat\rho \right\}= 0.
\end{equation}
We can rewrite \eqref{eq:gmm_foc} in matrix form as
\begin{equation*}
	-\hat{G}'\hat{\Omega}_{q}\hat{M} + \hat{G}'\hat{\Omega}_{q}\hat{G}\hat\rho = 0,
\end{equation*}
which immediately gives a closed-form solution for $\hat\rho$,
\begin{equation} \label{eq:gmm_rr_sol}
	\hat{\rho} = (\hat{G}'\hat{\Omega}_{q}\hat{G})^{-1}\hat{G}'\hat{\Omega}_{q}\hat{M}.
\end{equation}

Note that the form of the GMM solution in \eqref{eq:gmm_rr_sol} resembles the GMM solution to the classical linear IV problem, but with endogenous regressors and instruments being switched. \cite{ichimura2022influence} point out that $\alpha(Z)$ is the solution of a ``reverse'' structural equation involving an expectation conditional on the endogenous variables $X$ rather than the instruments $Z$. If we set $\hat{\Omega} = \left(\frac{1}{n}\sum_{i=1}^{n} d(X_{i})d(X_{i})'\right)^{-1}$, we will get the exact solution to the ``reverse'' NPIV problem.


\section{Penalized GMM Implementation Details}\label{app:algo_details}

\subsection{Standard Coordinate-Wise Descent}

Recall, in matrix form the PGMM problem is given by
\begin{equation} \label{eq:pgmm_objective}
	\min_{\rho\in\R^{p}}\; (\hat{M} - \hat{G}\rho)'\hat{\Omega}_{q}(\hat{M} - \hat{G}\rho) + 2\lambda_{n}|\rho|_{1}.
\end{equation}
Note that the objective in \eqref{eq:pgmm_objective} is a generalized version of the Lasso objective. Thus, we can generalize the coordinate descent approach for Lasso to the PGMM objective that we use in this paper. We follow CNS and use a coordinate-wise descent algorithm with the soft-thresholding update. 

We denote the $j^{\text{th}}$ element of a generic vector $v$ by $v_{j}$ and let $e_{j}$ be a $p \times 1$ unit vector with $1$ in the $j^{\text{th}}$ coordinate and zeros elsewhere. Let $H = \hat{G}'\hat{\Omega}_{q}\hat{G}$ and $c = \hat{G}'\hat{\Omega}_{q}\hat{M}$ denote the Hessian and linear term of the quadratic component of \eqref{eq:pgmm_objective}, respectively. The soft-thresholding operator is defined as
\begin{equation*}
	\mathcal{S}_{\tau}(z) = \text{sign}(z) \cdot \max(|z| - \tau, 0) = 
	\begin{cases}
		z + \tau & \text{if } z < -\tau \\
		0 & \text{if } z \in [-\tau, \tau] \\
		z - \tau & \text{if } z > \tau
	\end{cases}
\end{equation*}

The basic coordinate-wise descent algorithm cycles through all $p$ coordinates in each iteration. The justification for Algorithm \ref{algo:pgmm_coordinate} is similar to the one of CNS. It follows from the fact that the GMM objective \eqref{eq:pgmm_objective} is of the form of eq. 21 of \cite{friedman2007}, hence, the coordinate descent converges to the minimizer of the objective \citep{tseng2001convergence}.

\begin{algorithm}[!h] 
	\caption{Coordinate-wise descent algorithm for PGMM} 
	\begin{algorithmic}[1]
		\Require Data matrices $\hat{G} \in \R^{q \times p}$, $\hat{M} \in \R^{q}$, weight matrix $\hat{\Omega}_{q} \in \R^{q \times q}$, penalty $\lambda_{n} > 0$, tolerance $\epsilon > 0$
		\Ensure $\hat{\rho}$ solving \eqref{eq:pgmm_objective}
		\State \textbf{Initialize:} $\rho \gets 0_{p}$
		\State \textbf{Precompute:} $H \gets \hat{G}'\hat{\Omega}_{q}\hat{G}$, $c \gets \hat{G}'\hat{\Omega}_{q}\hat{M}$
		\Repeat
			\For {$j = 1, \ldots, p$}
				\State Compute partial residual: $A_{j} \gets c_{j} - \sum_{k \neq j} H_{jk}\rho_{k}$
				\State Extract diagonal element: $B_{j} \gets H_{jj}$
				\State Store previous value: $\rho_{j}^{\text{old}} \gets \rho_{j}$
				\State Update via soft-thresholding: $\rho_{j} \gets \mathcal{S}_{\lambda_{n}/B_{j}}(A_{j}/B_{j})$
			\EndFor
		\Until{$\max_{j} |\rho_{j} - \rho_{j}^{\text{old}}| < \epsilon$}
		\State \Return $\rho$
	\end{algorithmic} 
	\label{algo:pgmm_coordinate}
\end{algorithm}

\subsection{Active Set Strategy}\label{subsec:active_set}

While Algorithm \ref{algo:pgmm_coordinate} enjoys strong convergence guarantees, its computational burden becomes prohibitive when the dimension $p$ is large, as each iteration requires $O(pq)$ operations. In high-dimensional sparse estimation problems, however, the solution $\hat{\rho}$ typically exhibits substantial sparsity, with the number of nonzero elements $|\mathcal{A}^{*}|$ satisfying $|\mathcal{A}^{*}| \ll p$. The active set strategy, introduced by \cite{friedman2007} and refined in the \texttt{glmnet} implementation of \cite{friedman2010}, exploits this sparsity structure to achieve substantial computational gains without sacrificing statistical accuracy.

The theoretical foundation of the active set strategy rests on the Karush-Kuhn-Tucker (KKT) optimality conditions for problem \eqref{eq:pgmm_objective}. A vector $\rho^{*}$ is optimal if and only if for all $j = 1, \ldots, p$:
\begin{align}
	r_{j} &= \lambda_{n} \text{sign}(\rho_{j}^{*}) & \text{if } \rho_{j}^{*} \neq 0, \nonumber \\
	|r_{j}| &\leq \lambda_{n} & \text{if } \rho_{j}^{*} = 0, \label{eq:kkt_zero}
\end{align}
where $r = c - H\rho^{*}$ denotes the gradient of the smooth component of the objective evaluated at $\rho^{*}$.

The key insight underlying the active set strategy is that once a coordinate enters the zero region (i.e., $|r_{j}| \leq \lambda_{n}$), it tends to remain inactive for many subsequent iterations. This observation motivates a two-stage iterative scheme: an inner loop that optimizes exclusively over the current active set $\mathcal{A}$ of nonzero coordinates, and an outer loop that verifies the KKT conditions \eqref{eq:kkt_zero} for all inactive coordinates and augments $\mathcal{A}$ with any violators. Algorithm \ref{algo:pgmm_active_set} formalizes this procedure.

\begin{algorithm}[!h] 
	\caption{PGMM with active set strategy} 
	\begin{algorithmic}[1]
		\Require Data matrices $\hat{G} \in \R^{q \times p}$, $\hat{M} \in \R^{q}$, weight matrix $\hat{\Omega}_{q} \in \R^{q \times q}$, penalty $\lambda_{n} > 0$, tolerance $\epsilon > 0$
		\Ensure $\hat{\rho}$ solving \eqref{eq:pgmm_objective}
		\State \textbf{Initialize:} $\rho \gets 0_{p}$, $\mathcal{A} \gets \emptyset$
		\State \textbf{Precompute:} $H \gets \hat{G}'\hat{\Omega}_{q}\hat{G}$, $c \gets \hat{G}'\hat{\Omega}_{q}\hat{M}$
		\State \textbf{Full pass:} Execute one cycle of Lines 4--9 of Algorithm \ref{algo:pgmm_coordinate} over all $j = 1, \ldots, p$
		\State \textbf{Initialize active set:} $\mathcal{A} \gets \{j : \rho_{j} \neq 0\}$
		\Repeat
			\Repeat \Comment{Inner loop over active set}
				\For {$j \in \mathcal{A}$}
					\State Compute partial residual: $A_{j} \gets c_{j} - \sum_{k \neq j} H_{jk}\rho_{k}$
					\State Extract diagonal element: $B_{j} \gets H_{jj}$
					\State Store previous value: $\rho_{j}^{\text{old}} \gets \rho_{j}$
					\State Update via soft-thresholding: $\rho_{j} \gets \mathcal{S}_{\lambda_{n}/B_{j}}(A_{j}/B_{j})$
				\EndFor
			\Until{$\max_{j \in \mathcal{A}} |\rho_{j} - \rho_{j}^{\text{old}}| < \epsilon$}
			\State Compute gradient: $r \gets c - H\rho$
			\State Identify KKT violators: $\mathcal{V} \gets \{j \notin \mathcal{A} : |r_{j}| > \lambda_{n}\}$
			\If {$\mathcal{V} \neq \emptyset$}
				\State Augment active set: $\mathcal{A} \gets \mathcal{A} \cup \mathcal{V}$
			\EndIf
		\Until{$\mathcal{V} = \emptyset$}
		\State \Return $\rho$
	\end{algorithmic} 
	\label{algo:pgmm_active_set}
\end{algorithm}

The convergence of Algorithm \ref{algo:pgmm_active_set} follows from the finite termination property of the active set identification and the convergence of coordinate descent on the restricted subproblem. Since the active set can only grow (never shrink) during the outer iterations and is bounded above by $\{1, \ldots, p\}$, the outer loop must terminate in at most $p$ iterations. In practice, termination occurs much sooner, typically within $K \leq 10$ outer iterations for problems with moderate sparsity.


\subsection{Initialization parameters}

For the first-stage PGMM, two approaches are available. The simplest option initializes at $\rho = 0_p$, which is adequate when the regularization parameter $\lambda_n$ is sufficiently large. Alternatively, following CNS, one may obtain an initial estimate by solving a lower-dimensional unpenalized problem \eqref{eq:gmm_rr} using a reduced dictionary $b^{\text{low}} \subset b$ of dimension $p^{\text{low}} \ll p$. The closed-form solution
\begin{equation*}
	\tilde{\rho}^{\text{low}} = \left(\hat{G}^{\text{low}'}\hat{G}^{\text{low}}\right)^{-1}\hat{G}^{\text{low}'}\hat{M}
\end{equation*}
then provides an initial estimate that is extended to the full parameter space by padding with zeros. This latter approach yields a warm start that can reduce the number of iterations required for convergence. 

For the second-stage PGMM, the natural initialization is the first-stage estimate $\tilde{\rho}$, which typically lies close to the final solution and thus substantially accelerates convergence.


\subsection{Computational Complexity}

We now provide a detailed analysis of the computational complexity of Algorithm \ref{algo:pgmm_active_set} relative to the standard coordinate descent procedure in Algorithm \ref{algo:pgmm_coordinate}.

First, consider the standard algorithm. Each complete cycle through all $p$ coordinates requires $O(p)$ soft-thresholding operations, each involving a partial residual computation of cost $O(p)$ when implemented via the formula $A_j = c_j - \sum_{k \neq j} H_{jk} \rho_k$. The matrices $H = G'\Omega G$ and $c = G'\Omega M$ are computed once at initialization at a cost of $O(pq^2 + p^2 q)$ and $O(q^2 + pq)$, respectively. Denoting by $T_{\text{full}}$ the number of complete cycles required for convergence, the total computational cost is
\begin{equation}\label{eq:complexity_standard}
	\mathcal{C}_{\text{CD}} = O(pq^{2} + p^{2}q + T_{\text{full}} \cdot p^{2}).
\end{equation}

For the active set algorithm, let $K$ denote the number of outer iterations and let $T_{\mathcal{A}}^{(k)}$ denote the number of inner iterations at outer iteration $k$, with $|\mathcal{A}^{(k)}|$ the cardinality of the active set. The inner loop at iteration $k$ requires $O(T_{\mathcal{A}}^{(k)} \cdot |\mathcal{A}^{(k)}|^{2})$ operations when residuals are maintained incrementally. The KKT verification step requires computing the full gradient $r = c - H\rho$, which can be accomplished in $O(p \cdot |\mathcal{A}^{(k)}|)$ operations by exploiting the sparsity of $\rho$. Summing over all outer iterations yields
\begin{equation}\label{eq:complexity_active}
	\mathcal{C}_{\text{AS}} = O\left(pq^{2} + p^{2}q + \sum_{k=1}^{K} \left(T_{\mathcal{A}}^{(k)} \cdot |\mathcal{A}^{(k)}|^{2} + p \cdot |\mathcal{A}^{(k)}|\right)\right).
\end{equation}

Under the assumption that the final active set satisfies $|\mathcal{A}^{*}| \ll p$ and that the active set grows monotonically to $\mathcal{A}^{*}$ over the $K$ outer iterations, the dominant term in \eqref{eq:complexity_active} is $O(K \cdot p \cdot |\mathcal{A}^{*}|)$ from the KKT checks. Comparing with \eqref{eq:complexity_standard}, the speedup factor is approximately
\begin{equation*}
	\frac{\mathcal{C}_{\text{CD}}}{\mathcal{C}_{\text{AS}}} \approx \frac{T_{\text{full}} \cdot p}{K \cdot |\mathcal{A}^{*}|}.
\end{equation*}
For typical high-dimensional sparse problems where $|\mathcal{A}^{*}|/p \in [0.01, 0.1]$ and $K \leq 10$, this ratio yields speedups on the order of $10\times$ or greater, consistent with our empirical findings.

Note that in the worst case, when the solution is dense with $|\mathcal{A}^{*}| = O(p)$, the active set strategy has no computational advantage. However, such scenarios are atypical in the high-dimensional sparse inference settings for which PGMM is designed. Thus, we always recommend to use Algorithm \ref{algo:pgmm_active_set} in practice.


\subsection{Adaptive PGMM}

The statistical performance of the PGMM estimator can be further enhanced by incorporating adaptive penalty loadings in the spirit of \cite{zou2006adaptive}. This modification transforms the optimization problem \eqref{eq:pgmm_objective} into the adaptive PGMM (A-PGMM) problem
\begin{equation} \label{eq:apgmm_objective}
	\min_{\rho \in \R^{p}}\; (\hat{M} - \hat{G}\rho)'\hat{\Omega}_{q}(\hat{M} - \hat{G}\rho) + 2\lambda_{n}\sum_{j=1}^{p}\hat{w}_{j}|\rho_{j}|,
\end{equation}
where $\hat{w}=(\hat{w}_{1},\dots,\,\hat{w}_{p})$ is a vector of data-dependent weights with $\hat{w}_{j} = 1/|\tilde{\rho}_{j}|$, and $\tilde{\rho}$ is a preliminary consistent estimate, typically obtained from the first stage PGMM fit.

Algorithms \ref{algo:pgmm_coordinate} and \ref{algo:pgmm_active_set} extend naturally to the adaptive setting. The required modifications are twofold: (i) in the soft-thresholding update, the penalty parameter $\lambda_{n}$ is replaced by the coordinate-specific penalty $\hat{w}_{j}\lambda_{n}$; (ii) in the KKT verification step of Algorithm \ref{algo:pgmm_active_set}, the threshold for coordinate $j$ becomes $\hat{w}_{j}\lambda_{n}$. 


\subsection{Penalty Parameter Selection}\label{subsec:penalty_tuning}

The regularization parameter $\lambda_n$ in the PGMM objective \eqref{eq:pgmm_objective} governs the bias-variance tradeoff and must be chosen appropriately. The theoretically optimal penalty takes the form\footnote{Note that nonlinear functionals require a larger penalty (see Section \ref{sec:asy_prop_nonlin}). Typically, a penalty proportional to $n^{-1/4}$ should suffice in practice.}
\begin{equation}\label{eq:lambda_theory}
	\lambda^{*} = c_1 \sqrt{\frac{\log q}{n}},
\end{equation}
where $c_1 > 0$ is a tuning parameter.

Following CNS, we distinguish between the penalty applied to the intercept term and that applied to the remaining basis functions. This distinction is motivated by the observation that the intercept, corresponding to the first element of the dictionary $b(Z)$, should not be heavily penalized, as it captures the average level of the Riesz representer. The penalized objective becomes
\begin{equation}\label{eq:pgmm_split_penalty}
	\min_{\rho \in \mathbb{R}^p} \; (\hat{M} - \hat{G}\rho)'\hat{\Omega}_q(\hat{M} - \hat{G}\rho) + 2c_2\lambda^{*}|\rho_1| +  2\lambda^{*}\sum_{j=2}^{p} |\rho_j|,
\end{equation}
where $c_2$ is set to a small value (e.g., $c_2 = 0.1$) to impose minimal penalization on the intercept, while $c_1$ is the main tuning parameter.

\subsubsection{Cross-Validation Procedure}

In \eqref{eq:pgmm_split_penalty}, the choice of the tuning parameters $(c_1,c_2)$ is left to the analyst. Since $c_1$ is of utmost importance as it scales the penalty term $\lambda^{*}$, we adopt a $K$-fold cross-validation procedure in the spirit of \cite{caner_kock2018} to empirically select the penalty multiplier $c_1$. Let $\{I_1, \ldots, I_K\}$ denote a partition of the observation indices $\{1, \ldots, n\}$ into $K$ disjoint subsets of approximately equal size $n/K$. For a candidate value $c_1 \in \mathcal{C}$, where $\mathcal{C}$ is a finite grid of candidate values, define the leave-fold-out estimator
\begin{equation*}
	\hat{\rho}_{-k}(c_1) = \argmin_{\rho \in \mathbb{R}^p} \left\{ (\hat{M}_{-k} - \hat{G}_{-k}\rho)'\hat{\Omega}^{-k}_q(\hat{M}_{-k} - \hat{G}_{-k}\rho) + 2c_2\lambda(c_1)|\rho_1| + 2\lambda(c_1)\sum_{j=2}^{p} |\rho_j| \right\},
\end{equation*}
where $\hat{G}_{-k}$, $\hat{M}_{-k}$ and $\hat{\Omega}^{-k}_q$ are computed using observations not in fold $I_k$, and $\lambda(c_1) = c_1 \sqrt{\log q / (n - n_{k})}$.

The cross-validation criterion evaluates the out-of-sample GMM objective on the held-out fold
\begin{equation*}
	\text{CV}(c_1) = \sum_{k=1}^{K} \left( \hat{M}_k - \hat{G}_k \hat{\rho}_{-k}(c_1) \right)' \hat{\Omega}^{k}_q \left( \hat{M}_k - \hat{G}_k \hat{\rho}_{-k}(c_1) \right),
\end{equation*}
where $\hat{G}_k$, $\hat{M}_k$ and $\hat{\Omega}^{k}_q$ are computed using only observations in fold $I_k$. The optimal penalty multiplier is then selected as
\begin{equation*}
	\hat{c}_1 = \argmin_{c_1 \in \mathcal{C}} \text{CV}(c_1).
\end{equation*}
We leave establishing theoretical guarantees of the cross-validated PGMM procedure for future work.


\subsection{Numerical performance}

We evaluate the numerical performance of the PGMM algorithm \ref{algo:pgmm_active_set} in two scenarios: (i) exogenous high-dimensional linear regression, (ii) high-dimensional linear IV regression. 

\subsubsection{HD linear regression}

We borrow the set-up from CNS and compare the performance of PGMM and A-PGMM algorithms with the MD Lasso estimator of CNS as well as with the built-in \texttt{scikit-learn} implementations of the coordinate descent (CD) and least-angle regression (LARS) algorithms\footnote{We use \texttt{LassoCV} and \texttt{LassoLarsCV} commands to run CD and LARS algorithms, respectively.}.

In this design, the data generating process is 
\begin{equation*}
	Y = X'\beta_{0} + \varepsilon,
\end{equation*}
where $X = (1, X_{1},\dots,\,X_{100})'$, $X_{j} \sim \cN(0,\,1)$ and i.i.d., and $\varepsilon \sim \cN(0,\,1)$. The true value of the regression coefficient is $\beta_{0} = (1,\,1,\,1,\,0,\,0,\dots)$ and $\text{dim}(\beta_{0}) = 101$. We can recover $\beta_{0}$ by using the functional $m(w,\,h) = yh(x)$ in the PGMM and MD Lasso formulations\footnote{Alternatively, we could simply use the standard GMM moment $g(w,\,h) = (y - h(x))x$ for the linear regression to implement PGMM.}. 

\begin{table}[htbp]
\centering
\caption{PGMM algorithm performance: High-Dimensional Linear Regression}
\label{tab:CWD_exog}
\begin{tabular}{l|rr|rr|rr}
\toprule
 & \multicolumn{2}{c}{$n=100$} & \multicolumn{2}{c}{$n=1000$} & \multicolumn{2}{c}{$n=10000$} \\
 \cmidrule(lr){2-3} \cmidrule(lr){4-5} \cmidrule(lr){6-7}
Method & MSE & $R^2$ & MSE & $R^2$ & MSE & $R^2$ \\
\midrule
CD & 0.1945 & 0.7752 & 0.0124 & 0.6706 & 0.0014 & 0.6663 \\
LARS & 0.1649 & 0.6985 & 0.0124 & 0.6706 & 0.0014 & 0.6663 \\
MDLasso & 0.1470 & 0.6609 & 0.0118 & 0.6693 & 0.0014 & 0.6662 \\
MDLasso-CV & 0.1710 & 0.6912 & 0.0127 & 0.6705 & 0.0014 & 0.6663 \\
PGMM & 0.1943 & 0.7394 & 0.0127 & 0.6695 & 0.0015 & 0.6660 \\
PGMM-CV & 0.2179 & 0.7327 & 0.0136 & 0.6726 & 0.0015 & 0.6664 \\
A-PGMM & 0.0824 & 0.6494 & 0.0123 & 0.6662 & 0.0014 & 0.6658 \\
A-PGMM-CV & 0.0918 & 0.6561 & 0.0052 & 0.6686 & 0.0006 & 0.6661 \\
\bottomrule
\end{tabular}
\end{table}

Table \ref{tab:CWD_exog} reports the mean squared error $\text{MSE} = |\hat{\beta} - \beta_{0}|_{2}^{2}$ and in-sample $R^2$ across $100$ Monte Carlo replications for three sample sizes, $n \in \{100, 1000, 10000\}$. Several patterns emerge from the results. First, the adaptive variants (A-PGMM and A-PGMM-CV) consistently achieve the lowest MSE across all sample sizes, confirming the theoretical advantages of adaptive penalization. At $n = 100$, A-PGMM attains an MSE of $0.0824$, which is less than half that of the next-best competitor (MDLasso at $0.1470$).

Second, the performance of the cross-validation procedure depends critically on sample size. In the small-sample regime ($n = 100$), the CV variants exhibit marginally higher MSE than their non-CV counterparts, for instance, PGMM-CV yields $0.2179$ versus $0.1943$ for PGMM, and A-PGMM-CV yields $0.0918$ versus $0.0824$ for A-PGMM. This degradation reflects the inherent instability of cross-validation when the effective sample size per fold is small, as the variance in fold-specific estimates inflates the CV criterion. Notably, the default penalty is included in the CV search grid, so the CV procedure can recover the default choice. The observed differences arise from CV occasionally selecting suboptimal values due to noise in the criterion.

As the sample size increases, this pattern reverses. At $n = 1000$, A-PGMM-CV achieves an MSE of $0.0052$, substantially outperforming A-PGMM at $0.0123$. This improvement becomes even more pronounced at $n = 10000$, where A-PGMM-CV attains an MSE of $0.0006$ compared to $0.0014$ for A-PGMM. In large samples, the CV criterion stabilizes and reliably identifies the optimal penalty, yielding substantial gains over the fixed theoretical choice. The $R^2$ values remain comparable across methods within each sample size, indicating that the MSE differences primarily reflect bias-variance tradeoffs in coefficient estimation rather than predictive performance.

\subsubsection{HD linear IV regression}

We follow the exponential design of \cite{belloni2012ecta}. The DGP is 
\begin{align*}
	Y & = X'\beta_{0} + \varepsilon  \\
	X & = \Pi Z + v,
\end{align*}
where $\beta_{0} = (1,\,1,\,1,\,0,\,0,\dots)$ and $\text{dim}(\beta_{0}) = 101$, $X = (1, X_{1},\dots,\,X_{100})'$, $Z = (Z_{1},\dots,\,Z_{150}) \sim \cN(0,\,\Sigma_{Z})$ is a $150 \times 1$ vector with $\mathbb{E}[Z^{2}_{j}] = 1$ and $\text{Corr}(Z_{h},\,Z_{j}) = 0.5^{|h - j|}$. We set the first stage coefficients $\Pi = (1,\,0.7,\,0.7^{2},\dots,\,0.7^{149})$. The structure of the error terms is the following: $\varepsilon \sim \cN(0,\,1)$ and $v|\varepsilon \sim \cN(r \varepsilon, I - r^{2})$ so that the unconditional covariance matrix of the endogenous variables is the identity. We set $r = 0.5$.

We compare the performance of PGMM and A-PGMM algorithms to the Double Lasso estimator of \cite{gold2020}. Table \ref{tab:CWD_endog} demonstrates MSE and $R^{2}$ of the considered implementations based on $100$ simulations. Results are similar to the exogenous case, demonstrating the validity of both PGMM and A-PGMM algorithms.

\begin{table}[htbp]
\centering
\caption{PGMM algorithm performance: High-Dimensional Linear IV Regression}
\label{tab:CWD_endog}
\begin{tabular}{l|rr|rr|rr}
\toprule
 & \multicolumn{2}{c}{$n=100$} & \multicolumn{2}{c}{$n=1000$} & \multicolumn{2}{c}{$n=10000$} \\
 \cmidrule(lr){2-3} \cmidrule(lr){4-5} \cmidrule(lr){6-7}
Method & MSE & $R^2$ & MSE & $R^2$ & MSE & $R^2$ \\
\midrule
DLasso & 0.1755 & 0.9676 & 0.0859 & 0.9524 & 0.1019 & 0.9506 \\
PGMM & 0.2901 & 0.9546 & 0.1419 & 0.9499 & 0.1264 & 0.9496 \\
PGMM-CV & 0.3268 & 0.9571 & 0.1437 & 0.9498 & 0.1307 & 0.9495 \\
A-PGMM & 0.1154 & 0.9507 & 0.0194 & 0.9539 & 0.0465 & 0.9525 \\
A-PGMM-CV & 0.1251 & 0.9508 & 0.0289 & 0.9537 & 0.0457 & 0.9525 \\
\bottomrule
\end{tabular}
\end{table}


\section{Proofs of results}\label{app:proofs}

In this section, we present proofs of the main theoretical results of the paper along with auxiliary lemmas and their corresponding proofs. Note that under Assumption \ref{ass:boundedness}, the rates for $\hat{G}$ and $\hat{\Omega}$ coincide, i.e., $\varepsilon_{n}^{\Omega} = \varepsilon_{n}^{G} = \sqrt{\log q / n}$. Hence, to ease notation, we use just $\varepsilon_{n}^{G}$ throughout the proofs.

\subsection{Properties of the PGMM estimator}

\begin{lemma}\label{lem:g_rate}
If Assumption \ref{ass:boundedness} is satisfied, then
\begin{equation*}
||\hat G - G||_{\infty} = O_{p}(\varepsilon_{n}^{G}), \quad \varepsilon_{n}^{G} = \sqrt{\frac{\log q}{n}}.
\end{equation*}
\end{lemma}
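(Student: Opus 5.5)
The plan is a union bound over the $q\times p$ entries of $\hat G - G$, combined with Hoeffding's inequality. Each entry is
\begin{equation*}
\hat G_{jk} - G_{jk} = \frac{1}{n}\sum_{i=1}^{n}\left\{d_j(X_i)b_k(Z_i) - \E[d_j(X)b_k(Z)]\right\}.
\end{equation*}
This is a centered average of i.i.d. terms. By Assumption \ref{ass:boundedness}, each summand is bounded in absolute value by $2C_bC_d$. So for every fixed $(j,k)$, Hoeffding's inequality gives
\begin{equation*}
\P\left(|\hat G_{jk} - G_{jk}| > t\right) \le 2\exp\left(-\frac{nt^2}{8C_b^2C_d^2}\right).
\end{equation*}

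Next I would take a union bound over all $qp$ pairs $(j,k)$. This gives
\begin{equation*}
\P\left(||\hat G - G||_\infty > t\right) \le 2qp\exp\left(-\frac{nt^2}{8C_b^2C_d^2}\right).
\end{equation*}
Since the framework requires $q \ge p$, we have $\log(qp) \le 2\log q$. I would set $t = K\sqrt{\log q/n}$ for a large constant $K$. The bound then becomes $2q^{2-K^2/(8C_b^2C_d^2)}$, which is at most any prescribed $\epsilon$ once $K$ is chosen large enough (using $q\ge 2$). This is exactly the statement $||\hat G - G||_\infty = O_p(\sqrt{\log q/n})$.

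The main obstacle is that Assumption \ref{ass:boundedness} holds only with probability approaching one, not almost surely. I would handle this in one of two ways.
\begin{itemize}
\item Read the assumption as the usual almost-sure bound on the support, which is how the paper treats it.
\item Alternatively, work with the truncated summands $d_j b_k\,\mathds{1}\{|d_j|\le C_d,\ |b_k|\le C_b\}$. These coincide with the original summands on an event whose probability tends to one. The truncated averages satisfy the Hoeffding bound above. The discrepancy in the centering means, $\E[d_jb_k\mathds{1}\{\cdot\}^c]$, is then assumed negligible relative to $\sqrt{\log q/n}$.
\end{itemize}
For the cross-fitted versions $\hat G_\ell$, the identical argument applies with $n-n_\ell \asymp n$ observations.
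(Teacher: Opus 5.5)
Your proposal is correct and follows essentially the same route as the paper: a union bound over the $pq \le q^2$ entries, Hoeffding's inequality for the bounded centered summands (the paper phrases this through the sub-Gaussian norm $K = 2C_bC_d/\log 2$), and the choice $t = C\sqrt{\log q/n}$ with $C$ large. The paper also reads Assumption \ref{ass:boundedness} as an almost-sure bound, which is your first option. Your conclusion that the tail is small once $K$ is large is, if anything, more careful than the paper's ``$\to 0$'', since the latter implicitly requires $q \to \infty$.
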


\begin{proof}
The proof is similar to the proof of Lemma A10 of \cite{chernozhukov2022adml}. Define 
\begin{equation*}
	T_{ijk} = d_{j}(X_{i})b_{k}(Z_{i}) - \E[d_{j}(X_{i})b_{k}(Z_{i})],\; U_{jk} = \frac{1}{n}\sum_{i=1}^{n}T_{ijk}.
\end{equation*}
For any constant $C$,
\begin{align*}
	\P(||\hat G - G||_{\infty} \geq C\varepsilon_{n}^{G}) & \leq \sum_{j=1}^{q}\sum_{k=1}^{p}\P(|U_{ijk}| \geq C\varepsilon_{n}^{G}) \\ 
	& \leq pq\max_{j,k}\P(|U_{ijk}| \geq C\varepsilon_{n}^{G}) \\ 
	& \leq q^{2}\max_{j,k}\P(|U_{ijk}| \geq C\varepsilon_{n}^{G}),
\end{align*}
where the last inequality follows from $q \geq p$. Note that $\E[T_{ijk}] = 0$ and by Assumption \ref{ass:boundedness},
\begin{equation*}
	|T_{ijk}| \leq |d_{j}(X_{i})|\cdot|b_{k}(Z_{i})| + \E[|d_{j}(X_{i})|\cdot|b_{k}(Z_{i})|] \leq 2C_{b}C_{d}. 
\end{equation*}
Since $T_{ijk}$ is a bounded random variable, it is sub-Gaussian. Let $||T_{ijk}||_{\Psi_{2}}$ denote the sub-Gaussian norm. Define $K = 2C_{b}C_{d} / \log2 \geq ||T_{ijk}||_{\Psi_{2}}$. By Hoeffding's inequality (see Theorem 2.6.2 in \cite{vershynin2018high}), there is a constant $c$ such that
\begin{align*}
	q^{2}\max_{j,k}\P(|U_{ijk}| \geq C\varepsilon_{n}^{G}) & \leq 2q^{2}\exp\left(-\frac{c(nC\varepsilon_{n}^{G})^{2}}{nK^{2}}\right) \\
	& = 2q^{2}\exp\left(-\frac{cC^{2}\log q}{K^{2}}\right) \\
	& \leq 2\exp\left(\log q\left[2 - \frac{cC^{2}}{K^{2}}\right]\right) \rightarrow 0
\end{align*}
for any $C > K\sqrt{2/c}$. Thus, for large enough $C$, $\P(|\hat G - G|_{\infty} \geq C\varepsilon_{n}^{G}) \rightarrow 0$, which completes the proof.
\end{proof}

\begin{lemma} \label{lem:optimization}
For any $q \times 1$ vector $\hat M$, $q \times p$ matrix $\hat G$, $q \times q$ matrix $\hat \Omega$, and $\lambda > 0$, if 
\begin{equation*}
	\rho^{*} = \argmin_{\rho \in \R^{p}}\left\{(\hat M - \hat G \rho)'\hat{\Omega}_{q}(\hat M - \hat G \rho) + 2\lambda|\rho|_{1}\right\},
\end{equation*}
then
\begin{equation*}
||\hat{G}'\hat{\Omega}_{q}(\hat{M} - \hat{G}\rho^{*})||_{\infty} \leq \lambda.
\end{equation*}
\end{lemma}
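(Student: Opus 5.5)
The plan is to read the bound off the first-order (KKT) optimality conditions of the convex program defining $\rho^{*}$. The objective
\[
Q(\rho) = (\hat M - \hat G\rho)'\hat\Omega_q(\hat M - \hat G\rho) + 2\lambda|\rho|_{1}
\]
is a sum of a differentiable quadratic term and a convex, nondifferentiable penalty. Since $\rho^{*}$ is a minimizer, $0 \in \partial Q(\rho^{*})$, where $\partial$ denotes the subdifferential. To make the gradient formula clean, I will replace $\hat\Omega_q$ by its symmetrization $(\hat\Omega_q + \hat\Omega_q')/2$, which leaves the quadratic form unchanged; in the intended applications $\hat\Omega$ is symmetric anyway. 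The gradient of the smooth part is then $-2\hat G'\hat\Omega_q(\hat M - \hat G\rho)$. The subdifferential of $2\lambda|\rho|_1$ at $\rho^*$ is $2\lambda v$, where $v_j = \mathrm{sign}(\rho^*_j)$ if $\rho^*_j \neq 0$ and $v_j \in [-1,1]$ otherwise.

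Optimality therefore gives the existence of such a $v$ with
\[
-2\hat G'\hat\Omega_q(\hat M - \hat G\rho^{*}) + 2\lambda v = 0, \quad\text{i.e.}\quad \hat G'\hat\Omega_q(\hat M - \hat G\rho^{*}) = \lambda v.
\]
Since $\|v\|_\infty \le 1$, taking sup norms yields $\|\hat G'\hat\Omega_q(\hat M - \hat G\rho^*)\|_\infty \le \lambda$. This matches the coordinatewise KKT conditions \eqref{eq:kkt_zero} already stated in the implementation appendix.

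If one prefers to avoid subdifferential calculus, the same conclusion follows from a one-dimensional perturbation argument. Fix $j$ and compare $Q(\rho^* + t e_j)$ with $Q(\rho^*)$ for small $t$. The quadratic part changes by $-2t\, r_j + O(t^2)$, where $r = \hat G'\hat\Omega_q(\hat M - \hat G\rho^*)$. The penalty changes by at most $2\lambda|t|$. Suppose $|r_j| > \lambda$, and choose $t$ with the sign of $r_j$ and small magnitude. Then the total change is at most $-2|t|(|r_j| - \lambda) + O(t^2) < 0$, contradicting minimality, so $|r_j| \le \lambda$ for every $j$.

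The only step needing care is the gradient computation when $\hat\Omega$ is not symmetric, and the symmetrization above handles it. Existence of a minimizer is presumed by the statement. Convexity is not even required for the necessary condition used here, so no positive semi-definiteness assumption enters.
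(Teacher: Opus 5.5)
Your main argument, $0\in\partial Q(\rho^*)$ followed by reading off $|r_j|\le\lambda$ coordinatewise, is the paper's proof. Your one-coordinate perturbation argument is a valid, more elementary alternative. You are also right that convexity is needed only for sufficiency, not for the necessary condition; the paper invokes convexity where it is not needed. Both of your arguments are correct when $\hat\Omega$ is symmetric. That is the case the paper has in mind: its proof likewise takes the gradient of the quadratic term to be (a multiple of) $\hat G'\hat\Omega_q(\hat M-\hat G\rho)$, and $\hat\Omega$ is introduced as positive semi-definite.

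The step that fails is your claim that symmetrization handles a non-symmetric $\hat\Omega$. Write $S=(\hat\Omega_q+\hat\Omega_q')/2$. The gradient of the quadratic term is then $-2\hat G'S(\hat M-\hat G\rho)$, not $-2\hat G'\hat\Omega_q(\hat M-\hat G\rho)$. So the KKT conditions, and likewise the $O(t)$ term in your perturbation argument, bound $\|\hat G'S(\hat M-\hat G\rho^*)\|_\infty$, which is a different vector.

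No argument can recover the stated quantity in that generality, because the lemma is false without symmetry. Take $p=1$, $q=2$, $\hat G=(1,0)'$, $\hat M=(0,m)'$ with $m>\lambda$, and $\hat\Omega_q=\begin{pmatrix}1&1\\-1&1\end{pmatrix}$. The symmetric part is $I_2$, so the objective is $\rho^2+m^2+2\lambda|\rho|$. It is strictly convex with unique minimizer $\rho^*=0$, yet $\hat G'\hat\Omega_q(\hat M-\hat G\rho^*)=m>\lambda$. You should therefore state symmetry of $\hat\Omega$ as a hypothesis rather than claim that symmetrization covers the general case. This costs nothing for the paper, whose weight matrices (identity, diagonal, inverse sample covariance) are all symmetric.
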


\begin{proof}
Since the objective function is convex in $\rho$, a necessary condition for minimization is that zero belongs to the sub-differential of the objective function, i.e.
\begin{equation*}
	0 \in -\hat G'\hat{\Omega}_{q}(\hat M - \hat G \rho^{*}) + \lambda([-1,\, 1],\dots,\,[-1,\,1])'.
\end{equation*}
Thus, for $j = 1,\dots,\,p$ we have
\begin{equation*}
	-e_{j}'\hat G'\hat{\Omega}_{q}(\hat M - \hat G \rho^{*}) + \lambda \geq 0, \quad -e_{j}'\hat G'\hat{\Omega}_{q}(\hat M - \hat G \rho^{*}) - \lambda \leq 0,
\end{equation*}
where $e_{j}$ is the $j^{\text{th}}$ unit vector. Combining two inequalities above yields
\begin{equation*}
|e_{j}'\hat{G}'\hat{\Omega}_{q}(\hat{M} - \hat{G}\rho^{*})| \leq \lambda,
\end{equation*}
which completes the proof as the inequality holds for every $j$.
\end{proof}

Following \cite{bradic2021minimax}, by Assumption \ref{ass:sparse_approx} we can define $S_{\bar\rho} \subset S$ as indices of a sparse approximation with $|S_{\bar\rho}| = \bar s$, where $|A|$ denotes the cardinality of set $A$, and coefficients $\bar\rho = (\bar\rho_{1},\,\dots,\,\bar\rho_{p})'$, with $\bar\rho_{j} = 0$ for $j\not\in S_{\bar\rho}$ such that
\begin{equation*}
	||\rho_{L} - \bar\rho||^{2} \leq C\bar{s}\varepsilon_{n}^{2}.
\end{equation*} 
Also define $\rho_{\star}$ as 
\begin{equation} \label{eq:rho_star_problem}
	\rho_{\star} = \argmin_{v\in\R^{p}} (\rho_{L} - v)'G'\Omega_{q}G(\rho_{L} - v) + 2\varepsilon_{n}\sum_{j\in S_{\bar\rho}^{c}}|v_{j}|.
\end{equation}
Moreover, we assume that $|\rho_{\star}|_{1} = O(1)$.

\begin{lemma} \label{lem:foc_rho_star}
	$||G'\Omega_{q}G(\rho_{\star} - \rho_{L})||_{\infty} \leq \varepsilon_{n}$.
\end{lemma}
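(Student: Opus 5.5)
The plan is to read off the first-order (KKT) conditions of the convex problem \eqref{eq:rho_star_problem} that defines $\rho_{\star}$, exactly as in Lemma \ref{lem:optimization}. Write $Q = G'\Omega_{q}G$; it is symmetric and, under Assumption \ref{ass:re_pgmm}, positive semidefinite with bounded largest eigenvalue. The objective
\begin{equation*}
	F(v) = (\rho_{L} - v)'Q(\rho_{L} - v) + 2\varepsilon_{n}\sum_{j\in S_{\bar\rho}^{c}}|v_{j}|
\end{equation*}
is convex in $v$. A minimizer exists: either take it as given, as the paper does by defining $\rho_{\star}$ as the argmin and assuming $|\rho_{\star}|_{1}=O(1)$, or note that $F$ is a convex quadratic plus a polyhedral penalty bounded below. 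Hence $0 \in \partial F(\rho_{\star})$.

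Next I compute the subdifferential coordinatewise. The gradient of the smooth part is $-2Q(\rho_{L} - v) = 2Q(v - \rho_{L})$. The penalty contributes $2\varepsilon_{n}\,\partial|v_j|$ for $j\in S_{\bar\rho}^{c}$, a subset of $2\varepsilon_{n}[-1,1]$, and contributes nothing for $j\in S_{\bar\rho}$. So for $j \in S_{\bar\rho}$ the optimality condition reads $e_j'Q(\rho_{\star}-\rho_{L}) = 0$. For $j\in S_{\bar\rho}^{c}$ it reads $2e_j'Q(\rho_{\star}-\rho_{L}) + 2\varepsilon_{n}u_j = 0$ with $u_j\in[-1,1]$, which gives $|e_j'Q(\rho_{\star}-\rho_{L})|\le\varepsilon_{n}$. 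Taking the maximum over $j$ yields $\|Q(\rho_{\star}-\rho_{L})\|_{\infty}\le \varepsilon_{n}$, which is the claim.

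The only point needing care is the existence of the minimizer. The penalty is only on $S_{\bar\rho}^{c}$ and $Q$ may be singular, so coercivity is not automatic. I would handle this by appealing to the paper's standing assumption that $\rho_{\star}$ is well defined. Alternatively, I would note that a convex quadratic plus a polyhedral function that is bounded below attains its infimum (Frank--Wolfe type result), and that $F \ge 0$. Beyond this, the argument is a direct repetition of the proof of Lemma \ref{lem:optimization}, with $\hat G'\hat\Omega_q\hat G$ replaced by $Q$, $\lambda$ by $\varepsilon_{n}$, and the penalty restricted to a subset of coordinates.
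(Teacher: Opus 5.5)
Your proof is correct and is the same as the paper's, which just says the bound follows from the subgradient (KKT) argument of Lemma \ref{lem:optimization} applied to the problem \eqref{eq:rho_star_problem} that defines $\rho_{\star}$. Your coordinatewise treatment (the condition $e_j'G'\Omega_{q}G(\rho_{\star}-\rho_{L})=0$ on the unpenalized set $S_{\bar\rho}$, and the bound $\varepsilon_{n}$ on $S_{\bar\rho}^{c}$) and your remark on existence of the minimizer spell out details the paper leaves implicit.
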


\begin{proof}
	Follows directly from the proof of Lemma \ref{lem:optimization}.
\end{proof}

\begin{lemma} \label{lem:rho_star_aux1}
	$(\rho_{L} - \rho_{\star})'G'\Omega_{q}G(\rho_{L} - \rho_{\star}) \leq C\bar{s}\varepsilon_{n}^{2}$.
\end{lemma}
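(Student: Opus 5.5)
The plan is to use the defining optimality of $\rho_{\star}$ in \eqref{eq:rho_star_problem}: compare its objective value with that of the feasible point $v=\bar\rho$. Write $Q = G'\Omega_{q}G$ and $F(v) = (\rho_{L}-v)'Q(\rho_{L}-v) + 2\varepsilon_{n}\sum_{j\in S_{\bar\rho}^{c}}|v_j|$. Since $\rho_{\star}$ minimizes $F$, we have $F(\rho_{\star}) \le F(\bar\rho)$. Because $\bar\rho_j = 0$ for $j\notin S_{\bar\rho}$, the penalty term vanishes at $\bar\rho$. Hence
\begin{equation*}
(\rho_{L}-\rho_{\star})'Q(\rho_{L}-\rho_{\star}) \le (\rho_{L}-\rho_{\star})'Q(\rho_{L}-\rho_{\star}) + 2\varepsilon_{n}\sum_{j\in S_{\bar\rho}^{c}}|\rho_{\star j}| \le (\rho_{L}-\bar\rho)'Q(\rho_{L}-\bar\rho).
\end{equation*}
The first inequality only drops a nonnegative term. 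This reduction is the whole argument; it avoids any restricted-eigenvalue or cone manipulation, since the penalty on the non-support coordinates is harmless.

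It then remains to bound the right-hand side. By Assumption \ref{ass:re_pgmm}, the largest eigenvalue of $Q$ is uniformly bounded by some $\bar\lambda<\infty$. Therefore
\begin{equation*}
(\rho_{L}-\bar\rho)'Q(\rho_{L}-\bar\rho) \le \bar\lambda\,||\rho_{L}-\bar\rho||^{2}.
\end{equation*}
Invoking the sparse approximation property stated just before \eqref{eq:rho_star_problem}, namely $||\rho_{L}-\bar\rho||^{2}\le C\bar s\varepsilon_n^{2}$ (the form in which Assumption \ref{ass:sparse_approx} is used, following \cite{bradic2021minimax}), gives
\begin{equation*}
(\rho_{L}-\rho_{\star})'Q(\rho_{L}-\rho_{\star}) \le \bar\lambda C\bar s\varepsilon_n^{2}.
\end{equation*}
Absorbing $\bar\lambda$ into the generic constant $C$ yields the claim.

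The only delicate point is the sparse approximation step: we need to be sure that the approximation bound is available in the coefficient norm $||\rho_L-\bar\rho||$ rather than only in the function norm $||\alpha_0 - b'\bar\rho||$. If only the function-space version were available, I would instead bound $(\rho_L-\bar\rho)'Q(\rho_L-\bar\rho)$ through $G(\rho_L-\bar\rho) = \E[d\,(b'\rho_L - b'\bar\rho)]$. That route uses $||\Omega||_{\ell_\infty}\le C$ together with Assumption \ref{ass:boundedness}, plus a triangle inequality through $\alpha_0$. However, the paper explicitly sets up the coefficient-level bound, so the direct route above should suffice.
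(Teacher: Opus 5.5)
Your proposal is correct and matches the paper's argument: compare the objective of \eqref{eq:rho_star_problem} at $\rho_{\star}$ with its value at $\bar\rho$, where the penalty vanishes, then drop the nonnegative penalty on the left, and bound $(\rho_{L}-\bar\rho)'G'\Omega_{q}G(\rho_{L}-\bar\rho)$ by the largest eigenvalue of $G'\Omega_{q}G$ times $\|\rho_{L}-\bar\rho\|^{2}\le C\bar{s}\varepsilon_{n}^{2}$. You are right that this coefficient-level bound is the crux: the paper takes it as given just before \eqref{eq:rho_star_problem} rather than deriving it from Assumption \ref{ass:sparse_approx}. Your fallback would not close as sketched, because going through $\alpha_{0}$ leaves the term $\E[d(X)(\alpha_{0}(Z)-b(Z)'\rho_{L})]=M-G\rho_{L}$, which Assumption \ref{ass:sparse_approx} does not control without a separate oracle inequality for the population problem defining $\rho_{L}$.
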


\begin{proof}
	By the definition of $\rho_{\star}$ and the fact that the largest eigenvalue of $G'\Omega_{q}G$ is bounded, we have
	\begin{align*}
		(\rho_{L} - \rho_{\star})'G'\Omega_{q}G(\rho_{L} - \rho_{\star}) + 2\varepsilon_{n}\sum_{j\in S_{\bar\rho}^{c}}|\rho_{\star,j}| & \leq (\rho_{L} - \bar\rho)'G'\Omega_{q}G(\rho_{L} - \bar\rho) + 2\varepsilon_{n}\sum_{j\in S_{\bar\rho}^{c}}|\bar\rho_{j}| \\
		& = (\rho_{L} - \bar\rho)'G'\Omega_{q}G(\rho_{L} - \bar\rho) \\ 
		& \leq C||\rho_{L} - \bar\rho||^{2} \leq C\bar{s}\varepsilon_{n}^{2}.
	\end{align*}
\end{proof}

\begin{lemma} \label{lem:rho_star_aux2}
	Let $S_{\rho_{\star}}$ be the vector of indices of nonzero elements of $\rho_{\star}$. Then, $s_{\star} \equiv |S_{\rho_{\star}}| \leq C\bar{s}$.
\end{lemma}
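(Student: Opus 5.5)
The bound on $s_\star$ will come from the KKT conditions of problem \eqref{eq:rho_star_problem}, in the spirit of the sparsity bound for the Lasso in \cite{bickel2009} and its analogue in \cite{bradic2021minimax}. Write $H = G'\Omega_q G$. The subgradient optimality condition for $\rho_\star$ (exactly as in the proof of Lemma \ref{lem:optimization}) gives, for every $j \in S_{\bar\rho}^c$ with $\rho_{\star,j} \neq 0$,
\begin{equation*}
e_j' H(\rho_L - \rho_\star) = \varepsilon_n \,\mathrm{sign}(\rho_{\star,j}),
\end{equation*}
so $|e_j'H(\rho_L-\rho_\star)| = \varepsilon_n$ on the set $S_{\rho_\star}\cap S_{\bar\rho}^c$. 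I split
\begin{equation*}
s_\star \leq |S_{\bar\rho}| + |S_{\rho_\star}\cap S_{\bar\rho}^c| = \bar s + |S_{\rho_\star}\cap S_{\bar\rho}^c|,
\end{equation*}
and the task reduces to bounding the second cardinality by $C\bar s$.

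For this, sum the squared KKT equalities over $j \in S_{\rho_\star}\cap S_{\bar\rho}^c$:
\begin{equation*}
|S_{\rho_\star}\cap S_{\bar\rho}^c|\,\varepsilon_n^2 = \sum_{j} \big(e_j'H(\rho_L-\rho_\star)\big)^2 \leq \|H(\rho_L-\rho_\star)\|^2 .
\end{equation*}
Since $H$ is positive semidefinite with largest eigenvalue bounded by some $\Lambda<\infty$ (Assumption \ref{ass:re_pgmm}), we have $\|Hv\|^2 = v'H^{1/2}HH^{1/2}v \leq \Lambda\, v'Hv$. Applying this with $v=\rho_L-\rho_\star$ and invoking Lemma \ref{lem:rho_star_aux1} gives
\begin{equation*}
|S_{\rho_\star}\cap S_{\bar\rho}^c|\,\varepsilon_n^2 \leq \Lambda\, C\bar s\varepsilon_n^2 .
\end{equation*}
Dividing by $\varepsilon_n^2>0$ yields $|S_{\rho_\star}\cap S_{\bar\rho}^c| \leq \Lambda C \bar s$, hence $s_\star \leq (1+\Lambda C)\bar s$, which is the claim after relabeling the constant.

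The only delicate point is turning the coordinatewise KKT equalities into a quadratic-form bound. This is where the bounded largest eigenvalue of $G'\Omega_q G$, and not merely the restricted eigenvalue, is used: it converts $\|H\delta\|^2$ into $\delta'H\delta$, which Lemma \ref{lem:rho_star_aux1} already controls. The indices in $S_{\bar\rho}$ are unpenalized, so they contribute no KKT equality, but there are only $\bar s$ of them and they are counted trivially.
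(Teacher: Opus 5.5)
Your proposal is correct and follows the paper's proof essentially step by step: the KKT equalities $|e_j'G'\Omega_qG(\rho_\star-\rho_L)|=\varepsilon_n$ on $S_{\rho_\star}\setminus S_{\bar\rho}$, summing the squares, bounding the sum by $\lambda_{\max}(G'\Omega_qG)$ times the quadratic form controlled in Lemma \ref{lem:rho_star_aux1}, and then adding back the $\bar s$ unpenalized indices. Your inequality $s_\star \le \bar s + |S_{\rho_\star}\cap S_{\bar\rho}^c|$ is actually the more accurate form of the paper's final step. The paper writes this as an equality, which is only valid if $S_{\bar\rho}\subseteq S_{\rho_\star}$.
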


\begin{proof}
	For all $j \in S_{\rho_{\star}} \backslash S_{\bar\rho}$ the first order conditions to equation \eqref{eq:rho_star_problem} imply $|e_{j}'G'\Omega_{q}G(\rho_{\star} - \rho_{L})| = \varepsilon_{n}$. Therefore, it follows that
	\begin{equation*}
		\sum_{j\in S_{\rho_{\star}} \backslash S_{\bar\rho}} \left(e_{j}'G'\Omega_{q}G(\rho_{\star} - \rho_{L})\right)^{2} = \varepsilon_{n}^{2}|S_{\rho_{\star}} \backslash S_{\bar\rho}|.
	\end{equation*}
	Moreover, using Lemma \ref{lem:rho_star_aux1} and the fact that the largest eigenvalue of $G'\Omega_{q}G$ is bounded, we get
	\begin{align*}
		\sum_{j\in S_{\rho_{\star}} \backslash S_{\bar\rho}} \left(e_{j}'G'\Omega_{q}G(\rho_{\star} - \rho_{L})\right)^{2} & \leq \sum_{j=1}^{p} \left(e_{j}'G'\Omega_{q}G(\rho_{\star} - \rho_{L})\right)^{2} \\ 
		& = (\rho_{\star} - \rho_{L})'G'\Omega_{q}G\left(\sum_{j=1}^{p}e_{j}e_{j}'\right)G'\Omega_{q}G(\rho_{\star} - \rho_{L}) \\
		& = (\rho_{\star} - \rho_{L})(G'\Omega_{q}G)^{2}(\rho_{\star} - \rho_{L}) \\
		& \leq \lambda_{max}(G'\Omega_{q}G)\{(\rho_{\star} - \rho_{L})G'\Omega_{q}G(\rho_{\star} - \rho_{L})\} \leq C\bar{s}\varepsilon_{n}^{2}.
	\end{align*}
	Combining the results above, we obtain
	\begin{equation*}
		\varepsilon_{n}^{2}|S_{\rho_{\star}} \backslash S_{\bar\rho}| \leq C\bar{s}\varepsilon_{n}^{2}.
	\end{equation*}
	Dividing both sides by $\varepsilon_{n}^{2}$ gives $|S_{\rho_{\star}} \backslash S_{\bar\rho}| \leq C\bar{s}$. As a result,
	\begin{equation*}
		s_{\star} = |S_{\bar\rho}| + |S_{\rho_{\star}} \backslash S_{\bar\rho}| \leq \bar{s} + C\bar{s} \leq C\bar{s}.
	\end{equation*}
\end{proof}

\begin{lemma}\label{lem:rho_star_aux3}
	Let $B = \E[b(Z)b(Z)']$ has its largest eigenvalue bounded uniformly in $n$, then $||\alpha_{0} - b'\rho_{\star}||^{2} \leq C\bar{s}\varepsilon_{n}^{2}$.
\end{lemma}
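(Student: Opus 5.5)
We argue by the triangle inequality, splitting
\begin{equation*}
	||\alpha_{0} - b'\rho_{\star}|| \leq ||\alpha_{0} - b'\bar\rho|| + ||b'(\bar\rho - \rho_{L})|| + ||b'(\rho_{L} - \rho_{\star})||,
\end{equation*}
and bounding each term by a multiple of $\sqrt{\bar s}\,\varepsilon_{n}$.

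\textbf{First term.} This is handled directly by Assumption \ref{ass:sparse_approx}, which gives $||\alpha_{0} - b'\bar\rho||^{2} \leq C\bar{s}\varepsilon_{n}^{2}$.

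\textbf{Second term.} For any vector $v$ we have $||b'v||^{2} = v'Bv \leq \lambda_{\max}(B)||v||^{2}$. The maximal eigenvalue of $B$ is bounded by hypothesis, so
\begin{equation*}
	||b'(\bar\rho - \rho_{L})||^{2} \leq C||\bar\rho - \rho_{L}||^{2} \leq C\bar{s}\varepsilon_{n}^{2},
\end{equation*}
using the property $||\rho_{L} - \bar\rho||^{2} \leq C\bar s\varepsilon_{n}^{2}$ stated just before (\ref{eq:rho_star_problem}).

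\textbf{Third term.} Again $||b'(\rho_{L} - \rho_{\star})||^{2} \leq C||\rho_{L} - \rho_{\star}||^{2}$, so it suffices to control the Euclidean distance $||\rho_{L} - \rho_{\star}||$. Lemma \ref{lem:rho_star_aux1} only bounds the quadratic form $\delta'G'\Omega_{q}G\delta$ with $\delta = \rho_{L} - \rho_{\star}$. To convert this into an $\ell_2$ bound, we use the restricted eigenvalue condition in Assumption \ref{ass:re_pgmm}. This requires $\delta$ to lie in a cone, which we would establish from the first-order conditions of (\ref{eq:rho_star_problem}). Equivalently, we can route through $\bar\rho$ by writing
\begin{equation*}
	\rho_{\star} - \bar\rho = (\rho_{\star} - \rho_{L}) + (\rho_{L} - \bar\rho).
\end{equation*}
Here $\rho_{\star} - \bar\rho$ has support on $S_{\rho_\star}\cup S_{\bar\rho}$, which has cardinality at most $C\bar s$ by Lemma \ref{lem:rho_star_aux2}. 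Since this vector is sparse, it trivially satisfies the cone condition relative to that support set. Assumption \ref{ass:re_pgmm} (with $s = C\bar s$) then gives
\begin{equation*}
	||\rho_\star - \bar\rho||^{2} \leq c^{-1}(\rho_\star - \bar\rho)'G'\Omega_qG(\rho_\star - \bar\rho).
\end{equation*}
Expanding the right-hand side as the squared $G'\Omega_qG$-seminorm of the sum of the two pieces, it is at most twice the sum of their quadratic forms. The $\rho_\star - \rho_L$ piece is bounded by Lemma \ref{lem:rho_star_aux1}. The $\rho_L - \bar\rho$ piece is bounded using the bounded largest eigenvalue of $G'\Omega_qG$ together with $||\rho_L - \bar\rho||^2 \leq C\bar s\varepsilon_n^2$. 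This yields $||\rho_\star - \bar\rho||^2 \leq C\bar s\varepsilon_n^2$.

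\textbf{Reassembling.} It is then cleaner to restructure the triangle inequality as
\begin{equation*}
	||\alpha_0 - b'\rho_\star|| \leq ||\alpha_0 - b'\bar\rho|| + \sqrt{\lambda_{\max}(B)}\,||\bar\rho - \rho_\star||,
\end{equation*}
which avoids bounding $||\rho_L - \rho_\star||$ directly. Squaring, and absorbing constants, gives the claim.

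The main obstacle is the third step: justifying that the restricted eigenvalue bound applies to $\rho_\star - \bar\rho$. Concretely, one must check that the constraint $|\delta_{S^c}|_1 \leq 3|\delta_S|_1$ holds with $S$ the joint support and that the sparsity level $C\bar s$ is admissible in Assumption \ref{ass:re_pgmm}. This is where Lemma \ref{lem:rho_star_aux2} is essential.
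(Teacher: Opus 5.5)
Your proof is correct, and it takes a genuinely different and more careful route than the paper's. The paper keeps the three-term split and handles the last term by writing $||b'(\rho_L-\rho_\star)||^2\le\lambda_{\max}(B)||\rho_L-\rho_\star||^2\le C\bar s\varepsilon_n^2$ ``by Lemma~\ref{lem:rho_star_aux1}''. That lemma, however, only bounds the $G'\Omega_qG$-quadratic form of $\rho_L-\rho_\star$. The population Lasso solution $\rho_L$ need not be sparse, so $\rho_L-\rho_\star$ need not lie in any cone. Lemma~\ref{lem:rho_star_aux1} by itself therefore does not give the Euclidean bound the paper claims; one would need a minimal-eigenvalue condition on $G'\Omega_qG$, which is not assumed.

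You sidestep this. Collapsing to $||\alpha_0-b'\rho_\star||\le||\alpha_0-b'\bar\rho||+\sqrt{\lambda_{\max}(B)}\,||\bar\rho-\rho_\star||$ means you only need $\ell_2$ control of $\bar\rho-\rho_\star$. That vector is supported on $S_{\bar\rho}\cup S_{\rho_\star}$, whose size is $\bar s+|S_{\rho_\star}\setminus S_{\bar\rho}|\le C\bar s$ by the proof of Lemma~\ref{lem:rho_star_aux2}. Taking that set in Assumption~\ref{ass:re_pgmm} makes the cone constraint vacuous and the denominator equal to $||\bar\rho-\rho_\star||^2$, so the restricted eigenvalue acts as a sparse-eigenvalue lower bound. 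The quadratic form is then at most twice the Lemma~\ref{lem:rho_star_aux1} term plus twice $\lambda_{\max}(G'\Omega_qG)||\rho_L-\bar\rho||^2$, using positive semidefiniteness of $G'\Omega_qG$. As a by-product, $||\rho_L-\rho_\star||\le||\rho_L-\bar\rho||+||\bar\rho-\rho_\star||$ recovers the paper's claimed bound, so your argument fills the paper's gap rather than merely replacing it.

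The cost of your route is that you need Assumption~\ref{ass:re_pgmm} for an arbitrary support set of size $|S_{\bar\rho}\cup S_{\rho_\star}|$, which is the natural uniform reading of that assumption. The paper's count $s_\star=|S_{\bar\rho}|+|S_{\rho_\star}\setminus S_{\bar\rho}|$ tacitly takes $S_{\bar\rho}\subseteq S_{\rho_\star}$. In that case your set is just $S_{\rho_\star}$, and you use the restricted eigenvalue exactly where the proof of Theorem~\ref{thm:rr_bound} already does.

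One small point: your aside about placing $\rho_L-\rho_\star$ in a cone via first-order conditions is not ``equivalent'' to what you end up doing, and it is not obviously true. Your final argument never needs it, so drop it.
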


\begin{proof}
	By the triangle inequality and Assumption \ref{ass:sparse_approx},
	\begin{align*}
		||\alpha_{0} - b'\rho_{\star}||^{2} & \leq ||\alpha_{0} - b'\bar\rho||^{2} + ||b'(\bar\rho - \rho_{L})||^{2} + ||b'(\rho_{L} - \rho_{\star})||^{2} \\
		& \leq C\bar{s}\varepsilon_{n}^{2} + ||b'(\bar\rho - \rho_{L})||^{2} + ||b'(\rho_{L} - \rho_{\star})||^{2}.
	\end{align*}
	Moreover, by the definition of $\bar\rho$ and $\lambda_{max}(B) \leq C$,
	\begin{equation*}
		||b'(\bar\rho - \rho_{L})||^{2} \leq \lambda_{max}(B)||\bar\rho - \rho_{L}||^{2} \leq C\bar{s}\varepsilon_{n}^{2}.
	\end{equation*}
	Also, by Lemma \ref{lem:rho_star_aux1},
	\begin{equation*}
		||b'(\rho_{L} - \rho_{\star})||^{2} \leq \lambda_{max}(B)||\rho_{L} - \rho_{\star}||^{2} \leq C\bar{s}\varepsilon_{n}^{2},
	\end{equation*}
	which completes the proof.
\end{proof}

\begin{lemma} \label{lem:aux1}
If Assumptions \ref{ass:weight_matrix}--\ref{ass:M_conv} and \ref{ass:m_bound} are satisfied, then
\begin{equation*}
	||\hat G'\hat{\Omega}_{q}(\hat M - \hat G\rho_{\star})||_{\infty} = O_p(\varepsilon_{n}).
\end{equation*}
\end{lemma}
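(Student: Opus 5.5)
We decompose the sample score at $\rho_{\star}$ around its population counterpart and bound each piece in sup-norm using Lemma \ref{lem:g_rate}, Assumptions \ref{ass:weight_matrix} and \ref{ass:M_conv}, and the population first-order condition in Lemma \ref{lem:foc_rho_star}. Write
\begin{equation*}
\hat G'\hat{\Omega}_{q}(\hat M - \hat G\rho_{\star}) = G'\Omega_{q}(M - G\rho_{\star}) + \left[\hat G'\hat{\Omega}_{q}(\hat M - \hat G\rho_{\star}) - G'\Omega_{q}(M - G\rho_{\star})\right].
\end{equation*}

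\textbf{Population term.} For the first term, we first need the population first-order condition for $\rho_L$: by the argument of Lemma \ref{lem:optimization} applied to the population problem, $\|G'\Omega_{q}(M - G\rho_{L})\|_{\infty}\le\varepsilon_n$. Then
\begin{equation*}
G'\Omega_{q}(M - G\rho_{\star}) = G'\Omega_{q}(M - G\rho_{L}) + G'\Omega_{q}G(\rho_{L}-\rho_{\star}).
\end{equation*}
By Lemma \ref{lem:foc_rho_star}, the second summand is also bounded by $\varepsilon_n$ in sup-norm. So the population piece is at most $2\varepsilon_n$.

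\textbf{Perturbation term.} Set $\hat u=\hat M-\hat G\rho_{\star}$ and $u=M-G\rho_{\star}$. First,
\begin{equation*}
\|u\|_\infty\le \|M\|_\infty+\|G\|_\infty|\rho_{\star}|_1=O(1),
\end{equation*}
using Assumption \ref{ass:m_bound}, Assumption \ref{ass:boundedness} and $|\rho_{\star}|_1=O(1)$. Next,
\begin{equation*}
\|\hat u-u\|_\infty\le\|\hat M-M\|_\infty+\|\hat G-G\|_\infty|\rho_{\star}|_1=O_p(\varepsilon_n).
\end{equation*}
We then telescope:
\begin{equation*}
\hat G'\hat\Omega_q\hat u-G'\Omega_q u=(\hat G-G)'\hat\Omega_q\hat u+G'(\hat\Omega_q-\Omega_q)\hat u+G'\Omega_q(\hat u-u).
\end{equation*}
Each product is bounded via $\|A'v\|_\infty\le \max_k\sum_j|A_{jk}|\,\|v\|_\infty$ and $\|\Omega v\|_\infty\le\|\Omega\|_{\ell_\infty}\|v\|_\infty$. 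This gives three bounds:
\begin{align*}
\|(\hat G-G)'\hat\Omega_q\hat u\|_\infty &\le q\|\hat G-G\|_\infty\|\hat\Omega\|_{\ell_\infty}\|\hat u\|_\infty/q,\\
\|G'(\hat\Omega_q-\Omega_q)\hat u\|_\infty &\le q\|G\|_\infty\|\hat\Omega-\Omega\|_{\ell_\infty}\|\hat u\|_\infty/q,\\
\|G'\Omega_q(\hat u-u)\|_\infty &\le q\|G\|_\infty\|\Omega\|_{\ell_\infty}\|\hat u-u\|_\infty/q.
\end{align*}
The factor $q$ from summing over the $q$ moments cancels against the normalization $\hat\Omega_q=\hat\Omega/q$. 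With Assumption \ref{ass:weight_matrix} (which also gives $\|\hat\Omega\|_{\ell_\infty}=O_p(1)$) and Lemma \ref{lem:g_rate}, the three pieces are $O_p(\varepsilon_n)$, $O_p(\varepsilon_n)$ and $O_p(\varepsilon_n)$. Combining this with the population term gives the claim.

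\textbf{Main obstacle.} The delicate point is the dimension bookkeeping. The column sums over $q$ moments must be absorbed exactly by the $1/q$ in $\Omega_q$; without that normalization one would lose a factor $q$. The other point requiring care is the $O(1)$ bound on $\|u\|_\infty$, which relies on the maintained assumption $|\rho_{\star}|_1=O(1)$ rather than on any sparsity of $\rho_{\star}$.
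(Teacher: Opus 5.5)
Your proof is correct and follows essentially the paper's argument. The population piece $G'\Omega_q(M-G\rho_\star)$ is bounded by $2\varepsilon_n$ via the first-order condition for $\rho_L$ together with Lemma~\ref{lem:foc_rho_star}, and the sampling error is controlled with the Caner--Kock sup-norm inequality (the factor $q$ cancelling against $\Omega_q=\Omega/q$), the rates for $\hat G$, $\hat\Omega$, $\hat M$, and $|\rho_\star|_1=O(1)$. The only difference is bookkeeping: grouping $\hat M-\hat G\rho_\star$ into $\hat u$ gives you a three-term telescope instead of the paper's separate seven- and five-term expansions, at the small cost of the $O_p(1)$ bounds on $\|\hat\Omega\|_{\ell_\infty}$ and $\|\hat u\|_\infty$, which you supply.
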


\begin{proof}
By the triangle inequality,
\begin{align}
	||\hat G'\hat{\Omega}_{q}(\hat M - \hat G\rho_{\star})||_{\infty} & \leq ||\hat G'\hat{\Omega}_{q} \hat M - G'\Omega_{q} M||_{\infty}  \label{eq1:1} \\
	& + ||G'\Omega_{q} M - G'\Omega_{q} G\rho_{\star}||_{\infty} \label{eq1:2} \\
	& + ||(G'\Omega_{q} G - \hat G'\hat{\Omega}_{q} \hat G)\rho_{\star}||_{\infty}. \label{eq1:3}
\end{align}
Consider the first element \eqref{eq1:1}. Note that by the triangle inequality,
\begin{align}
||\hat G'\hat{\Omega}_{q} \hat M - G'\Omega_{q} M||_{\infty} & \leq ||(\hat G - G)'(\hat{\Omega}_{q} - \Omega_{q})(\hat M - M)||_{\infty} \label{eq2:1} \\
& + ||(\hat G - G)'\Omega_{q}(\hat M - M)||_{\infty} \label{eq2:2} \\
& + ||G'(\hat{\Omega}_{q} - \Omega_{q})(\hat M- M)||_{\infty} \label{eq2:3} \\
& + ||G'\Omega_{q}(\hat M- M)||_{\infty} \label{eq2:4} \\ 
& + ||(\hat G - G)'\Omega_{q}M||_{\infty} \label{eq2:5} \\
& + ||(\hat G - G)'(\hat{\Omega}_{q} - \Omega_{q})M||_{\infty} \label{eq2:6} \\ 
& + ||G'(\hat{\Omega}_{q} - \Omega_{q})M||_{\infty}. \label{eq2:7}
\end{align}
Now we will bound every term on the RHS of the inequality above. To do so, we will use the following matrix norm inequality from \cite{caner_kock2018}. For any $q \times p$ matrix $A$, $p \times q$ matrix $B$, and $q \times q$ matrix $F$ the following inequality holds
\begin{equation}\label{eq:matrix_ineq}
	||BFA||_{\infty} \leq q||B||_{\infty}||F||_{\ell_{\infty}}||A||_{\infty}.
\end{equation}

We can use \eqref{eq:matrix_ineq} to put an upper bound on \eqref{eq2:1},
\begin{align*}
||(\hat G - G)'(\hat{\Omega}_{q} - \Omega_{q})(\hat M - M)||_{\infty} & \leq ||\hat G - G||_{\infty}||\hat{\Omega} - \Omega||_{\ell_{\infty}}||\hat M - M||_{\infty} \\ & = O_p(\varepsilon^{G}_{n})O_p(\varepsilon^{G}_{n})O_p(\varepsilon^{M}_{n}) = O_p(\varepsilon^{3}_{n}).
\end{align*} 
Moreover, notice that Assumptions \ref{ass:boundedness} and \ref{ass:m_bound} imply that $||G||_{\infty} = O(1)$ and $||M||_{\infty} = O(1)$.
Using this fact and \eqref{eq:matrix_ineq}, we can bound the remaining terms \eqref{eq2:2}--\eqref{eq2:7},
\begin{align*}
||(\hat G - G)'\Omega_{q}(\hat M - M)||_{\infty} & \leq ||\hat G - G||_{\infty}||\Omega||_{\ell_{\infty}}||\hat M - M||_{\infty} = O_p(\varepsilon^{G}_{n})O(1)O_p(\varepsilon^{M}_{n}) = O_p(\varepsilon^{2}_{n})\\
||G'(\hat{\Omega}_{q} - \Omega_{q})(\hat M- M)||_{\infty} & \leq ||G||_{\infty}||\hat{\Omega} - \Omega||_{\ell_{\infty}}||\hat M - M||_{\infty} = O(1)O_p(\varepsilon^{G}_{n})O_p(\varepsilon^{M}_{n}) = O_p(\varepsilon^{2}_{n}) \\
||G'\Omega_{q}(\hat M- M)||_{\infty} & \leq ||G||_{\infty}||\Omega||_{\ell_{\infty}}||\hat M - M||_{\infty} = O(1)O_p(\varepsilon^{M}_{n}) = O_p(\varepsilon^{M}_{n}) \\
||(\hat G - G)'\Omega_{q}M||_{\infty} & \leq ||\hat G - G||_{\infty}||\Omega||_{\ell_{\infty}}||M||_{\infty} = O_p(\varepsilon^{G}_{n})O(1) = O_p(\varepsilon^{G}_{n}) \\
||(\hat G - G)'(\hat{\Omega}_{q} - \Omega_{q})M||_{\infty} & \leq ||\hat G - G||_{\infty}||\hat{\Omega} - \Omega||_{\ell_{\infty}}||M||_{\infty} = O_p(\varepsilon^{G}_{n})O_p(\varepsilon^{G}_{n})O(1) = O_p((\varepsilon^{G}_{n})^{2}) \\
||G'(\hat{\Omega}_{q} - \Omega_{q})M||_{\infty} & \leq ||G||_{\infty}||\hat{\Omega} - \Omega||_{\ell_{\infty}}||M||_{\infty} = O(1)O_p(\varepsilon^{G}_{n}) = O_p(\varepsilon^{G}_{n}).
\end{align*}
Collecting all the terms gives the upper bound for \eqref{eq1:1}
\begin{equation*}
||\hat G'\hat{\Omega}_{q} \hat M - G'\Omega_{q} M||_{\infty} = O_p(\varepsilon_{n}).
\end{equation*}

Next, by the triangle and H\"{o}lder's inequalities,
\begin{align*}
	||G'\Omega_{q} M - G'\Omega_{q} G\rho_{\star}||_{\infty} & \leq ||G'\Omega_{q} M - G'\Omega_{q} G\rho_{L}||_{\infty} + ||G'\Omega_{q}G(\rho_{L} - \rho_{\star})||_{\infty}.
\end{align*}
By Lemma \ref{lem:optimization} and the fact that $\rho_{L}$ are the population PGMM coefficients,
\begin{equation*}
	||G'\Omega_{q} M - G'\Omega_{q} G\rho_{L}||_{\infty} \leq \varepsilon_{n}.
\end{equation*}
Moreover, by Lemma \ref{lem:foc_rho_star},
\begin{equation*}
	||G'\Omega_{q}G(\rho_{L} - \rho_{\star})||_{\infty} \leq \varepsilon_{n}.
\end{equation*}
Thus, using the results above,
\begin{equation*}
	||G'\Omega_{q} M - G'\Omega_{q} G\rho_{\star}||_{\infty} = O(\varepsilon_{n}).
\end{equation*}

We are left with putting an upper bound on \eqref{eq1:3}. By H\"{o}lder's inequality,
\begin{equation*}
||(G'\Omega_{q} G - \hat G'\hat{\Omega}_{q} \hat G)\rho_{\star}||_{\infty} \leq ||G'\Omega_{q} G - \hat G'\hat{\Omega}_{q} \hat G||_{\infty}|\rho_{\star}|_{1}.
\end{equation*}
Moreover, by the triangle inequality,
\begin{align*}
||G'\Omega_{q} G - \hat G'\hat{\Omega}_{q} \hat G||_{\infty} &  \leq ||(\hat G - G)'(\hat{\Omega}_{q} - \Omega_{q})(\hat G - G)||_{\infty} \\
& + 2||(\hat G - G)'(\hat{\Omega}_{q} - \Omega_{q})G||_{\infty} \\
& + ||(\hat G - G)'\Omega_{q}(\hat G - G)||_{\infty} \\
& + 2||(\hat G - G)'\Omega_{q}G||_{\infty} \\
& + ||G'(\hat{\Omega}_{q} - \Omega_{q})G||_{\infty}.
\end{align*}
Using \eqref{eq:matrix_ineq}, we can bound all the terms on the RHS of the inequality above,
\begin{align*}
||(\hat G - G)'(\hat{\Omega}_{q} - \Omega_{q})(\hat G - G)||_{\infty} & \leq ||\hat G - G||_{\infty}||\hat{\Omega} - \Omega||_{\ell_{\infty}}||\hat G - G||_{\infty} = O_p((\varepsilon^{G}_{n})^{3}) \\
2||(\hat G - G)'(\hat{\Omega}_{q} - \Omega_{q})G||_{\infty} & \leq 2||\hat G - G||_{\infty}||\hat{\Omega} - \Omega||_{\ell_{\infty}}||G||_{\infty} = O_p((\varepsilon^{G}_{n})^{2})O(1) = O_p((\varepsilon^{G}_{n})^{2}) \\
||(\hat G - G)'\Omega_{q}(\hat G - G)||_{\infty} & \leq ||\hat G - G||_{\infty}||\Omega||_{\ell_{\infty}}||\hat G - G||_{\infty} = O_p((\varepsilon^{G}_{n})^{2})O(1) = O_p((\varepsilon^{G}_{n})^{2}) \\
2||(\hat G - G)'\Omega_{q}G||_{\infty} & \leq 2||\hat G - G||_{\infty}||\Omega||_{\ell_{\infty}}||G||_{\infty} = O_p(\varepsilon^{G}_{n})O(1) = O_p(\varepsilon^{G}_{n}) \\
||G'(\hat{\Omega}_{q} - \Omega_{q})G||_{\infty} & \leq ||G||_{\infty}||\hat{\Omega} - \Omega||_{\ell_{\infty}}||G||_{\infty} = O_p(\varepsilon^{G}_{n})O(1) = O_p(\varepsilon^{G}_{n}).
\end{align*}
Collecting all the terms gives,
\begin{equation*}
||(G'\Omega_{q} G - \hat G'\hat{\Omega}_{q} \hat G)||_{\infty} = O_p(\varepsilon^{G}_{n}).
\end{equation*}
Combining the result above with $|\rho_{\star}|_{1}=O(1)$ yields
\begin{equation*}
||(G'\Omega_{q} G - \hat G'\hat{\Omega}_{q} \hat G)\rho_{\star}||_{\infty} = O_p(\varepsilon^{G}_{n})O(1) = O_p(\varepsilon^{G}_{n}).
\end{equation*}
Collecting all the terms for \eqref{eq1:1}--\eqref{eq1:3} gives us the desired upper bound,
\begin{equation*}
||\hat G'\hat{\Omega}_{q}(\hat M - \hat G\rho_{\star})||_{\infty} = O_p(\varepsilon_{n}) + O(\varepsilon_{n}) + O_p(\varepsilon^{G}_{n}) = O_p(\varepsilon_{n}).
\end{equation*}
\end{proof}

Let $\phi^{2}(s_{\star})$ denote the population restricted eigenvalue from Assumption \ref{ass:re_pgmm} at $s = s_{\star}$,
\begin{equation*}
\phi^{2}(s_{\star}) = \inf\left\{\frac{\delta'G'\Omega_{q} G\delta}{||\delta_{S_{\rho_{\star}}}||^{2}}:\delta\in\R^{p} \backslash \{0\},\; |\delta_{S_{\rho_{\star}}^{c}}|_{1} \leq 3|\delta_{S_{\rho_{\star}}}|_{1},\, |S_{\rho_{\star}}|\leq s_{\star}\right\}.
\end{equation*}
Next, let us introduce an empirical version of the condition above, 
\begin{equation*}
\hat\phi^{2}(s_{\star}) = \inf\left\{\frac{\delta'\hat G'\hat \Omega_{q} \hat G\delta}{||\delta_{S_{\rho_{\star}}}||^{2}}:\delta\in\R^{p} \backslash \{0\},\; |\delta_{S_{\rho_{\star}}^{c}}|_{1} \leq 3|\delta_{S_{\rho_{\star}}}|_{1},\, |S_{\rho_{\star}}|\leq s_{\star}\right\}.
\end{equation*}
In the following Lemma we show that we can bound $\hat\phi^{2}(s_{\star})$ from below, which will be useful in the proof of Theorem \ref{thm:rr_bound}.

\begin{lemma}\label{lem:re_bound}
If Assumptions \ref{ass:boundedness} and \ref{ass:boundedness} are satisfied, then
\begin{equation*}
	\hat\phi^{2}(s_{\star}) \geq \phi^{2}(s_{\star}) - O_p(s_{\star}\varepsilon_{n}^{G}).
\end{equation*}	
\end{lemma}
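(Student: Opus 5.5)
The plan is the standard restricted-eigenvalue perturbation argument of \cite{bickel2009} and \cite{caner_kock2018}, adapted to the matrix $G'\Omega_{q}G$. Fix any $\delta$ in the cone $|\delta_{S_{\rho_{\star}}^{c}}|_{1} \leq 3|\delta_{S_{\rho_{\star}}}|_{1}$. Then
\begin{equation*}
	\delta'\hat G'\hat\Omega_{q}\hat G\delta \geq \delta'G'\Omega_{q}G\delta - \left|\delta'(\hat G'\hat\Omega_{q}\hat G - G'\Omega_{q}G)\delta\right|.
\end{equation*}
By H\"older's inequality the perturbation term is bounded by $||\hat G'\hat\Omega_{q}\hat G - G'\Omega_{q}G||_{\infty}\,|\delta|_{1}^{2}$.

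The first ingredient is the sup-norm rate of the Gram-type matrix. This was already established inside the proof of Lemma \ref{lem:aux1}: expanding $\hat G'\hat\Omega_{q}\hat G - G'\Omega_{q}G$ into the five cross terms and applying \eqref{eq:matrix_ineq}, together with Lemma \ref{lem:g_rate}, Assumption \ref{ass:weight_matrix}, and $||G||_{\infty}=O(1)$ (from Assumption \ref{ass:boundedness}), gives $||\hat G'\hat\Omega_{q}\hat G - G'\Omega_{q}G||_{\infty} = O_p(\varepsilon_{n}^{G})$. I will simply cite that display. This term does not depend on $\delta$ or on the support, so the resulting bound is uniform over the cone.

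The second ingredient converts $|\delta|_{1}^{2}$ into $||\delta_{S_{\rho_{\star}}}||^{2}$ on the cone. There,
\begin{equation*}
	|\delta|_{1} = |\delta_{S_{\rho_{\star}}}|_{1} + |\delta_{S_{\rho_{\star}}^{c}}|_{1} \leq 4|\delta_{S_{\rho_{\star}}}|_{1} \leq 4\sqrt{s_{\star}}\,||\delta_{S_{\rho_{\star}}}||,
\end{equation*}
by Cauchy--Schwarz, so $|\delta|_{1}^{2} \leq 16 s_{\star}||\delta_{S_{\rho_{\star}}}||^{2}$. Dividing the first display by $||\delta_{S_{\rho_{\star}}}||^{2}$ (nonzero on the cone whenever $\delta\neq 0$) yields
\begin{equation*}
	\frac{\delta'\hat G'\hat\Omega_{q}\hat G\delta}{||\delta_{S_{\rho_{\star}}}||^{2}} \geq \frac{\delta'G'\Omega_{q}G\delta}{||\delta_{S_{\rho_{\star}}}||^{2}} - 16 s_{\star}O_p(\varepsilon_{n}^{G}) \geq \phi^{2}(s_{\star}) - O_p(s_{\star}\varepsilon_{n}^{G}).
\end{equation*}
Taking the infimum over the cone gives the claim.

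The only delicate point is making sure the $O_p$ term is uniform over $\delta$. This holds because the random factor is a single matrix norm, fixed before the infimum is taken. One should also note that the weight-matrix rate from Assumption \ref{ass:weight_matrix} is used, not just Assumption \ref{ass:boundedness}; the statement's repeated reference to Assumption \ref{ass:boundedness} appears to be a typo. Combined with Lemma \ref{lem:rho_star_aux2} ($s_{\star}\leq C\bar s$) and $\bar s\varepsilon_{n}\to 0$, this later gives $\hat\phi^{2}(s_{\star})\geq c/2$ with probability approaching one.
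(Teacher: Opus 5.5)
Your proof is correct and follows essentially the same route as the paper. The paper bounds each of the five cross-term quadratic forms separately via \eqref{eq:matrix_ineq2} and then applies the cone bound $|\delta|_1^2\le 16 s_\star||\delta_{S_{\rho_\star}}||^2$; you instead aggregate them through the sup-norm rate for $\hat G'\hat\Omega_q\hat G - G'\Omega_q G$ already obtained in the proof of Lemma \ref{lem:aux1}, together with $|\delta'A\delta|\le||A||_\infty|\delta|_1^2$, which is equivalent. You are also right that the paper's proof invokes Assumption \ref{ass:weight_matrix} alongside Assumption \ref{ass:boundedness}, so the duplicated reference in the statement is a typo.
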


\begin{proof}
The proof follows the proof of Lemma S3 in \cite{caner_kock2018}. By adding and subtracting $G$ and $\Omega_{q}$ and the reverse triangle inequality,
\begin{align*}
	|\delta'\hat G'\hat \Omega_{q} \hat G\delta| & = |\delta'(\hat G - G + G)'(\hat \Omega_{q} - \Omega_{q} + \Omega_{q})(\hat G - G + G)\delta| \\
	& \geq |\delta'G'\Omega_{q}G\delta| \\ 
	& - |\delta'(\hat G - G)'(\hat \Omega_{q} - \Omega_{q})(\hat G - G)\delta| \\
	& - |\delta'(\hat G - G)'\Omega_{q}(\hat G - G)\delta| \\
	& - |\delta'G'(\hat \Omega_{q} - \Omega_{q})G\delta| \\
	& - 2|\delta'(\hat G - G)'(\hat \Omega_{q} - \Omega_{q})G\delta| \\
	& - 2|\delta'(\hat G - G)'\Omega_{q}G\delta|.
\end{align*} 
The following inequality from \cite{caner_kock2018} will help us bound the expression above. For any $q \times p$ matrix $A$, $p \times q$ matrix $B$, $q \times q$ matrix $F$, and $p \times 1$ vector $x$ the following inequality holds
\begin{equation}\label{eq:matrix_ineq2}
	|x'BFAx| \leq q|x|^{2}_{1}||B||_{\infty}||F||_{\ell_{\infty}}||A||_{\infty}.
\end{equation}
Using \eqref{eq:matrix_ineq2}, we get
\begin{align*}
|\delta'(\hat G - G)'(\hat \Omega_{q} - \Omega_{q})(\hat G - G)\delta| & \leq |\delta|_{1}^{2}||\hat G - G||^{2}_{\infty}||\hat \Omega - \Omega||_{\ell_{\infty}} \\
|\delta'(\hat G - G)'\Omega_{q}(\hat G - G)\delta| & \leq |\delta|_{1}^{2}||\hat G - G||^{2}_{\infty}||\Omega||_{\ell_{\infty}} \\
|\delta'G'(\hat \Omega_{q} - \Omega_{q})G\delta| & \leq |\delta|_{1}^{2}||G||^{2}_{\infty}||\hat \Omega - \Omega||_{\ell_{\infty}} \\
2|\delta'(\hat G - G)'(\hat \Omega_{q} - \Omega_{q})G\delta| & \leq 2|\delta|_{1}^{2}||\hat G - G||_{\infty}||\hat \Omega - \Omega||_{\ell_{\infty}}||G||_{\infty} \\
2|\delta'(\hat G - G)'\Omega_{q}G\delta| & \leq 2|\delta|_{1}^{2}||\hat G - G||_{\infty}||\Omega||_{\ell_{\infty}}||G||_{\infty}.
\end{align*}
Combining the terms gives
\begin{align} \label{eq3}
|\delta'\hat G'\hat \Omega_{q} \hat G\delta| & \geq |\delta'G'\Omega_{q}G\delta| \\
& - |\delta|_{1}^{2}||\hat G - G||^{2}_{\infty}(||\hat \Omega - \Omega||_{\ell_{\infty}} + ||\Omega||_{\infty}) \notag \\
& - |\delta|_{1}^{2}||G||^{2}_{\infty}||\hat \Omega - \Omega||_{\ell_{\infty}} \notag \\
& - 2|\delta|_{1}^{2}||\hat G - G||_{\infty}||G||_{\infty}(||\hat \Omega - \Omega||_{\ell_{\infty}} + ||\Omega||_{\infty}) \notag
\end{align}

Recall, we have the restriction
\begin{equation*}
|\delta_{S_{\rho_{\star}}^{c}}|_{1} \leq 3|\delta_{S_{\rho_{\star}}}|_{1} \leq 3\sqrt{s_{\star}}||\delta_{S_{\rho_{\star}}}||
\end{equation*}
where the second inequality is Cauchy-Schwarz. Adding $|\delta_{S_{\rho_{\star}}}|$ to both sides gives
\begin{equation} \label{eq4}
	|\delta|_{1} \leq 4 \sqrt{s_{\star}}||\delta_{S_{\rho_{\star}}}|| \Rightarrow \frac{|\delta|_{1}^{2}}{||\delta_{S_{\rho_{\star}}}||^{2}} \leq 16s_{\star}.
\end{equation}
Divide \eqref{eq3} by $||\delta_{S_{\rho_{\star}}}||^{2}$ and use \eqref{eq4},
\begin{align*}
\frac{|\delta'\hat G'\hat \Omega_{q} \hat G\delta|}{||\delta_{S_{\rho_{\star}}}||^{2}} & \geq \frac{|\delta'G'\Omega_{q}G\delta|}{||\delta_{S_{\rho_{\star}}}||^{2}} \\
& - 16s_{\star}||\hat G - G||^{2}_{\infty}(||\hat \Omega - \Omega||_{\ell_{\infty}} + ||\Omega||_{\infty})  \\
& - 16s_{\star}||G||^{2}_{\infty}||\hat \Omega - \Omega||_{\ell_{\infty}}  \\
& - 32s_{\star}||\hat G - G||_{\infty}||G||_{\infty}(||\hat \Omega - \Omega||_{\ell_{\infty}} + ||\Omega||_{\infty}).
\end{align*}
Since $\frac{|\delta'G'\Omega_{q}G\delta|}{||\delta_{S_{\rho_{\star}}}||^{2}} \geq \phi^{2}(s_{\star})$ for all $\delta$ satisfying $|\delta_{S_{\rho_{\star}}^{c}}|_{1} \leq 3|\delta_{S_{\rho_{\star}}}|_{1}$, minimizing the LHS of the inequality above over such $\delta$ yields
\begin{equation*}
	\hat\phi^{2}(s_{\star}) \geq \phi^{2}(s_{\star}) - a_n,
\end{equation*}
where 
\begin{align*}
a_n & = 16s_{\star}||\hat G - G||^{2}_{\infty}(||\hat \Omega - \Omega||_{\ell_{\infty}} + ||\Omega||_{\infty})  \\
& + 16s_{\star}||G||^{2}_{\infty}||\hat \Omega - \Omega||_{\ell_{\infty}}  \\
& + 32s_{\star}||\hat G - G||_{\infty}||G||_{\infty}(||\hat \Omega - \Omega||_{\ell_{\infty}} + ||\Omega||_{\infty}).
\end{align*}
Using Assumptions \ref{ass:weight_matrix} and \ref{ass:boundedness} and Lemma \ref{lem:g_rate}, we can put an upper bound on $a_n$ as follows
\begin{align*}
16s_{\star}||\hat G - G||^{2}_{\infty}(||\hat \Omega - \Omega||_{\ell_{\infty}} + ||\Omega||_{\infty}) & = 16s_{\star}O_p((\varepsilon_{n}^{G})^{2})(O_p(\varepsilon^{G}_{n}) + O(1)) = O_p(s_{\star}(\varepsilon_{n}^{G})^{3}) \\
16s_{\star}||G||^{2}_{\infty}||\hat \Omega - \Omega||_{\ell_{\infty}} & =16s_{\star}O(1)O_p(\varepsilon^{G}_{n}) = O_p(s_{\star}\varepsilon^{G}_{n})\\
32s_{\star}||\hat G - G||_{\infty}||G||_{\infty}(||\hat \Omega - \Omega||_{\ell_{\infty}} + ||\Omega||_{\infty}) & = 32s_{\star}O_p(\varepsilon_{n}^{G})O(1)(s_{\star}  + O(1)) = O_p(s_{\star}(\varepsilon_{n}^{G})^{2}).
\end{align*}
Gathering the terms gives
\begin{equation*}
	a_{n} = O_p(s_{\star}\varepsilon_{n}^{G}),
\end{equation*}
which completes the proof.
\end{proof}

\subsubsection{Proof of Theorem \ref{thm:rr_bound}}

As $\hat{\Omega}$ is positive definite, we can write
\begin{equation*}
\hat{\rho}_{L} = \argmin_{\rho \in \R^{p}}\{||\hat{\Omega}^{1/2}_{q}(\hat M - \hat G \rho)||^{2} + 2\lambda_{n}|\rho|_{1}\}.
\end{equation*}
The minimizing property of $\hat \rho_{L}$ implies
\begin{equation} \label{eq:min_prop}
||\hat{\Omega}_{q}^{1/2}(\hat M - \hat G \hat\rho_{L})||^{2} + 2\lambda_{n}|\hat\rho_{L}|_{1} \leq ||\hat{\Omega}_{q}^{1/2}(\hat M - \hat G \rho_{\star})||^{2} + 2\lambda_{n}|\rho_{\star}|_{1}.
\end{equation}
First, observe that
\begin{align*}
||\hat{\Omega}_{q}^{1/2}(\hat M - \hat G \hat\rho_{L})||^{2} - ||\hat{\Omega}_{q}^{1/2}(\hat M - \hat G \rho_{\star})||^{2} & = (\hat M - \hat G \hat\rho_{L})'\hat\Omega_{q}(\hat M - \hat G \hat\rho_{L}) - (\hat M - \hat G \rho_{\star})'\hat\Omega_{q}(\hat M - \hat G \rho_{\star}) \\
& = \hat\rho_{L}\hat G'\hat\Omega_{q}\hat G\hat\rho_{L} - \rho_{\star}\hat G'\hat\Omega_{q}\hat G\rho_{\star} - 2(\hat G'\hat\Omega_{q}\hat M)'(\hat\rho_{L} - \rho_{\star}) \\
& = (\hat\rho_{L} - \rho_{\star})'\hat G'\hat\Omega_{q}\hat G(\hat\rho_{L} - \rho_{\star}) +2\rho_{\star}'\hat G'\hat\Omega_{q}\hat G(\hat\rho_{L} - \rho_{\star})  \\
& - 2(\hat G'\hat\Omega_{q}\hat M)'(\hat\rho_{L} - \rho_{\star}) \\
& = ||\hat\Omega_{q}^{1/2}\hat G(\hat\rho_{L} - \rho_{\star})||^{2} - 2(\hat G'\hat\Omega_{q}\hat M - \hat G'\hat\Omega_{q}\hat G\rho_{\star})'(\hat\rho_{L} - \rho_{\star}).
\end{align*}
Plug the expression above in \eqref{eq:min_prop} to get
\begin{align}
||\hat\Omega_{q}^{1/2}\hat G(\hat\rho_{L} - \rho_{\star})||^{2} + 2\lambda_{n}|\hat\rho_{L}|_{1} & \leq 2(\hat G'\hat\Omega_{q}\hat M - \hat G'\hat\Omega_{q}\hat G\rho_{\star})'(\hat\rho_{L} - \rho_{\star}) + 2\lambda_{n}|\rho_{\star}|_{1} \notag \\
& \leq 2||\hat G'\hat\Omega_{q}\hat M - \hat G'\hat\Omega_{q}\hat G\rho_{\star}||_{\infty}|\hat\rho_{L} - \rho_{\star}|_{1} + 2\lambda_{n}|\rho_{\star}|_{1} \notag \\
& = 2||\hat G'\hat\Omega_{q}(\hat M - \hat G\rho_{\star})||_{\infty}|\hat\rho_{L} - \rho_{\star}|_{1} + 2\lambda_{n}|\rho_{\star}|_{1} \notag \\
& = 2o_p(\lambda_{n})|\hat\rho_{L} - \rho_{\star}|_{1} + 2\lambda_{n}|\rho_{\star}|_{1} \label{eq:aux_rho} ,
\end{align}
where the second inequality is H\"{o}lder and the last equality comes from Lemma \ref{lem:aux1} and the fact that $\varepsilon_{n} = o(\lambda_{n})$. Hence, with probability approaching one,
\begin{equation*}
||\hat\Omega_{q}^{1/2}\hat G(\hat\rho_{L} - \rho_{\star})||^{2} + 2\lambda_{n}|\hat\rho_{L}|_{1} \leq 2\lambda_{n}|\hat\rho_{L} - \rho_{\star}|_{1} + 2\lambda_{n}|\rho_{\star}|_{1}.
\end{equation*}

Next, note that $|\hat\rho_{L}|_{1} = |\hat\rho_{L, S_{\rho_{\star}}}|_{1} + |\hat\rho_{L, S^{c}_{\rho_{\star}}}|_{1}$ and $|\rho_{\star}|_{1} = |\rho_{\star, S_{\rho_{\star}}}|_{1}$ as $|\rho_{\star, S^{c}_{\rho_{\star}}}|_{1} = 0$. Therefore,
\begin{align*}
||\hat\Omega_{q}^{1/2}\hat G(\hat\rho_{L} - \rho_{\star})||^{2} + 2\lambda_{n}|\hat\rho_{L,S^{c}_{\rho_{\star}}}|_{1} & \leq 2\lambda_{n}|\hat\rho_{L} - \rho_{\star}|_{1} + 2\lambda_{n}(|\rho_{\star, S_{\rho_{\star}}}|_{1} - |\hat\rho_{L, S_{\rho_{\star}}}|_{1})\\
& \leq 2\lambda_{n}|\hat\rho_{L} - \rho_{\star}|_{1} + 2\lambda_{n}|\hat\rho_{L, S_{\rho_{\star}}} - \rho_{\star, S_{\rho_{\star}}}|_{1},
\end{align*}
where the second line comes from the reverse triangle inequality. Using that $|\hat\rho_{L} - \rho_{\star}|_{1} = |\hat\rho_{L, S_{\rho_{\star}}} - \rho_{\star, S_{\rho_{\star}}}|_{1} + |\hat\rho_{L, S^{c}_{\rho_{\star}}}|_{1}$ gives
\begin{equation}\label{eq:re_aux}
||\hat\Omega_{q}^{1/2}\hat G(\hat\rho_{L} - \rho_{\star})||^{2} + \lambda_{n}|\hat\rho_{L,S^{c}_{\rho_{\star}}}|_{1} \leq 3\lambda_{n}|\hat\rho_{L, S_{\rho_{\star}}} - \rho_{\star, S_{\rho_{\star}}}|_{1}.
\end{equation}
The inequality in \eqref{eq:re_aux} implies $\lambda_{n}|\hat\rho_{L,S^{c}_{\rho_{\star}}}|_{1} \leq 3\lambda_{n}|\hat\rho_{L, S_{\rho_{\star}}} - \rho_{\star, S_{\rho_{\star}}}|_{1}$ leading to $|\hat\rho_{L,S^{c}_{\rho_{\star}}}|_{1} \leq 3|\hat\rho_{L, S_{\rho_{\star}}} - \rho_{\star, S_{\rho_{\star}}}|_{1}$, meaning that the restricted eigenvalue condition is satisfied. Note that by Cauchy-Schwarz inequality, $|\hat\rho_{L, S_{\rho_{\star}}} - \rho_{\star, S_{\rho_{\star}}}|_{1} \leq \sqrt{s_{\star}}||\hat\rho_{L, S_{\rho_{\star}}} - \rho_{\star, S_{\rho_{\star}}}||$. Using this along with the restricted eigenvalue condition on  \eqref{eq:re_aux} yields
\begin{equation*}
||\hat\Omega_{q}^{1/2}\hat G(\hat\rho_{L} - \rho_{\star})||^{2} + \lambda_{n}|\hat\rho_{L,S^{c}_{\rho_{\star}}}|_{1} \leq 3\lambda_{n}\sqrt{s_{\star}}||\hat\rho_{L, S_{\rho_{\star}}} - \rho_{\star, S_{\rho_{\star}}}|| \leq 3\lambda_{n}\sqrt{s_{\star}}\frac{||\hat\Omega_{q}^{1/2}\hat G'(\hat\rho_{L} - \rho_{\star})||}{\hat\phi(s_{\star})}.
\end{equation*}
Note that by AM-GM inequality, 
\begin{equation*}
||\hat\Omega_{q}^{1/2}\hat G(\hat\rho_{L} - \rho_{\star})||^{2} + \lambda_{n}|\hat\rho_{L,S^{c}_{\rho_{\star}}}|_{1} \leq \frac{1}{2}||\hat\Omega_{q}^{1/2}\hat G'(\hat\rho_{L} - \rho_{\star})||^{2} + \frac{9}{2}\frac{\lambda_{n}^{2}s_{\star}}{\hat\phi^{2}(s_{\star})}.
\end{equation*}
Multiplying both sides by $2$ and collecting terms gives
\begin{equation}\label{eq:aux_re_bound}
||\hat\Omega_{q}^{1/2}\hat G(\hat\rho_{L} - \rho_{\star})||^{2} + 2\lambda_{n}|\hat\rho_{L,S^{c}_{\rho_{\star}}}|_{1} \leq \frac{9\lambda_{n}^{2}s_{\star}}{\hat\phi^{2}(s_{\star})}.
\end{equation}

To get the $\ell_{1}$-error bound, ignore the first term on the LHS of \eqref{eq:re_aux} and add $\lambda_{n}|\hat\rho_{L, S_{\rho_{\star}}} - \rho_{\star, S_{\rho_{\star}}}|_{1}$ to both sides,
\begin{equation*}
\lambda_{n}|\hat\rho_{L} - \rho_{\star}|_{1} \leq 4\lambda_{n}|\hat\rho_{L, S_{\rho_{\star}}} - \rho_{\star, S_{\rho_{\star}}}|_{1}.
\end{equation*}
By Cauchy-Schwarz inequality and the restricted eigenvalue condition,
\begin{equation*}
\lambda_{n}|\hat\rho_{L} - \rho_{\star}|_{1} \leq 4\lambda_{n}\sqrt{s_{\star}}||\hat\rho_{L, S_{\rho_{\star}}} - \rho_{\star, S_{\rho_{\star}}}|| \leq 4\lambda_{n}\sqrt{s_{\star}}\frac{||\hat\Omega_{q}^{1/2}\hat G(\hat\rho_{L} - \rho_{\star})||}{\hat\phi(s_{\star})}.
\end{equation*}
The bound on $||\hat\Omega_{q}^{1/2}\hat G'(\hat\rho_{L} - \rho_{\star})||^{2}$ in \eqref{eq:aux_re_bound} implies
\begin{equation*}
|\hat\rho_{L} - \rho_{\star}|_{1} \leq \frac{12\lambda_{n}s_{\star}}{\hat\phi^{2}(s_{\star})}.
\end{equation*}

Next, by Lemma \ref{lem:re_bound} and $\varepsilon_{n} = o(\lambda_{n})$,
\begin{equation}\label{eq:rho_bound_phi}
|\hat\rho_{L} - \rho_{\star}|_{1} \leq \frac{12\lambda_{n}s_{\star}}{\phi^{2}(s_{\star}) - o_p(s_{\star}\lambda_{n})}.
\end{equation}
By Lemma \ref{lem:rho_star_aux2}, $s_\star = O(\bar{s})$, which together with $\bar{s}\lambda_n = o(1)$ implies $s_\star \lambda_n = o(1)$. Given $\phi^{2}(s_{\star})$ is bounded away from zero by Assumption \ref{ass:re_pgmm}, we get
\begin{equation}\label{eq:rho_bound}
|\hat\rho_{L} - \rho_{\star}|_{1} = O_p(\bar{s}\lambda_{n}).
\end{equation}
Finally, let $\alpha_{\star} = b(Z)'\rho_{\star}$, then by the triangle inequality and Lemma \ref{lem:rho_star_aux3},
\begin{equation*}
||\hat\alpha_{L} - \alpha_{0}||^{2} \leq 2||\hat\alpha_{L} - \alpha_{\star}||^{2} + 2||\alpha_{\star} - \alpha_{0}||^{2} \leq 2||\hat\alpha_{L} - \alpha_{\star}||^{2} + 2C\bar{s}\varepsilon_{n}^{2}.
\end{equation*}
By H\"{o}lder's inequality and \eqref{eq:rho_bound},
\begin{equation*}
||\hat\alpha_{L} - \alpha_{\star}||^{2} = (\hat\rho_{L} - \rho_{\star})'B(\hat\rho_{L} - \rho_{\star}) \leq ||B||_{\infty}|\hat\rho_{L} - \rho_{\star}|^{2}_{1} \leq O_p(\bar{s}^{2}\lambda_{n}^{2}).
\end{equation*}
The conclusion comes from the fact that $\bar{s}^{2}\lambda_{n}^{2} > \bar{s}^{2}\varepsilon_{n}^{2} \geq \bar{s}\varepsilon_{n}^{2}$, where the second inequality is due to $\bar{s}^{2}$ growing faster than $\bar{s}$.

\qed


\subsection{PGMM estimator with the diagonal weight matrix}
In this section, we provide useful lemmas to help us prove Theorem \ref{thm:rr_bound_diag_wm}. 

The diagonal weight matrix is constructed in two steps:

\begin{enumerate}
    \item Obtain a preliminary estimate $\tilde{\rho}$ by solving \eqref{eq:pgmm_rr_mat} with $\Omega = I_q$.
    \item Compute the diagonal weight matrix
    \begin{equation*}
    	\hat\Omega = \text{diag}\left(\hat{\sigma}_1^{-2}(\tilde{\rho}), \ldots, \hat{\sigma}_q^{-2}(\tilde{\rho})\right),
    \end{equation*}
\end{enumerate}
where the sample variance of the $j$-th moment is:
\begin{equation*}
\hat{\sigma}_j^2(\tilde{\rho}) = \frac{1}{n}\sum_{i=1}^n \psi_j^2(W_i, \tilde{\rho})
\end{equation*}

The population counterpart is:
\begin{equation*}
	\Omega = \text{diag}\left(\sigma_1^{-2}(\rho_{\star}), \ldots, \sigma_q^{-2}(\rho_{\star})\right)
\end{equation*}
where $\sigma_j^2(\rho_\star) \equiv \E[\psi_j^2(W, \rho_\star)]$ is the population variance and $\rho_{\star}$ is a population PGMM minimizer for the sparse approximation defined in \eqref{eq:rho_star_problem}.

Next, we make an additional assumption to ensure that population moments are non-degenerate and have non-zero variance.
\begin{assumption}[Non-degenerate Population Moments]\label{ass:nondegen}
	There exists $\underline{\sigma}^2 > 0$ such that
	\begin{equation}
		\min_{1 \leq j \leq q} \sigma_j^2(\rho_{\star}) \geq \underline{\sigma}^2.
	\end{equation}
\end{assumption}


\begin{lemma}\label{lem:bounded_psi}
Under Assumptions~\ref{ass:boundedness} and~\ref{ass:m_bound}, there exists a constant $C_\psi > 0$ such that for any $\rho$ with $|\rho|_1 \leq A$ for some constant $A > 0$, and with probability approaching one,
\begin{equation*}
\max_{1 \leq j \leq q} |\psi_j(W, \rho)| \leq C_\psi,
\end{equation*}
where $C_\psi = C_m + C_d C_b A$.
\end{lemma}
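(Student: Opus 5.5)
The plan is to bound each moment $\psi_j(W,\rho) = m(W,d_j) - d_j(X)b(Z)'\rho$ pointwise with the triangle inequality and H\"older's inequality, on the event where the boundedness assumptions hold. We work on the event $\mathcal{E}_n$ on which the bounds of Assumptions~\ref{ass:boundedness} and~\ref{ass:m_bound} hold simultaneously. Each assumption holds with probability approaching one, so their intersection does too. On $\mathcal{E}_n$ we have $\max_j|b_j(Z)|\le C_b$, $\max_j|d_j(X)|\le C_d$ and $\max_j|m(W,d_j)|\le C_m$, where $C_m$ is the constant $C$ of Assumption~\ref{ass:m_bound}.

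Fix $j\in\{1,\dots,q\}$ and any $\rho$ with $|\rho|_1\le A$. The triangle inequality gives
\begin{equation*}
|\psi_j(W,\rho)| \le |m(W,d_j)| + |d_j(X)|\,|b(Z)'\rho|.
\end{equation*}
H\"older's inequality ($\ell_\infty$--$\ell_1$ duality) gives $|b(Z)'\rho|\le \|b(Z)\|_\infty |\rho|_1 \le C_b A$. Hence on $\mathcal{E}_n$,
\begin{equation*}
|\psi_j(W,\rho)| \le C_m + C_d C_b A = C_\psi.
\end{equation*}
This bound depends neither on $j$ nor on the particular $\rho$ in the $\ell_1$-ball. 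Taking the maximum over $j$ therefore yields the claim uniformly over $\{\rho: |\rho|_1\le A\}$, with probability approaching one.

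The only delicate point is ensuring the statement is uniform in $\rho$, since it will later be applied at the random preliminary estimate $\tilde\rho$. This is handled by the observation that $\mathcal{E}_n$ does not depend on $\rho$, so no union bound over $\rho$ is needed. When applying the lemma to $\tilde\rho$ or $\rho_\star$, one additionally needs $|\tilde\rho|_1\le A$ with probability approaching one. For $\rho_\star$ this follows from the maintained bound $|\rho_\star|_1=O(1)$. For $\tilde\rho$ it follows from the $\ell_1$ rate \eqref{eq:rho_bound} (with $\Omega=I_q$) and $\bar s\lambda_n=o(1)$, taking for instance $A = |\rho_\star|_1+1$. That step belongs to the subsequent argument rather than to this lemma.
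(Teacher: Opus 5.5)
Your proposal is correct and follows the paper's proof essentially line for line. Both use the triangle inequality on $\psi_j(W,\rho)=m(W,d_j)-d_j(X)b(Z)'\rho$, H\"older's inequality $|b(Z)'\rho|\le \|b(Z)\|_\infty|\rho|_1\le C_bA$, and a maximum over $j$ on the event where Assumptions~\ref{ass:boundedness} and~\ref{ass:m_bound} hold. Your extra remark, that this event does not depend on $\rho$ and so the bound holds uniformly over the $\ell_1$-ball and can be used at the random $\tilde\rho$, is a helpful clarification that the paper leaves implicit.
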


\begin{proof}
By definition of the orthogonal moment
\begin{equation*}
\psi_j(W, \rho) = m(W, d_j) - d_j(X) b(Z)'\rho
\end{equation*}
Taking absolute values and applying the triangle inequality
\begin{equation*}
|\psi_j(W, \rho)| \leq |m(W, d_j)| + |d_j(X)| \cdot |b(Z)'\rho|
\end{equation*}
For the second term, apply H\"older's inequality
\begin{equation*}
|b(Z)'\rho| \leq ||b(Z)||_{\infty} |\rho|_1 \leq C_b A,
\end{equation*}
where we used Assumption~\ref{ass:boundedness} and $|\rho|_1 \leq A$. Therefore,
\begin{equation*}
|\psi_j(W, \rho)| \leq C_m + C_d C_b A
\end{equation*}
Taking the maximum over $j$, with probability approaching one, we get
\begin{equation*}
\max_j |\psi_j(W, \rho)| \leq C_m + C_d C_b A =: C_\psi.
\end{equation*}

\end{proof}


\begin{lemma}\label{lem:var_conv}
Under Assumptions~\ref{ass:boundedness} and \ref{ass:m_bound}, for any fixed $\rho$ with $|\rho|_1 \leq A$:
\begin{equation}
\max_{1 \leq j \leq q} \left|\hat{\sigma}_j^2(\rho) - \sigma_j^2(\rho)\right| = O_p\left(\sqrt{\frac{\log q}{n}}\right)
\end{equation}
\end{lemma}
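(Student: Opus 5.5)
The plan mirrors the proof of Lemma \ref{lem:g_rate}: a union bound over the $q$ moments, combined with Hoeffding's inequality for bounded summands. Fix $\rho$ with $|\rho|_1 \leq A$. By definition, $\hat\sigma_j^2(\rho) - \sigma_j^2(\rho) = \frac{1}{n}\sum_{i=1}^n T_{ij}$, where
\begin{equation*}
T_{ij} = \psi_j^2(W_i,\rho) - \E[\psi_j^2(W,\rho)].
\end{equation*}
Since $\rho$ is fixed (non-random), the $T_{ij}$ are i.i.d. across $i$ with mean zero. By Lemma \ref{lem:bounded_psi}, $|\psi_j(W_i,\rho)| \leq C_\psi$ uniformly in $j$, with probability approaching one. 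Hence $|T_{ij}| \leq 2C_\psi^2$, so each $T_{ij}$ is sub-Gaussian with $\|T_{ij}\|_{\Psi_2} \leq K := 2C_\psi^2/\log 2$, a constant independent of $j$ and $n$.

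Next, for a constant $C$ I would write
\begin{equation*}
\P\Big(\max_{j}|\hat\sigma_j^2(\rho) - \sigma_j^2(\rho)| \geq C\sqrt{\log q/n}\Big) \leq q \max_j \P\Big(\Big|\tfrac{1}{n}\textstyle\sum_i T_{ij}\Big| \geq C\sqrt{\log q/n}\Big).
\end{equation*}
Applying Hoeffding's inequality (Theorem 2.6.2 of \cite{vershynin2018high}) bounds the right-hand side by
\begin{equation*}
2q\exp\big(-cC^2\log q/K^2\big) = 2\exp\big(\log q\,[1 - cC^2/K^2]\big).
\end{equation*}
This tends to zero once $C > K/\sqrt{c}$, provided $q \to \infty$. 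If $q$ stays bounded, the factor $\log q$ in the exponent cannot be relied on. In that case I would instead invoke Chebyshev's inequality on each of the finitely many moments, which gives the $O_p(n^{-1/2})$ rate directly, and this rate is no slower than the stated one.

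The only delicate point is that the boundedness in Assumptions \ref{ass:boundedness} and \ref{ass:m_bound} holds only ``with probability approaching one'' rather than almost surely. I would handle this by working on the event $\mathcal{E}_n$ on which the bounds hold, whose probability tends to one. To keep Hoeffding applicable, I would truncate: apply the inequality to the truncated variables $\psi_j^2\mathds{1}\{|\psi_j|\leq C_\psi\}$, which are genuinely bounded, and note that the truncation changes the population means by a negligible amount. An alternative is simply to treat the bounds as almost sure, as the paper implicitly does in Lemma \ref{lem:g_rate}; I would follow that convention for consistency. Finally, I would stress that the result is stated for fixed $\rho$. Extending it to the random $\tilde\rho$ requires a separate Lipschitz argument in $\rho$, namely $|\psi_j^2(\rho) - \psi_j^2(\rho')| \leq 2C_\psi C_dC_b|\rho-\rho'|_1$, which belongs to a subsequent lemma and is not needed here.
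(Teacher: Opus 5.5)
Your proposal is correct and is essentially the paper's own proof. It uses the same centered summands $T_{ij}$, the bound $|T_{ij}|\le 2C_\psi^2$ from Lemma~\ref{lem:bounded_psi}, Hoeffding with $K=2C_\psi^2/\log 2$, and a union bound over the $q$ moments at $t=C\sqrt{\log q/n}$, and, as you propose, the boundedness is treated as almost sure. Your extra caveats are unnecessary: for fixed $q\ge 2$ the bound $2q^{1-cC^2/K^2}$ can already be made arbitrarily small by enlarging $C$, which is all $O_p$ requires, and the truncation route would need separate control of the discarded tail mean at rate $\sqrt{\log q/n}$ uniformly in $j$, so following the almost-sure convention is the cleaner choice.
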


\begin{proof}
The proof is similar to the proof of Lemma \ref{lem:g_rate}. Let $T_{ij}(\rho) = \psi_j^2(W_i, \rho) - \sigma_j^2(\rho)$. Note that $\E[T_{ij}(\rho)] = 0$ and $\{T_{ij}(\rho)\}_{i=1}^n$ are i.i.d.\ for fixed $j$ and $\rho$. The sample variance satisfies
\begin{equation*}
\hat{\sigma}_j^2(\rho) - \sigma_j^2(\rho) = \frac{1}{n}\sum_{i=1}^n T_{ij}(\rho).
\end{equation*}
By Lemma~\ref{lem:bounded_psi}, $|\psi_j(W, \rho)| \leq C_\psi$ with probability approaching one. Therefore,
\begin{equation*}
|T_{ij}(\rho)| \leq |\psi_j^2(W_i, \rho)| + \sigma_j^2(\rho) \leq C_\psi^2 + C_\psi^2 = 2C_\psi^2.
\end{equation*}
Since $T_{ij}(\rho)$ is bounded, it is sub-Gaussian with parameter $K = 2C_\psi^2/\log 2$. By Hoeffding's inequality for bounded random variables, for any $t > 0$ there is a constant $c$ such that,
\begin{equation*}
\P\left(\left|\frac{1}{n}\sum_{i=1}^n T_{ij}(\rho)\right| \geq t\right) \leq 2\exp\left(-\frac{cnt^2}{K^2}\right).
\end{equation*}
Applying the union bound gives, 
\begin{equation*}
\P\left(\max_{1 \leq j \leq q}\left|\hat{\sigma}_j^2(\rho) - \sigma_j^2(\rho)\right| \geq t\right) \leq 2q\exp\left(-\frac{cnt^2}{K^2}\right).
\end{equation*}
Set $t = C\sqrt{\frac{\log q}{n}}$ for a sufficiently large constant $C > 0$. Then,
\begin{equation*}
2q\exp\left(-\frac{cnt^2}{K^2}\right) = 2q\exp\left(-\frac{cC^2\log q}{K^2}\right) = 2\exp\left(\log q\left[1 - \frac{cC^2}{K^2}\right]\right) \rightarrow 0
\end{equation*}
for any $C > K\sqrt{1/c}$. Therefore, 
\begin{equation*}
	\P\left(\max_{1 \leq j \leq q}\left|\hat{\sigma}_j^2(\rho) - \sigma_j^2(\rho)\right| \geq C\sqrt{\frac{\log q}{n}}\right) \rightarrow 0,
\end{equation*}
which completes the proof.
\end{proof}


\begin{lemma}\label{lem:lipschitz_var}
Under Assumptions~\ref{ass:boundedness} and~\ref{ass:m_bound}, for any $\rho, \rho' \in \R^p$ with $|\rho|_1, |\rho'|_1 \leq A$:
\begin{equation}
\max_{1 \leq j \leq q} |\sigma_j^2(\rho) - \sigma_j^2(\rho')| \leq L |\rho - \rho'|_1,
\end{equation}
where $L = 2C_\psi C_d C_b$ is a Lipschitz constant.
\end{lemma}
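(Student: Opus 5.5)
The plan is to write the difference of variances as an expectation of a difference of squares and then bound each factor pointwise, using the boundedness from Lemma~\ref{lem:bounded_psi}. For each $j$, the identity $a^2-b^2=(a+b)(a-b)$ gives
\begin{equation*}
\sigma_j^2(\rho) - \sigma_j^2(\rho') = \E\left[\{\psi_j(W,\rho) + \psi_j(W,\rho')\}\{\psi_j(W,\rho) - \psi_j(W,\rho')\}\right].
\end{equation*}
The first factor is handled by Lemma~\ref{lem:bounded_psi}. Since $|\rho|_1, |\rho'|_1 \le A$, we have $|\psi_j(W,\rho)| + |\psi_j(W,\rho')| \le 2C_\psi$. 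This holds on the event where the dictionary and $m$ bounds of Assumptions~\ref{ass:boundedness} and~\ref{ass:m_bound} are in force.

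Next I use linearity of the moment in $\rho$. The $m(W,d_j)$ terms cancel, so
\begin{equation*}
\psi_j(W,\rho) - \psi_j(W,\rho') = -d_j(X)\, b(Z)'(\rho - \rho').
\end{equation*}
By H\"older's inequality and Assumption~\ref{ass:boundedness}, this is bounded in absolute value by $C_d C_b |\rho-\rho'|_1$. Multiplying the two bounds, taking expectations and then the maximum over $j$ gives
\begin{equation*}
\max_j|\sigma_j^2(\rho)-\sigma_j^2(\rho')| \le 2C_\psi C_d C_b|\rho-\rho'|_1 = L|\rho-\rho'|_1.
\end{equation*}

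The one delicate point is that Assumptions~\ref{ass:boundedness} and~\ref{ass:m_bound} are stated ``with probability approaching one'', whereas $\sigma_j^2$ is a population expectation. To pass the pointwise bounds inside $\E$, I would read these assumptions as almost-sure bounds on the support of $(X,Z,W)$. This is the same convention implicitly used when Lemma~\ref{lem:g_rate} and Lemma~\ref{lem:var_conv} bound $\E|T_{ijk}|$ and $\sigma_j^2(\rho)$. With that reading the argument is immediate and no other step is an obstacle.
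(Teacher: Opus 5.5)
Your proof is correct and gives the same constant $L = 2C_\psi C_d C_b$, but by a slightly different route from the paper.

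\textbf{The paper's route.} The paper writes $\sigma_j^2(\rho)$ as a quadratic in $\rho$ and computes $\nabla_\rho\sigma_j^2(\rho) = -2\E[\psi_j(W,\rho)d_j(X)b(Z)]$. It applies the mean value theorem at an intermediate point $\bar\rho$ and uses H\"older to reduce to $\|\nabla_\rho\sigma_j^2(\bar\rho)\|_\infty|\rho-\rho'|_1$. It then bounds each gradient coordinate by Cauchy--Schwarz, $|\E[\psi_j(W,\bar\rho)d_j(X)b_k(Z)]| \le \sqrt{\E[\psi_j^2(W,\bar\rho)]}\sqrt{\E[d_j^2(X)b_k^2(Z)]} \le C_\psi C_d C_b$.

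\textbf{How the two relate.} The two arguments rest on the same identity. Because $\sigma_j^2$ is quadratic in $\rho$, the mean-value point can be taken to be the midpoint $(\rho+\rho')/2$. Linearity of $\psi_j$ in $\rho$ then gives $2\psi_j(W,(\rho+\rho')/2) = \psi_j(W,\rho)+\psi_j(W,\rho')$, which is exactly your factor in the difference of squares.

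\textbf{What each buys.} Your version is shorter and uses no calculus. It evaluates $\psi_j$ only at the two endpoints, so it sidesteps the check that $|\bar\rho|_1 \le A$. The paper needs that check to apply Lemma~\ref{lem:bounded_psi} at $\bar\rho$; it holds by convexity of the $\ell_1$-ball but is left unstated. The paper's Cauchy--Schwarz step nominally needs only second-moment bounds on $\psi_j$ and $d_j b_k$, not sup bounds. Here, though, those second-moment bounds are themselves derived from the sup bounds, so no generality is gained.

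Your remark about reading Assumptions~\ref{ass:boundedness} and~\ref{ass:m_bound} as almost-sure bounds is apt. The paper's proof makes the same implicit move when it places $C_\psi$ and $C_d C_b$ inside population expectations.
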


\begin{proof}
Since $\psi_j(W, \rho) = m_j(W) - g_j(W)'\rho$ is linear in $\rho$, the population variance is
\begin{align*}
\sigma_j^2(\rho) 	& = \E[\psi_j^2(W, \rho)] = \E\left[(m_j(W) - g_j(W)'\rho)^2\right] \\
					& = \E[m_j^2(W)] - 2\E[m_j(W) g_j(W)]'\rho + \rho'\E[g_j(W) g_j(W)']\rho.
\end{align*}
The gradient with respect to $\rho$ is
\begin{align*}
	\nabla_\rho \sigma_j^2(\rho) 	& = -2\E[m_j(W) g_j(W)] + 2\E[g_j(W) g_j(W)']\rho \\
									& = -2\E[(m_j(W) - g_j(W)'\rho)g_j(W)] \\
									& = -2\E[\psi_j(W, \rho) g_j(W)].
\end{align*}

By the multivariate mean value theorem, there exists $\bar{\rho}$ on the line segment between $\rho$ and $\rho'$ such that:
\begin{equation*}
\sigma_j^2(\rho) - \sigma_j^2(\rho') = \nabla_\rho \sigma_j^2(\bar{\rho}_j)'(\rho - \rho').
\end{equation*}
Therefore, by H\"older's inequality:
\begin{equation*}
|\sigma_j^2(\rho) - \sigma_j^2(\rho')| \leq ||\nabla_\rho \sigma_j^2(\bar{\rho})||_\infty|\rho - \rho'|_1.
\end{equation*}
The $k$-th component of the gradient is
\begin{equation}
\left[\nabla_\rho \sigma_j^2(\bar{\rho})\right]_k = -2\E[\psi_j(W, \bar{\rho}) d_j(X) b_k(Z)].
\end{equation}
Applying the Cauchy-Schwarz inequality gives
\begin{equation}
\left|\E[\psi_j(W, \bar{\rho}_j) d_j(X) b_k(Z)]\right| \leq \sqrt{\E[\psi_j^2(W, \bar{\rho}_j)]} \cdot \sqrt{\E[d_j^2(X) \cdot b_k^2(Z)]}.
\end{equation}
By Lemma~\ref{lem:bounded_psi}, $\sqrt{\E[\psi_j^2(W, \bar{\rho}_j)]} \leq C_\psi$. By Assumption~\ref{ass:boundedness}, $\sqrt{\E[d_j^2(X) b_k^2(Z)]} \leq C_d C_b$. Therefore,
\begin{equation}
||\nabla_\rho \sigma_j^2(\bar{\rho})||_\infty \leq 2 C_\psi C_d C_b.
\end{equation}
Taking the maximum over $j$ and setting $L = 2C_\psi C_d C_b$ gives the result
\begin{equation}
\max_j |\sigma_j^2(\rho) - \sigma_j^2(\rho')| \leq L |\rho - \rho'|_1.
\end{equation}

\end{proof}


\begin{lemma}\label{lem:var_lower_pop}
Suppose Assumptions~\ref{ass:boundedness}--\ref{ass:m_bound} and \ref{ass:nondegen} hold, and let $\bar{s}\lambda_n \rightarrow 0$. Then with probability approaching one,
\begin{equation*}
\min_{1 \leq j \leq q} \sigma_j^2(\tilde{\rho}) \geq \frac{3\underline{\sigma}^2}{4}.
\end{equation*}
\end{lemma}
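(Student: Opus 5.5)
The approach is to combine the Lipschitz continuity of the population variances (Lemma \ref{lem:lipschitz_var}) with the non-degeneracy condition at $\rho_\star$ (Assumption \ref{ass:nondegen}). To use them together, I need $\tilde\rho$ to be close to $\rho_\star$ in $\ell_1$ and to lie in an $\ell_1$-ball of fixed radius $A$.

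The first stage $\tilde\rho$ is the PGMM estimator with $\hat\Omega = \Omega = I_q$. Assumption \ref{ass:weight_matrix} is then satisfied trivially: $\|\hat\Omega - \Omega\|_{\ell_\infty} = 0$ and $\|I_q\|_{\ell_\infty} = 1$. The remaining assumptions needed for the proof of Theorem \ref{thm:rr_bound} are in force, so the argument leading to \eqref{eq:rho_bound} applies verbatim with identity weighting. There is one caveat. The $\rho_\star$ in that argument is defined through \eqref{eq:rho_star_problem} with $\Omega_q = I_q/q$, and I will take this as the same $\rho_\star$ entering the definition of $\Omega$, as the construction intends. The argument gives
\begin{equation*}
|\tilde\rho - \rho_\star|_1 = O_p(\bar s\lambda_n) = o_p(1),
\end{equation*}
where the last step uses $\bar s\lambda_n \to 0$.

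Next I verify the boundedness requirement. Since $|\rho_\star|_1 = O(1)$, say $|\rho_\star|_1 \le A_0$, the triangle inequality gives $|\tilde\rho|_1 \le A_0 + 1 =: A$ with probability approaching one. On that event both $\tilde\rho$ and $\rho_\star$ lie in the $\ell_1$-ball of radius $A$, so Lemma \ref{lem:lipschitz_var} applies with its constant $L = 2C_\psi C_d C_b$, which does not depend on $n$. It yields
\begin{equation*}
\max_{1\le j\le q}|\sigma_j^2(\tilde\rho) - \sigma_j^2(\rho_\star)| \le L|\tilde\rho - \rho_\star|_1 = o_p(1).
\end{equation*}
Here $\sigma_j^2(\tilde\rho)$ means the population function $\rho \mapsto \E[\psi_j^2(W,\rho)]$ evaluated at the random point $\tilde\rho$. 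Lemma \ref{lem:lipschitz_var} is a deterministic inequality, so plugging in a random argument is legitimate.

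To conclude, on the event where $L|\tilde\rho - \rho_\star|_1 \le \underline\sigma^2/4$, which has probability approaching one, every $j$ satisfies
\begin{equation*}
\sigma_j^2(\tilde\rho) \ge \sigma_j^2(\rho_\star) - \underline\sigma^2/4 \ge 3\underline\sigma^2/4,
\end{equation*}
by Assumption \ref{ass:nondegen}. This is the stated bound.

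The main obstacle is justifying the first-stage $\ell_1$ rate. Theorem \ref{thm:rr_bound} is stated only for the $L_2$ error of $\hat\alpha$, and its proof uses the restricted eigenvalue condition for the particular $\Omega$ in question. I will appeal directly to \eqref{eq:rho_bound} in the proof of Theorem \ref{thm:rr_bound}, with Assumption \ref{ass:re_pgmm} read for $\Omega = I_q$ at the preliminary stage. I will note explicitly that the rate is uniform enough to be $o_p(1)$, which is all the argument requires.
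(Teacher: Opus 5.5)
This is correct and is the paper's argument: you combine Lemma~\ref{lem:lipschitz_var} with Assumption~\ref{ass:nondegen} at $\rho_\star$, and you control $|\tilde\rho-\rho_\star|_1 = O_p(\bar s\lambda_n) = o_p(1)$ through the identity-weight case of Theorem~\ref{thm:rr_bound} (which, as in the paper, tacitly also needs $\varepsilon_n = o(\lambda_n)$ at the first stage). Your extra checks fill in steps the paper leaves implicit and are handled correctly: that $\tilde\rho$ stays in a fixed $\ell_1$-ball with probability approaching one, so Lemma~\ref{lem:lipschitz_var} applies, and that the rate you need is \eqref{eq:rho_bound} from inside the proof rather than the theorem's stated $L_2$ bound.
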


\begin{proof}
For each $j$, by the reverse triangle inequality and Lemma~\ref{lem:lipschitz_var},
\begin{equation*}
\sigma_j^2(\tilde{\rho}) \geq \sigma_j^2(\rho_{\star}) - L|\tilde{\rho} - \rho_{\star}|_1.
\end{equation*}
Taking the minimum on both sides and applying Assumption \ref{ass:nondegen} gives
\begin{equation}\label{eq:min_aux1}
\min_{1 \leq j \leq q}\sigma_j^2(\tilde{\rho}) \geq \min_{1 \leq j \leq q}\sigma_j^2(\rho_{\star}) - L|\tilde{\rho} - \rho_{\star}|_1 \geq \underline{\sigma}^2 - L|\tilde{\rho} - \rho_{\star}|_1.
\end{equation}

With the identity weight matrix, Assumption \ref{ass:weight_matrix} is trivially satisfied, hence, by Theorem \ref{thm:rr_bound}, $|\tilde\rho - \rho_{\star}|_1 = O_p(\bar{s}\lambda_n)$. Moreover, $|\tilde\rho - \rho_{\star}|_1 = o_p(1)$ since $\bar{s}\lambda_n \rightarrow 0$. Thus, for any $\epsilon > 0$, with probability approaching one,
\begin{equation}\label{eq:min_aux2}
|\tilde{\rho} - \rho_{\star}|_1 \leq \frac{\epsilon}{L}.
\end{equation}
Substituting \eqref{eq:min_aux2} into \eqref{eq:min_aux1} and taking $\epsilon = \underline{\sigma}^2/4$ gives
\begin{equation*}
\min_{1 \leq j \leq q}\sigma_j^2(\tilde{\rho}) \geq \underline{\sigma}^2 - \frac{\underline{\sigma}^2}{4} = \frac{3\underline{\sigma}^2}{4}
\end{equation*}
with probability approaching one.
\end{proof}


\begin{lemma}\label{lem:var_lower_sample}
Suppose Assumptions~\ref{ass:boundedness}--\ref{ass:m_bound} and \ref{ass:nondegen} hold, and let $\bar{s}\lambda_n \rightarrow 0$. Then with probability approaching one,
\begin{equation*}
\min_{1 \leq j \leq q} \hat{\sigma}_j^2(\tilde{\rho}) \geq \frac{\underline{\sigma}^2}{2}.
\end{equation*}
\end{lemma}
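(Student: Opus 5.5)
We bound the sample variances evaluated at the random point $\tilde\rho$ by splitting
\begin{equation*}
\hat\sigma_j^2(\tilde\rho) = \sigma_j^2(\tilde\rho) + \left[\hat\sigma_j^2(\tilde\rho) - \sigma_j^2(\tilde\rho)\right],
\end{equation*}
so that
\begin{equation*}
\min_j \hat\sigma_j^2(\tilde\rho) \geq \min_j \sigma_j^2(\tilde\rho) - \max_j |\hat\sigma_j^2(\tilde\rho) - \sigma_j^2(\tilde\rho)|.
\end{equation*}
By Lemma \ref{lem:var_lower_pop}, the first term is at least $3\underline\sigma^2/4$ with probability approaching one. It therefore suffices to show that the deviation term is $o_p(1)$, and in particular at most $\underline\sigma^2/4$ with probability approaching one.

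The difficulty is that Lemma \ref{lem:var_conv} only holds for \emph{fixed} $\rho$, while $\tilde\rho$ is data dependent. We avoid a uniform law over an $\ell_1$-ball by anchoring at the non-random point $\rho_\star$:
\begin{equation*}
|\hat\sigma_j^2(\tilde\rho) - \sigma_j^2(\tilde\rho)| \leq |\hat\sigma_j^2(\tilde\rho) - \hat\sigma_j^2(\rho_\star)| + |\hat\sigma_j^2(\rho_\star) - \sigma_j^2(\rho_\star)| + |\sigma_j^2(\rho_\star) - \sigma_j^2(\tilde\rho)|.
\end{equation*}
We handle the three terms as follows.
\begin{itemize}
\item The middle term is $O_p(\sqrt{\log q/n}) = o_p(1)$ uniformly in $j$ by Lemma \ref{lem:var_conv}, applied at the fixed $\rho_\star$ with $|\rho_\star|_1 = O(1)$.
\item The last term is at most $L|\tilde\rho - \rho_\star|_1$ by Lemma \ref{lem:lipschitz_var}.
\item The first term needs a sample analogue of Lemma \ref{lem:lipschitz_var}. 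Writing $\psi_j(W_i,\rho) = m(W_i,d_j) - d_j(X_i)b(Z_i)'\rho$, we have $a^2 - b^2 = (a-b)(a+b)$, with $|\psi_j(W_i,\tilde\rho) - \psi_j(W_i,\rho_\star)| \leq C_dC_b|\tilde\rho - \rho_\star|_1$ and $|\psi_j(W_i,\tilde\rho) + \psi_j(W_i,\rho_\star)| \leq 2C_\psi$ by Lemma \ref{lem:bounded_psi}. Averaging over $i$ gives a bound of $L|\tilde\rho-\rho_\star|_1$ uniformly in $j$.
\end{itemize}
These pathwise bounds hold on the event where Assumptions \ref{ass:boundedness} and \ref{ass:m_bound} hold for all sample points, which occurs with probability approaching one. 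Lemma \ref{lem:bounded_psi} requires $|\tilde\rho|_1 \leq A$. This follows from $|\tilde\rho|_1 \leq |\rho_\star|_1 + |\tilde\rho - \rho_\star|_1 = O(1) + o_p(1)$, which holds with probability approaching one for a suitable constant $A$.

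Finally, Theorem \ref{thm:rr_bound} applies under the identity weight matrix, where Assumption \ref{ass:weight_matrix} holds trivially; its proof gives \eqref{eq:rho_bound}, namely $|\tilde\rho - \rho_\star|_1 = O_p(\bar s\lambda_n) = o_p(1)$. Combining,
\begin{equation*}
\max_j|\hat\sigma_j^2(\tilde\rho) - \sigma_j^2(\tilde\rho)| \leq 2L|\tilde\rho - \rho_\star|_1 + O_p\left(\sqrt{\log q/n}\right) = o_p(1),
\end{equation*}
so this term is below $\underline\sigma^2/4$ with probability approaching one. Intersecting with the event of Lemma \ref{lem:var_lower_pop} yields
\begin{equation*}
\min_j \hat\sigma_j^2(\tilde\rho) \geq \frac{3\underline\sigma^2}{4} - \frac{\underline\sigma^2}{4} = \frac{\underline\sigma^2}{2}
\end{equation*}
with probability approaching one.

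The main obstacle is the randomness of $\tilde\rho$. The anchoring step at $\rho_\star$, together with the sample Lipschitz bound, is what resolves it. The one delicate point is that Lemma \ref{lem:bounded_psi} must hold simultaneously for all $n$ observations, not just for a generic $W$. We treat this as part of the "with probability approaching one" clause of Assumption \ref{ass:boundedness}, as the paper does.
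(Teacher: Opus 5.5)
Your argument is correct, and its skeleton is the same as the paper's: the reverse triangle inequality, Lemma \ref{lem:var_lower_pop} for $\min_j\sigma_j^2(\tilde\rho)$, and then showing the deviation term is $o_p(1)$.

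Where you differ is in how you control $\max_j|\hat\sigma_j^2(\tilde\rho)-\sigma_j^2(\tilde\rho)|$. The paper cites Lemma \ref{lem:var_conv} directly at $\tilde\rho$ and writes the result as $O_p(\bar s\lambda_n)$. But that lemma is a fixed-$\rho$ Hoeffding bound, and $\tilde\rho$ is computed from the same sample that enters $\hat\sigma_j^2$. Your anchoring at the deterministic $\rho_\star$ is what actually licenses that step. It uses concentration only at $\rho_\star$, Lemma \ref{lem:lipschitz_var} for the population map, and a pathwise sample Lipschitz bound via $a^2-b^2=(a-b)(a+b)$. It also explains the paper's rate: your bound $2L|\tilde\rho-\rho_\star|_1+O_p(\sqrt{\log q/n})$ is $O_p(\bar s\lambda_n)$ once $\varepsilon_n=o(\lambda_n)$.

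The cost of your route is two extra requirements, both of which you handle:
\begin{itemize}
\item the bounds of Assumptions \ref{ass:boundedness} and \ref{ass:m_bound} must hold at all $n$ sample points simultaneously;
\item $|\tilde\rho|_1\le A$ must hold with probability approaching one.
\end{itemize}
The paper itself uses the same all-sample-points reading elsewhere, e.g.\ in the bound on $\max_{i}|\hat\alpha_\ell(Z_i)|$ in the proof of Theorem \ref{thm:asy_norm_lin}.

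One caveat: invoking \eqref{eq:rho_bound} tacitly requires $\varepsilon_n=o(\lambda_n)$, which is not among the lemma's stated hypotheses. The paper's proof of Lemma \ref{lem:var_lower_pop} relies on the same unstated condition.
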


\begin{proof}
By the reverse triangle inequality and Lemma~\ref{lem:var_conv},
\begin{align*}
	\min_{1 \leq j \leq q} \hat{\sigma}_j^2(\tilde{\rho}) & \geq \min_{1 \leq j \leq q} \sigma_j^2(\tilde{\rho}) - \max_{1 \leq j \leq q}|\hat{\sigma}_j^2(\tilde{\rho}) - \sigma_j^2(\tilde{\rho})| \\
	& \geq \min_{1 \leq j \leq q} \sigma_j^2(\tilde{\rho}) - O_p(\bar{s}\lambda_n).
\end{align*}
Combining the result in Lemma~\ref{lem:var_lower_pop} with $\bar{s}\lambda_n \rightarrow 0$ gives
\begin{equation*}
\min_{1 \leq j \leq q} \hat{\sigma}_j^2(\tilde{\rho}) \geq \frac{3\underline{\sigma}^2}{4} - o_p(1) \geq \frac{\underline{\sigma}^2}{2}.
\end{equation*}
with probability approaching one.
\end{proof}


\begin{lemma}\label{lem:wm_rate}
Suppose Assumptions~\ref{ass:boundedness}--\ref{ass:m_bound} and \ref{ass:nondegen} hold. Define the population weight matrix
\begin{equation}
\Omega^{d} = \text{diag}\left(\sigma_1^{-2}(\rho_{\star}), \ldots, \sigma_q^{-2}(\rho_{\star})\right),
\end{equation}
and the sample weight matrix
\begin{equation}
\hat{\Omega}^{d} = \text{diag}\left(\hat{\sigma}_1^{-2}(\tilde{\rho}), \ldots, \hat{\sigma}_q^{-2}(\tilde{\rho}),\right)
\end{equation}
where $\tilde\rho$ is the preliminary PGMM estimator. Let $\bar{s}\lambda_n \rightarrow 0$, then
\begin{enumerate}
    \item $||\Omega^{d}||_{\ell_{\infty}} \leq \underline{\sigma}^{-2} < \infty$
    \item $||\hat\Omega^{d} - \Omega^{d}||_{\ell_{\infty}} = O_p(\bar{s}\lambda_n)$
\end{enumerate}
\end{lemma}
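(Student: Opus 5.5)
Because both matrices are diagonal, the induced $\ell_\infty$-norm reduces to the largest absolute diagonal entry. The plan is therefore to prove both parts as uniform-in-$j$ statements about the scalars $\sigma_j^{-2}(\rho_\star)$ and $\hat\sigma_j^{-2}(\tilde\rho)$.

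For part 1, note that $||\Omega^{d}||_{\ell_\infty} = \max_j \sigma_j^{-2}(\rho_\star)$. By Assumption \ref{ass:nondegen}, $\min_j \sigma_j^2(\rho_\star)\geq\underline{\sigma}^2$, so this maximum is at most $\underline{\sigma}^{-2}$, and part 1 follows immediately.

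For part 2, I would write each diagonal difference as
\begin{equation*}
\hat\sigma_j^{-2}(\tilde\rho)-\sigma_j^{-2}(\rho_\star)=\frac{\sigma_j^{2}(\rho_\star)-\hat\sigma_j^{2}(\tilde\rho)}{\hat\sigma_j^{2}(\tilde\rho)\,\sigma_j^{2}(\rho_\star)}.
\end{equation*}
The denominator is bounded below uniformly in $j$. With probability approaching one, Lemma \ref{lem:var_lower_sample} gives $\hat\sigma_j^2(\tilde\rho)\geq\underline{\sigma}^2/2$, and Assumption \ref{ass:nondegen} gives $\sigma_j^2(\rho_\star)\geq\underline{\sigma}^2$. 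Hence the denominator is at least $\underline{\sigma}^4/2$, and it remains to show that $\max_j|\hat\sigma_j^2(\tilde\rho)-\sigma_j^2(\rho_\star)|=O_p(\bar s\lambda_n)$. I would split this numerator as
\begin{equation*}
\max_j|\hat\sigma_j^2(\tilde\rho)-\sigma_j^2(\tilde\rho)|+\max_j|\sigma_j^2(\tilde\rho)-\sigma_j^2(\rho_\star)|.
\end{equation*}
The second term is at most $L|\tilde\rho-\rho_\star|_1$ by Lemma \ref{lem:lipschitz_var}. With the identity weight matrix, the proof of Theorem \ref{thm:rr_bound} (equation \eqref{eq:rho_bound}) gives $|\tilde\rho-\rho_\star|_1=O_p(\bar s\lambda_n)$. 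To apply the Lipschitz lemma we need $|\tilde\rho|_1\leq A$ with probability approaching one. This follows from $|\rho_\star|_1=O(1)$ together with $\bar s\lambda_n\to0$.

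The first term is the main obstacle. Lemma \ref{lem:var_conv} is stated for a fixed $\rho$, but $\tilde\rho$ is random and is estimated on the same sample. I would handle this by exploiting the quadratic structure in $\rho$:
\begin{equation*}
\hat\sigma_j^2(\rho)-\sigma_j^2(\rho)=(\hat a_j-a_j)-2(\hat c_j-c_j)'\rho+\rho'(\hat H_j-H_j)\rho.
\end{equation*}
Here $a_j=\E[m(W,d_j)^2]$, $c_j=\E[m(W,d_j)d_j(X)b(Z)]$ and $H_j=\E[d_j(X)^2b(Z)b(Z)']$, and the hatted quantities are their sample averages. All entries of these objects are bounded by Assumptions \ref{ass:boundedness} and \ref{ass:m_bound}. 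The same Hoeffding-plus-union-bound argument as in Lemma \ref{lem:g_rate}, taken over at most $qp^2\leq q^3$ indices, gives the entrywise bound
\begin{equation*}
\max_j\bigl(|\hat a_j-a_j|+||\hat c_j-c_j||_\infty+||\hat H_j-H_j||_\infty\bigr)=O_p(\sqrt{\log q/n}).
\end{equation*}
Applying H\"older's inequality, and using $|\rho|_1\leq A$, then yields the uniform bound
\begin{equation*}
\sup_{|\rho|_1\leq A}\max_j|\hat\sigma_j^2(\rho)-\sigma_j^2(\rho)|=O_p(\sqrt{\log q/n})=O_p(\varepsilon_n).
\end{equation*}
Since $|\tilde\rho|_1\leq A$ with probability approaching one, this bound applies at $\rho=\tilde\rho$.

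Finally, $\varepsilon_n=o(\lambda_n)$ implies $\varepsilon_n=o(\bar s\lambda_n)$, so the numerator is $O_p(\bar s\lambda_n)$. Dividing by the denominator bound $\underline{\sigma}^4/2$ gives $||\hat\Omega^{d}-\Omega^{d}||_{\ell_\infty}\leq 2\underline{\sigma}^{-4}\,O_p(\bar s\lambda_n)=O_p(\bar s\lambda_n)$, which is part 2.
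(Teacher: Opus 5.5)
Your proof is correct and has the same skeleton as the paper's. Part~1 is identical. For part~2, the paper also separates a sampling error at $\tilde\rho$ from the estimation error $\sigma_j^{2}(\tilde\rho)-\sigma_j^{2}(\rho_\star)$. It bounds the latter by Lemma~\ref{lem:lipschitz_var} together with $|\tilde\rho-\rho_\star|_1=O_p(\bar s\lambda_n)$, and controls the reciprocals with Lemmas~\ref{lem:var_lower_pop}--\ref{lem:var_lower_sample}. The paper applies the mean value theorem to $h(x)=1/x$ on each piece; your exact identity $1/a-1/b=(b-a)/(ab)$ with one common denominator is a purely cosmetic substitute.

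The real difference is in the sampling term. The paper simply invokes Lemma~\ref{lem:var_conv} at $\rho=\tilde\rho$. That lemma is proved for a fixed $\rho$ via Hoeffding's inequality, which does not apply when $\tilde\rho$ is computed from the same observations. You expand $\hat\sigma_j^2(\rho)-\sigma_j^2(\rho)$ through the sample errors of $a_j$, $c_j$, $H_j$ and take a union bound over at most $3q^3$ bounded entries. This gives a bound uniform over $\{|\rho|_1\le A\}$ at the same rate; only the constant changes, since $\log(3q^3)\le C\log q$. So your argument genuinely closes a gap the paper leaves open. Your explicit check that $|\tilde\rho|_1\le A$ with probability approaching one is likewise left implicit in the paper.

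To make your argument fully self-contained, note that Lemma~\ref{lem:var_lower_sample}, which you cite for the denominator, is proved in the paper by the same fixed-$\rho$ use of Lemma~\ref{lem:var_conv}. You can avoid it by deriving the denominator bound from your own numerator bound: $\min_j\hat\sigma_j^2(\tilde\rho)\ge\underline{\sigma}^2-\max_j|\hat\sigma_j^2(\tilde\rho)-\sigma_j^2(\rho_\star)|=\underline{\sigma}^2-o_p(1)$.
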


\begin{proof}
The first part of the proof is straightforward. By Assumption \ref{ass:nondegen}, $\sigma_j^2(\rho_{\star}) \geq \underline{\sigma}^2 > 0$ for all $j$. Therefore,
\begin{equation*}
||\Omega^{d}||_{\ell_{\infty}} = \max_{1 \leq j \leq q} \sigma_j^{-2}(\rho_{\star}) \leq \underline{\sigma}^{-2} < \infty.
\end{equation*}

The second part is more involved. Since both $\hat{\Omega}^{d}$ and $\Omega^{d}$ are diagonal,
\begin{equation*}
||\hat\Omega^{d} - \Omega^{d}||_{\ell_{\infty}} = \max_{1 \leq j \leq q} |\hat{\sigma}_j^{-2}(\tilde{\rho}) - \sigma_j^{-2}(\rho_\star)|.
\end{equation*}
For each $j$, we have the following decomposition
\begin{equation*}
\hat{\sigma}_j^{-2}(\tilde{\rho}) - \sigma_j^{-2}(\rho_\star) = \underbrace{\left[\hat{\sigma}_j^{-2}(\tilde{\rho}) - \sigma_j^{-2}(\tilde{\rho})\right]}_{\equiv \Delta_j^{(I)}} + \underbrace{\left[\sigma_j^{-2}(\tilde{\rho}) - \sigma_j^{-2}(\rho_\star)\right]}_{\equiv \Delta_j^{(II)}},	
\end{equation*}
where $\Delta_j^{(I)}$ is the sampling error and $\Delta_j^{(II)}$ is the estimation error.

We proceed by applying the mean value theorem to each term. Define $h:(0,\infty) \to (0,\infty)$ by $h(x) = 1/x$, so $h'(x) = -1/x^2$. We start with $\Delta_j^{(I)}$. There exists $\xi_j$ strictly between $\hat{\sigma}_j^2(\tilde{\rho})$ and $\sigma_j^2(\tilde{\rho})$ such that
\begin{equation*}
\Delta_j^{(I)} = h'(\xi_j) \left(\hat{\sigma}_j^2(\tilde{\rho}) - \sigma_j^2(\tilde{\rho})\right) = -\frac{1}{\xi_j^2} \left(\hat{\sigma}_j^2(\tilde{\rho}) - \sigma_j^2(\tilde{\rho})\right).
\end{equation*}
Since $\xi_j$ lies between $\hat{\sigma}_j^2(\tilde{\rho})$ and $\sigma_j^2(\tilde{\rho})$,
\begin{equation*}
\xi_j \geq \min\left\{\hat{\sigma}_j^2(\tilde{\rho}), \sigma_j^2(\tilde{\rho})\right\}.
\end{equation*}
Note that with probability approaching one, $\sigma_j^2(\tilde{\rho}) \geq 3\underline{\sigma}^2/4$ by Lemma~\ref{lem:var_lower_pop}, and $\hat{\sigma}_j^2(\tilde{\rho}) \geq \underline{\sigma}^2/2$ by Lemma~\ref{lem:var_lower_sample}. Therefore, with probability approaching one, $\xi_j \geq \underline{\sigma}^2/2$, which implies $1/\xi_j^2 \leq 4/\underline{\sigma}^4$. 

Now let's move to $\Delta_j^{(II)}$. There exists $\eta_j$ strictly between $\sigma_j^2(\tilde{\rho})$ and $\sigma_j^2(\rho_\star)$ such that
\begin{equation*}
\Delta_j^{(II)} = h'(\eta_j) \left(\sigma_j^2(\tilde{\rho}) - \sigma_j^2(\rho_\star)\right) = -\frac{1}{\eta_j^2} \left(\sigma_j^2(\tilde{\rho}) - \sigma_j^2(\rho_\star)\right).
\end{equation*}
Since $\eta_j$ lies between $\sigma_j^2(\tilde{\rho})$ and $\sigma_j^2(\rho_L)$,
\begin{equation*}
\eta_j \geq \min\left\{\sigma_j^2(\tilde{\rho}), \sigma_j^2(\rho_\star)\right\} \geq \min\left\{\frac{3\underline{\sigma}^2}{4}, \underline{\sigma}^2\right\} = \frac{3\underline{\sigma}^2}{4}
\end{equation*}
with probability approaching one. Thus, $1/\eta^2_j \leq 16/9\underline{\sigma}^4$.

Next, we bound the variance differences. By Lemma~\ref{lem:var_conv},
\begin{equation*}
R_n^{(I)} \equiv \max_{1 \leq j \leq q} |\hat{\sigma}_j^2(\tilde{\rho}) - \sigma_j^2(\tilde{\rho})| = O_p\left(\sqrt{\frac{\log q}{n}}\right).
\end{equation*}
By Lemma~\ref{lem:lipschitz_var} and Theorem \ref{thm:rr_bound}:
\begin{equation*}
R_n^{(II)} \equiv \max_{1 \leq j \leq q} |\sigma_j^2(\tilde{\rho}) - \sigma_j^2(\rho_\star)| \leq L|\tilde{\rho} - \rho_\star|_1 = O_p(\bar{s}\lambda_n).
\end{equation*}
Taking absolute values and the maximum over $j$
\begin{align*}
\max_{1 \leq j \leq q} |\hat{\sigma}_j^{-2}(\tilde{\rho}) - \sigma_j^{-2}(\rho_\star)| &\leq \max_{1 \leq j \leq q} |\Delta_j^{(I)}| + \max_{1 \leq j \leq q} |\Delta_j^{(II)}| \\
&\leq \max_{1 \leq j \leq q} \frac{1}{\xi_j^2} \cdot R_n^{(I)} + \max_{1 \leq j \leq q} \frac{1}{\eta_j^2} \cdot R_n^{(II)} \\
&\leq \frac{4}{\underline{\sigma}^4} \cdot O_p\left(\sqrt{\frac{\log q}{n}}\right) + \frac{16}{9\underline{\sigma}^4} \cdot O_p(\bar{s}\lambda_n) \\
& = O_p(\bar{s}\lambda_n).
\end{align*}

\end{proof}


\subsubsection{Proof of Theorem \ref{thm:rr_bound_diag_wm}}

The proof follows closely the proof of Theorem \ref{thm:rr_bound}, but differs slightly in a couple places where the weight matrix enters. We will only focus on these difference for brevity as the main logic and results remain  unchanged.

First, recall the third line of Equation \eqref{eq:aux_rho},
\begin{equation*}
	||(\hat\Omega^{d}_{q})^{1/2}\hat{G}'(\hat\rho_L - \rho_\star)||^{2} + 2\lambda_n|\hat\rho_L|_1 \leq 2||\hat{G}'\hat\Omega^{d}_{q}(\hat{M} - \hat{G}\rho_\star)||_{\infty}|\hat\rho_L - \rho_\star|_{1} + 2\lambda_n|\rho_\star|_1.
\end{equation*}
Instead of relying on Lemma \ref{lem:aux1} to bound $||\hat{G}'\hat\Omega^{d}_{q}(\hat{M} - \hat{G}\rho_\star)||_{\infty}$, we will leverage some properties of the diagonal weight matrix. By the triangle and H{\"o}lder's  inequalities,
\begin{align*}
	||\hat{G}'\hat\Omega^{d}_{q}(\hat{M} - \hat{G}\rho_\star)||_{\infty} & \leq ||\hat{G}'\hat\Omega^{d}_{q}\hat{M}||_{\infty} + ||\hat{G}'\hat\Omega^{d}_{q}\hat{G}\rho_\star||_{\infty} \\
	& \leq \underbrace{||\hat{G}'\hat\Omega^{d}_{q}\hat{M}||_{\infty}}_{=:A} + \underbrace{||\hat{G}'\hat\Omega^{d}_{q}\hat{G}||_{\infty}}_{=:B}|\rho_\star|_1.
\end{align*}
Using \eqref{eq:matrix_ineq}, we can bound both terms as follows,
\begin{align*}
	A & \leq ||\hat{G}||_{\infty} ||\hat\Omega^d||_{\ell_\infty} ||\hat{M}||_{\infty} = ||\hat{G}||_{\infty}\left[\frac{1}{\min_{1 \leq j \leq q}\hat\sigma^{2}_{j}(\tilde\rho)}\right] ||\hat{M}||_{\infty} \\
	B & \leq ||\hat{G}||^{2}_{\infty} ||\hat\Omega^d||_{\ell_\infty} = ||\hat{G}||^{2}_{\infty} \left[\frac{1}{\min_{1 \leq j \leq q}\hat\sigma^{2}_{j}(\tilde\rho)}\right].
\end{align*}
Next, note that by Assumptions \ref{ass:boundedness}, \ref{ass:M_conv} and \ref{ass:m_bound},
\begin{align*}
	||\hat{G}||_{\infty} & \leq ||\hat{G} - G||_{\infty} + ||G||_{\infty} = O_p(\varepsilon^{G}_n) \\ 
	||\hat{M}||_{\infty} & \leq ||\hat{M} - M||_{\infty} + ||M||_{\infty} = O_p(\varepsilon^{M}_n).
\end{align*}
Together with Lemma \ref{lem:var_lower_sample}, we get
\begin{align*}
	A & = O_p(\varepsilon^{G}_n) O(1) O_p(\varepsilon^{M}_n) = O_p(\varepsilon^{2}_{n}) \\
	B & = O_p((\varepsilon^{G}_n)^2) O(1) = O_p((\varepsilon^{G}_n)^2).
\end{align*}
As a result,
\begin{equation*}
	||\hat{G}'\hat\Omega^{d}_{q}(\hat{M} - \hat{G}\rho_\star)||_{\infty} = O_p(\varepsilon^{2}_{n}).
\end{equation*}
Since $\varepsilon_n = o(\lambda_n)$, the final line of Equation \eqref{eq:aux_rho} remains unchanged, i.e.,
\begin{equation*}
	||(\hat\Omega^{d}_{q})^{1/2}\hat{G}'(\hat\rho_L - \rho_\star)||^{2} + 2\lambda_n|\hat\rho_L|_1 \leq 2o_p(\lambda_n)|\hat\rho_L - \rho_\star|_1 + 2\lambda_n|\rho_\star|_1.
\end{equation*}

The second spot where the diagonal matrix enters is the restricted eigenvalue condition. By Lemma \ref{lem:wm_rate}, $||\hat\Omega^{d} - \Omega^{d}||_{\ell_\infty} = O_p(\bar{s}\lambda_n)$. Using this rate in the proof of Lemma \ref{lem:re_bound} gives
\begin{equation*}
	\hat\phi^{2}(s_\star) \geq \phi^{2}(s_\star) - O_p(\bar{s}^{2}\lambda_n).
\end{equation*}
Thus, Equation \eqref{eq:rho_bound_phi} becomes
\begin{equation*}
|\hat\rho_{L} - \rho_{\star}|_{1} \leq \frac{12\lambda_{n}s_{\star}}{\phi^{2}(s_{\star}) - O_p(\bar{s}^{2}\lambda_n)}.
\end{equation*}
Since $\bar{s}^{2}\lambda_n = o(1)$ and $\phi^{2}(s_{\star})$ is bounded away from zero by Assumption \ref{ass:re_pgmm}, with probability approaching one,
\begin{equation*}
|\hat\rho_{L} - \rho_{\star}|_{1} = O_p(\bar{s}\lambda_n).
\end{equation*}
The rest of the proof is unchanged.
\qed


\subsection{PGMM for nonlinear functionals}\label{app:pgmm_nonlin}

In this section, we demonstrate that the results in Theorems~\ref{thm:rr_bound} and \ref{thm:rr_bound_diag_wm} hold when the penalty is larger than $\kappa_n^{\gamma}$ to account for convergence of $\hat\gamma$ in $\hat{M}_{\ell}$. 

\begin{corollary}\label{cor:rr_bound_nonlin}
Suppose Assumptions \ref{ass:weight_matrix}--\ref{ass:boundedness}, \ref{ass:sparse_approx}--\ref{ass:re_pgmm}, and \ref{ass:M_bound_nonlin} hold. Let $\kappa^{\gamma}_{n} = o(\lambda_{n})$ and $\bar{s}\lambda_n = o(1)$, then
\begin{equation*}
	||\hat\alpha_{L} - \alpha_{0}|| = O_p(\kappa_{n}^{\alpha}), \text{ where } \kappa_{n}^{\alpha} = \bar{s}\lambda_{n}.
\end{equation*}
\end{corollary}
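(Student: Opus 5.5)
The plan is to rerun the proof of Theorem \ref{thm:rr_bound}, changing only the steps where Assumptions \ref{ass:M_conv} and \ref{ass:m_bound} were used. In the nonlinear case these two assumptions are replaced by Lemma \ref{lem:M_bound_nonlin} and Assumption \ref{ass:M_bound_nonlin}(i).

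Concretely, I would set $\varepsilon_n^M = \kappa_n^\gamma$ and redefine $\varepsilon_n = \max\{\varepsilon_n^G, \kappa_n^\gamma\}$. Here $M = \E[D(W,\gamma_0,d)] = \E[\alpha_0(Z)d(X)]$ and $\hat M = \hat M_\ell$ is the double cross-fitted estimator. Three ingredients then need checking.

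\textbf{Rate of $\hat M_\ell$.} Lemma \ref{lem:M_bound_nonlin} gives $||\hat M_\ell - M||_\infty = O_p(\kappa_n^\gamma)$. It applies once we know $||\hat\gamma_{\ell,\ell'}-\gamma_0|| = O_p(\kappa_n^\gamma)$. I would either take that as part of the maintained rate condition, or note that it holds with probability approaching one inside the $\varepsilon$-ball of Assumption \ref{ass:M_bound_nonlin}.

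\textbf{Boundedness of $M$.} Assumption \ref{ass:M_bound_nonlin}(i), evaluated at $\gamma=\gamma_0$, gives $||M||_\infty \le C$. This is the only role Assumption \ref{ass:m_bound} played in Lemma \ref{lem:aux1}.

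\textbf{Sparse approximation.} Assumption \ref{ass:sparse_approx} is now read with the enlarged $\varepsilon_n$. Since $\kappa_n^\gamma$ may dominate $\sqrt{\log q/n}$, this only weakens the requirement on $\bar\rho$. The constructions of $\rho_L$ and $\rho_\star$, and Lemmas \ref{lem:foc_rho_star}--\ref{lem:rho_star_aux3}, are purely population statements in terms of $\varepsilon_n$. They go through verbatim, giving in particular $s_\star \le C\bar s$ and $||\alpha_0 - b'\rho_\star||^2 \le C\bar s\varepsilon_n^2$.

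Next I would redo Lemma \ref{lem:aux1}. In the term-by-term expansion of $||\hat G'\hat\Omega_q\hat M - G'\Omega_q M||_\infty$, every term is a product of factors that are $O(1)$, $O_p(\varepsilon_n^G)$ or $O_p(\kappa_n^\gamma)$. This uses inequality \eqref{eq:matrix_ineq}, Assumption \ref{ass:weight_matrix} and Lemma \ref{lem:g_rate}, so the sum is $O_p(\varepsilon_n)$ with the new $\varepsilon_n$. The remaining two pieces of Lemma \ref{lem:aux1} are unchanged: the population KKT bound is $\le 2\varepsilon_n$, and the $\hat G'\hat\Omega_q\hat G$ term is $O_p(\varepsilon_n^G)$. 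Hence $||\hat G'\hat\Omega_q(\hat M_\ell-\hat G\rho_\star)||_\infty = O_p(\varepsilon_n)$.

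The hypothesis $\kappa_n^\gamma = o(\lambda_n)$, together with $\sqrt{\log q/n}=o(\lambda_n)$ (implicit and needed), gives $\varepsilon_n = o(\lambda_n)$. That is exactly what produces the $o_p(\lambda_n)$ term in \eqref{eq:aux_rho}. Lemma \ref{lem:re_bound} involves only $\hat G$ and $\hat\Omega$, so $\hat\phi^2(s_\star)\ge\phi^2(s_\star)-O_p(\bar s\varepsilon_n^G)$ holds as before.

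The rest of the argument is identical: the cone condition, the AM-GM step, and the $\ell_1$ bound $|\hat\rho_L-\rho_\star|_1 = O_p(\bar s\lambda_n)$. The last step is $||\hat\alpha_L-\alpha_0||^2 \le 2||B||_\infty|\hat\rho_L-\rho_\star|_1^2 + 2C\bar s\varepsilon_n^2 = O_p(\bar s^2\lambda_n^2)$.

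\textbf{Main obstacle.} The delicate point is the independence structure needed for Lemma \ref{lem:M_bound_nonlin}. The target vector $\hat M_\ell$ is computed on folds $\ell'\neq\ell$ with $\hat\gamma_{\ell,\ell'}$ independent of fold $\ell'$. It shares data with $\hat G_\ell$, but that is harmless because the argument only uses $\ell_\infty$ rates and never independence between $\hat M$ and $\hat G$. I would state this explicitly.

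For the diagonal-weight version, the same substitutions are made in the proof of Theorem \ref{thm:rr_bound_diag_wm}. The one additional check is that Lemma \ref{lem:bounded_psi} now uses Assumption \ref{ass:M_bound_nonlin}(i) in place of Assumption \ref{ass:m_bound}.
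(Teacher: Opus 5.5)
Your proposal is correct and follows the paper's route: rerun the proof of Theorem~\ref{thm:rr_bound} with Lemma~\ref{lem:M_bound_nonlin} supplying $\|\hat M_\ell - M\|_\infty = O_p(\kappa_n^\gamma)$ in the expansion of Lemma~\ref{lem:aux1}, and with Assumption~\ref{ass:M_bound_nonlin}(i) at $\gamma_0$ giving $\|M\|_\infty = O(1)$; then $\kappa_n^\gamma = o(\lambda_n)$ preserves the $o_p(\lambda_n)$ term in \eqref{eq:aux_rho}, and everything downstream is unchanged. The one point to tighten is that $\|\hat\gamma_{\ell,\ell'}-\gamma_0\| = O_p(\kappa_n^\gamma)$ must be maintained as a hypothesis (your first option). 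Your second option does not suffice: merely lying in the $\varepsilon$-ball with probability approaching one makes Assumption~\ref{ass:M_bound_nonlin} applicable, but it does not deliver the $\kappa_n^\gamma$ rate through part (ii).
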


\begin{proof}
The proof follows directly from the proof of Theorem \ref{thm:rr_bound}, except we can no longer directly rely on Lemma \ref{lem:aux1} to bound $||\hat{G}'\hat\Omega^{d}_{q}(\hat{M} - \hat{G}\rho_\star)||_{\infty}$. By Lemma \ref{lem:M_bound_nonlin}, $||\hat{M} - M||_{\infty} = O_p(\kappa_{n}^{\gamma})$. Thus, following the arguments in Lemma \ref{lem:aux1}, the bound becomes
\begin{equation*}
	||\hat{G}'\hat\Omega^{d}_{q}(\hat{M} - \hat{G}\rho_\star)||_{\infty} = O_p(\kappa_{n}^{\gamma}).
\end{equation*}
As a result, under $\kappa_n^{\gamma} = o(\lambda_n)$, the final line of Equation \eqref{eq:aux_rho} remains unchanged, and the rest of the proof follows.
\end{proof}

\begin{corollary}\label{cor:rr_bound_diag_wm_nonlin}
Suppose Assumptions \ref{ass:boundedness}, \ref{ass:sparse_approx}--\ref{ass:re_pgmm}, and \ref{ass:M_bound_nonlin} hold. Assume that populations moments are non-degenerate and have non-zero variance. Let $\kappa_n^{\gamma} = o(\lambda_n)$ and $\bar{s}^{2}\lambda_n = o(1)$. Then, under the diagonal weight matrix \eqref{eq:pgmm_diag_wm},
\begin{equation*}
	||\hat\alpha_{L} - \alpha_{0}|| = O_p(\kappa_{n}^{\alpha}), \text{ where } \kappa_{n}^{\alpha} = \bar{s}\lambda_{n}.
\end{equation*}
\end{corollary}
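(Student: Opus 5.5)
The argument combines the proof of Theorem \ref{thm:rr_bound_diag_wm} with the modification already used in Corollary \ref{cor:rr_bound_nonlin}. The basic inequality \eqref{eq:min_prop} comparing $\hat\rho_L$ with $\rho_\star$ uses nothing beyond convexity, so it carries over unchanged. Two ingredients must be re-established when $\hat M_\ell$ depends on $\hat\gamma_{\ell,\ell'}$ and the weight matrix is $\hat\Omega^d$.
\begin{itemize}
\item[(a)] A bound $\|\hat G'\hat\Omega^d_q(\hat M-\hat G\rho_\star)\|_\infty = o_p(\lambda_n)$ on the score at $\rho_\star$.
\item[(b)] A lower bound on the empirical restricted eigenvalue $\hat\phi^2(s_\star)$.
\end{itemize}
Once (a) and (b) hold, the cone argument \eqref{eq:re_aux}, the AM--GM step, the $\ell_1$ bound $|\hat\rho_L-\rho_\star|_1=O_p(\bar s\lambda_n)$, and the final step via Lemma \ref{lem:rho_star_aux3} go through verbatim and give $\|\hat\alpha_L-\alpha_0\|=O_p(\bar s\lambda_n)$.

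For (a), I would decompose as in Lemma \ref{lem:aux1}. The population piece $\|G'\Omega^d_q(M-G\rho_\star)\|_\infty$ is $O(\varepsilon_n)$ by the optimality conditions (Lemmas \ref{lem:optimization} and \ref{lem:foc_rho_star}), where $\varepsilon_n$ is now $\kappa_n^\gamma$. The perturbation pieces are controlled by the matrix inequality \eqref{eq:matrix_ineq} together with the following facts:
\begin{itemize}
\item $\|\hat G-G\|_\infty=O_p(\sqrt{\log q/n})$ by Lemma \ref{lem:g_rate};
\item $\|\hat M-M\|_\infty=O_p(\kappa_n^\gamma)$ by Lemma \ref{lem:M_bound_nonlin};
\item $\|\hat M\|_\infty=O_p(1)$, since Assumption \ref{ass:M_bound_nonlin}(i) bounds $|D(W,\gamma,d_j)|$ near $\gamma_0$;
\item $\|\hat\Omega^d\|_{\ell_\infty}\le 2\underline\sigma^{-2}$ w.p.a.1, by a lower bound on $\hat\sigma_j^2(\tilde\rho)$.
\end{itemize}
Every perturbation term is therefore $O_p(\kappa_n^\gamma)$, or $O_p(\kappa_n^\gamma)$ times a factor of the weight-matrix error, and $\kappa_n^\gamma=o(\lambda_n)$ gives (a).

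The main obstacle is the lower bound on $\hat\sigma_j^2(\tilde\rho)$. The moments $\psi_j(W,\rho)=D(W,\hat\gamma_{\ell,\ell'},d_j)-d_j(X)b(Z)'\rho$ now involve an estimated $\gamma$, so Lemmas \ref{lem:bounded_psi}--\ref{lem:var_lower_sample} must be redone. First, $|\psi_j|$ is still uniformly bounded using Assumption \ref{ass:M_bound_nonlin}(i) in place of Assumption \ref{ass:m_bound}. Second, conditioning on the auxiliary fold, the Hoeffding argument of Lemma \ref{lem:var_conv} applies to the held-out sums and gives an $O_p(\sqrt{\log q/n})$ sampling error. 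Third, the discrepancy between variances at $\hat\gamma$ and at $\gamma_0$ contributes an extra $O_p(\kappa_n^\gamma)$ term. I would obtain this term by the inequality $|a^2-b^2|\le 2C|a-b|$ together with a mean-square bound on $D(W,\hat\gamma,d_j)-D(W,\gamma_0,d_j)$. If Assumption \ref{ass:M_bound_nonlin}(ii) controls only the mean and not the mean square, I would impose the stronger condition as an implicit regularity requirement, or absorb the term into $o_p(1)$ via mean-square consistency. Finally, the preliminary estimator $\tilde\rho$ with identity weight satisfies $|\tilde\rho-\rho_\star|_1=O_p(\bar s\lambda_n)$ by Corollary \ref{cor:rr_bound_nonlin}. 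Combined with the Lipschitz bound of Lemma \ref{lem:lipschitz_var}, this yields the analogues of Lemmas \ref{lem:var_lower_pop}--\ref{lem:wm_rate}: $\min_j\hat\sigma_j^2(\tilde\rho)\ge\underline\sigma^2/2$ w.p.a.1 and $\|\hat\Omega^d-\Omega^d\|_{\ell_\infty}=O_p(\bar s\lambda_n+\kappa_n^\gamma)=O_p(\bar s\lambda_n)$.

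For (b), I would substitute this weight-matrix rate into the proof of Lemma \ref{lem:re_bound}. The dominant error term is $16s_\star\|G\|_\infty^2\|\hat\Omega^d-\Omega^d\|_{\ell_\infty}$, which gives $\hat\phi^2(s_\star)\ge\phi^2(s_\star)-O_p(\bar s^2\lambda_n)$ because $s_\star=O(\bar s)$ by Lemma \ref{lem:rho_star_aux2}. Since $\bar s^2\lambda_n=o(1)$ and Assumption \ref{ass:re_pgmm} bounds $\phi^2(s_\star)$ away from zero, $\hat\phi^2(s_\star)$ stays bounded away from zero w.p.a.1. This completes the argument exactly as in the proof of Theorem \ref{thm:rr_bound_diag_wm}.
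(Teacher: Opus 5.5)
Your step (a) fails as written. In the decomposition of Lemma \ref{lem:aux1}, the term \eqref{eq2:7}, $G'(\hat\Omega_q-\Omega_q)M$, and the piece $G'(\hat\Omega_q-\Omega_q)G\rho_\star$ inside \eqref{eq1:3} contain neither $\hat M-M$ nor $\hat G-G$. They are bounded by $\|G\|_\infty\|\hat\Omega-\Omega\|_{\ell_\infty}$ times $\|M\|_\infty$ or $\|G\|_\infty|\rho_\star|_1$, both of order one. Under Assumption \ref{ass:weight_matrix}(i) this was harmless. With $\hat\Omega^d$, however, your own rate is $\|\hat\Omega^d-\Omega^d\|_{\ell_\infty}=O_p(\bar s\lambda_n)$, so these terms are only $O_p(\bar s\lambda_n)$, not $o_p(\lambda_n)$. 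They are not ``$\kappa_n^\gamma$ times the weight-matrix error,'' and without $o_p(\lambda_n)$ control of the score at $\rho_\star$ you never reach the cone inequality \eqref{eq:re_aux}.

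The missing idea is that $M-G\rho_\star$ is itself small, so the two terms must be kept together. By \eqref{eq:rr_exog}, $M-G\rho_\star=\E[d(X)\{\alpha_0(Z)-b(Z)'\rho_\star\}]$. Hence $\|M-G\rho_\star\|_\infty\le C_d\|\alpha_0-b'\rho_\star\|=O(\sqrt{\bar s}\,\kappa_n^\gamma)$ by Lemma \ref{lem:rho_star_aux3}. Now write
\begin{align*}
\hat G'\hat\Omega^d_q(\hat M-\hat G\rho_\star) &= \hat G'\hat\Omega^d_q(\hat M-M)-\hat G'\hat\Omega^d_q(\hat G-G)\rho_\star+(\hat G-G)'\hat\Omega^d_q(M-G\rho_\star)\\
&\quad+G'(\hat\Omega^d_q-\Omega^d_q)(M-G\rho_\star)+G'\Omega^d_q(M-G\rho_\star).
\end{align*}
The first three terms are $O_p(\kappa_n^\gamma+\varepsilon_n^G)=o_p(\lambda_n)$, using only $\|\hat G\|_\infty=O_p(1)$ and $\|\hat\Omega^d\|_{\ell_\infty}=O_p(1)$. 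The fourth is $O_p(\bar s^{3/2}\kappa_n^\gamma\lambda_n)=o_p(\lambda_n)$, because $\bar s^{3/2}\kappa_n^\gamma\le\bar s^{2}\lambda_n\to0$ eventually. The last is at most $2\kappa_n^\gamma$ by the first-order conditions. Together these give (a).

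Do not look to the paper's proof for a shortcut here. It bounds the score by $\|\hat G\|_\infty\|\hat\Omega^d\|_{\ell_\infty}(\|\hat M\|_\infty+\|\hat G\|_\infty|\rho_\star|_1)$, as in the proof of Theorem \ref{thm:rr_bound_diag_wm}, and treats $\|\hat G\|_\infty$ as $O_p(\varepsilon_n^G)$. That cannot hold: $\|G\|_\infty$ is of order one, and indeed bounded away from zero, otherwise Assumption \ref{ass:re_pgmm} fails. So your Lemma \ref{lem:aux1}-style route, once regrouped as above, is the argument that actually works.

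You also go beyond the paper in noticing that $\psi_j$, and hence $\hat\Omega^d$, now depends on $\hat\gamma_{\ell,\ell'}$; the paper is silent on this. There only your first option is adequate. An unspecified $o_p(1)$ variance discrepancy does not deliver (b), which needs $\bar s\max_{j}|\hat\sigma_j^{-2}(\tilde\rho)-\sigma_j^{-2}(\rho_\star)|\xrightarrow{p}0$. You therefore need a $j$-uniform mean-square analogue of Assumption \ref{ass:M_bound_nonlin}(ii) giving an $O_p(\kappa_n^\gamma)$ discrepancy, which is an assumption beyond those in the statement.
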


\begin{proof}
We only need to adjust the bound $||\hat{G}'\hat\Omega^{d}_{q}(\hat{M} - \hat{G}\rho_\star)||_{\infty}$ for the rest of the proof of Theorem \ref{thm:rr_bound_diag_wm} to follow. Under Assumption \ref{ass:M_bound_nonlin}, $||\hat{M} - M||_{\infty} = O_p(\kappa_{n}^{\gamma})$ and $||M||_{\infty} = O(1)$. Hence, we get a new bound 
\begin{equation*}
	||\hat{G}'\hat\Omega^{d}_{q}(\hat{M} - \hat{G}\rho_{\star})||_{\infty} = O_p(\varepsilon^{G}_n \kappa_{n}^{\gamma}).
\end{equation*}
Since $\kappa_n^{\gamma} = o(\lambda_n)$, the final line of Equation \eqref{eq:aux_rho} remains unchanged, establishing the result.
\end{proof}

\subsection{Asymptotic properties of the ADMLIV estimator}

\begin{lemma} \label{lem:rho_hat_L_bound}
If Assumptions \ref{ass:weight_matrix}--\ref{ass:M_conv} and \ref{ass:m_bound} are satisfied and $\varepsilon_{n} = o(\lambda_{n})$, then
\begin{equation*}
|\hat\rho_{L}|_{1} = O_p(1).
\end{equation*}
\end{lemma}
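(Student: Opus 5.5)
The proof will go through the triangle inequality
\begin{equation*}
|\hat\rho_{L}|_{1} \leq |\hat\rho_{L} - \rho_{\star}|_{1} + |\rho_{\star}|_{1}.
\end{equation*}
The second term is $O(1)$ by the standing assumption $|\rho_{\star}|_1 = O(1)$ made right after \eqref{eq:rho_star_problem}. So the work is to show $|\hat\rho_{L} - \rho_{\star}|_{1} = O_p(1)$. One route is to invoke the bound \eqref{eq:rho_bound} from the proof of Theorem \ref{thm:rr_bound}, namely $|\hat\rho_L - \rho_\star|_1 = O_p(\bar s\lambda_n)$, together with $\bar s\lambda_n = o(1)$. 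However, the lemma's hypotheses list only Assumptions \ref{ass:weight_matrix}--\ref{ass:M_conv} and \ref{ass:m_bound} and $\varepsilon_n = o(\lambda_n)$. I would therefore first try a more elementary argument that avoids the restricted eigenvalue condition.

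The elementary argument starts from the basic inequality \eqref{eq:aux_rho}, which uses only the minimizing property of $\hat\rho_L$ and Lemma \ref{lem:aux1}. Lemma \ref{lem:aux1} requires exactly Assumptions \ref{ass:weight_matrix}--\ref{ass:M_conv} and \ref{ass:m_bound}. With probability approaching one it gives
\begin{equation*}
2\lambda_n|\hat\rho_L|_1 \leq ||\hat\Omega_q^{1/2}\hat G(\hat\rho_L - \rho_\star)||^2 + 2\lambda_n|\hat\rho_L|_1 \leq 2 r_n |\hat\rho_L - \rho_\star|_1 + 2\lambda_n|\rho_\star|_1,
\end{equation*}
where $r_n = ||\hat G'\hat\Omega_q(\hat M - \hat G\rho_\star)||_\infty = O_p(\varepsilon_n) = o_p(\lambda_n)$. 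Dropping the nonnegative quadratic term, dividing by $2\lambda_n$, and bounding $|\hat\rho_L - \rho_\star|_1 \leq |\hat\rho_L|_1 + |\rho_\star|_1$ gives
\begin{equation*}
\left(1 - \frac{r_n}{\lambda_n}\right)|\hat\rho_L|_1 \leq \left(1 + \frac{r_n}{\lambda_n}\right)|\rho_\star|_1 .
\end{equation*}
Since $r_n/\lambda_n = o_p(1)$, with probability approaching one the left factor is at least $1/2$ and the right factor is at most $2$. This yields $|\hat\rho_L|_1 \leq 4|\rho_\star|_1 = O(1)$ with probability approaching one, which is the claim. Positive definiteness of $\hat\Omega$ is used only to write the quadratic form as a nonnegative squared norm; positive semi-definiteness already suffices for nonnegativity.

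The main obstacle is making sure Lemma \ref{lem:aux1} applies. Its proof uses $|\rho_\star|_1 = O(1)$, the bounds $||G||_\infty = O(1)$ and $||M||_\infty = O(1)$, and the population first-order conditions (Lemmas \ref{lem:optimization} and \ref{lem:foc_rho_star}). Each of these is available under the listed assumptions together with the maintained definitions of $\rho_L$ and $\rho_\star$. Once that is confirmed, the remaining steps are routine algebra on events of probability approaching one.
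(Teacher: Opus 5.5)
Your proposal is correct and follows the paper's proof essentially step for step. Both start from the basic inequality \eqref{eq:aux_rho}, drop the quadratic term, divide by $2\lambda_n$, apply the triangle inequality to $|\hat\rho_L-\rho_\star|_1$, and absorb the $o_p(1)$ coefficient on an event of probability approaching one. The only difference is that your cruder absorption gives the constant $4|\rho_\star|_1$ instead of the paper's $3|\rho_\star|_1$.
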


\begin{proof}
Recall Equation \eqref{eq:aux_rho} from the proof of Theorem \ref{thm:rr_bound} which implies
\begin{equation*}
2\lambda_{n}|\hat{\rho}_{L}|_{1} \leq 2o_p(\lambda_{n})|\hat{\rho}_{L} - \rho_{\star}|_{1} + 2\lambda_{n}|\rho_{\star}|_{1}. 
\end{equation*}
Dividing both sides by $2\lambda_{n}$ and applying the triangle inequality gives
\begin{equation*}
|\hat{\rho}_{L}|_{1} \leq o_p(1)|\hat{\rho}_{L} - \rho_{\star}|_{1} + |\rho_{\star}|_{1} \leq |\rho_{\star}|_{1} + o_p(1)(|\hat\rho_{L}|_{1} + |\rho_{\star}|_{1}),
\end{equation*}
which implies that with probability approaching one,
\begin{equation*}
|\hat{\rho}_{L}|_{1} \leq |\rho_{\star}|_{1} + \frac{1}{2}(|\hat\rho_{L}|_{1} + |\rho_{\star}|_{1}).
\end{equation*} 
Subtracting $|\hat{\rho}_{L}|_{1}/2$ from both sides and multiplying by $2$ gives with probability approaching one
\begin{equation*}
|\hat{\rho}_{L}|_{1} \leq 3|\rho_{\star}|_{1} = O(1).
\end{equation*}
\end{proof}

\subsubsection{Proof of Theorem \ref{thm:asy_norm_lin}}

We prove the first conclusion by verifying the conditions of Lemma 15 of \cite{chernozhukov2022locally}. Let $g(w,\,\gamma,\,\alpha,\,\theta)$ and $\phi(w,\,\gamma,\,\alpha,\,\theta)$ in \cite{chernozhukov2022locally} be $m(w,\,\gamma) - \theta$ and $\alpha(z)[y - \gamma(x)]$ here, respectively. First, $\E[\psi(W_{i},\,\gamma_{0},\,\alpha_{0},\,\theta_{0})^{2}] < \infty$ follows from Assumption \ref{ass:asy_primitives}. Moreover, note that by Assumptions \ref{ass:asy_primitives} and \ref{ass:mse_consist_gamma}, Theorem \ref{thm:rr_bound}, and the law of iterated expectations,
\begin{align*}
\int[\phi(w,\,\hat\gamma_{\ell},\,\alpha_{0}) - \phi(w,\,\gamma_{0},\,\alpha_{0})]^{2}F_{0}(dw) & = \int \alpha^{2}_{0}(z)[\hat\gamma_{\ell}(x) - \gamma_{0}(x)]^{2}F_{0}(dw) \\
& \leq C||T(\hat\gamma_{\ell} - \gamma_{0})||^{2} \xrightarrow{p} 0 \\
\int[\phi(w,\,\gamma_{0},\,\hat\alpha_{\ell}) - \phi(w,\,\gamma_{0},\,\alpha_{0})]^{2}F_{0}(dw) & = \int [\hat\alpha_{\ell}(z) - \alpha_{0}(z)]^{2}[y - \gamma_{0}(x)]^{2}F_{0}(dw) \\
& = \int [\hat\alpha_{\ell}(z) - \alpha_{0}(z)]^{2}\E[[y - \gamma_{0}(x)]^{2}|z]F_{0}(dz) \\ 
& \leq C||\hat\alpha_{\ell} - \alpha_{0}||^{2} \xrightarrow{p} 0.
\end{align*}
Also, it follows from Assumption \ref{ass:mse_consist_gamma} that
\begin{equation*}
\int[m(w,\,\hat\gamma_{\ell}) - m(w,\,\gamma_{0})]^{2}F_{0}(dw) \xrightarrow{p} 0.
\end{equation*} 
Thus, all the conditions of Assumption 1 of \cite{chernozhukov2022locally} are satisfied.

Next, for each $\ell$ let
\begin{align*}
\hat\Delta_{\ell}(w) & = \phi(w,\,\hat\gamma_{\ell},\,\hat\alpha_{\ell}) - \phi(w,\,\gamma_{0},\,\hat\alpha_{\ell}) - \phi(w,\,\hat\gamma_{\ell},\,\alpha_{0}) + \phi(w,\,\gamma_{0},\,\alpha_{0}) \\ & = [\hat\alpha_{\ell}(z) - \alpha_{0}(z)][\hat\gamma_{\ell}(x) - \gamma_{0}(x)].
\end{align*}
Since $\alpha_{0}$ is bounded by Assumption \ref{ass:asy_primitives} and $\sup_{z}|\hat\alpha_{\ell}(z)| = O_p(1)$ by Lemma \ref{lem:rho_hat_L_bound}, 
\begin{align*}
	\int\hat\Delta_{\ell}^{2}(w)F_{0}(dw) & = \int[\hat\alpha_{\ell}(z) - \alpha_{0}(z)]^{2}[\hat\gamma_{\ell}(x) - \gamma_{0}(x)]^{2}F_{0}(dw) \\
	& \leq O_p(1)\int[\hat\gamma_{\ell}(x) - \gamma_{0}(x)]^{2}F_{0}(dw) \xrightarrow{p} 0,
\end{align*}
where the conclusion follows from Assumption \ref{ass:mse_consist_gamma}. Furthermore, by Cauchy-Schwarz inequality and Assumption \ref{ass:conv_rates_lin},
\begin{align*}
\left|\sqrt{n}\int\hat\Delta_{\ell}(w)F_{0}(dw)\right| & =	\sqrt{n}\left|\int[\hat\alpha_{\ell}(z) - \alpha_{0}(z)][\hat\gamma_{\ell}(x) - \gamma_{0}(x)]F_{0}(dw)\right| \\
& = \sqrt{n}\left|\int[\hat\alpha_{\ell}(z) - \alpha_{0}(z)]\E[\hat\gamma_{\ell}(x) - \gamma_{0}(x)|z]F_{0}(dz)\right| \\
& \leq \sqrt{n}||\hat{\alpha}_{\ell} - \alpha_{0}||\;||T(\hat\gamma_{\ell} - \gamma_{0})|| = O_p(n^{1/2}\kappa_{n}^{\alpha}\kappa_{n}^{\gamma}) \xrightarrow{p} 0,
\end{align*}
which renders Assumption 2(iii) of \cite{chernozhukov2022locally} satisfied.

Also, by construction,
\begin{align*}
\int \hat\alpha_{\ell}(z)[y - \gamma_{0}(x)]F_{0}(dw) = \E[\hat\alpha_{\ell}(z)\E[y - \gamma_{0}(x)]|z] = 0.
\end{align*}
Combined with $m(w,\,\gamma)$ being affine in $\gamma$ verifies Assumption 3 of \cite{chernozhukov2022locally} is satisfied. As a result, we get the first conclusion.

To get the second conclusion, we need to show that $\hat V$ is a consistent estimator of $V$. This part of the proof is very similar to the proof of Theorem 3 in \cite{chernozhukov2022adml}. We start with
\begin{equation*}
\hat V = \frac{1}{n}\sum_{i=1}^{n}\hat\psi_{i}^{2} = \frac{1}{n}\sum_{i=1}^{n}(\hat\psi_{i} - \psi_{i})^{2} + \frac{2}{n}\sum_{i=1}^{n}(\hat\psi_{i} - \psi_{i})\psi_{i} + \frac{1}{n}\sum_{i=1}^{n}\psi_{i}^{2},
\end{equation*}  
hence, by re-arranging the terms and Cauchy-Schwarz inequality,
\begin{align}\label{eq:var_decomposition}
	\hat V - V & = \frac{1}{n}\sum_{i=1}^{n}(\hat\psi_{i} - \psi_{i})^{2} + \frac{2}{n}\sum_{i=1}^{n}(\hat\psi_{i} - \psi_{i})\psi_{i} \nonumber \\ 
	& \leq \frac{1}{n}\sum_{i=1}^{n}(\hat\psi_{i} - \psi_{i})^{2} + 2\sqrt{\frac{1}{n}\sum_{i=1}^{n}(\hat\psi_{i} - \psi_{i})^{2}}\sqrt{\frac{1}{n}\sum_{i=1}^{n}\psi_{i}^{2}}.
\end{align}
Using the triangle inequality, for $i \in I_{\ell}$,
\begin{equation*}
(\hat\psi_{i} - \psi_{i})^{2} \leq C\sum_{j=1}^{4}R_{ij} = C\sum_{j=1}^{3}R_{ij} + o_p(1),
\end{equation*}
where
\begin{align*}
R_{i1} & = [m(W_{i},\,\hat\gamma_{\ell}) - m(W_{i},\,\gamma_{0})]^{2}, \\
R_{i2} & = \hat\alpha_{\ell}^{2}(Z_{i})[\hat\gamma_{\ell}(X_{i}) - \gamma_{0}(X_{i})]^{2}, \\
R_{i3} & = [\hat\alpha_{\ell}(Z_{i}) - \alpha_{0}(Z_{i})]^{2}[Y_{i} - \gamma_{0}(X_{i})]^{2}, \\
R_{i4} & = (\hat\theta - \theta_{0})^{2}.
\end{align*}
The first conclusion implies $R_{i4} \xrightarrow{p} 0$. Let $I_{-\ell}$ denote observations not in $I_{\ell}$.

By Markov's inequality, for some $\delta > 0$,
\begin{equation*}
\P\left(\frac{1}{n}\sum_{i=1}^{n}(\hat\psi_{i} - \psi_{i})^{2} > \delta\right) \leq \frac{\E\left[\frac{1}{n}\sum_{i=1}^{n}(\hat\psi_{i} - \psi_{i})^{2}\right]}{\delta}.
\end{equation*}
Note that the cross-fitting allows us to write
\begin{align*}
\E\left[\frac{1}{n}\sum_{i=1}^{n}(\hat\psi_{i} - \psi_{i})^{2}\right] & \leq \E\left[\frac{C}{n} \sum_{\ell=1}^{L} \sum_{i\in I_{\ell}}\sum_{j=1}^{3}R_{ij} \right] + o_p(1) \\ & = C\sum_{\ell=1}^{L}\frac{n_{\ell}}{n}\sum_{j=1}^{3}\E[\E[R_{ij}|I_{-\ell}]] + o_p(1).
\end{align*}
Furthermore, by H\"{o}lder's inequality and Assumption \ref{ass:boundedness},
\begin{equation*}
\max_{i \in I_{\ell}}|\hat\alpha_{\ell}(Z_{i})| \leq |\hat\rho_{L}|_{1}\max_{i\in I_{\ell}}||b(Z_{i})||_{\infty} \leq C_{b}|\hat\rho_{L}|_{1}.
\end{equation*}
By Lemma \ref{lem:rho_hat_L_bound},
\begin{equation*}
\max_{i}|\hat\alpha_{\ell}(Z_{i})| =  C_{b}O_p(\bar{A}_n) = O_p(1).
\end{equation*}
Then for $i\in I_{\ell}$ by Assumptions \ref{ass:asy_primitives}, \ref{ass:mse_consist_gamma}, and iterated expectations,
\begin{align*}
\E[R_{i1}|I_{-\ell}] & = \int[m(W_{i},\,\hat\gamma_{\ell}) - m(W_{i},\,\gamma_{0})]^{2}F_{0}(dW) \xrightarrow{p} 0, \\
\E[R_{i2}|I_{-\ell}] & \leq O_p(1)\int[\hat\gamma_{\ell}(X_{i}) - \gamma_{0}(X_{i})]^{2}F_{0}(dX) \xrightarrow{p} 0, \\
\E[R_{i3}|I_{-\ell}] & = \E\left[\E\left[[\hat\alpha_{\ell}(Z_{i}) - \alpha_{0}(Z_{i})]^{2}[Y_{i} - \gamma_{0}(X_{i})]^{2}|Z_{i},\,I_{-\ell}\right]|I_{-\ell}\right] \\
& = \E\left[[\hat\alpha_{\ell}(Z_{i}) - \alpha_{0}(Z_{i})]^{2}\E[[Y_{i} - \gamma_{0}(X_{i})]^{2}|Z_{i}]|I_{-\ell}\right] \\
& \leq C||\hat\alpha_{\ell} - \alpha_{0}||^{2} \xrightarrow{p} 0.
\end{align*}
As a result,
\begin{equation*}
\frac{1}{n}\sum_{i=1}^{n}(\hat\psi_{i} - \psi_{i})^{2} \xrightarrow{p} 0.
\end{equation*}
Furthermore, $\E[\psi_{i}^{2}]<\infty$ by Assumptions \ref{ass:asy_primitives} and \ref{ass:mse_consist_gamma}. Thus, the conclusion follows from \eqref{eq:var_decomposition} and the central limit theorem.

\qed

\subsubsection{Proof of Lemma \ref{lem:M_bound_nonlin}}

The proof is similar to the proof of Lemma 8 of \cite{chernozhukov2022adml}. We start with defining
\begin{align*}
\hat{M}_{\ell} & = (\hat{M}_{\ell 1},\dots,\,\hat{M}_{\ell q})',\; \hat{M}_{\ell j} = \frac{1}{n - n_{\ell}} \sum_{\ell'\neq\ell}\sum_{i \in I_{\ell'}}D(W_{i},\,d_{j},\,\hat\gamma_{\ell,\ell'}), \\
\bar{M}_{\ell}(\gamma) & = (\bar{M}_{\ell 1}(\gamma),\dots,\,\bar{M}_{\ell q}(\gamma))',\; \bar{M}_{\ell j} = \int D(w,\,d_{j},\,\gamma)F_{0}(dw).
\end{align*}
Note that $M = \bar{M}(\gamma_{0})$. Let $\Gamma_{\ell,\ell'} = \{||\hat\gamma_{\ell,\ell'} - \gamma_{0}||\leq\varepsilon\}$, and note that $\P(\Gamma_{\ell,\ell'}) \rightarrow 1$ for each $\ell$ and $\ell'$ by Assumption \ref{ass:M_bound_nonlin}. When $\Gamma_{\ell,\ell'}$ occurs,
\begin{equation*}
\max_{1\leq j \leq q}|D(W,\,d_{j},\,\gamma)| \leq C
\end{equation*}
by Assumption \ref{ass:M_bound_nonlin}. For $i \in I_{\ell'}$ define
\begin{equation*}
T_{ij}(\gamma) = D(W_{i},\,d_{j},\,\gamma) - \bar{M}(\gamma),\; U_{ij}(\gamma) = \frac{1}{n_{\ell'}}\sum_{i \in I_{\ell'}}T_{ij}(\gamma).
\end{equation*}
Note that for any constant $\bar{C}$ and the event $\mathcal A=\{\max_{1 \leq j \leq q}|U_{ij}(\gamma)| \geq \bar{C}\varepsilon_{n}\}$ where $\varepsilon_{n} = \sqrt{\log q/n}$,
\begin{align}\label{eq:aux_prob}
\P(\mathcal A) & = \P(\mathcal A|\Gamma_{\ell,\ell'})\P(\Gamma_{\ell,\ell'}) + \P(\mathcal A|\Gamma^{c}_{\ell,\ell'})[1 - \P(\Gamma_{\ell,\ell'})] \\
& \leq \P\left(\max_{1 \leq j \leq q}|U_{ij}(\hat\gamma_{\ell,\ell'})| \geq \bar{C}\varepsilon_{n}|\Gamma_{\ell,\ell'}\right) + [1 - \P(\Gamma_{\ell,\ell'})]. \notag
\end{align}
Moreover,
\begin{equation*}
\P\left(\max_{1 \leq j \leq q}|U_{ij}(\hat\gamma_{\ell,\ell'})| \geq \bar{C}\varepsilon_{n}|\Gamma_{\ell,\ell'}\right) \leq q\max_{1 \leq j \leq q}\P\left(|U_{ij}(\hat\gamma_{\ell,\ell'})| \geq \bar{C}\varepsilon_{n}|\Gamma_{\ell,\ell'}\right).
\end{equation*}
Note that $\E[T_{ij}(\hat\gamma_{\ell,\ell'})|\hat\gamma_{\ell,\ell'}] = 0$ for $i \in I_{\ell'}$. Furthermore, conditional on $\Gamma_{\ell,\ell'}$, for $i \in I_{\ell'}$,
\begin{equation*}
|T_{ij}(\hat\gamma_{\ell,\ell'})| \leq 2C.
\end{equation*}
Hence, $T_{ij}$ is bounded. Similar to the proof of Lemma \ref{lem:g_rate}, define $K = 2C / \log2 \geq ||T_{ij}||_{\Psi_2}$. By Hoeffding's inequality (see Theorem 2.6.2 in \cite{vershynin2018high}) and the independence of $\{W_{i}\}_{i \in I_{\ell'}}$ and $\hat\gamma_{\ell,\ell'}$, there is a constant $c$ such that
\begin{align*}
q\max_{1 \leq j \leq q}\P\left(|U_{ij}(\hat\gamma_{\ell,\ell'})| \geq \bar{C}\varepsilon_{n}|\Gamma_{\ell,\ell'}\right) & = q\E\left[\left.\max_{1 \leq j \leq q}\P\left(|U_{ij}(\hat\gamma_{\ell,\ell'})| \geq \bar{C}\varepsilon_{n}|\hat\gamma_{\ell,\ell'}\right)\right|\Gamma_{\ell,\ell'}\right] \\
& \leq 2q\E\left[\left.\exp\left(-\frac{c(n_{\ell'}\bar{C}\varepsilon_{n})^{2}}{n_{\ell'}K^{2}}\right)\right|\Gamma_{\ell,\ell'} \right] \\
& \leq 2q\exp\left(-\frac{c n_{\ell'}\bar{C}^{2}\log q}{L n_{\ell'}K^{2}}\right) \\
& \leq 2\exp\left(\log q\left[1 - \frac{c\bar{C}^{2}}{LK^{2}}\right]\right) \rightarrow 0
\end{align*}
for any $\bar{C} > K\sqrt{L/c}$. Let $U_{\ell'}(\gamma) = (U_{\ell'1},\dots,\,U_{\ell'q})'$. Then it follows from \eqref{eq:aux_prob} that for large $\bar{C}$, $\P(|U_{\ell'}(\hat\gamma_{\ell,\ell'})| \geq \bar{C}\varepsilon_{n}) \rightarrow 0$, meaning that $||U_{\ell'}(\hat\gamma_{\ell,\ell'})||_{\infty} = O_p(\varepsilon_{n})$. 

Next, for each $\ell$ by the triangle inequality we have,
\begin{equation*}
	||\hat M_{\ell} - M||_{\infty} \leq ||\hat M_{\ell} - \bar{M}(\hat\gamma_{\ell,\ell'})||_{\infty} + ||\bar{M}(\hat\gamma_{\ell,\ell'}) - M||_{\infty}.
\end{equation*}
Furthermore, $n - n_{\ell} = \sum_{\ell'\neq\ell}n_{\ell'}$ and
\begin{equation*}
||\hat M_{\ell} - \bar{M}(\hat\gamma_{\ell,\ell'})||_{\infty} = \left\|\hat M_{\ell} - \sum_{\ell'\neq\ell}\frac{n_{\ell'}}{n - n_{\ell}}\bar{M}(\hat\gamma_{\ell,\ell'})\right\|_{\infty} \leq \sum_{\ell'\neq\ell}\frac{n_{\ell'}}{n - n_{\ell}}||U_{\ell'}(\hat\gamma_{\ell,\ell'})||_{\infty} = O_p(\varepsilon_{n}).
\end{equation*} 
Also, by Assumption \ref{ass:M_bound_nonlin}(ii) and $\P(\Gamma_{\ell,\ell'}) \rightarrow 1$ for each $\ell$ and $\ell'$,
\begin{equation*}
||\bar{M}(\hat\gamma_{\ell,\ell'}) - M||_{\infty} \leq \left\| \sum_{\ell'\neq\ell}\frac{n_{\ell'}}{n - n_{\ell}}[\bar{M}(\hat\gamma_{\ell,\ell'}) - M]\right\|_{\infty} \leq C\sum_{\ell'\neq\ell}\frac{n_{\ell'}}{n - n_{\ell}}||\hat\gamma_{\ell,\ell'} - \gamma_{0}|| = O_p(\kappa_{n}^{\gamma}).
\end{equation*}
The conclusion follows from $\kappa_{n}^{\gamma}$ being a slower rate than $\varepsilon_{n}$.
\qed

\subsubsection{Proof of Theorem \ref{thm:asy_norm_nonlin}}

The proof is analogous to the proof of Theorem \ref{thm:asy_norm_lin}. We obtain the first conclusion by verifying the conditions of Lemma 15 of \cite{chernozhukov2022locally}. First, it follows from the proof of Theorem \ref{thm:asy_norm_lin} that the conditions of Assumptions 1 and 2 of \cite{chernozhukov2022locally} are satisfied.

Next, by Assumptions \ref{ass:M_frechet} and \ref{ass:conv_rate_gamma_nonlin},
\begin{align*}
\sqrt{n}|\bar{\psi}(w,\,\hat\gamma_{\ell},\,\alpha_{0},\,\theta_{0})| & = \sqrt{n}\left| \int[m(w,\,\hat\gamma_{\ell}) - \theta_{0} + \alpha_{0}(z)[y - \hat\gamma_{\ell}(x)]]F_{0}(dw) \right| \\
& = \sqrt{n}\left| \int[m(w,\,\hat\gamma_{\ell}) - m(w,\,\gamma_{0}) + \alpha_{0}(z)[y - \hat\gamma_{\ell}(x)]]F_{0}(dw) \right| \\
& = \sqrt{n}\left| \int[m(w,\,\hat\gamma_{\ell}) - m(w,\,\gamma_{0}) + \alpha_{0}(z)[\gamma_{0}(x) - \hat\gamma_{\ell}(x)]]F_{0}(dw) \right| \\
& = \sqrt{n}\left| \int[m(w,\,\hat\gamma_{\ell}) - m(w,\,\gamma_{0}) - D(w,\,\gamma_{0},\,\hat\gamma_{\ell}-\gamma_{0})]F_{0}(dw) \right| \\
& \leq C\sqrt{n}||\hat\gamma_{\ell}-\gamma_{0}||^{2} \\ 
& = \sqrt{n}o_{p}((n^{-1/4})^{2}) = o_p(1).
\end{align*}
Moreover, as in the proof of Theorem \ref{thm:asy_norm_lin},
\begin{align*}
\int \hat\alpha_{\ell}(z)[y - \gamma_{0}(x)]F_{0}(dw) = 0.
\end{align*}
Thus, all the conditions of Assumption 3 of \cite{chernozhukov2022locally} are satisfied, which combined with the results above gives us the first conclusion. The second conclusion follows exactly as in the proof of Theorem \ref{thm:asy_norm_lin}.
\qed

\subsubsection{Proofs of Corollaries \ref{cor:asy_norm_lin_wm}--\ref{cor:asy_norm_nonlin_wm}}
These results follow directly from the corresponding main Theorems \ref{thm:asy_norm_lin}--\ref{thm:asy_norm_nonlin} and Corollaries \ref{cor:rr_bound_nonlin}--\ref{cor:rr_bound_diag_wm_nonlin}.

\section{Symmetric Inverse Demand Representation} \label{app:symmetric_inverse}

In this section, we provide a formal derivation of the symmetric inverse demand representation in equation~\eqref{eq:est_eq2}. The derivation synthesizes the nonparametric identification results of \cite{berry2014identification} with the dimensionality reduction techniques of \cite{gandhi_houde2019}.

\subsection{Assumptions}

We maintain the following assumptions throughout.

\begin{assumption}\label{ass:index}
The mean utility of product $j \in \mathcal{J} \backslash \{0\}$ in market $t$ takes the additive form
\begin{equation*}
	\delta_{jt} = x_{jt}^{(1)} + \xi_{jt},
\end{equation*}
where $x_{jt}^{(1)} \in \mathbb{R}$ is an observed characteristic entering with coefficient normalized to one, and $\xi_{jt}$ is an unobserved product-market specific shock.
\end{assumption}

This index restriction, standard in the literature following \cite{berry2014identification}, ensures that $x_{jt}^{(1)}$ enters with a known coefficient. This is essential for using $x_{jt}^{(1)}$ as an instrument and for separating the structural error $\xi_{jt}$ from observables.

\begin{assumption}\label{ass:connected}
Products $\{0, 1, \ldots, J\}$ are connected substitutes in both $-p_t$ and $\delta_t$:
\begin{enumerate}
	\item $\sigma_k(\delta_t, p_t)$ is nonincreasing in $\delta_{jt}$ and in $-p_{jt}$ for all $k \neq j$;
	\item For any nonempty $K \subseteq \{1, \ldots, J\}$, there exist $k \in K$ and $\ell \notin K$ such that $\sigma_\ell(\delta_t, p_t)$ is strictly decreasing in $\delta_{kt}$.
\end{enumerate}
\end{assumption}

The connected substitutes condition, introduced by \cite{berry2013connected}, ensures global invertibility of the demand system. Part (i) requires weak gross substitutability; part (ii) rules out fragmentation of the choice set into disconnected components.

\begin{assumption}\label{ass:linear_utility}
Consumer $i$'s indirect utility for product $j$ in market $t$ takes the form
\begin{equation*}
	u_{ijt} = \delta_{jt} + x_{jt}^{(2)\prime}\lambda_i - \alpha_i p_{jt} + \epsilon_{ijt},
\end{equation*}
where $\delta_{jt} = x_{jt}^{(1)} + \xi_{jt}$ is the mean utility index, $\lambda_i \in \mathbb{R}^{d_{x_2}}$ are random coefficients on nonlinear characteristics $x_{jt}^{(2)}$, $\alpha_i > 0$ is the random price sensitivity, and $\epsilon_{ijt}$ is an idiosyncratic taste shock.
\end{assumption}

Linearity in characteristics is crucial for the symmetry results that follow. Under this specification, consumer preferences depend on characteristic differences rather than levels, which generates the exchangeability property exploited by \cite{gandhi_houde2019}.

\begin{assumption}\label{ass:normalization}
The outside good $j = 0$ has normalized characteristics and utility:
\begin{equation*}
	x_{0t}^{(1)} = 0, \quad x_{0t}^{(2)} = 0, \quad p_{0t} = 0, \quad \xi_{0t} = 0, \quad \text{so that} \quad \delta_{0t} = 0.
\end{equation*}
\end{assumption}

This is the standard normalization in discrete choice models, which pins down the location of utilities.

\subsection{Symmetric Representation}

Under Assumptions~\ref{ass:index}--\ref{ass:connected}, \cite{berry2013connected} establish global invertibility of the demand system.

\begin{lemma}[Berry, Gandhi, and Haile, 2013] \label{lem:invertibility}
Consider any price vector $p$ and any market share vector $s = (s_0, s_1, \ldots, s_J)$ such that $s_j > 0$ for all $j$ and $\sum_{j=1}^{J} s_j < 1$. Under Assumptions~\ref{ass:index}--\ref{ass:connected}, there is at most one vector $\delta = (\delta_1, \ldots, \delta_J)$ such that $\sigma_j(\delta, p) = s_j$ for all $j$.
\end{lemma}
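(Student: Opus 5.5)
The plan is the standard connected-substitutes argument of \cite{berry2013connected}, specialized to the index structure of Assumption~\ref{ass:index}. Fix $p$ and suppose two index vectors $\delta \neq \delta'$ both satisfy $\sigma_j(\delta,p) = s_j = \sigma_j(\delta',p)$ for $j = 1,\dots,J$; we will derive a contradiction. Since $p$ is held fixed throughout, only the $\delta$-part of Assumption~\ref{ass:connected} is needed. We also use that shares sum to one, so $\sigma_0 = 1 - \sum_{j\geq 1}\sigma_j$ is pinned down as well.

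First, relabel so that the set $K = \{j \in \{1,\dots,J\} : \delta_j' - \delta_j = \max_k(\delta_k' - \delta_k)\}$ collects the products with the largest increase. Without loss of generality assume this maximal increase $c$ is strictly positive; otherwise swap the roles of $\delta$ and $\delta'$, which is legitimate because $\delta \neq \delta'$ and $\delta_0 = \delta_0' = 0$ by Assumption~\ref{ass:normalization}.

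Second, we compare $\sum_{j\in K}\sigma_j$ at $\delta$ and at $\delta'$ via an intermediate vector $\tilde\delta = \delta + c\mathbf{1}_K$, which raises the indices of the products in $K$ only. The comparison has two legs:
\begin{enumerate}
\item[(a)] From $\tilde\delta$ to $\delta'$. This step lowers (weakly) the indices of goods outside $K$ relative to $\tilde\delta$. By part 1 of Assumption~\ref{ass:connected}, every $\sigma_k$ with $k \in K$ weakly increases, so $\sum_{K}\sigma_j(\delta') \geq \sum_K \sigma_j(\tilde\delta)$.
\item[(b)] From $\delta$ to $\tilde\delta$. Raising all $\delta_k$, $k\in K$, by the common amount $c$ weakly lowers each $\sigma_\ell$ with $\ell \notin K$, including the outside good. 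By part 2 of Assumption~\ref{ass:connected}, some $\ell \notin K$ has $\sigma_\ell$ strictly decreasing in some $\delta_k$ with $k\in K$. Hence $\sum_{\ell\notin K}\sigma_\ell$ strictly falls, and by adding-up $\sum_K \sigma_j(\tilde\delta) > \sum_K\sigma_j(\delta)$.
\end{enumerate}
Combining (a) and (b) gives $\sum_K s_j = \sum_K\sigma_j(\delta') > \sum_K \sigma_j(\delta) = \sum_K s_j$, a contradiction.

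The main obstacle is step (b). Part 2 of Assumption~\ref{ass:connected} gives strict monotonicity in one coordinate $\delta_k$, but the move from $\delta$ to $\tilde\delta$ shifts all $k\in K$ simultaneously. We would handle this by moving the coordinates of $K$ one at a time, ordering them so that the strict coordinate $k$ is moved last. Each intermediate move weakly lowers every $\sigma_\ell$ with $\ell\notin K$, and the final move strictly lowers the distinguished $\sigma_\ell$. The strict decrease must hold at the intermediate point, so we read ``strictly decreasing'' in the assumption as holding globally. The sign conventions in the statement of Assumption~\ref{ass:connected} also need care: substitutes' shares should be nonincreasing in rivals' $\delta$, and we adopt that reading. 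The case $K = \{1,\dots,J\}$ is harmless, because then $\ell = 0$ serves as the outside product.
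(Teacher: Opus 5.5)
The paper gives no proof of this lemma; it is imported from Berry, Gandhi, and Haile (2013), so your argument has to stand on its own. The architecture is right: an intermediate index vector, two monotone legs, adding-up, and connected strict substitution for strictness. However, step (a) fails with your choice of $K$. With $K$ the argmax set of $\delta'_j-\delta_j$ and $\tilde\delta=\delta+c\mathbf{1}_K$, the leg from $\tilde\delta$ to $\delta'$ moves each inside good $j\notin K$ from $\delta_j$ to $\delta'_j$. All you know is $\delta'_j-\delta_j<c$, and this difference may be positive. For example, take $J=2$, $\delta=(0,0)$ and $\delta'=(5,3)$. Then $K=\{1\}$ and $\tilde\delta=(5,0)$, and passing to $\delta'$ raises $\delta_2$, which by weak substitutes can only lower $\sigma_1$. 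So (a) actually gives $\sigma_1(\delta')\le\sigma_1(\tilde\delta)$, and together with (b) nothing is contradicted. The ``lowering'' in (a) would hold if $\tilde\delta$ were $\delta+c\mathbf{1}$. But then (b) also raises the indices of goods outside $K$, and the total share of $K$ is no longer signed.

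The repair is to work with the goods whose index strictly rose. Let $K^{+}=\{j\in\{1,\dots,J\}:\delta'_j>\delta_j\}$, which is nonempty after possibly swapping $\delta$ and $\delta'$. The swap needs only $\delta\neq\delta'$; Assumption~\ref{ass:normalization} is not among the lemma's hypotheses and is not needed. Set $\tilde\delta=\delta\vee\delta'$ componentwise, i.e.\ $\tilde\delta_j=\delta'_j$ for $j\in K^{+}$ and $\tilde\delta_j=\delta_j$ otherwise. Now the leg $\tilde\delta\to\delta'$ changes only goods outside $K^{+}$, each from $\delta_j$ down to $\delta'_j\le\delta_j$, so every $\sigma_k$ with $k\in K^{+}$ weakly rises. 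The leg $\delta\to\tilde\delta$ raises only goods in $K^{+}$, by unequal positive amounts, which is harmless, so your coordinate-by-coordinate treatment of (b) applies. In that treatment, select the pair $(k,\ell)$ from connected strict substitution at the endpoint $\tilde\delta$ and move $k$ last. The final move then runs along the line $t\mapsto(t,\tilde\delta_{-k})$, exactly the line on which the assumption asserts strict monotonicity, so the pair's dependence on the evaluation point causes no trouble. With these changes, $\sum_{j\in K^{+}}s_j=\sum_{j\in K^{+}}\sigma_j(\delta')\ge\sum_{j\in K^{+}}\sigma_j(\tilde\delta)>\sum_{j\in K^{+}}\sigma_j(\delta)=\sum_{j\in K^{+}}s_j$ gives the contradiction.
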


This lemma allows us to write the inverse demand system as
\begin{equation} \label{eq:inverse_demand_app}
	\delta_{jt} = \sigma_j^{-1}(s_t, p_t, x_t^{(2)}), \quad j = 1, \ldots, J.
\end{equation}
The key step is to show that the inverse demand admits a symmetric representation. Define the state vector for product $j$ as
\begin{equation*}
	\omega_{jt} = \{(s_{kt}, \Delta_{jkt})\}_{k \neq j},
\end{equation*}
where $\Delta_{jkt} = (p_{jt} - p_{kt}, x_{jt}^{(2)} - x_{kt}^{(2)})$ and $k$ ranges over $\{0, 1, \ldots, J\} \setminus \{j\}$.

\begin{lemma}[Gandhi and Houde, 2019] \label{lem:symmetry}
Under Assumptions~\ref{ass:index}--\ref{ass:linear_utility}, there exists a function $G: \mathcal{W} \to \mathbb{R}$, symmetric in its arguments (i.e., invariant to permutations of rival products), and a market-specific constant $C_t$, such that
\begin{equation} \label{eq:symmetric_rep}
	\sigma_j^{-1}(s_t, p_t, x_t^{(2)}) = G(\omega_{jt}) + C_t
\end{equation}
for all inside goods $j = 1, \ldots, J$.
\end{lemma}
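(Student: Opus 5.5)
The plan is to prove Lemma~\ref{lem:symmetry} by writing the market shares as integrals over consumer types and reading off two invariances of the demand system. The first is invariance of each share to a common shift in all mean utilities. The second is invariance of each share to relabeling rival products. These are then combined with the uniqueness in Lemma~\ref{lem:invertibility}.

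\textbf{Step 1 (demand as a function of differences).} Under Assumption~\ref{ass:linear_utility}, consumer $i$ chooses $j$ iff $u_{ijt}\ge u_{ikt}$ for all $k$. This is equivalent to
\[
(\delta_{jt}-\delta_{kt}) + (x^{(2)}_{jt}-x^{(2)}_{kt})'\lambda_i - \alpha_i(p_{jt}-p_{kt}) + \epsilon_{ijt}-\epsilon_{ikt}\ge 0
\quad \text{for all } k\neq j .
\]
Assume, as in Gandhi and Houde, that the joint distribution of $(\lambda_i,\alpha_i,\epsilon_{it})$ does not depend on product labels, i.e.\ the $\epsilon_{ikt}$ are exchangeable across products. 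Then
\[
s_{jt}=\Phi\bigl(\{\delta_{jt}-\delta_{kt},\Delta_{jkt}\}_{k\neq j}\bigr)
\]
for a single function $\Phi$ that does not depend on $j$. Moreover, $\Phi$ is symmetric under permutations of the pairs indexed by $k$.

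\textbf{Step 2 (inversion in differences).} Fix product $j$. Consider the map from the relative mean utilities $\tilde\delta_{jk}=\delta_{jt}-\delta_{kt}$, $k\ne j$ (including $k=0$), to rival shares $(s_{kt})_{k\neq j}$, holding $\{\Delta_{jkt}\}$ fixed. By Step 1 each rival share $s_{kt}$ is also a function of differences, namely $\tilde\delta_{kl}=\tilde\delta_{jl}-\tilde\delta_{jk}$ together with $\Delta_{klt}=\Delta_{jlt}-\Delta_{jkt}$. Hence the whole vector $(s_{kt})_{k\ne j}$ is a function of $(\tilde\delta_{jk},\Delta_{jkt})_{k\neq j}$ only.

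Connected substitutes (Assumption~\ref{ass:connected}) and Lemma~\ref{lem:invertibility} should then give injectivity of this map in $\tilde\delta$. Since the $J$ rival shares determine the full share vector, and $\delta$ is determined by $s$ up to the normalization, fixing $j$'s index as reference makes $\tilde\delta$ unique. This yields $\tilde\delta_{jk}=h_k(\omega_{jt})$. In particular, taking $k=0$ defines $G(\omega_{jt}):=h_0(\omega_{jt})=\delta_{jt}-\delta_{0t}$.

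\textbf{Step 3 (symmetry and the constant).} Permuting the labels of rivals $k\neq j$ permutes the arguments of $\Phi$ and of the rival-share map in the same way. By uniqueness of the inverse, the recovered $\tilde\delta$ is permuted accordingly. The $k=0$ component is never moved because the outside good is distinguished, so $G$ is invariant to permutations of inside rivals. Writing $\sigma_j^{-1}=\delta_{jt}=G(\omega_{jt})+\delta_{0t}$ gives \eqref{eq:symmetric_rep} with $C_t=\delta_{0t}$. This constant is common to all $j$ in market $t$, and equals zero under Assumption~\ref{ass:normalization}.

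\textbf{Main obstacle.} I expect the hardest step to be Step 2. Lemma~\ref{lem:invertibility} gives injectivity of $\delta\mapsto s$ at levels with a fixed outside good. What is needed instead is a unique $\tilde\delta$ from rival shares and differences alone, and this must hold in a form that is the same function for every $j$. I would handle it by applying Lemma~\ref{lem:invertibility} to a relabeled system in which product $j$'s utility is normalized. Concretely, subtract $\delta_{jt}$ and the characteristics of $j$ from all products, which preserves shares by Step 1. This treats $j$ as the reference alternative, and its connected-substitutes property is inherited from Assumption~\ref{ass:connected}.
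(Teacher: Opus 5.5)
The paper gives no proof of this lemma; it is cited from Gandhi and Houde, so your argument has to stand on its own. Steps 1 and 2 are fine. You are also right to add exchangeability of $(\lambda_i,\alpha_i,\epsilon_{it})$, since Assumptions~\ref{ass:index}--\ref{ass:linear_utility} say nothing about the distribution of the taste shocks and cannot deliver any symmetry.

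The gap is in Step 3, in the choice $G(\omega_{jt}):=h_0(\omega_{jt})=\delta_{jt}-\delta_{0t}$. The state $\omega_{jt}$ lists the outside good as one of the $J$ rivals of $j$, and the lemma asks for invariance under permutations of all of them. Your $G$ reads off whatever sits in the outside-good slot, so it is invariant only under permutations of inside rivals.

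In fact, your Step 1 makes $\Phi$ symmetric in all pairs, including $k=0$. So the forward map, and hence its inverse $h$, are equivariant under every permutation of the $J$ slots. Swapping $(s_{0t},\Delta_{j0t})$ with an inside pair $(s_{kt},\Delta_{jkt})$ therefore turns your $G$ into $\delta_{jt}-\delta_{kt}$.

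A telling symptom is that your constant collapses to $C_t=\delta_{0t}=0$. The paper instead applies \eqref{eq:symmetric_rep} at $j=0$ in Lemma~\ref{lem:constant} to get $C_t=-G(\omega_{0t})$, which is generally nonzero. That step only makes sense for a $G$ that treats the outside good as an ordinary alternative and is defined on $\omega_{0t}$. Your $h_0$ is not defined there, because $\omega_{0t}$ has no outside-good slot. What you do establish (symmetry in inside rivals, with $\delta_{jt}=G(\omega_{jt})$) would be enough for Theorem~\ref{thm:symmetric_inverse}. It is nonetheless strictly weaker than the lemma as stated and used.

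The repair needs only your own ingredients. Replace $h_0$ by a symmetric function of all the recovered differences, for instance
\[
G(\omega_{jt}):=\frac{1}{J+1}\sum_{k\neq j}h_k(\omega_{jt})=\delta_{jt}-\bar\delta_t,\qquad \bar\delta_t=\frac{1}{J+1}\sum_{k=0}^{J}\delta_{kt}.
\]
This $G$ is invariant under all permutations of the $J$ rival pairs, because $h$ is fully equivariant. The identity $\delta_{jt}=G(\omega_{jt})+C_t$ with $C_t=\bar\delta_t$ holds for every $j\in\{0,\dots,J\}$. The normalization $\delta_{0t}=0$ then yields exactly $C_t=-G(\omega_{0t})$.

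Your ``main obstacle'' is also lighter than you expect. The pair $\Delta_{j0t}=(p_{jt},x^{(2)}_{jt})$, together with the other differences, pins down all characteristic levels. So two vectors $\tilde\delta$ generating the same rival shares give two level vectors with $\delta_{0t}=0$ and the same full share vector. Lemma~\ref{lem:invertibility} then gives injectivity directly. Equivariance of the forward map passes to the inverse, with no need to re-verify connected substitutes for a relabeled system.
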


Note that the market constant $C_t$ in Lemma~\ref{lem:symmetry} is not a free parameter, it is determined by the outside good normalization.

\begin{lemma}\label{lem:constant}
Under Assumptions~\ref{ass:index}--\ref{ass:normalization}, the market constant satisfies
\begin{equation} \label{eq:Ct_pinned}
	C_t = -G(\omega_{0t}),
\end{equation}
where $\omega_{0t} = \{(s_{kt}, \Delta_{0kt})\}_{k=1}^{J}$ is the state vector from the outside good's perspective, with $\Delta_{0kt} = (0 - p_{kt}, 0 - x_{kt}^{(2)}) = (-p_{kt}, -x_{kt}^{(2)})$.
\end{lemma}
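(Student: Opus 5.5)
The plan is to evaluate the symmetric representation of Lemma~\ref{lem:symmetry} at the outside good and then use the normalization of Assumption~\ref{ass:normalization}. The key observation is that the outside good plays the same role as any other alternative in the linear-in-characteristics model. Under Assumption~\ref{ass:linear_utility} utilities depend on characteristics only through differences relative to rivals. Hence the argument behind Lemma~\ref{lem:symmetry} (Proposition 1 of \cite{gandhi_houde2019}) extends to all $J+1$ alternatives once we write the inverse demand in terms of $\delta$ differences. So the first step is to restate the Gandhi--Houde symmetric representation for the full set $\{0,1,\dots,J\}$.

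Concretely, identifying $\delta$ from $s$ only pins down mean utilities up to a common location: $\sigma(\delta+c\mathbf{1},p)=\sigma(\delta,p)$ when all $J+1$ mean utilities, including $\delta_0$, are shifted together. Exchangeability then gives a single symmetric function $G$ and a market-level location $C_t$ such that $\delta_{kt}=G(\omega_{kt})+C_t$ for every $k\in\{0,\dots,J\}$. Here $\omega_{0t}=\{(s_{kt},\Delta_{0kt})\}_{k=1}^{J}$ is the outside good's state, and the same function $G$ applies to it. The reason is that $\omega_{0t}$ has exactly the form of any product's state: its own rival shares plus characteristic differences. I expect this extension to be the main obstacle. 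Lemma~\ref{lem:symmetry} is stated only for inside goods, so I must check that the outside good's vector has the same format and dimension. Under Assumption~\ref{ass:normalization} its characteristics are $(p_{0t},x^{(2)}_{0t})=(0,0)$, which makes $\Delta_{0kt}=(-p_{kt},-x^{(2)}_{kt})$, with $s_{0t}$ recovered from $1-\sum_k s_{kt}$. I also need the permutation-invariance argument to treat label $0$ symmetrically with the inside goods. I would handle this by redoing the Gandhi--Houde change of variables on the differenced utility $u_{ijt}-u_{ikt}$, where no alternative is privileged.

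Given this, the conclusion is immediate. Evaluating at $k=0$ and using $\delta_{0t}=0$ from Assumption~\ref{ass:normalization} gives $0=G(\omega_{0t})+C_t$, so $C_t=-G(\omega_{0t})$. Finally, I would note that this is consistent with \eqref{eq:symmetric_rep}, since the inside-good equations are just the $k\ge1$ cases of the same system.
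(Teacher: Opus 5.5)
Your argument is correct and is the same as the paper's: evaluate the symmetric representation \eqref{eq:symmetric_rep} at $j=0$, then use $\delta_{0t}=0$ from Assumption~\ref{ass:normalization} to get $0=G(\omega_{0t})+C_t$. The only difference is that the paper applies Lemma~\ref{lem:symmetry} at the outside good without comment, even though that lemma is stated only for inside goods, whereas you justify the extension explicitly (location invariance of $\sigma$ plus exchangeability of all $J+1$ taste shocks, including $\epsilon_{i0t}$), which strengthens the argument rather than changing its route.
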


\begin{proof}
Apply the symmetric representation~\eqref{eq:symmetric_rep} to the outside good $j = 0$,
\begin{equation*}
	\sigma_0^{-1}(s_t, p_t, x_t^{(2)}) = G(\omega_{0t}) + C_t.
\end{equation*}
By definition, $\sigma_0^{-1}(s_t, p_t, x_t^{(2)}) = \delta_{0t}$. Under Assumption~\ref{ass:normalization}, $\delta_{0t} = 0$. Therefore,
\begin{equation*}
	G(\omega_{0t}) + C_t = 0 \quad \Longrightarrow \quad C_t = -G(\omega_{0t}).
\end{equation*}
\end{proof}

\subsection{Proof of Theorem~\ref{thm:symmetric_inverse}}

Substituting~\eqref{eq:Ct_pinned} into~\eqref{eq:symmetric_rep} gives
\begin{equation} \label{eq:delta_difference}
	\delta_{jt} = \sigma_j^{-1}(s_t, p_t, x_t^{(2)}) = G(\omega_{jt}) - G(\omega_{0t}).
\end{equation}
The inverse demand is thus a difference of the symmetric function $G$ evaluated at two state vectors.

Next, we express $\omega_{0t}$ in terms of $\omega_{jt}$. For any inside good rival $k \neq j$,
\begin{align*}
	\Delta_{0kt} &= (-p_{kt}, -x_{kt}^{(2)}) \\
	&= (p_{jt} - p_{kt} - p_{jt}, x_{jt}^{(2)} - x_{kt}^{(2)} - x_{jt}^{(2)}) \\
	&= \Delta_{jkt} - \Delta_{j0t}.
\end{align*}
Since $\omega_{jt}$ contains both $\Delta_{jkt}$ for all $k \neq j$ and $\Delta_{j0t} = (p_{jt}, x_{jt}^{(2)})$, we can write $\omega_{0t} = h(\omega_{jt})$ for a deterministic function $h$. Therefore, $G(\omega_{0t})$ is a function of $\omega_{jt}$.

Define
\begin{equation} \label{eq:gamma_def}
	\gamma(\omega_{jt}) \equiv \log\left(\frac{s_{jt}}{s_{0t}}\right) - G(\omega_{jt}) + G(\omega_{0t}).
\end{equation}
Since $s_{jt} = 1 - \sum_{k \neq j} s_{kt}$ and $s_{0t} \in \{s_{kt}\}_{k \neq j}$ are both determined by $\omega_{jt}$, and $G(\omega_{0t}) = G(h(\omega_{jt}))$, the function $\gamma$ depends only on $\omega_{jt}$.

From~\eqref{eq:delta_difference} and $\delta_{jt} = x_{jt}^{(1)} + \xi_{jt}$ we get
\begin{equation*}
	x_{jt}^{(1)} + \xi_{jt} = G(\omega_{jt}) - G(\omega_{0t}).
\end{equation*}
Rearranging and using~\eqref{eq:gamma_def} gives
\begin{align*}
	\log\left(\frac{s_{jt}}{s_{0t}}\right) &= \gamma(\omega_{jt}) + G(\omega_{jt}) - G(\omega_{0t}) \\
	&= \gamma(\omega_{jt}) + x_{jt}^{(1)} + \xi_{jt}.
\end{align*}
\qed 

\subsection{Empirical moment basis functions} \label{app:em_bfs}

Here we discuss how to leverage the symmetry of the inverse demand function to construct informative basis functions to approximate the empirical distribution of the characteristic differences. We have a function $\gamma\left(\omega_{jt}\right)$ we need to approximate, where 
\[
\omega_{jt}=\left(\omega'_{j,1,t},\dots,\,\omega'_{j,j-1,t},\,\omega'_{j,j+1,t},\dots,\,\omega'_{j,J,t}\right)'
\]
is a vector representing the ``state'' of product $j$ in market
$t$ (the shares and product characteristic differences with respect
to the rivals in the same market). Given the vector symmetric theory
underlying demand across markets, without loss of generality we can express 
\[
\gamma\left(\omega_{jt}\right) = g\left(F\left(\omega_{jt}\right)\right)
\]
where $F$ is the empirical distribution of the variables in $\omega_{jt}$.

An approximation strategy for $\gamma$ can be structured as following.
For simplicity, write $F_{jt}=F\left(\omega_{jt}\right)$ and let
us approximate the distribution $F_{jt}$ by a finite set of moments
$m_{1}\left(F_{jt}\right),\dots,\,m_{L}\left(F_{jt}\right)$. Then our
approximation to $\gamma$ can be expressed as 
\[
\gamma\left(\omega_{jt}\right)\approx g\left(m_{1}\left(F_{jt}\right),\dots,\,m_{L}\left(F_{jt}\right)\right).
\]
There are two issues we need to resolve to implement this approximation: 
\begin{enumerate}
\item The choice of moments $m_{1}\left(F_{jt}\right),\dots,\,m_{L}\left(F_{jt}\right)$ 
\item The choice of a predictive function $g$ 
\end{enumerate}
Let us first deal with the choice of $m_{l}$, $l=1,\dots,\,L$. Let
us define $M_{jt}\left(\tau\right)$ as the MGF associated with
$F_{jt}$, where $\tau=\left(\tau_{1},\dots,\,\tau_{d_{x_{2}}+1}\right)$
and $d_{x_{2}}$ is the dimension of $\text{\ensuremath{x^{(2)}}}$.
Then define the moment 
\[
m_{p_{1},\dots,\,p_{d_{x_{2}}+1}}^{jt}=\left.\frac{\partial^{p_{1}+\dots p_{d_{x_{2}}+1}}}{\partial t_{1}^{p_{1}}\dots\partial t_{d_{x_{2}}+1}^{p_{d_{x_{2}}+1}}}M_{jt}(\tau)\right|_{\tau=0}
\]
This class of moments is defined by the multi-index $p_{1},\dots p_{d_{x_{2}}+1}$
for $p_{k}\in\mathbb{Z}_{+}$. We can define the set of $n^{th}$
order moments to be 
\[
B_{n}^{jt}=\left\{ m_{p_{1},\dots,\, p_{d_{x_{2}}+1}}^{jt}:\sum_{k=1}^{d_{x_{2}}+1}p_{k}=n\mbox{ and }n\geq2\mbox{ and \ensuremath{p_{1}>0} and }\exists k>1\ s.t.\ p_{k}>0\right\} .
\]

Observe that we restrict shares which are the first dimension of the
state vector $\omega_{jt}$ to never enter with a zero power, e.g.,
each moment has some interaction with shares. In addition, shares must
interact with at least one dimension of differentiation. Then the set
of moments entering the $n^{th}$ order approximation for each $t$
is 
\[
\bigcup_{i=2}^{n}B_{i}^{jt}
\]

The choice of $g$ can be determine by any functional form that allows
for a flexible approximation from the predictors $m_{1}\left(F_{jt}\right),\dots,\,m_{L}\left(F_{jt}\right)$, such as polynomials, B-splines, wavelets, etc.


\section{Own-price elasticity functional}\label{app:price_el}

Recall the semiparametric inverse demand system from equation \eqref{eq:demand_main_eq}:
\begin{equation*}
    \log\left(\frac{s_{jt}}{s_{0t}}\right) = x_{jt}^{(1)} + \gamma(\omega_{jt}) + \xi_{jt}, \quad j = 1, \ldots, J,
\end{equation*}
where $s_{jt}$ is the market share of product $j$, $s_{0t}$ is the outside good share, $x_{jt}^{(1)}$ are exogenous characteristics entering linearly, $\gamma(\cdot)$ is an unknown function, $\omega_{jt}$ contains endogenous variables (shares, prices, and nonlinear characteristic differences), and $\xi_{jt}$ is the structural error.

Define the inverse demand residuals as
\begin{equation*}
    \Upsilon_{jt}(s_t, p_t; \gamma) \equiv \log(s_{jt}/s_{0t}) - x_{jt}^{(1)} - \gamma(\omega_{jt}) - \xi_{jt} = 0.
\end{equation*}
By the implicit function theorem (IFT), the share response to prices is
\begin{equation*}
    \nabla_{p_t} s_t = -[\nabla_{s_t} \Upsilon_t]^{-1} \nabla_{p_t} \Upsilon_t.
\end{equation*}
Thus, the own-price elasticity for product $j$ is given by
\begin{equation}\label{eq:el_ift_app}
    \varepsilon_{jj}(\gamma) = \frac{p_{jt}}{s_{jt}} \frac{\partial s_{jt}}{\partial p_{jt}} = -\frac{p_{jt}}{s_{jt}} \left([\nabla_{s_t} \Upsilon_t]^{-1} \nabla_{p_t} \Upsilon_t\right)_{jj}.
\end{equation}

Our goal is to derive the Gateaux derivative of $\varepsilon_{jj}(\gamma)$ with respect to $\gamma$ and prove that it is linear in the perturbation direction so that the results of Theorem~\ref{thm:asy_norm_nonlin} and Corollary~\ref{cor:asy_norm_nonlin_wm} are applicable to the elasticity functional.

First, we will rewrite \eqref{eq:el_ift_app} in matrix form for convenience. Let us take a closer look at the price and share Jacobians. The $(j,k)$ element of the price Jacobian is
\begin{equation*}
    (\nabla_{p_t} \Upsilon_t)_{jk} = \frac{\partial \Upsilon_{jt}}{\partial p_{kt}} = -\frac{\partial \gamma(\omega_{jt})}{\partial p_{kt}}.
\end{equation*}
Define matrix $\Gamma^p \in \R^{J \times J}$ with elements
\begin{equation*}
    \Gamma^p_{jk} \equiv \frac{\partial \gamma(\omega_{jt})}{\partial p_{kt}}.
\end{equation*}
Then,
\begin{equation*}
    \nabla_{p_t} \Upsilon_t = -\Gamma^p.
\end{equation*}

The $(j,k)$ element of the share Jacobian is
\begin{equation*}
    (\nabla_{s_t} \Upsilon_t)_{jk} = \frac{\partial \Upsilon_{jt}}{\partial s_{kt}} = \frac{\partial \log(s_{jt}/s_{0t})}{\partial s_{kt}} - \frac{\partial \gamma(\omega_{jt})}{\partial s_{kt}}.
\end{equation*}
Define the following matrices: (i) $L \in \R^{J \times J}$ with $L_{jk} \equiv \frac{\partial \log(s_{jt}/s_{0t})}{\partial s_{kt}}$, and (ii) $\Gamma^s \in \R^{J \times J}$ with $\Gamma^s_{jk} \equiv \frac{\partial \gamma(\omega_{jt})}{\partial s_{kt}}$. Then, we can rewrite the share Jacobian as
\begin{equation*}
    \nabla_{s_t} \Upsilon_t = L - \Gamma^s.
\end{equation*}
Hence, the share derivative matrix becomes 
\begin{equation*}
	\nabla_{p_t} s_t = (L - \Gamma^s)^{-1} \Gamma^p.
\end{equation*}
As a result, we can rewrite \eqref{eq:el_ift_app} in matrix form as
\begin{equation}
    \varepsilon_{jj}(\gamma) = \frac{p_{jt}}{s_{jt}} \left[(L - \Gamma^s(\gamma))^{-1} \Gamma^p(\gamma)\right]_{jj}.
\end{equation}

\begin{remark*}
For the logit model, $\gamma(\omega_{jt}) = \alpha p_{jt} + x_{jt}^{(2)}\beta$, so $\Gamma^s = 0$ and $\Gamma^p = \alpha I$. The share Jacobian reduces to $L$, which has the standard form $L = D_s^{-1} + \frac{1}{s_{0t}}\mathbf{1}\mathbf{1}'$ where $D_s = \text{diag}(s_1, \ldots, s_J)$.
\end{remark*}

\subsection{The Gateaux Derivative}

Here we introduce a useful lemma to prove the main statement below.

\begin{lemma}\label{lem:inverse_expansion}
For a matrix $M$ and perturbation $N$, with $\|M^{-1}N\| < 1$:
\begin{equation*}
    (M - \epsilon N)^{-1} = M^{-1} + \epsilon M^{-1}NM^{-1} + O(\epsilon^2).
\end{equation*}
\end{lemma}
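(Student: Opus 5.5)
The plan is to go through the Neumann series: factor $M-\epsilon N$ as a perturbation of the identity, expand the resulting inverse as a geometric series, and show that everything beyond the first-order term sums to $O(\epsilon^2)$. Let $\|\cdot\|$ be any submultiplicative matrix norm, which is the sense in which the hypothesis $\|M^{-1}N\|<1$ is read, and assume $M$ is invertible, as the statement implicitly requires. Write
\begin{equation*}
	M - \epsilon N = M(I - \epsilon K), \qquad K \equiv M^{-1}N,
\end{equation*}
so that $(M-\epsilon N)^{-1} = (I-\epsilon K)^{-1}M^{-1}$ whenever $I-\epsilon K$ is invertible. For $|\epsilon|\le 1$ we have $\|\epsilon K\| \le \|K\| < 1$. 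Hence the series $\sum_{k\ge 0}(\epsilon K)^k$ converges absolutely. A telescoping argument, $(I-\epsilon K)\sum_{k=0}^{m}(\epsilon K)^k = I-(\epsilon K)^{m+1}\to I$, shows that its sum is $(I-\epsilon K)^{-1}$.

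Next I split the series after the linear term:
\begin{equation*}
	(M-\epsilon N)^{-1} = M^{-1} + \epsilon M^{-1}NM^{-1} + R(\epsilon), \qquad R(\epsilon) = \sum_{k\ge 2}\epsilon^k K^k M^{-1}.
\end{equation*}
Here $\epsilon K M^{-1}=\epsilon M^{-1}NM^{-1}$ is exactly the claimed first-order term.

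The only real work is bounding the remainder uniformly. By submultiplicativity, for $|\epsilon|\le 1$,
\begin{equation*}
	\|R(\epsilon)\| \le \epsilon^2\|K\|^2\|M^{-1}\|\sum_{k\ge 0}|\epsilon|^k\|K\|^k \le \epsilon^2\,\frac{\|K\|^2\|M^{-1}\|}{1-\|K\|},
\end{equation*}
which gives $R(\epsilon)=O(\epsilon^2)$ with an explicit constant. If one wants $\epsilon$ only small, rather than $|\epsilon|\le 1$, the same bound holds with $\|K\|$ replaced by $|\epsilon|\|K\|\le 1/2$. In that case the hypothesis $\|M^{-1}N\|<1$ is not even needed.

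For the later use in the Gateaux derivative of $\varepsilon_{jj}$, one then takes $M=A(\gamma)=L-\Gamma^s(\gamma)$ and $N=Z^s$. This yields the derivative of $A^{-1}(\gamma+\epsilon\zeta)$ at $\epsilon=0$ as $A^{-1}Z^sA^{-1}$.
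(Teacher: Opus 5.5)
Your proof is correct and follows the same route as the paper: factor $M-\epsilon N = M(I-\epsilon M^{-1}N)$ and expand $(I-\epsilon M^{-1}N)^{-1}$ as a Neumann series. Your explicit tail bound $\|R(\epsilon)\|\le \epsilon^2\|M^{-1}N\|^2\|M^{-1}\|/(1-\|M^{-1}N\|)$ makes rigorous the $O(\epsilon^2)$ step that the paper only asserts, and your remark that invertibility of $M$ alone suffices for small $\epsilon$ is also right.
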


\begin{proof}
Using the Neumann series expansion:
\begin{align*}
    (M - \epsilon N)^{-1} &= (M(I - \epsilon M^{-1}N))^{-1} \\
    &= (I - \epsilon M^{-1}N)^{-1}M^{-1} \\
    &= (I + \epsilon M^{-1}N + O(\epsilon^2))M^{-1} \\
    &= M^{-1} + \epsilon M^{-1}NM^{-1} + O(\epsilon^2). \qedhere
\end{align*}
\end{proof}

\begin{proposition}\label{prop:gateaux}
The Gateaux derivative of the own-price elasticity functional $f(\gamma) = \varepsilon_{jj}(\gamma)$ in direction $\zeta$ is given by
\begin{equation}\label{eq:gateaux}
    D_\gamma \varepsilon_{jj}[\zeta] = \frac{p_{jt}}{s_{jt}}\left[\left(A^{-1}Z^p\right)_{jj} + \left(A^{-1}Z^s A^{-1}\Gamma^p\right)_{jj}\right]
\end{equation}
where $A = L - \Gamma^s$, and $Z^p$, $Z^s$ are the price and share derivative matrices of the perturbation $\zeta$.
\end{proposition}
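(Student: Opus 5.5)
The plan is a direct first-order expansion of $\varepsilon_{jj}(\gamma+\epsilon\zeta)$ in the scalar $\epsilon$, using the matrix form $\varepsilon_{jj}(\gamma) = \frac{p_{jt}}{s_{jt}}[(L-\Gamma^s(\gamma))^{-1}\Gamma^p(\gamma)]_{jj}$ derived above. The derivative is then read off as the coefficient of $\epsilon$.

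First I would note how the ingredients depend on $\gamma$.
\begin{itemize}
\item The prefactor $p_{jt}/s_{jt}$ and the log-share Jacobian $L$ depend only on the data, not on $\gamma$.
\item $\Gamma^p$ and $\Gamma^s$ are built from partial derivatives of $\gamma$ evaluated at $\omega_{jt}$. Differentiation is linear, so $\Gamma^p(\gamma+\epsilon\zeta)=\Gamma^p+\epsilon Z^p$ and $\Gamma^s(\gamma+\epsilon\zeta)=\Gamma^s+\epsilon Z^s$ hold exactly, where $Z^p_{jk}=\partial\zeta(\omega_{jt})/\partial p_{kt}$ and $Z^s_{jk}=\partial\zeta(\omega_{jt})/\partial s_{kt}$.
\end{itemize}
Consequently the perturbed share Jacobian is $A_\epsilon = A-\epsilon Z^s$ with $A=L-\Gamma^s$.

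Second, I would apply Lemma \ref{lem:inverse_expansion} with $M=A$ and $N=Z^s$. This needs $A$ to be invertible, which is already implicit in the implicit-function-theorem step defining $\varepsilon_{jj}$. It also needs $\|A^{-1}Z^s\|<1$, which holds for all $|\epsilon|$ small enough since the condition is really on $\epsilon A^{-1}Z^s$. The lemma gives
\[
(A-\epsilon Z^s)^{-1} = A^{-1} + \epsilon A^{-1}Z^sA^{-1} + O(\epsilon^2).
\]

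Third, I would multiply this by $\Gamma^p+\epsilon Z^p$, which gives
\[
A^{-1}\Gamma^p + \epsilon\left(A^{-1}Z^p + A^{-1}Z^sA^{-1}\Gamma^p\right) + O(\epsilon^2).
\]
Then I would take the $(j,j)$ entry and multiply by $p_{jt}/s_{jt}$. Subtracting $\varepsilon_{jj}(\gamma)$, dividing by $\epsilon$ and letting $\epsilon\to0$ yields exactly \eqref{eq:gateaux}. The Gateaux derivative is the limit of the difference quotient, and the $O(\epsilon^2)$ remainder contributes nothing to it.

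The only step requiring care is the remainder in the Neumann expansion, namely justifying that it is genuinely $O(\epsilon^2)$ pointwise in $(s_t,p_t,x_t)$. I would handle this by noting that for fixed data the matrices $A$, $Z^s$, $Z^p$ and $\Gamma^p$ are finite, so the tail of the geometric series is bounded by $\epsilon^2\|A^{-1}Z^s\|^2\|A^{-1}\|/(1-\epsilon\|A^{-1}Z^s\|)$. Because I expand in $\epsilon$ directly, no global smallness condition on $\zeta$ is required. As a by-product, the derivative is visibly linear in $\zeta$, since $A$ and $\Gamma^p$ do not involve $\zeta$ while $Z^p$ and $Z^s$ are linear in it.
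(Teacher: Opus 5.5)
Your proposal is correct and follows the paper's proof essentially step for step. Both arguments use the exact linear perturbation of $\Gamma^p$ and $\Gamma^s$, the Neumann expansion of $(A-\epsilon Z^s)^{-1}$ via Lemma~\ref{lem:inverse_expansion}, multiplication by $\Gamma^p+\epsilon Z^p$, and reading off the coefficient of $\epsilon$ in the $(j,j)$ entry scaled by $p_{jt}/s_{jt}$. Two points of yours tighten the paper's informal treatment of the $O(\epsilon^2)$ remainder: the smallness condition you need is really on $\epsilon A^{-1}Z^s$ rather than on $A^{-1}Z^s$ as the lemma is stated, and you give an explicit bound on the geometric tail.
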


\begin{proof}
Let $\zeta$ be a perturbation direction. Consider $\gamma \to \gamma + \epsilon \zeta$ for small $\epsilon > 0$. 
Define the derivatives of the perturbation as
\begin{align*}
    Z^s \in \R^{J \times J} \text{ with } Z^s_{jk} \equiv \frac{\partial \zeta(\omega_{jt})}{\partial s_{kt}}, \\
    Z^p \in \R^{J \times J} \text{ with }  Z^p_{jk} \equiv \frac{\partial \zeta(\omega_{jt})}{\partial p_{kt}}.
\end{align*}
Under the perturbation $\gamma \to \gamma + \epsilon\zeta$,
\begin{align*}
    \Gamma^s(\gamma + \epsilon\zeta) &= \Gamma^s(\gamma) + \epsilon Z^s, \\
    \Gamma^p(\gamma + \epsilon\zeta) &= \Gamma^p(\gamma) + \epsilon Z^p.
\end{align*}
Define $A(\gamma) \equiv L - \Gamma^{s}(\gamma)$, then $A(\gamma + \epsilon\zeta) = L - \Gamma^s - \epsilon Z^s = A - \epsilon Z^s$.

By Lemma~\ref{lem:inverse_expansion}, the perturbed share derivative matrix is given by
\begin{align*}
    \nabla_{p_t} s_t (\gamma \to \gamma + \epsilon\zeta) & = (A - \epsilon Z^s)^{-1}(\Gamma^p + \epsilon Z^p) \\
    &= \left(A^{-1} + \epsilon A^{-1}Z^s A^{-1} + O(\epsilon^2)\right)\left(\Gamma^p + \epsilon Z^p\right) \\
    &= A^{-1}\Gamma^p + \epsilon\left(A^{-1}Z^p + A^{-1}Z^s A^{-1}\Gamma^p\right) + O(\epsilon^2).
\end{align*}
Hence, the perturbed elasticity is
\begin{align*}
	\varepsilon_{jj}(\gamma + \epsilon\zeta) &= \frac{p_{jt}}{s_{jt}}\left[(A - \epsilon Z^s)^{-1}(\Gamma^p + \epsilon Z^p)\right]_{jj} \\
    &= \frac{p_{jt}}{s_{jt}}\left[(A^{-1}\Gamma^p)_{jj} + \epsilon\left(A^{-1}Z^p + A^{-1}Z^s A^{-1}\Gamma^p\right)_{jj}\right] + O(\epsilon^2).
\end{align*}
Finally, by definition of the Gateaux derivative,
\begin{align*}
    D_\gamma \varepsilon_{jj}[\zeta] &= \frac{d}{d\epsilon}\bigg|_{\epsilon=0} \varepsilon_{jj}(\gamma + \epsilon\zeta) \\
    &= \frac{d}{d\epsilon}\bigg|_{\epsilon=0} \frac{p_{jt}}{s_{jt}}\left[(A^{-1}\Gamma^p)_{jj} + \epsilon\left(A^{-1}Z^p + A^{-1}Z^s A^{-1}\Gamma^p\right)_{jj} + O(\epsilon^2)\right] \\
    &= \frac{p_{jt}}{s_{jt}}\left[\left(A^{-1}Z^p\right)_{jj} + \left(A^{-1}Z^s A^{-1}\Gamma^p\right)_{jj}\right]. \qedhere
\end{align*}
\end{proof}

\begin{proposition}\label{prop:linearity}
The Gateaux derivative $D_\gamma \varepsilon_{jj}[\zeta]$ is linear in $\zeta$. That is, for any perturbations $\zeta_1, \zeta_2$ and scalars $c_1, c_2 \in \R$,
\begin{equation*}
    D_\gamma \varepsilon_{jj}[c_1\zeta_1 + c_2\zeta_2] = c_1 D_\gamma \varepsilon_{jj}[\zeta_1] + c_2 D_\gamma \varepsilon_{jj}[\zeta_2].
\end{equation*}
\end{proposition}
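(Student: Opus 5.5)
The plan is to read linearity directly off the closed form in Proposition~\ref{prop:gateaux}. The key observation is that in
\begin{equation*}
D_\gamma \varepsilon_{jj}[\zeta] = \frac{p_{jt}}{s_{jt}}\left[\left(A^{-1}Z^p\right)_{jj} + \left(A^{-1}Z^s A^{-1}\Gamma^p\right)_{jj}\right]
\end{equation*}
the direction $\zeta$ enters only through the matrices $Z^p(\zeta)$ and $Z^s(\zeta)$. The objects $p_{jt}/s_{jt}$, $A = L - \Gamma^s(\gamma)$ and $\Gamma^p(\gamma)$ depend on the data and on the base point $\gamma$, but not on $\zeta$. So I will show that $\zeta \mapsto Z^p(\zeta)$ and $\zeta \mapsto Z^s(\zeta)$ are linear maps into $\R^{J\times J}$, and that the remaining operations are linear in these matrices.

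\textbf{Step 1: linearity of $Z^p$ and $Z^s$.} Fix $\zeta_1,\zeta_2$ differentiable in the relevant arguments and scalars $c_1,c_2$. For each entry,
\begin{equation*}
Z^p_{jk}(c_1\zeta_1+c_2\zeta_2) = \frac{\partial (c_1\zeta_1+c_2\zeta_2)(\omega_{jt})}{\partial p_{kt}} = c_1 Z^p_{jk}(\zeta_1) + c_2 Z^p_{jk}(\zeta_2),
\end{equation*}
by linearity of partial differentiation. The chain rule through $\omega_{jt}$ does not affect this, since the inner map $(s_t,p_t)\mapsto\omega_{jt}$ does not involve $\zeta$. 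The same argument applies to $Z^s$.

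\textbf{Step 2: the remaining operations are linear.} The maps $X \mapsto A^{-1}X$ and $X \mapsto A^{-1}XA^{-1}\Gamma^p$ are linear in $X$, because matrix multiplication by fixed matrices is bilinear. Extracting the $(j,j)$ entry and multiplying by the fixed scalar $p_{jt}/s_{jt}$ are also linear. Composing these with Step 1 and adding the two terms gives the claimed identity
\begin{equation*}
D_\gamma \varepsilon_{jj}[c_1\zeta_1+c_2\zeta_2] = c_1 D_\gamma \varepsilon_{jj}[\zeta_1] + c_2 D_\gamma \varepsilon_{jj}[\zeta_2].
\end{equation*}

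The only delicate point is that Proposition~\ref{prop:gateaux} rests on the expansion in Lemma~\ref{lem:inverse_expansion}, which needs $\|A^{-1}\epsilon Z^s\|<1$. I need this to hold for the combined direction $c_1\zeta_1+c_2\zeta_2$ and not only for $\zeta_1$ and $\zeta_2$ separately. Since $\epsilon \to 0$ and $Z^s(c_1\zeta_1+c_2\zeta_2)$ is a fixed finite matrix, it suffices that $A$ be invertible, which the implicit function theorem representation already assumes. Once this is noted, the formula applies to every direction, and linearity follows as above.
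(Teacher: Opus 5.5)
Your proposal is correct and matches the paper's proof: linearity of differentiation makes $Z^p$ and $Z^s$ linear in $\zeta$, and since $A=L-\Gamma^s(\gamma)$ and $\Gamma^p(\gamma)$ depend only on the base point $\gamma$, both terms of the closed form in Proposition~\ref{prop:gateaux} are linear in $\zeta$. Your extra remark is correct and is a detail the paper leaves implicit: the expansion of Lemma~\ref{lem:inverse_expansion} holds for the combined direction $c_1\zeta_1+c_2\zeta_2$ whenever $A$ is invertible, because $\epsilon\to 0$.
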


\begin{proof}
We prove linearity by showing that each component of the Gateaux derivative is linear in $\zeta$. First, by linearity of the differentiation operator, for any functions $\zeta_1, \zeta_2$ and scalars $c_1, c_2$ we have
\begin{align*}
    \frac{\partial (c_1\zeta_1 + c_2\zeta_2)(\omega_{jt})}{\partial p_{kt}} &= c_1 \frac{\partial \zeta_1(\omega_{jt})}{\partial p_{kt}} + c_2 \frac{\partial \zeta_2(\omega_{jt})}{\partial p_{kt}} \\
    \frac{\partial (c_1\zeta_1 + c_2\zeta_2)(\omega_{jt})}{\partial s_{kt}} &= c_1 \frac{\partial \zeta_1(\omega_{jt})}{\partial s_{kt}} + c_2 \frac{\partial \zeta_2(\omega_{jt})}{\partial s_{kt}}
\end{align*}
In matrix notation, if $\zeta = c_1\zeta_1 + c_2\zeta_2$, then
\begin{align*}
    Z^p &= c_1 Z^p_1 + c_2 Z^p_2, \\
    Z^s &= c_1 Z^s_1 + c_2 Z^s_2.
\end{align*}

Second, note that matrices $A = L - \Gamma^s(\gamma)$ and $\Gamma^p(\gamma)$ depend on the current function $\gamma$ at which we evaluate the Gateaux derivative. They do \emph{not} depend on the perturbation direction $\zeta$. Therefore, both $A^{-1}Z^p$ and $A^{-1}Z^s A^{-1}\Gamma^p$ terms are linear in $\zeta$. Since both terms are linear in $\zeta$, their sum is linear in $\zeta$, which establishes the result.
\end{proof}

\begin{remark*}
	The results in this section trivially extend to cross-price elasticities.
\end{remark*}

\subsection{Estimation details}\label{app:el_estimation}

This section provides a detailed procedure for estimating the average own-price elasticity functional.

\subsubsection{Constructing the Moment Condition for PGMM}

Let $\zeta$ be a perturbation direction. From Proposition~\ref{prop:gateaux}, the Gateaux derivative of the elasticity functional in direction $\zeta$ is
\begin{equation*}
    D_\gamma \varepsilon_{jj}[\zeta] = \frac{p_{jt}}{s_{jt}}\left[\left(A^{-1}Z^p_\zeta\right)_{jj} + \left(A^{-1}Z^s_\zeta A^{-1}\Gamma^p\right)_{jj}\right],
\end{equation*}
where $Z^p_\zeta$ and $Z^s_\zeta$ are the price and share Jacobians of the perturbation $\zeta$. The Riesz representer $\alpha_{jj,0}(z)$ satisfies the population moment condition
\begin{equation*}
    \E\left[D_\gamma \varepsilon_{jj}[\zeta] - \alpha_{jj,0}(z_{jt})\zeta(\omega_{jt})\right] = 0 \quad \text{for all } \zeta.
\end{equation*}
This follows from the linearity of $D_\gamma \varepsilon_{jj}[\cdot]$ established in Proposition~\ref{prop:linearity}.

Let $d(\omega) = (d_1(\omega), \ldots, d_q(\omega))'$ be a $q$-dimensional dictionary of basis functions that represent deviations from $\gamma_0$. These basis functions span the space of possible perturbation directions. Using $d(\omega)$, we can form a vector of $q$ moment conditions by setting $\zeta = d_k$ for $k = 1, \ldots, q$,
\begin{equation}\label{eq:elast_mom_system}
    \E\left[D_\gamma \varepsilon_{jj}[d_k] - \alpha_{jj,0}(z_{jt})d_k(\omega_{jt})\right] = 0, \quad k = 1, \ldots, q.
\end{equation}
The sample moment condition corresponding to \eqref{eq:elast_mom_system} is
\begin{equation*}
    \hat\psi_\gamma(d_k, \rho) = \frac{1}{T}\sum_{t=1}^{T}\left\{D_{\hat\gamma}\varepsilon_{jj,t}[d_k] - d_k(\omega_{jt})b(z_{jt})'\rho\right\} = 0, \quad k = 1, \ldots, q,
\end{equation*}
where $D_{\hat\gamma}\varepsilon_{jj,t}[d_k]$ denotes the Gateaux derivative evaluated at the current estimate $\hat\gamma$ using $d_k$ as the perturbation direction.

\subsubsection{Double-Splitting for $\hat{M}_\ell$}

Since the elasticity functional is nonlinear in $\gamma$, the construction of $\hat{M}_\ell$ requires further sample splitting to obtain an unbiased estimator. The key difference from linear functionals is that $\hat{M}_\ell$ depends on an estimator of $\gamma$. To ensure $\hat{M}_\ell$ is unbiased, we use observations that were not used to construct $\hat\gamma$ appearing in the Gateaux derivative.

Let $\hat\gamma_{\ell,\ell'}$ denote an estimator of $\gamma$ based on observations not in either $\cT_\ell$ or $\cT_{\ell'}$. The unbiased estimator $\hat{M}_\ell$ is then
\begin{align*}
    \hat{M}_\ell &= (\hat{M}_{\ell,1}, \ldots, \hat{M}_{\ell,q})',\\
    \hat{M}_{\ell,k} &= \frac{1}{T - T_\ell}\sum_{\ell' \neq \ell}\sum_{t \in \cT_{\ell'}} D_{\hat\gamma_{\ell,\ell'}}\varepsilon_{jj,t}[d_k], 
\end{align*}
where $D_{\hat\gamma_{\ell,\ell'}}\varepsilon_{jj,t}[d_k]$ is the Gateaux derivative of the elasticity evaluated at $\hat\gamma_{\ell,\ell'}$ in direction $d_k$.

\subsubsection{Complete Estimation Procedure}

The estimation procedure is as follows:

\begin{enumerate}
    \item Assuming the data $\{W_t\}_{t=1}^{T}$ is i.i.d.\ across markets, let $\cT_\ell$, $\ell = 1, \ldots, L$, be a partition of the observation index set $\{1, \ldots, T\}$ into $L$ distinct subsets of approximately equal size. Let $T_\ell$ denote the number of observations in fold $\ell$.
    
    \item For each pair of folds $(\ell, \ell')$ with $\ell \neq \ell'$, estimate $\hat\gamma_{\ell,\ell'}$ using an MLIV estimator on $\{W_t : t \notin \cT_\ell \cup \cT_{\ell'}\}$.
    
    \item For each fold $\ell = 1, \ldots, L$:
    \begin{enumerate}
        \item[(a)] Compute $\hat{G}_\ell$: For each $t \notin \cT_\ell$, form
        \begin{equation*}
            \hat{G}_\ell = \frac{1}{T - T_\ell}\sum_{t \notin \cT_\ell} d(\omega_{jt})b(Z_{jt})'.
        \end{equation*}
        
        \item[(b)] Compute $\hat{M}_\ell$ using double-splitting: For each $\ell' \neq \ell$ and each $t \in \cT_{\ell'}$:
        \begin{itemize}
            \item Compute Jacobians of $\hat\gamma_{\ell,\ell'}$ at $\omega_t$: $\hat\Gamma^p_{\ell,\ell',t}$, $\hat\Gamma^s_{\ell,\ell',t}$
            \item Form $\hat{A}_{\ell,\ell',t} = L_t - \hat\Gamma^s_{\ell,\ell',t}$
            \item For each basis function $d_k$, compute its Jacobians $Z^p_{d_k,t}$, $Z^s_{d_k,t}$
            \item Compute the Gateaux derivative:
            \begin{equation*}
                D_{\hat\gamma_{\ell,\ell'}}\varepsilon_{jj,t}[d_k] = \frac{p_{jt}}{s_{jt}}\left[(\hat{A}_{\ell,\ell',t}^{-1}Z^p_{d_k,t})_{jj} + (\hat{A}_{\ell,\ell',t}^{-1}Z^s_{d_k,t}\hat{A}_{\ell,\ell',t}^{-1}\hat\Gamma^p_{\ell,\ell',t})_{jj}\right]
            \end{equation*}
        \end{itemize}
        Then aggregate:
        \begin{equation*}
            \hat{M}_{\ell,k} = \frac{1}{T - T_\ell}\sum_{\ell' \neq \ell}\sum_{t \in \cT_{\ell'}} D_{\hat\gamma_{\ell,\ell'}}\varepsilon_{jj,t}[d_k].
        \end{equation*}
        
        \item[(c)] Estimate $\hat\alpha_{jj,\ell} = b(Z)'\hat\rho_\ell$ via PGMM:
        \begin{equation*}
            \hat\rho_\ell = \argmin_{\rho \in \R^p} (\hat{M}_\ell - \hat{G}_\ell\rho)'\hat\Omega_q(\hat{M}_\ell - \hat{G}_\ell\rho) + 2\lambda_T|\rho|_1.
        \end{equation*}
    \end{enumerate}
    
    \item For each fold $\ell$, estimate $\hat\gamma_\ell$ using observations not in $\cT_\ell$.
    
    \item The debiased estimator $\hat\theta_{jj}$ and its asymptotic variance estimator $\hat{V}_{jj}$ are
    \begin{align*}
        \hat\theta_{jj} &= \frac{1}{T}\sum_{\ell=1}^{L}\sum_{t \in \cT_\ell}\left\{\varepsilon_{jj,t}(\hat\gamma_\ell) + \hat\alpha_{jj,\ell}(z_{jt})[y_{jt} - \hat\gamma_\ell(\omega_{jt})]\right\}, \label{eq:db_elast_final}\\
        \hat{V}_{jj} &= \frac{1}{T}\sum_{\ell=1}^{L}\sum_{t \in \cT_\ell}\hat\psi_{jj,\ell,t}^2, \notag
    \end{align*}
    where 
    \begin{equation*}
        \hat\psi_{jj,\ell,t} = \varepsilon_{jj,t}(\hat\gamma_\ell) - \hat\theta_{jj} + \hat\alpha_{jj,\ell}(z_{jt})[y_{jt} - \hat\gamma_\ell(\omega_{jt})].
    \end{equation*}
\end{enumerate}

\begin{remark*}
The double-splitting in Step 3 requires estimating $\hat\gamma_{\ell,\ell'}$ for each pair $(\ell, \ell')$, resulting in $L(L-1)$ estimators of $\gamma$ for constructing $\hat{M}$. Additionally, Step 4 requires $L$ estimators $\hat\gamma_\ell$ for the final debiased estimator. For $L=5$ folds, this means estimating $\gamma$ a total of $5 \times 4 + 5 = 25$ times. That said, we can greatly improve upon that. Note that for any two folds $\ell$ and $\ell'$, $\ell \neq \ell'$, $\hat\gamma_{\ell,\ell'} = \hat\gamma_{\ell',\ell}$. As a result, we only need $\binom{L}{2} = \binom{5}{2} = 10$ distinct estimators in Step 3 leading to a total of $15$ estimators of $\gamma$, which is a $40\%$ reduction in compute cost. Alas, we cannot completely mitigate the additional compute burden which is the price paid for valid inference on nonlinear functionals.
\end{remark*}


\section{Empirical application}

\subsection{Data cleaning and aggregation details}\label{app:data}

\subsubsection{Imputations}

Data on product characteristics have a lot of missing observations in the type of sweetener and caffeine level. We use the following heuristics to impute those values:
\begin{itemize}
	\item Type of sweetener:
	\begin{itemize}
		\item if the calorie level is \emph{Regular}, then the type of sweetener will be \emph{Sugar};
		\item if the calorie level is \emph{Calorie-free}, then the type of sweetener will be \emph{Unsweetened};
		\item if the calorie level is \emph{Diet} and the flavor is not \emph{Cola}, then the type of sweetener will be \emph{Sweetener}.
	\end{itemize}
	\item Caffeine info:
	\begin{itemize}
		\item if flavor is \emph{Grapefruit}, \emph{Lemon Lime}, \emph{Natural}, \emph{Strawberry}, \emph{Pineapple}, \emph{Grape}, \emph{Fruit Punch}, it is \emph{Caffeine-free};
		\item if flavor is \emph{Dew} , \emph{Pepper}, \emph{Cherry Cola}, it is \emph{Caffeine}.
	\end{itemize}
\end{itemize}

We also replace zero sales with ones and impute corresponding missing prices with the average price of all other observed products in a particular store in a particular week.

\subsubsection{Product characteristics aggregation}

All product characteristics are categorical variables, to facilitate computations we group product attributes into larger groups which can be coded up as dummy variables. We use the following heuristics:
\begin{itemize}
	\item Flavor/scent:
	\begin{itemize}
		\item \emph{Cola} (such as \emph{Cherry Cola}, \emph{Wild Cherry Cola}, \emph{Cola with Lemon} and so on, basically everything with \emph{Cola})
		\item \emph{Lemonade} (such as \emph{Lemonade}, \emph{Lemon Lime}, \emph{Mandarine Lime}, \emph{Citrus}, \emph{Tangerine}, \emph{Punch}, etc.)
		\item \emph{Alcohol-free beer} (such as \emph{Root Beer},\emph{Birch Beer}, etc.)
		\item \emph{Berries} (\emph{Strawberry}, \emph{Raspberry} , \emph{Cherry}, etc.)
		\item \emph{Fruit} (fruity flavors except berries or lemon, such as \emph{Pineapple}, \emph{Grape}, \emph{Peach}, \emph{Watermelon}, etc.)
		\item \emph{Cream Soda} (\emph{Cream Soda}, \emph{Red Cream Soda}, etc.)
		\item \emph{Others} 
	\end{itemize}
	\item Caffeine level:
	\begin{itemize}
		\item caffeine-free and 55\% caffeine-free are considered \emph{Caffeine-free}
		\item other beverages are considered to contain caffeine
	\end{itemize}
	\item Calorie level:
	\begin{itemize}
		\item calorie-free and diet beverages are considered to be \emph{Diet} 
		\item other beverages are considered to be \emph{Regular}
	\end{itemize}
	\item Type of sweetener:
	\begin{itemize}
		\item \emph{Sugar-free}
		\item \emph{Sweetener} (non-saccharin): Nutra, aspartame, sucralose, splenda
		\item \emph{Sugar}: all entries corresponding to corn sweeteners and sugar/saccharin containing products
	\end{itemize}
\end{itemize}


\subsection{Discussion of Hausman Instruments} \label{app:hausman}

A natural candidate for instrumenting endogenous prices in demand estimation is Hausman instruments, which use the price of the same product in other geographic markets to isolate cost-driven price variation. In our setting, the Hausman instrument for product $j$ in DMA $d$ during month $t$ is
\[
  z_{jdt}^{\text{H}} = \frac{1}{D-1}\sum_{d' \neq d} p_{jd't},
\]
where $D = 50$ is the number of DMAs. The identifying assumption is that cross-market price correlation reflects common cost shocks (changes in input prices, national wholesale pricing, distribution costs, etc.) that are orthogonal to local demand unobservables. 

However, as pointed out by \citet{nevo2001measuring}, this assumption can be easily violated in practice if a positive demand shock hits a product nationally through an advertising campaign, seasonal demand increase, or the end of a promotional period.  Since the Hausman instrument averages prices across DMAs within the same month, it inherits these common demand shocks, creating a positive correlation between the instrument and the structural error. 

\begin{table}[htbp]
\centering
\begin{threeparttable}
\caption{First-Stage Regressions: Price on Hausman Instrument}
\label{tab:hausman_first_stage}
\begin{tabular}{lcccc}
\toprule
 & (1) & (2) & (3) & (4) \\
\midrule
$\hat{\beta}_{\text{Hausman}}$ & 0.833 & 0.781 & 0.903 & 0.872 \\
$R^2$ & 0.077 & 0.078 & 0.470 & 0.471 \\
Partial $F$ & 502.1 & 333.9 & 1016.4 & 718.4 \\[4pt]
Product FE & & \checkmark & & \checkmark \\
DMA FE & & & \checkmark & \checkmark \\
\bottomrule
\end{tabular}
\begin{tablenotes}
\small
\item $N = 5{,}993$ observations (10 products, 50 DMAs, 12 months). Partial $F$ is the $F$-statistic for the excluded instrument in each specification.
\end{tablenotes}
\end{threeparttable}
\end{table}

We evaluate the identifying assumption for the IRI carbonated beverage data and find strong evidence that it fails. First, we note that the within-product-month standard deviation of the Hausman instrument is only 0.000592, while the within-product standard deviation of 0.0089. Nearly all variation in the instrument is at the product$\times$month level, not the DMA level. This is precisely the level at which common demand shocks would operate, casting immediate doubt on the exclusion restriction.

Table~\ref{tab:hausman_first_stage} reports first-stage regressions of product price on the Hausman instrument under different sets of fixed effects. The instrument is highly relevant in all specifications, with $F$-statistics ranging from 334 to 718, well above conventional thresholds for instrument strength. 

\begin{table}[htbp]
\centering
\begin{threeparttable}
\caption{Logit Demand Estimates: BLP vs.\ Hausman Instruments}
\label{tab:hausman_logit}
\begin{tabular}{lccccc}
\toprule
 & (1) & (2) & (3) & (4) & (5) \\
 & BLP IVs & \multicolumn{4}{c}{Hausman IV} \\
\cmidrule(lr){3-6}
\midrule
$\hat{\alpha}$ (price) & $-13.53$ & $+0.35$ & $+0.20$ & $+0.34$ & $+0.20$ \\
 & $(7.64)$ & $(1.39)$ & $(1.39)$ & $(1.23)$ & $(1.21)$ \\[4pt]
Mean own-price elasticity & $-3.09$ & $+0.08$ & $+0.05$ & $+0.08$ & $+0.05$ \\[4pt]
Product FE & & & \checkmark & & \checkmark \\
DMA FE & & & & \checkmark & \checkmark \\
\bottomrule
\end{tabular}
\begin{tablenotes}
\small
\item All specifications include five product characteristics (caffeine, cola, lemon-lime, sugar, pepper); columns without product FE also include a constant. BLP IVs are the two non-collinear rival characteristic sums (caffeine and sugar). Mean own-price elasticity is averaged across market-level elasticities.
\end{tablenotes}
\end{threeparttable}
\end{table}

Table~\ref{tab:hausman_logit} compares 2SLS logit estimates using Hausman instruments against those obtained with BLP-style differentiation instruments. The Hausman instrument produces a positive price coefficient of $+0.35$, with the implied mean own-price elasticity of $+0.08$. This result is robust to the inclusion of fixed effects. By contrast, BLP instruments, which satisfy the exclusion restriction by construction, yield $\hat{\alpha} = -13.53$, corresponding to a mean own-price elasticity of $-3.09$. This evidence suggests that despite being strong [redictors of local prices Hausman instruments violate the exclusion restriction leading to the upward bias in the price coefficient estimate.


\subsection{IIA-test} \label{app:iia_test}

The logit demand model imposes the Independence of Irrelevant Alternatives (IIA) property: the ratio of choice probabilities between any two products is independent of the characteristics of all other products. We test this restriction using the differentiation instruments test of \citet{gandhi_houde2019}.

Under IIA, a product's market share depends only on its own characteristics and the inclusive value, which serves as a sufficient statistic for the competitive environment. In particular, differentiation instruments, which capture how similarly positioned competing products are in characteristic space, should have no explanatory power for market shares conditional on own characteristics. The test exploits this implication.

Following Section 3.3 in \citet{gandhi_houde2019}, we estimate the logit model by 2SLS while controlling for the differentiation IVs. Specifically,
\begin{equation*}
	\log \left(\frac{s_{jt}}{s_{0t}}\right) = \alpha p_{jt} + x_{jt}'\beta + A_{j}(x_t)'\gamma + e_{jt},
\end{equation*}
where $s_{jt}$ and $s_{0t}$ are the product and outside-good shares, $p_{jt}$ is price, and $A_{j}(x_t)$ is a vector of local differentiation instruments. The differentiation IVs are constructed from the five product characteristics with cross-interactions. For instance, $\sum_{k \neq j} \mathbf{1}[Caffeine_k = Caffeine_j] \cdot (Cola_k - Cola_j)$ capture whether proximity along one characteristic dimension (Caffeine) combined with differentiation along another (Cola) predicts substitution. After removing instruments collinear with $(1, x_{jt})$, six differentiation instruments survive. To identify the price coefficient, we use BLP instruments. 

IIA is tested via the null hypothesis $H_0\colon \gamma = 0$. With the $F$-statistic of $68.5$ ($p < 0.0001$), and the robust Wald statistic of $991.4$ ($p < 0.0001$), the test strongly rejects the null, providing strong evidence against IIA.


\subsection{Parametric demand estimates} \label{app:param_demand}

\begin{table}[htbp]
\centering
\begin{threeparttable}
\caption{Parametric Demand Estimates}
\label{tab:blp_estimates}
\begin{tabular}{lccc}
\toprule
 & (1) & (2) & (3) \\
\midrule
\multicolumn{4}{l}{\textit{Linear Parameters ($\beta$)}} \\[3pt]
Constant & $-1.403$ & $6.014$ & $5.555$ \\
 & $(1.913)$ & $(3.629)$ & $(5.344)$ \\[2pt]
Price & $-13.533$ & $-44.145$ & $-42.382$ \\
 & $(7.639)$ & $(14.445)$ & $(24.743)$ \\[2pt]
Sugar & $0.885$ & $0.993$ & $1.008$ \\
 & $(0.032)$ & $(0.066)$ & $(0.179)$ \\[2pt]
Caffeine & $0.461$ & $1.907$ & $2.257$ \\
 & $(0.031)$ & $(1.640)$ & $(3.146)$ \\[2pt]
Cola & $2.553$ & $2.398$ & $2.408$ \\
 & $(0.070)$ & $(0.116)$ & $(0.139)$ \\[2pt]
Lemonade & $1.389$ & $0.971$ & $0.964$ \\
 & $(0.066)$ & $(0.117)$ & $(0.146)$ \\[2pt]
Pepper & $1.250$ & $1.542$ & $1.526$ \\
 & $(0.096)$ & $(0.174)$ & $(0.259)$ \\[2pt]
\midrule
\multicolumn{4}{l}{\textit{Taste Heterogeneity ($\Sigma$)}} \\[3pt]
$\sigma_{Sugar}$ & \textemdash & $2.095$ & $2.054$ \\
 &  & $(0.481)$ & $(0.488)$ \\[2pt]
$\sigma_{Caffeine}$ & \textemdash & $9.234$ & $10.901$ \\
 &  & $(8.291)$ & $(11.515)$ \\[2pt]
$\sigma_{Price}$ & \textemdash & \textemdash & $1.050$ \\
 &  &  & $(42.301)$ \\[2pt]
\midrule
Mean own-price elasticity & $-3.09$ & $-8.93$ & $-8.54$ \\
Elasticity range & $[-3.30,\, -2.74]$ & $[-10.38,\, -7.13]$ & $[-9.96,\, -6.80]$ \\
Instruments & BLP & Diff (local) & Diff (local) \\
\bottomrule
\end{tabular}
\begin{tablenotes}[flushleft]
\footnotesize
\item Demand model estimated via 2-step GMM on monthly DMA-level data for 10 carbonated beverage products across 50 DMAs (600 market-months, 5{,}993 observations). Standard errors in parentheses. Column~(1) is a plain logit with BLP instruments (rival characteristic sums). Columns~(2)--(3) are random coefficients logit with diagonal $\Sigma$, using local differentiation instruments with cross-interactions. ``\textemdash'' indicates parameter not present in specification.
\end{tablenotes}
\end{threeparttable}
\end{table}

Table~\ref{tab:blp_estimates} reports parametric demand estimates that serve as benchmarks for the semiparametric approach. We estimate three specifications of increasing flexibility: a standard logit with BLP instruments, a random coefficients (RC) logit with taste heterogeneity on sugar and caffeine, and an RC logit that additionally allows heterogeneity in price sensitivity. All models are estimated with the PyBLP package \citep{conlon2020pyblp} available in Python.

First, the price coefficient roughly triples in magnitude when random coefficients are introduced, rising from $-13.5$ (logit) to $-44.1$ and $-42.4$ in columns~(2) and~(3). Given the absence of external price instruments, this behavior is not surprising. Plus, the price coefficient is borderline (in-)significant across specifications.

Second, identification of the random coefficients is uneven. The sugar heterogeneity parameter $\sigma_{\text{Sugar}} \approx 2.1$ is precisely estimated (SE $\approx 0.48$) in both specifications, consistent with strong heterogeneity in preferences for sugary versus diet beverages. By contrast, $\sigma_{\text{Caffeine}}$ is large but imprecise, and $\sigma_{\text{Price}}$ in column~(3) is essentially unidentified (SE $= 42.3$). The lack of cost-side instruments limits the ability to separate price sensitivity heterogeneity from unobserved demand shocks.

Third, the RC models imply mean own-price elasticities of $-8.9$ to $-8.5$, substantially larger in magnitude compared to logit ($-3.1$) and carbonated beverage literature \citep[see e.g.,][]{dube2004multiple}. Given high sensitivity of RC models to instrument strength and parametric assumptions on the distribution of taste heterogeneity, we tend to treat logit results as a more reliable benchmark to compare against the semiparametric estimates.


\subsection{Robustness checks}\label{app:robustness}

Here we provide additional results to check robustness of the baseline specification in Section \ref{subsec:iri}.

\subsubsection{Choice of the penalty parameter}

The penalty parameter $c_1$ plays an important role in Riesz representer estimation. Since we already use double cross-fitting for the own-price elasticity functional, applying the cross-validation procedure \eqref{eq:pgmm_split_penalty} on top of it would introduce additional computational complexity and estimation instability. Instead, we run the ADML estimator over a grid of tuning parameters $\{10^{-4},\,10^{-5},\,10^{-6},\,10^{-7},\,10^{-8}\}$, with $c_1 = 10^{-6}$ as our baseline value. Table \ref{tab:robustness_pgmm_c} reports the results. 

Overall, results are robust to the choice of $c_1$. For most products, the debiased estimates are nearly invariant to $c_1$ across the full grid. For a few products (e.g., Diet Coke, Sprite), the magnitude of the debiasing correction and the corresponding standard error vary with $c_1$, reflecting the bias-variance tradeoff: larger $c_1$ penalizes the Riesz representer more heavily, yielding estimates closer to the plug-in with smaller standard errors, while smaller $c_1$ permits larger corrections at the cost of increased variance. Notably, estimates stabilize between $c_1 = 10^{-7}$ and $c_1 = 10^{-8}$, suggesting convergence of the Riesz representer at the lower end of the grid.

\begin{table}[htbp]
\small
\centering
\begin{threeparttable}
\caption{Sensitivity to PGMM Penalty Parameter}
\label{tab:robustness_pgmm_c}
\begin{tabular}{lcccccc}
\toprule
 & & \multicolumn{5}{c}{ADML} \\
\cmidrule(lr){3-7}
Product & PI & $c_1 = 10^{-4}$ & $c_1 = 10^{-5}$ & $\mathbf{c_1 = 10^{-6}}$ & $c_1 = 10^{-7}$ & $c_1 = 10^{-8}$ \\
\midrule
Coke Classic & $-3.456$ & $-3.454$ & $-3.455$ & $\mathbf{-3.436}$ & $-3.395$ & $-3.387$ \\
  & $(0.021)$ & $(0.021)$ & $(0.021)$ & $\mathbf{(0.022)}$ & $(0.026)$ & $(0.027)$ \\[2pt]
Pepsi Classic & $-3.468$ & $-3.469$ & $-3.469$ & $\mathbf{-3.472}$ & $-3.487$ & $-3.493$ \\
  & $(0.024)$ & $(0.024)$ & $(0.024)$ & $\mathbf{(0.024)}$ & $(0.025)$ & $(0.025)$ \\[2pt]
Diet Coke & $-3.765$ & $-3.747$ & $-3.617$ & $\mathbf{-3.448}$ & $-3.404$ & $-3.400$ \\
  & $(0.018)$ & $(0.018)$ & $(0.026)$ & $\mathbf{(0.089)}$ & $(0.117)$ & $(0.121)$ \\[2pt]
Diet Pepsi & $-3.838$ & $-3.848$ & $-3.845$ & $\mathbf{-3.965}$ & $-3.970$ & $-3.969$ \\
  & $(0.022)$ & $(0.021)$ & $(0.022)$ & $\mathbf{(0.025)}$ & $(0.027)$ & $(0.028)$ \\[2pt]
Caffeine-free Diet Coke & $-3.889$ & $-3.840$ & $-3.880$ & $\mathbf{-3.767}$ & $-3.754$ & $-3.752$ \\
  & $(0.018)$ & $(0.018)$ & $(0.018)$ & $\mathbf{(0.021)}$ & $(0.021)$ & $(0.021)$ \\[2pt]
Caffeine-free Diet Pepsi & $-3.937$ & $-4.011$ & $-3.951$ & $\mathbf{-4.108}$ & $-4.134$ & $-4.132$ \\
  & $(0.020)$ & $(0.020)$ & $(0.020)$ & $\mathbf{(0.021)}$ & $(0.034)$ & $(0.038)$ \\[2pt]
Dr Pepper & $-3.543$ & $-3.542$ & $-3.543$ & $\mathbf{-3.543}$ & $-3.543$ & $-3.543$ \\
  & $(0.019)$ & $(0.020)$ & $(0.019)$ & $\mathbf{(0.020)}$ & $(0.020)$ & $(0.020)$ \\[2pt]
Mountain Dew Classic & $-3.927$ & $-3.927$ & $-3.927$ & $\mathbf{-3.927}$ & $-3.928$ & $-3.931$ \\
  & $(0.032)$ & $(0.032)$ & $(0.032)$ & $\mathbf{(0.032)}$ & $(0.032)$ & $(0.031)$ \\[2pt]
Sprite & $-3.702$ & $-3.701$ & $-3.684$ & $\mathbf{-3.600}$ & $-3.587$ & $-3.586$ \\
  & $(0.017)$ & $(0.017)$ & $(0.017)$ & $\mathbf{(0.022)}$ & $(0.023)$ & $(0.023)$ \\[2pt]
Mountain Dew Other & $-3.992$ & $-3.992$ & $-3.992$ & $\mathbf{-3.992}$ & $-3.992$ & $-3.992$ \\
  & $(0.023)$ & $(0.023)$ & $(0.023)$ & $\mathbf{(0.023)}$ & $(0.023)$ & $(0.023)$ \\[2pt]
\bottomrule
\end{tabular}
\begin{tablenotes}[flushleft]
\footnotesize
\item PI = plug-in estimator; ADML = automatic debiased estimator with PGMM-estimated Riesz representer. The baseline specification ($c_1 = 10^{-6}$, in bold) is used in the main results. Standard errors in parentheses.
\end{tablenotes}
\end{threeparttable}
\end{table}

\subsubsection{Choice of the special regressor} 

Note that the choice of a special regressor is fundamentally different from the choice of a hyperparameter (penalty). The latter governs the quality of the Riesz representer fit, while the former is a structural restriction ensuring identification of the inverse demand function. Hence, different $x^{(1)}$ choices are expected to have a significant impact on the estimates. We run the ADML procedure\footnote{Plug-in estimates also depend on the choice of $x^{(1)}$ and are omitted for brevity} to estimate own-price elasticities under different choices of the special regressor $x^{(1)} \in \{$\emph{Lemonade}, \emph{Cola}, \emph{Pepper}, \emph{Caffeine}, \emph{Sugar}$\}$, with $x^{(1)}$ = \emph{Lemonade} as our baseline. Table \ref{tab:robustness_x1} reports the results. 

We observe that elasticity estimates substantially across $x^{(1)}$ choices. For $x^{(1)} \in \{$\emph{Lemonade}, \emph{Cola}, \emph{Caffeine}$\}$, elasticity estimates are larger in magnitude compared to the logit ones. In contrast, under $x^{(1)} \in \{$\emph{Pepper}, \emph{Sugar}$\}$, estimates are smaller in magnitude than the logit estimates, which is not plausible. This suggests that \emph{Pepper} and \emph{Sugar} have nonlinear effects on the inverse demand and should be included in $\gamma$ instead.  

\begin{table}[htbp]
\centering
\begin{threeparttable}
\caption{Sensitivity to Choice of Special Regressor}
\label{tab:robustness_x1}
\begin{tabular}{lccccc}
\toprule
& \multicolumn{5}{c}{Special regressor} \\
\cmidrule(lr){2-6}
Product & \textbf{\emph{Lemonade}} & \emph{Cola} & \emph{Pepper} & \emph{Caffeine} & \emph{Sugar} \\
\midrule
Coke Classic & $\mathbf{-3.436}$ & $-3.696$ & $-1.884$ & $-3.377$ & $-1.550$ \\
  & $\mathbf{(0.022)}$ & $(0.022)$ & $(0.020)$ & $(0.023)$ & $(0.012)$ \\[2pt]
Pepsi Classic & $\mathbf{-3.472}$ & $-3.792$ & $-1.913$ & $-3.306$ & $-1.577$ \\
  & $\mathbf{(0.024)}$ & $(0.023)$ & $(0.022)$ & $(0.042)$ & $(0.013)$ \\[2pt]
Diet Coke & $\mathbf{-3.448}$ & $-3.614$ & $-1.928$ & $-3.331$ & $-1.679$ \\
  & $\mathbf{(0.089)}$ & $(0.228)$ & $(0.060)$ & $(0.096)$ & $(0.011)$ \\[2pt]
Diet Pepsi & $\mathbf{-3.965}$ & $-4.169$ & $-2.233$ & $-3.693$ & $-1.732$ \\
  & $\mathbf{(0.025)}$ & $(0.047)$ & $(0.024)$ & $(0.041)$ & $(0.012)$ \\[2pt]
Caffeine-free Diet Coke & $\mathbf{-3.767}$ & $-4.065$ & $-2.095$ & $-3.825$ & $-1.716$ \\
  & $\mathbf{(0.021)}$ & $(0.022)$ & $(0.021)$ & $(0.029)$ & $(0.011)$ \\[2pt]
Caffeine-free Diet Pepsi & $\mathbf{-4.108}$ & $-4.469$ & $-2.266$ & $-4.011$ & $-1.778$ \\
  & $\mathbf{(0.021)}$ & $(0.024)$ & $(0.023)$ & $(0.040)$ & $(0.011)$ \\[2pt]
Dr Pepper & $\mathbf{-3.543}$ & $-4.055$ & $-2.014$ & $-3.926$ & $-1.466$ \\
  & $\mathbf{(0.020)}$ & $(0.023)$ & $(0.025)$ & $(0.034)$ & $(0.014)$ \\[2pt]
Mountain Dew Classic & $\mathbf{-3.927}$ & $-4.230$ & $-2.023$ & $-4.479$ & $-1.670$ \\
  & $\mathbf{(0.032)}$ & $(0.028)$ & $(0.026)$ & $(0.058)$ & $(0.022)$ \\[2pt]
Sprite & $\mathbf{-3.600}$ & $-3.893$ & $-1.884$ & $-3.870$ & $-1.562$ \\
  & $\mathbf{(0.022)}$ & $(0.019)$ & $(0.020)$ & $(0.029)$ & $(0.012)$ \\[2pt]
Mountain Dew Other & $\mathbf{-3.992}$ & $-4.467$ & $-2.128$ & $-4.205$ & $-1.772$ \\
  & $\mathbf{(0.023)}$ & $(0.026)$ & $(0.024)$ & $(0.035)$ & $(0.020)$ \\[2pt]
\bottomrule
\end{tabular}
\begin{tablenotes}[flushleft]
\footnotesize
\item The special regressor $x_1$ is excluded from the nonparametric input space $\omega$ and enters the model linearly. Columns compare choices of $x^{(1)} \in \{$\emph{Lemonade}, \emph{Cola}, \emph{Pepper}, \emph{Caffeine}, \emph{Sugar}$\}$. The baseline specification ($x^{(1)}$ = \emph{Lemonade}, in bold) is used in the main results. Standard errors in parentheses.
\end{tablenotes}
\end{threeparttable}
\end{table}


\section{Implementation details}\label{app:implementation}

This subsection summarizes the key implementation choices and hyperparameters used in the Monte Carlo experiments and the empirical application. All experiments use 5-fold cross-fitting unless stated otherwise. The intercept PGMM penalty parameter is set to $c_0=0.1$ throughout. Additional details can be found in the \href{https://github.com/edbakhitov/ADMLIV}{GitHub} repo.

\paragraph{Average derivative Monte Carlo (Section~\ref{sec:toy_mc}).}
The MLIV estimator is a Double Lasso with a degree-3 polynomial basis with pairwise interactions. The first-stage regularization parameter is set to $\alpha = 10^{-4}$, and the second stage uses 3-fold cross-validation over a grid of 100 values log-spaced from $10^{-7}$ to $10^{-1}$. The PGMM estimator of the Riesz representer uses a degree-3 polynomial basis with pairwise interactions for both $b(Z)$ and $d(X)$, adaptive weights, and two-stage estimation. Penalty parameters are set as follows: $c_1 = 10^{-2}$ for $k=2$, $c_1 = 10^{-3}$ for $k=5$, and $c_1 = 10^{-4}$ for $k=10$. As a benchmark, the DML analytical estimator uses the identity score estimator of \cite{chen2023ann_npiv} with the same polynomial basis. Results are based on $1000$ replications with sample sizes $n \in \{100,\, 500,\, 1000,\, 10000\}$ and dimensions $k \in \{2,\, 5,\, 10\}$.

\paragraph{Elasticity Monte Carlo (Section~\ref{subsec:demand_mc}).}
The MLIV estimator is Kernel IV (KIV) with the standard median-distance bandwidth heuristic and bandwidth scale factor of 25.0. Both $d(\omega)$ and $b(z)$ use degree-2 polynomial bases with and without pairwise interactions, respectively. The PGMM penalty is $c_1 = 10^{-7}$ with adaptive weights and two-stage estimation. Double cross-fitting is used for the nonlinear elasticity functional. Results are based on 500 replications per configuration.

\paragraph{IRI scanner data application.}
The data come from the IRI Academic Database \citep{bronnenberg2008IRI}, version 2.3, Year~3 (2003), aggregated to monthly DMA level for the top 10 carbonated beverage products, yielding an unbalanced panel of approximately 564 markets after dropping 7 incomplete market-months. The MLIV estimator is KIV with the standard deviation bandwidth heuristic and bandwidth scale factor of 25.0. The PGMM uses an empirical moment (EM) basis featurizer, which constructs basis functions from the market-level structure of the omega transformation (see Appendix~\ref{app:em_bfs}). The omega basis $d(\omega)$ uses moments $\bigcup_{n=2}^{3} B_n^{jt}$ and the IV basis $b(z)$ uses moments $\bigcup_{n=1}^{2} B_n^{jt}$, where $B_n^{jt}$ is as defined in Appendix~\ref{app:em_bfs}; both use a degree-2 polynomial with pairwise interactions as the outer approximation $g$. The PGMM penalty is $c = 10^{-6}$ with adaptive weights and two-stage estimation. 


\end{document}